\newcommand{\reals}{\mathbb{R}}
\newcommand{\complex}{\mathbb{C}}
\newcommand{\Fig}   {\mbox{Fig.} }
\newcommand{\eqdef}{ := }
\newcommand{\twolines}[2]{\genfrac{}{}{0pt}{}{#1}{#2}}
\newcommand{\thmend}{\hspace*{\fill}~\QEDopen\par\endtrivlist\unskip}
\newcommand{\na} { {\sf a} }
\newcommand{\nb} { {\sf b} }
\newcommand{\nc} { {\sf c} }
\newcommand{\nd} { {\sf d} }
\newcommand{\nr} { {\sf r} }
\newcommand{\nzero} { {\sf 0} }
\newcommand{\none} { {\sf 1} }
\newcommand{\ntwo} { {\sf 2} }
\newcommand{\nm} { {\sf m} }
\newcommand{\prot}{ {\sf P} }
\newcommand{\pa} { {(1)} }
\newcommand{\pb} { {(2)} }
\newcommand{\pc} { {(3)} }
\newcommand{\pd} { {(4)} }
\newtheorem{theorem}{Theorem}
\newtheorem{lemma}[theorem]{Lemma}
\newtheorem{remark}[theorem]{Remark}
\begin{document}
\title{Achievable rate regions and outer bounds for a multi-pair bi-directional relay network}
\author{Sang Joon Kim, Besma Smida and Natasha Devroye%
\thanks{ Sang Joon Kim was with the School
of Engineering and Applied Sciences, Harvard University, Cambridge,
MA 02138. Email:~sangkim@fas.harvard.edu.
Besma Smida is with the Department of Electrical and Computer
Engineering, Purdue University Calumet, Hammond, IN 46323. E-mail:~Besma.Smida@calumet.purdue.edu.
%vahid@deas.harvard.edu.
Natasha Devroye is with the Department of Electrical and Computer
Engineering, University of Illinois at Chicago, Chicago, IL 60607.
Email:~devroye@ece.uic.edu.}}

\maketitle

\begin{abstract}
In a bi-directional relay channel, a pair of nodes wish to exchange independent messages over a shared wireless half-duplex channel with the help of relays. Recent work has mostly considered information theoretic limits of the bi-directional relay channel with two terminal nodes (or end users) and one relay. In this work we consider bi-directional relaying with one base station, multiple terminal nodes and one relay, all of which operate in half-duplex modes. We assume that each terminal node communicates with the base-station in a bi-directional fashion through the relay and do not place any restrictions on the channels between the users, relays and base-stations; that is, each node has a direct link with every other node. %: no information is exchanged directly between the end users.
%with the base station bi-directionally and that no information is directly exchanged between end users.
Our contributions are three-fold: 1) the introduction of four new temporal protocols which fully exploit the two-way nature of the data and outperform simple routing or multi-hop communication schemes by carefully combining network coding, random binning and user cooperation which exploit over-heard and own-message side information, 2) derivations of inner and outer bounds on the capacity region of the discrete-memoryless multi-pair two-way network, and 3) a numerical evaluation of the obtained achievable rate regions and outer bounds in Gaussian noise which illustrate the performance of the proposed protocols compared to simpler schemes, to each other, to the outer bounds, which highlight the relative gains achieved by network coding, random binning and compress-and-forward-type cooperation between terminal nodes.
%We introduce three temporal protocols and derive achievable rate regions and outer bounds for the proposed protocols with decode-and-forward relaying. Then we provide a comprehensive treatment of the proposed protocols in Gaussian channel, illustrating the achievable rate regions, outer bounds and relative performance at different SNRs.
\end{abstract}

\begin{keywords}
bi-directional communication, achievable rate region, decode and forward, multiple terminals
\end{keywords}

%%%%%%%%%%%%%%%%%%%%%%%%%%%%%%%%%%%%%%%%%%%%%%%%%%%%%%%%%%%%%

\section{Introduction}

%%%%%%%%%%%%%%%%%%%%%%%%%%%%%%%%%%%%%%%%%%%%%%%%%%%%%%%%%%%%%
\label{sec:intro}

% What problem we will consider
% Why this problem is interesting

\subsection{Goal and motivation}
In this work, we will derive achievable rate regions and outer bounds for a multi-pair bi-directional relay network. The most basic  bi-directional network consists of a pair of \emph{terminal nodes} that wish to exchange independent messages.
In wireless channels or mesh networks, this communication may take place with the help of {\it relay} nodes which do not wish to transmit any information of their own. The most basic bi-directional {\it relay} network thus consists of a pair of \emph{terminal nodes} that wish to exchange messages through the relay of a single relay. While the capacity of this channel is still unknown in general, as will be outlined in the Related Work sub-section next, it has been of great recent interest due to its relevance in wireless networks of the future.

The single relay,  single pair bi-directional relay channel has been extended in a number of ways: 1) the consideration of a single bi-directional link using {\it multiple relays}, 2) the consideration of {\it multiple bi-directional links} sharing a single, common relay or, most generally, 3) the consideration of  multiple bi-directional links which communicate with the help of multiple relays.

 The relay network considered in this paper falls into the second category and consists of a base station (node $\nzero$) which wishes to communicate simultaneously in a bi-directional fashion with multiple terminal nodes (node $\none$, $\cdots$, node $\nm$) with the help of  one relay node (node $\nr$). Due to limitations of current technology, all nodes are assumed to be half-duplex and thus cannot transmit and receive simultaneously. This network topology is motivated by recent pushes to extend the coverage of wireless networks. For example, in a cellular scenario, a relay station is able to enhance the connectivity between a base station and terminals at its cell boundary. The relays may be connected to the base station using a wireless link rather than a wired one,  resulting in savings to the operators' backhaul costs. Another motivating example is satellite communication: satellites can be used to relay signals from one ground station to multiple vehicular terminals on or close to the earth's surface. In this work, we determine bounds on the capacity regions  - which  may serve as guides and benchmarks in the eventual design of - such multi-pair two-way communication networks aided by a single relay node.

% Related work
\subsection{Related work} Two-way communications were first considered by Shannon himself  \cite{Shannon:1961}, in which he introduced inner and outer bounds on the capacity region of the two-way channel where two full-duplex nodes (which may transmit and receive simultaneously) wish to exchanges messages. Since full-duplex operation is, with current technology, of limited practical significance, in this work we assume that the nodes are {\em half-duplex}, i.e. at each point in time, a node can either transmit or receive symbols, but not both.
%While the capacity region of the two-way channel is known for certain two-way channels \cite{Han:1984, Shannon:1961}  the capacity region of the two-way channel (with full-duplex operation) is in general unknown \cite{Meulen:1977, Dueck:1979}.

The two-way relay channel or bi-directional relay channel is the logical extension of the classical relay channel \cite{Meulen:1971, Cover:1979, Kramer:Gastpar:Gupta} for one-way point-to-point   communication aided by a relay to allow for two-way communication.
Alternatively, it may be seen as the natural extension of the two-way channel which allows communication to take place with the help of a single relay. This channel has been of great interest of late and has been considered from a number of different perspectives.
A large body of work concerning the bi-directional relay channel - which we do not attempt to fully summarize -  has emerged, which may be differentiated roughly based on combinations of assumptions that are made on the type of relaying (CF, DF, AF, de-noise, mixed, lattice codes) and on the duplex abilities of nodes (half-duplex or full-duplex). We highlight some of the work under different assumptions before proceeding to describe extensions to multiple terminal nodes and/or multiple relay nodes.

1. {\bf Relaying type:} The simplest of relaying types is Amplify-and-forward (AF), in which relays are not required to do any processing on the received signal but re-scale and re-transmit it. %Its benefits lie in its simplicity and lack of processing needed at the relay.
%The price paid is that noise, as well as the desired signal will be amplified in each additional hop, making it less attractive for multi-hop scenarios.
AF schemes are often used as a benchmark against which to compare the performance of more complex relaying types \cite{SKim:2007, Rankov:2007,Popovski:ICC, Popovski:2006a, Popovski:2006b, Ho:2008}. Decode and forward (DF) relaying assumes the relay is able to decode all messages before re-transmitting them.
%In order to perform any form of network coding at the message level (rather than at the signal level in analog network coding \cite{Katti:2007}) decode and forward relays must be assumed.
Examples of work which assume bi-directional DF relaying include \cite{Larsson:2005, Larsson:2006, Oechtering:2007, Oechtering:2008, Oechtering:2009, OB08pacm, Wyrembelski:2008, SKim:2007, Kim:ISIT2009, Schnurr:2007, Kim:sarnoff, Rankov:2006}.
Using DF relaying allows the use of network coding at the message level for the broadcast phase and prevents the re-transmission and possible amplification of noise, at the cost of forcing the relays to decode the messages, possibly reducing the permissible transmission rates. Compress and forward (CF), as first introduced in the context of the classical relay channel \cite{Cover:2006}, and considered in the bi-directional relaying context in \cite{Rankov:2006, Schnurr:2007, Kim:sarnoff} and the conceptually related de-noise and forward  \cite{Popovski:ICC}, \cite{Popovski:2006a}, \cite{Popovski:2006b}, requires the relay to re-transmit a quantized or compressed version of the received signal. This scheme has the advantage that the rate need not be lowered so as to allow the relay to fully decode it, but may still mitigate some of the noise amplification effects seen in AF relaying by judicious choice of the quantizer or compressor. In the authors' previous work, \cite{Kim:sarnoff, SKim:2007} the {\it mixed} forwarding scheme is also proposed, in which the streams of information traveling in the two directions are treated differently, i.e. one direction may use DF while the other uses CF, exploiting the intuitive fact that DF is preferable when a relay is ``close'' to the source while CF is generally preferable, rate-wise when the relay is ``close'' to the destination \cite{Kramer:Gastpar:Gupta, SKim:2007}.
% Expanding on the intuition gained from the classical relay channel \cite{Kramer:Gastpar:Gupta}, and as further shown in the bi-directional relaying context \cite{SKim:2007}, when a relay is close (or alternatively sees a weak channel to the source relative to the destination) to the destination it is preferable to perform CF, while if it closer (or sees a better channel to the source relative to the destination)  to the source DF is a better choice. Thus, mixed schemes in general must be considered.

2. {\bf Duplexing:} Both full-duplex as well as half-duplex nodes and their corresponding achievable rate regions have been considered for bi-directional relaying. In \cite{Wu:2005, Larsson:2005, Larsson:2006, Kim:ICDCSW07, SKim:2007, Oechtering:2007, Rankov:2007, Popovski:ICC, Popovski:2006a, Popovski:2006b, Vaze:2009, Yuen:VTC, Schnurr:2007, Kim:sarnoff} half-duplex nodes are assumed. This forces communication to take place over a number of phases, using different temporal {\it protocols}. A temporal protocol specifies which nodes simultaneously transmit at which time.
Three of the most commonly considered protocols are depicted in \ref{fig:TWRC}: the naive 4 phase protocol, the 3 phase time-division broadcast channel (TDBC) protocol, as well as the 2phase Multiple-access broadcast channel (MABC) protocol, where each ``layer'' of nodes describes a different temporal phase. It is interesting to note that in half-duplex protocols, the TDBC allows a destination to obtain {\it side-information}, or extra overheard knowledge, about the other user's message during the phase in which the message is destined to the relay. This is not possible in 2 phase MABC protocols in which both nodes transmit simultaneously to the relay in one phase and are thus unable to overhear any of the other relay's message.
% In \cite{Kim:ICDCSW07, Kim:2008} inner and outer bounds for the three phase TDBC protocol as well as a hybrid four phase protocol in which both TDBC and MABC  schemes are combined are also derived. It is shown that neither TDBC or MABC dominates the other for all possible channels, and, surprisingly, that the hybrid protocol combining the two is able to achieve points strictly outside the {\it outer} bounds of the individual TDBC and MABC protocols.
In \cite{Kim:ICDCSW07, SKim:2007} it is shown that neither TDBC or MABC dominate each other for all channel gains and SNRs. In \cite{Kim:sarnoff, SKim:2008a} a comprehensive treatment of CF, DF, AF and mixed forwarding schemes under  both the MABC and TDBC protocols highlights the significant interplay between relaying types and protocols.
In \cite{Oechtering:2007, Oechtering:2008, Oechtering:2009, OB08pacm, Wyrembelski:2008} the authors have thoroughly analyzed the broadcast phase
%from a number of angles and assumptions (thus assuming half-duplex, decode and forward relaying)
 of the bi-directional relay channel. The full-duplex scenarios have been considered somewhat less: in \cite{Rankov:2006} the authors derived achievable rate regions for the {\it restricted} two-way relay channel using DF, CF and AF schemes, in which the terminals may not cooperate in transmitting their messages. In \cite{nam:2009bit} full-duplex nodes are considered in order to analyze the system from a finite bit perspective.

\begin{figure}
\centerline{\epsfig{figure=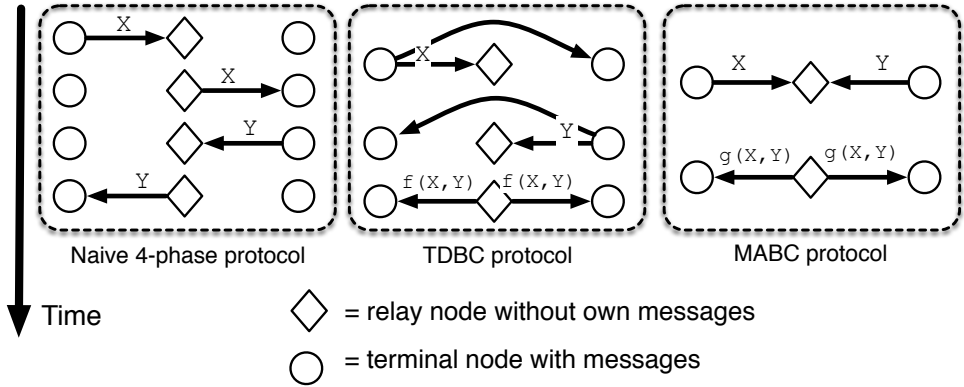, width=13cm}}
\caption{Three most common half-duplex bi-directional relaying protocols: a naive 4 phase protocol, the 3 phase TDBC (Time Division Broadcast Channel) protocol and the 2 phase MABC (Multiple Access Broadcast Channel) protocol.}
\label{fig:TWRC}
\end{figure}

%A number of extensions to include multiple terminal nodes or multiple relay nodes have been recently considered.
{\bf Extensions to multiple-relays and multiple terminal nodes.}
The bi-directional relay channel has been extended to include multiple relays \cite{SKim:2008b, OB08brru, Kim:ISIT2009, Pooniah:2008, Vaze:2008, Vaze:2008codeornot, Lutz:2009} and multiple terminals nodes (or multiple bi-directional data streams)  \cite{ Ghozlan_MIMO_switch, Chen:2008:CISS,Avestimehr:2009:ITW, Gunduz:2009:ISIT}. A variety of promising methods have been employed, but to date the capacity regions of these extensions are still unknown.

The study of a single bi-directional communication link aided by multiple relays has been approached from a number of angles \cite{OB08brru, Kim:ISIT2009, Pooniah:2008, Vaze:2008, Vaze:2008codeornot, Lutz:2009}, which seek to address how to ``best'' employ the relays.
How/which relays to select has been considered in unidirectional networks in for example \cite{Luo:VTC, Madan:2008, Bletsas:2006} and was considered for the first time for bi-directional relay networks in \cite{OB08brru}.  Alternatively, multiple relays could amplify and forward the received signals in a multi-hop fashion, or may decode and cooperatively re-encode and re-transmit the received signals. In the authors' previous work in \cite{Kim:ISIT2009, SKim:2008b} three classes of multiple relays protocols for the half-duplex, non-adaptive DMC and AWGN channel models are considered and inner and outer bounds on the capacity region are derived.

%In \cite{OB08brru}  the selection of a single DF relay out of $N$ possible relays under a MABC protocol is investigated. One of the central conclusions is that a single relay is not in general able to achieve all points in the achievable rate region,  i.e. different points in the region are achieved through the help of different relays and in general there is no `best' relay to select.  Time-sharing between different relays is shown to increase the achievable rate region but how to select this relay is not considered.

Bi-directional relay channels with multiple bi-directional communication links have been much less considered than their single link counterpart:

$\bullet$ In \cite{Ghozlan_MIMO_switch} an interference network, with no direct links between terminal nodes,  in which $K$ half-duplex single-antenna source/destination pairs wish to exchange messages in a bi-directional fashion is investigates  from a diversity-multiplexing gain perspective in the delay-limited high SNR regime.

$\bullet$ The authors of  \cite{Chen:2008:CISS} consider a similar channel model and propose the use of a CDMA strategy to support multiple users so as to guarantee QoS to different users.
%Significant power savings over one-way communication designs are demonstrated through the use of XORed symbols and shared spreading signatures.

$\bullet$ In \cite{Avestimehr:2009:ITW}  multiple bi-directional pairs communicate over a shared relay in the absence of a direct link between end nodes. Under a linear deterministic channel interaction model,  an interesting equation-forwarding strategy is shown to be capacity-achieving. This intuition is transfered to the two-pair full-duplex bi-directional Gaussian relay network in \cite{Avestimehr:2009:ISIT}, where a carefully constructed superposition scheme of random and lattice codes was used to achieved rates within 2 bits of the outer cut-set bound.

$\bullet$ Finally, in \cite{Gunduz:2009:ISIT}, an arbitrary number of clusters (nodes within a cluster all wish to exchange messages)  of arbitrary numbers of full-duplex nodes are assumed to communicate simultaneously through the use of a single relay in AWGN. Nodes are not able to hear each other,
%and the relaying schemes considered in general are AF, DF, CF and lattice codes when multiple two-way channels all interfere with each other (K clusters, each of size 1). I
and it is shown that CF achieves within a constant number of bits from capacity regardless of SNR; interesting conclusions are also drawn with respect to lattice coding versus CF.

In all four examples of multi-pair bi-directional communication with a single relay, no direct link between the terminal nodes is assumed to exist. This simplifies the analysis as the tradeoff between relayed information and directly communicated information is avoided. No ``overheard'' side information from one terminal to the other is possible, and the only side information available in this channel is each node's own message.

% Our contributions
\subsection{Our contributions}
We consider a bi-directional relay network with one base station, multiple terminal nodes and one relay, all of which operate in half-duplex mode.  The physical layout is shown in Fig. \ref{fig:channel}. We assume that each terminal node communicates with the base-station in a bi-directional fashion through the relay and do not place any restrictions on the channels between the users, relays and base-stations; that is, each node has a direct link with every other node.  The desired bi-directional links may be seen from the messages $W_{i,j}$ from node $i$ destined to node $j$, and $\tilde{W}_{i,j}$ the estimate at node $j$ of the message $W_{i.j}$ that is wishes to decode from node $i$. The base-station is denoted as as node with index $0$. Two elements of the formulated problem are markedly different from prior work in this area:

$\bullet$ the assumption that one end of the bi-directional links is a single base-station rather than independent nodes as in \cite{Gunduz:2009:ISIT, Avestimehr:2009:ISIT} and \cite{Ghozlan_MIMO_switch}.

$\bullet$ we place no assumptions on the channels between nodes - i.e. our nodes can all hear each other. This allows for the possibility of causal cooperation between nodes as well as direct transmission between the base-station and the nodes, using the relay only when beneficial.

Our central contributions are three-fold:

\begin{figure}
\centerline{\epsfig{figure=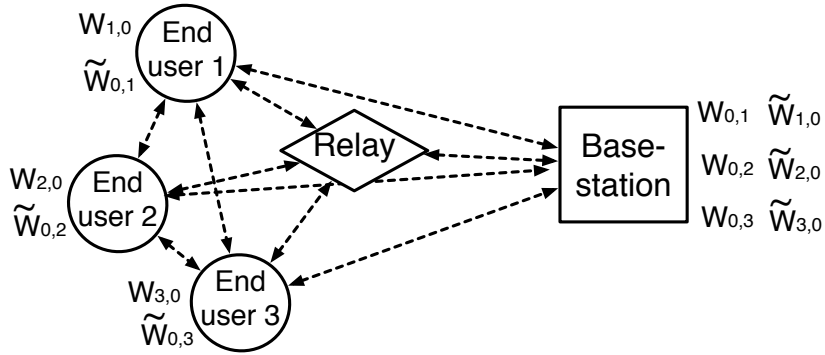, width=11cm}}
\caption{Our physical channel model consists of multiple independent bi-directional desired communication flows (indicated by arrows) between multiple terminal nodes and a single base-station. Communications may be aided using one relay node. We note that communication need not pass through the relay as direct links between the base-station and terminals exist. $W_{i,j}$ denotes the message from node $i$ to node $j$, while $\tilde{W}_{i,j}$ is the estimate at node $j$ of the message $W_{i,j}$ that it wishes to decode.}
\label{fig:channel}
\end{figure}

1.  We propose four temporal protocols which we call the FMABC \emph{(Full Multiple Access Broadcast)}, PMABC \emph{(Partial Multiple Access Broadcast)} and FTDBC \emph{(Full Time Division Broadcast)},  PTDBC \emph{(Partial Time Division Broadcast)}protocols: the first two are extensions of the MABC protocol of \cite{SKim:2007}, pictured in Fig.\ref{fig:TWRC} to multiple terminal-node pairs, while the last two are extensions of the TDBC protocol of \cite{SKim:2007}, also pictured in \Fig \ref{fig:TWRC} to multiple terminal nodes. %The relay may process and forward the received signals in a number of ways.
 %Standard forwarding strategies include decode-and-forward (DF), amplify-and-forward (AF), compress-and-forward (CF), and de-noise and forward.

2. Our central goal is to determine inner and outer bounds on the capacity region of the multi-pair bi-directional relay network. To derive achievable rate regions we suggest a number of achievable strategies, the key elements of which are:
\begin{itemize}
\item  {\it Forwarding:} We consider the decode-and-forward (DF) forwarding strategy at the relay node;
%. AF, CF and mixed-forwarding strategies at the relay are left for future work; the regions are sufficiently complex as is. At the terminal nodes, we may
and employ a Compress and Forward (CF) - type cooperation strategy at terminal nodes.
\item {\it Multiple-access:} In contrast to the ``naive'' protocol of Fig. \ref{fig:TWRC}, we will consider schemes in which multiple nodes transmit simultaneously to a single node as in a multiple-access channel.
\item  {\it Marton's broadcast region:} Due to the presence of a base-station with multiple messages (one to each of the terminal nodes) as well as a relay with multiple decoded messages (traveling in both directions),
%a multi-user broadcasting strategy will be of significant use in deriving achievable rate region. We
%we generalize Marton's broadcasting scheme \cite{marton} to $>2$ messages and note that
we use a  {\it modified} version of a generalization of Marton's broadcasting scheme \cite{marton} to $>2$ messages/users, which takes into account own-message side-information at each node.
\item {\it Random binning:} %Random binning is not only used in a Marton-like fashion but is
is  used to exploit side-information from direct links as in the TDBC protocol in \cite{SKim:2007}, with the added
%, we apply random binning to
%by combin two information flows from the direct links and the relaying link. More
challenge that the direct links involve a general Marton broadcasting scheme.
% that both links are not just a single transmission but more complicated general broadcasting.
\item {\it Network coding:}
%As in many of the classical bi-directional relay channel where network coding at the relay was able to efficiently combine the two flows of information during the final relay broadcast phase, w
We will use network coding on a {\it flow-by-flow} basis.
%to improve achievable rates in this multi-flow problem.
The different bi-directional flows are then combined using a random binning / broadcasting strategy.
\item {\it User cooperation:}
%As our channel model allows end users to over-hear the transmissions of the base-station and other end-nodes when they are not transmitting,
``Over-heard'' side information at terminal nodes is used to allow terminal nodes to cooperate - using a compress and forward strategy -  in transmitting their messages.
%Cooperation is enabled through a  compress and forward strategy. We will derive achievable rate regions with and without cooperation between terminal nodes.
\end{itemize}

% Paper outline
\subsection{Outline}
This paper is structured as follows: in Section \ref{sec:prelim}.
we introduce our notation and the different protocols we will be examining.
%an achievable rate region for the discrete memoryless broadcast channel with multiple terminals.
 In Section \ref{sec:marton} we introduce the extended Marton's bound for the general broadcast channel with more than two  receivers. In Section \ref{sec:bounds} we derive  achievable rate regions for the proposed protocols with DF relaying and CF-based terminal node cooperation, while in Section
\ref{sec:outer_bounds} we derive outer bounds for those protocols. In Section \ref{sec:Gaussian}, we apply the derived performance bounds to the Gaussian noise channel. In Section \ref{sec:regions}, we numerically compute
these bounds in the Gaussian noise channel and compare the results
for different powers and channel conditions, followed by the conclusion in Section \ref{sec:conclusion}.
The proofs of technical contributions are provided in the Appendices, while those for the bi-direcitonal inner and outer bounds are provided in the body of the document.

%%%%%%%%%%%%%%%%%%%%%%%%%%%%%%%%%%%%%%%%%%%%%%%%%%%%%%%%%%%%%

\section{Preliminaries}

%%%%%%%%%%%%%%%%%%%%%%%%%%%%%%%%%%%%%%%%%%%%%%%%%%%%%%%%%%%%%
\label{sec:prelim}

\subsection{Notations and Definitions}
We consider a base station (node $\nzero$), a set of terminal nodes
${\cal B} \eqdef \{\none,\ntwo,\cdots,\nm \}$  and a relay $\nr$ which
aids in the communication between the terminal nodes and the base station. We define ${\cal M} \eqdef {\cal B} \cup \{\nzero\} = \{\nzero, \none,\ntwo,\cdots,\nm\}$. We use $R_{i,j}$ to denote the
rate of communication from node $i$ to node $j$, i.e. the message between node $i$ and node $j$, $W_{i,j}$, lies in the set $   {\cal S}_{i,j} \eqdef \{0, \ldots, \lfloor 2^{nR_{i,j}} \rfloor - 1\}$.  Similarly, $R_{S,T}$ is the vector of rates from set $S$ to set $T$ where $S,T \subseteq {\cal M}$ at which the messages  $W_{S,T}\eqdef \{W_{i,j} | i\in S, j\in T,~ S,T\subseteq {\cal M}\} $ may be reliably communicated.
We assume that each end user communicates with the base station bi-directionally and that no information is directly exchanged between end users: i.e.
%We note that this assumption is removed in ongoing work, in which more general protocols are derived \cite{Kim:inprep}.
 every pair of terminal nodes $\nzero$ and $i \in {\cal B}$ wish to exchange independent messages while $R_{i,j} = 0$ (or is undefined) for all $i,j \in {\cal B}$. Thus, there are a total of $2m$ messages in our network: $m$ from node $\nzero$ to each node $i \in {\cal B}$, and $m$ from each node $i\in {\cal B}$ to node $\nzero$, as shown in Fig. \ref{fig:channel}\footnote{We allow information exchanges between nodes in ${\cal B}$ in the cooperation protocols. However these messages are not induced from the system, but just used as temporary information for decoding original messages between node $\nzero$ and ${\cal B}$.}.

Communication takes place over a number
of channel uses, $n$ and rates are achieved in the classical
asymptotic sense as $n\rightarrow \infty$. At channel use $k$, we
use $X_i^{k}$ to denote the input distribution and $Y_i^{k}$ to
denote the distribution of the received signal of node $i$.
During phase $\ell$ we use $X_i^{(\ell)}$ to denote the
input distribution and $Y_i^{(\ell)}$ to denote the distribution of
the received signal of node $i$.
Because of the
half-duplex constraint, not all nodes transmit/receive during all
phases and we use the dummy symbol $\varnothing$ to denote that
there is no input or no output at a particular node during a
particular phase, i.e. we must have
$X_i^{(\ell)} = \varnothing$ or $Y_i^{(\ell)} = \varnothing$ for all
$\ell$ phases. For convenience, we drop the notation $\varnothing$
from entropy and the mutual information terms when a node is not
transmitting or receiving.  $\Delta_{i,n}$ is the phase
duration of phase $i$ with block size $n$ and $\Delta_i$ is the
phase duration of phase $i$ when $n\rightarrow \infty$. It is
also convenient to define $X_{S}^k \eqdef \{X_i^k | i\in S\}$, the
set of input distributions by all nodes in the set $S$ at time $k$
and similarly $X_{S}^{(\ell)} \eqdef \{X_i^{(\ell)}|i\in S\}$, a set
of input distributions during phase $\ell$.

Each node $i$ has channel input alphabet ${\cal X}^*_i = {\cal
X}_i \cup \{ \varnothing \}$ and channel output alphabet ${\cal
Y}^*_i = {\cal Y}_i \cup \{ \varnothing \}$, which are related through a discrete memoryless channel\footnote{Arguments and extensions to Gaussian noise channels will be addressed in Section \ref{sec:Gaussian}.}.
 Lower case letters $x_i$
denote instances of the upper case $X_i$ which lie in the
calligraphic alphabets ${\cal X}_i^*$. Boldface ${\bf x}_i$
represents a vector indexed by time at node $i$. Finally, it is
convenient to denote by ${\bf x}_S \eqdef \{{\bf x}_i | i\in S\}$, a
set of vectors indexed by time. We also use the notation ${\bf x}_S(w_{S,T})$ to denote the dependence of ${\bf x}_S$ on the message set $w_{S,T}$.

For a block length $n$, encoders and decoders are functions
$X_i^k(W_{\{i\},{\cal M}},Y_i^1,\cdots, Y_i^{k-1})$ producing an encoded message at node $i$, and
$\tilde{W}_{i,j}(Y_j^1, \cdots, Y_j^n, W_{\{j\},{\cal M}})$ producing a decoded message or error at node $j$ when it wishes to decode  the message $W_{i,j}$ from node $i$. We define error events
$E_{S, T} \eqdef \{{W}_{i, j} \neq \tilde{W}_{i,j}(\cdot) | i\in S, j\in
T\}$ for decoding the messages ${W}_{ S,\{j\}}$ at nodes $j\in {T}$ at
the end of the block of length $n$, and $E_{ S, T}^{(\ell)}$ as the
error event at nodes $j\in {T}$ in which nodes $j\in {T}$ independently attempt to
decode ${W}_{S,T}$ at the end of phase $\ell$ using a joint
typicality decoder.
A set of rates $R_{i,j}$ is said to be achievable for a protocol with phase durations $\Delta_\ell$  if there exist
encoders/decoders of block length $n=1,2,\cdots $  with both
$P[E_{i,j}]\rightarrow 0$  and $\Delta_{\ell,n}\rightarrow \Delta_{\ell}$ as $n\rightarrow \infty$ for all $\ell$.  An achievable rate region (resp. capacity region) is the closure of a set of (resp. all) achievable rate
tuples for $\Delta_{\ell}$.

Let $A(U)$ be the set of $\epsilon$-strongly-typical
${\bf u}$ sequences. Similarly, $A^{(\ell)}(UV)$ represents the set of $\epsilon$-strongly-typical
$({\bf u}^{(\ell)},{\bf v}^{(\ell)})$ sequences of length $n \cdot \Delta_{\ell,n}$
according to the distributions $U$ and $V$ in phase $\ell$. Also define the event $D^{(\ell)}({\bf u},{\bf v})\eqdef
\{({\bf u}^{(\ell)},{\bf v}^{(\ell)}) \in A^{(\ell)}(UV)\}$ and use
 ${\bar D}$ to denote the complement of the event $D$. Similarly, ${\bar E}$ is defined.
We denote $\bigotimes$ as the cartesian product, i.e., $\bigotimes_{i=1}^3 {\cal X}_i = {\cal X}_1 \times {\cal X}_2 \times {\cal X}_3 $.
Finally, let  $S(j) \eqdef \{i | i<j , i\in S\}$.

\subsection{Protocols for multiple terminal-node pairs}

The total transmission time is divided into two time {\it divisions}, each of which may consist of one or more {\it phases}.  During the first time division - called the \emph{multiple access} division, the terminal nodes transmit  to the relay. During the second time division - called the \emph{broadcast} division - the relay transmits to the terminal nodes.
In the multiple access period, we consider four cases: 1) all terminal nodes transmit for the whole duration, 2) $\nzero$ uses the whole duration and the other terminal nodes $\none,\cdots,\nm$ transmit sequentially, 3) all nodes transmit sequentially and 4) $\nzero$ first transmits and the other terminal nodes multiple access. We denote the first protocol as \emph{Full Multiple Access Broadcast} (FMABC) protocol, the second protocol as \emph{Partial Multiple Access Broadcast} (PMABC) protocol, the third one as \emph{Full Time Division Broadcast} (FTDBC) protocol, and the last one as \emph{Partial Time Division Broadcast} (PTDBC) protocol.
A more visual description of which nodes transmit when is provided in  \Fig \ref{fig:protocol}.

For comparison purposes in our simulations, we also introduce what we call the \emph{simplest sequential protocol} where all terminal nodes sequentially transmit information to the relay, i.e., $\nzero\rightarrow \nr~,~\none\rightarrow \nr,\cdots,\nm\rightarrow\nr$, then the relay sequentially transmits them to the proper destinations, i.e., $\nr\rightarrow \nzero~,~\nr\rightarrow \none,\cdots,\nr\rightarrow\nm$.

\begin{figure}[t]
 \begin{center}
  \epsfig{figure=./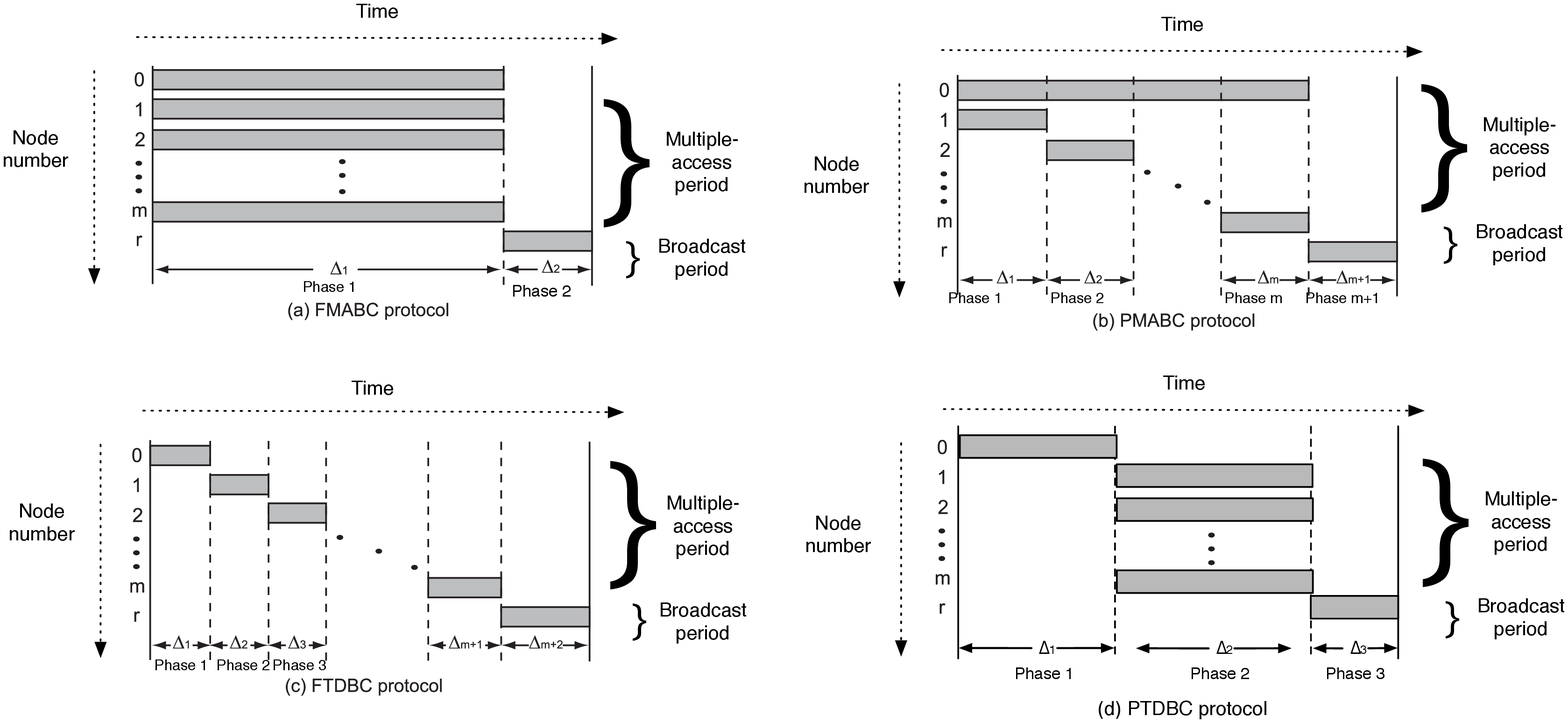, width=16cm}
  \caption{Four proposed half-duplex protocols - the time phases of the different protocols are seen; the encoders and decoders in the different phases may vary.}
  \label{fig:protocol}
 \end{center}
\end{figure}

%%% qlqchose ici
The FMABC, PMABC, FTDBC, and PTDBC protocols describe the temporal phases or divisions of the transmission scheme but not what each node sends, or how its messages are encoded during those phases.
%To improve the performance of the previous basic protocols and derive achievable rate regions,
We will exploit  three types of coding schemes -  network coding, random binning and user cooperation:
%We indicate ``-N'', ``-R'' and ``-C'' for network coding, random binning and user cooperation. For example, PMABC-NRC means the PMABC protocol with network coding, random binning and user cooperation, while FTDBC-NR means the FTDBC protocol with network coding and random binning.

\subsubsection{Network Coding}
 We will use network coding on a \emph{flow-by-flow} (each flow consists of two bi-directional messages $W_{i,0}$ and $W_{0,i}$)  basis to improve achievable rates.
The relay $\nr$ uses \emph{decode and forward} (DF) scheme for broadcasting the received signals.  The relay $\nr$ estimates $\{w_{\nzero,i}\}$ and $\{w_{i,\nzero}\}$, at the end of the multiple access period, and constructs $w_{\nr_i} = w_{\nzero,i} \oplus w_{i,\nzero}$, $\forall i \in {\cal B}$. Next, the DF relay $\nr$ constructs $w_\nr = (w_{\nr_\none},w_{\nr_\ntwo},\cdots,w_{\nr_\nm})$ and broadcasts ${\bf x}_\nr(w_\nr)$ during the broadcast period.

\subsubsection{Random binning}
Random binning is not only used in a Marton-like fashion but is further used to
exploit side-information from direct links in the PMABC, FTDBC and PTDBC protocols. We apply random
binning to combine, at an end user,  the information received along from the direct link,  and that received along the relaying link.
 Similar to the TDBC protocol of the single pair case in \cite{SKim:2007}, we use random binning when the relay is transmitting to the destination nodes. Also, as in the Multi-Hop Multi-Relay protocol in \cite{SKim:2008b}, each terminal node listens to the other terminal nodes' signals whenever it is not transmitting itself, thereby building up side-information which may be exploited in decoding its own message after the relay broadcasting division/phase.

\subsubsection{Cooperation between terminal nodes}
As our channel model allows end users to over-hear the transmissions of the
base-station and other end-nodes when they are not transmitting, "over-heard" side information is
available at terminal nodes. This may be used to allow terminal nodes to cooperate in transmitting their messages. Cooperation is enabled through a compress and forward strategy in which each terminal node in ${\cal B}$ compresses the signals received during the relay broadcast period using an auxiliary message set, which it then transmits during the next multiple access period. If other nodes can decode this auxiliary message,  they are able to obtain the compressed received signals which in turn may be used to decode messages from the relay.

To concretely illustrate how our cooperation strategy operates, we describe cooperation for the PMABC protocol; the FTDBC protocol can be similarly constructed, and cooperation is impossible under the FMABC and PTDBC protocols (as there are no overhead messages).  We apply the \emph{sliding window} and \emph{Compress and Forward} schemes when node $i$ $(\in {\cal B})$ is transmitting: first, we divide the total time duration into $K+1$ slots and each slot consists of $m+1$ phases. Every message $w_{i,j}$ is also divided into $K$ blocks as $\{w_{i,j|(1)},\cdots,w_{i,j|(K)}\}$, and node $i$ transmits $\{w_{i,j|(k)}\}$ during slot $k$ and phase $i$ (for PMABC).
After relay $\nr$ broadcasts ${\bf x}_\nr$ during slot $k$ and phase $m+1$, node $i$ compress ${\bf y}_i$ to ${\hat {\bf y}}_i$ with auxiliary message set $\{w_{\{i\},{\cal B}}\}$. Then node $i$ broadcasts ${\bf x}_i(w_{i,\nzero|(k+1)},w_{\{i\},{\cal B}|(k)})$ during slot $k+1$ and phase $i$ except for the first and last slots. During the first and last slot, $i$ sends ${\bf x}_i(w_{i,\nzero|(1)},1)$ and ${\bf x}_i(1,w_{\{i\},{\cal B}|(K)})$, respectively.

In general, joint typicality is non-transitive. However, by using strong joint-typicality, and the fact that for the
distributions of interest $x \rightarrow y \rightarrow \hat{y}$, we will be able to argue
joint typicality between ${\bf x}$ and $\bf {\hat{y}}$ by the {\it Markov lemma} (Lemma 4.1
in \cite{berger:1977}). If node $j$ $(\in {\cal B}~, j\neq i)$ can decode $\tilde{w}_{\{i\},{\cal B}|(k)}$ at the end of slot $k+1$ and phase $i$, node $j$ can use the sequence ${\hat {\bf y}}^{(m+1)}_i(w_{\{i\},{\cal B}|(k)})$ for decoding $\tilde{w}_{\nzero,j|(k)}$. Let ${\cal J}_j$ be the set of nodes whose message can be decoded by node $j$, i.e., ${\cal J}_j = \{i|\tilde{w}_{\{i\},{\cal B}|(k)} = w_{\{i\},{\cal B}|(k)}~,\forall k\in[1,K+1]\}$. Then node $j$ uses the jointly typical sequences $({\bf x}^{(m+1)}_\nr(w_\nr),{\bf y}^{(m+1)}_{j},{\hat {\bf y}}^{(m+1)}_{{\cal J}_j}(w_{{\cal J}_j,{\cal B}|(k)}))$ to decode $\tilde{w}_{\nzero,j|(k)}$. \Fig \ref{fig:cooperation} illustrates an example of the PMABC protocol with $m=2$ terminal nodes (and hence four messages).

\begin{figure}[t]
 \begin{center}
  \epsfig{figure=./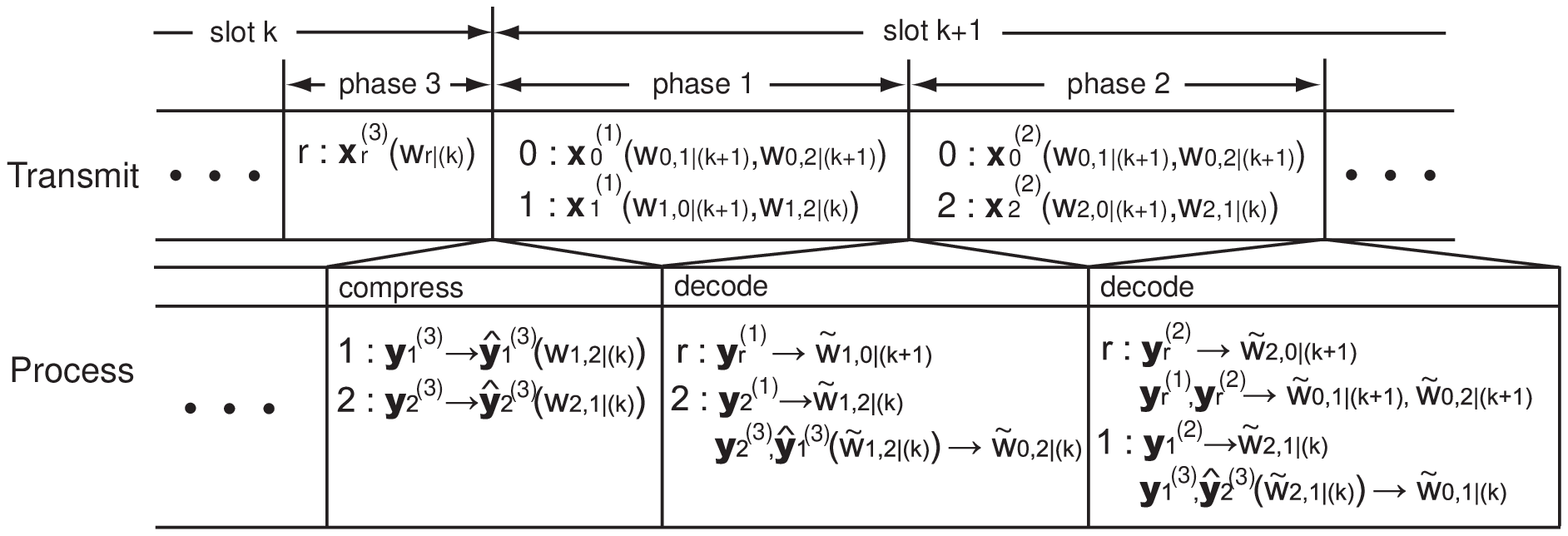, width=6in}
  \caption{An example of the PMABC protocol with $m=2$ and cooperation.}
  \label{fig:cooperation}
 \end{center}
\end{figure}

 In the remainder of this paper, we will be considering a number of different protocols combined with different encoding/decoding schemes. One of our goals is to determine whether all encoding schemes are useful and if so, under what channel conditions certain protocols/encoding schemes out-perform others. In Table \ref{table:protocol}, we summarize the considered protocols and corresponding coding schemes. The indices/notation should be read as ``N'' means Network coding, ``R'' means Random binning, and ``C'' means Cooperation between end nodes.
 % - the Table highlights which encoding schemes are used in which protocols.
  We will derive achievable rate regions for all protocols in the left hand column of Table \ref{table:protocol} and derive an outer bound for FMABC, PMABC, FTDBC and PTDBC separately. All outer bounds are variations of cut-set outer bounds along the lines of \cite{SKim:2007, SKim:2008a, SKim:2008b}.

\begin{table}
\caption{Protocols and coding schemes}
\label{table:protocol}
\centering
\begin{tabular}{l||c|c|c|c|c}
  \hline
  % after \\: \hline or \cline{col1-col2} \cline{col3-col4} ...
  Protocol & Multiple Access & Marton's Broadcast & Network coding & Random binning & User cooperation \\
  \hline
  Simplest & -- & -- & -- & -- & -- \\
  \hline
  FMABC & X & X & -- & -- & --\\
  FMABC-N & X & X & X & -- & --\\
  \hline
  PMABC & X & X & -- & -- & --\\
  PMABC-NR & X & X & X & X & --\\
  PMABC-NRC & X & X & X & X & X\\
  \hline
  FTDBC & -- & X & -- & -- & --\\
  FTDBC-NR & -- & X & X & X & --\\
  FTDBC-NRC & -- & X & X & X & X\\
  \hline
  PTDBC & X & X & -- & -- & --\\
  PTDBC-NR & X & X & X & X & --\\
  \hline
\end{tabular}
\end{table}

%%%%%%%%%%%%%%%%%%%%%%%%%%%%%%%%%%%%%%%%%%%%%

\section{Extension of Marton's inner bound}

%%%%%%%%%%%%%%%%%%%%%%%%%%%%%%%%%%%%%%%%%%%%
\label{sec:marton}
Due to the presence of $>2$ independent messages at the relay who will wish to broadcast messages to the terminal and base-station nodes, for completeness, we first present a simple extension of Marton's achievable rate region for the 2 user discrete memoryless broadcast channel \cite{marton} to $m$ independent receivers in the absence of common information. This Theorem will be used in the derivation of the achievable rate regions - where it will be carefully combined with the unique side-information available in a bi-directional relay channel, i.e. each node has knowledge of its own message.
%  The key idea of binning the messages remains the same and results in a sum-rate constraint (see equation \eqref{eq:sum}).

\begin{theorem}
\label{theorem:gbc}
[Extended Marton] An achievable rate region for sending independent information over the broadcast channel $X \rightarrow (Y_1,Y_2,\cdots, Y_m)$ is the closure of all points $(R_1,\cdots, R_m)$ satisfying
\begin{align}
%R_i &< I(U_i;Y_i)~~~~~~~~~~~~~~~~~~~~~~~~~~~~~~~~~~~~\forall i\in[1,m]\\
\sum_{i\in S} R_i &< \sum_{i\in S} I(U_i;Y_i) -  I(U_i;U_{S(i)})~~~~~~~~~\forall S\subseteq {\cal B}\label{eq:sum}
\end{align}
over all joint distributions
$p(u_1,\cdots,u_m,x)p(y_1,\cdots,y_m|x)$, over the alphabet $\bigotimes_{i=1}^m {\cal U}_i \times {\cal X}\times \bigotimes_{i=1}^m {\cal Y}_i$. \thmend
\end{theorem}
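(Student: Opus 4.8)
The plan is to prove this extended Marton region by a standard random coding argument with joint typicality encoding and decoding, generalizing the two-user Marton construction to $m$ receivers. I would fix a joint distribution $p(u_1,\dots,u_m,x)$ (with the channel $p(y_1,\dots,y_m|x)$ given) and auxiliary rates $\tilde R_1,\dots,\tilde R_m$ with $\tilde R_i \geq R_i$. For each $i$, generate $2^{n\tilde R_i}$ codewords $\bfu_i(\ell_i)$, $\ell_i \in \{1,\dots,2^{n\tilde R_i}\}$, i.i.d.\ according to $p(u_i)$. Partition each index set into $2^{nR_i}$ bins of equal size $2^{n(\tilde R_i - R_i)}$; the bin index is the message $w_i$.

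\textbf{Encoding.} Given the message tuple $(w_1,\dots,w_m)$, the encoder searches, sequentially or jointly, for an index tuple $(\ell_1,\dots,\ell_m)$ with $\ell_i$ in bin $w_i$ for every $i$ such that $(\bfu_1(\ell_1),\dots,\bfu_m(\ell_m))$ are jointly typical with respect to $p(u_1,\dots,u_m)$. If such a tuple exists, pick one and then generate $\bfx$ according to $\prod_k p(x_k|u_{1,k},\dots,u_{m,k})$; otherwise declare an encoding error. The probability that no jointly typical tuple exists goes to zero, by the multivariate mutual covering / generalized covering lemma, provided the "surplus" rates $\tilde R_i - R_i$ are large enough; the precise sufficient condition is that for every subset $S \subseteq \cal B$,
\begin{align}
\sum_{i\in S}(\tilde R_i - R_i) &> \sum_{i\in S} I(U_i; U_{S(i)}) + \text{(small slack)}.\label{eq:cover}
\end{align}
This is the generalization of Marton's covering condition $\tilde R_1 - R_1 + \tilde R_2 - R_2 > I(U_1;U_2)$ to $m$ auxiliaries, and it is cleanest to derive it by a telescoping/chain-rule argument: the $\ell_1$ search needs no typicality, then $\ell_2$ against $\bfu_1$, then $\ell_3$ against $(\bfu_1,\bfu_2)$, etc., which is exactly why the term $I(U_i;U_{S(i)})$ with $S(i)=\{j\in S : j<i\}$ appears.

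\textbf{Decoding and error analysis.} Receiver $i$ looks for the unique $\hat\ell_i$ such that $(\bfu_i(\hat\ell_i),\bfy_i)$ is jointly typical; it then outputs the bin index of $\hat\ell_i$ as $\hat w_i$. Decoding fails only if the true $\bfu_i$ is atypical with $\bfy_i$ (probability $\to 0$ by the joint AEP, using the fact that the selected $\bfu_i$ and the generated $\bfx$, hence $\bfy_i$, are jointly typical with high probability) or if some competing $\bfu_i(\ell_i')$ with $\ell_i'$ in a different bin is also jointly typical with $\bfy_i$; a union bound over the at most $2^{n\tilde R_i}$ competitors shows this has vanishing probability as long as $\tilde R_i < I(U_i;Y_i) - \text{slack}$. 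Combining the decoding constraints $\tilde R_i < I(U_i;Y_i)$ with the covering constraints \eqref{eq:cover} and applying Fourier--Motzkin elimination to remove the $\tilde R_i$ yields exactly the claimed region: for each $S$, adding $\tilde R_i < I(U_i;Y_i)$ over $i\in S$ and subtracting the single inequality \eqref{eq:cover} for that same $S$ gives $\sum_{i\in S} R_i < \sum_{i\in S} I(U_i;Y_i) - I(U_i;U_{S(i)})$. One checks that no other facets survive the elimination (each receiver's decoding constraint only involves its own $\tilde R_i$, so the subset structure is inherited directly), and finally a time-sharing / closure argument gives the stated region as a closed convex set; letting $n\to\infty$ drives all slack terms to zero.

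\textbf{Main obstacle.} The delicate point is the multivariate covering lemma: showing that an admissible jointly typical $(\bfu_1,\dots,\bfu_m)$ tuple, with one index drawn from each prescribed bin, exists with high probability. For $m=2$ this is Marton's mutual covering lemma; for general $m$ one needs the nested/iterated version, and the second-moment (variance) computation must be organized carefully so that the pairwise-independence structure within each bin is exploited and the error exponent stays positive exactly when \eqref{eq:cover} holds for \emph{all} subsets $S$ simultaneously — not merely for $S=\cal B$. I would handle this by proving the covering step inductively on $|S|$ (condition on having already found typical $\bfu_j$ for $j<i$, then cover for coordinate $i$), which makes the appearance of $I(U_i;U_{S(i)})$ transparent and reduces the variance bound to $m$ applications of the standard two-variable estimate. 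The rest (decoding, Fourier--Motzkin, closure) is routine.
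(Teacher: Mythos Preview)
Your architecture—independent auxiliary codebooks with binning, a covering step to find a jointly typical $(\bfu_1,\dots,\bfu_m)$ with one index per bin, per-receiver typicality decoding, then Fourier--Motzkin on the $\tilde R_i$—is exactly the paper's proof (the paper merely short-circuits the elimination by setting $\tilde R_i = I(U_i;Y_i)-\epsilon$ from the outset). The substantive difference is how the covering step is handled: the paper proves it as a separate second-moment/Chebyshev lemma (Lemma~\ref{lemma:gbc} in Appendix~\ref{app:lemma}), following the El~Gamal--van~der~Meulen argument.

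The sequential scheme you sketch for that lemma is where the gap lies. If you ``condition on having already found typical $\bfu_j$ for $j<i$, then cover for coordinate $i$,'' you are committing to a single $\bfu_j$ before searching over index $i$, and this produces the \emph{individual} constraints $\tilde R_i - R_i > I(U_i;U_1,\dots,U_{i-1})$, not the subset-sum family~\eqref{eq:cover}. After eliminating $\tilde R_i$ you would recover only a corner point of the region (for $m=2$: one vertex of Marton's pentagon, not the whole pentagon). The reason \emph{every} subset $S$ must appear in~\eqref{eq:cover} is the variance computation for $\|D_{\bf w}\|$: in $E[\|D_{\bf w}\|^2]$ the cross terms where two candidate index tuples agree on some $T\subseteq{\cal B}$ and disagree on ${\cal B}\setminus T$ contribute, relative to $(E\|D_{\bf w}\|)^2$, a factor of order $2^{-n\bigl(\sum_{i\in T}(\tilde R_i-R_i)-\sum_{i\in T}I(U_i;U_{T(i)})\bigr)}$, and each such $T$ has to be driven to zero separately. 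That is not reducible to ``$m$ applications of the standard two-variable estimate''; you need the full joint second-moment bound over all partial overlaps (or invoke the multivariate covering lemma as a black box). The rest of your argument—decoding, the observation that FM elimination produces no extra facets, closure—is fine.
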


\begin{proof}
{\em Random code generation: } For $\epsilon >0$, $n>0$ and $i\in [1,m]$, generate  $n$-length sequences ${\bf u}_i(w_{i0})$, $w_{i0}\in \{0,1,\cdots, \lfloor 2^{n(I(U_i;Y_i)-\epsilon)}\rfloor -1 \}$, each with probability
\begin{align}
P({\bf u}_i) =\left\{
              \begin{array}{ll}
                \frac{1}{\|A(U_i) \|}, & {\bf u}_i\in A(U_i) \\
                0, & \hbox{otherwise.}
              \end{array}
            \right.
\end{align}
where $\|A\|$ is the size of the set $A$. Also define bin $B_j^i \eqdef \{w_{i0} | w_{i0} \in [j\cdot\lfloor2^{n(I(U_i;Y_i)-R_i-\epsilon)}\rfloor, (j+1)\cdot\lfloor2^{n(I(U_i;Y_i)-R_i-\epsilon)}\rfloor-1]\}$ for $j\in \{0,1,\cdots,\lfloor 2^{nR_i}\rfloor -1\}$.

{\em Encoding: }
To transmit a pair of messages $(w_1,\cdots,w_m)$, pick a pair $(w_{10},\cdots,w_{m0})\in\bigotimes_{i=1}^m B^i_{w_i}$ which satisfies ${\bf u}_S(w_{S0})\in A(U_S) $ $\forall S\subseteq {\cal B}$, $|S|>1$, where $w_{S0}\eqdef \{w_{i0}|i\in S\}$. Such a $(w_{10},\cdots,w_{m0})$ exists with high probability if
\begin{align}
\sum_{i\in S} R_i &< \sum_{i\in S}\left( I(U_i;Y_i) - I(U_i;U_{S(i)}) \right) -|S|\epsilon -\delta(\epsilon)  ~~~~ \forall S\subseteq {\cal B}~,~|S|>1
\end{align}
from Lemma \ref{lemma:gbc} in the Appendix \ref{app:lemma}.
Then send an ${\bf x}$ which is jointly typical with $({\bf u}_1(w_{10}),\cdots,{\bf u}_m(w_{m0}))$.

{\em Decoding: } Receiver $i$ decodes $w_{i0}$ using jointly typical decoding of the sequence $({\bf u}_i,{\bf y}_i)$.

{\em Error analysis: }
We have the following error events :

\begin{description}
\item[$E_{\text{en}}$ ]: there does not exist a pair $(w_{10},\cdots,w_{m0}) \in \bigotimes_{i=1}^m B^i_{w_i}$ such that ${\bf u}_S(w_{S0}) \in A(U_S)$ , $\forall S\subseteq {\cal B},$ $|S|>1$.
\item[$E_{\text{emp}}$ ]: $({\bf u}_1(w_{10}),\cdots,{\bf u}_m(w_{m0}),{\bf x},{\bf y}_1,\cdots,{\bf y}_m) \not \in A(U_1 \cdots U_m \, X \, Y_1 \cdots Y_m)$.
\item[$E_i$]: there exists ${\tilde w}_{i0} \neq w_{i0}$ such that $({\bf u}_{i0}({\tilde w}_{i0}),{\bf y}_i)  \in A(U_i Y_i)$.
\end{description}

Then,
%edit:sj start
\begin{align}
P[E] \leq & P[E_{\text{en}}]+P[E_{\text{emp}}]+\sum_{i=1}^m P[E_i]\\
\leq & \epsilon' + \sum_{i=1}^m 2^{n(R_i - I(U_i;Y_i) + \epsilon)} \label{eq:gbc:1}
\end{align}
%edit:sj end

Since $\epsilon > 0$ is arbitrary, the conditions of Theorem \ref{theorem:gbc} and the AEP property guarantee that the right hand sides of \eqref{eq:gbc:1} as $n \rightarrow \infty$.
\end{proof}

%%%%%%%%%%%%%%%%%%%%%%%%%%%%%%%%%%%%%%%%%%%%%%%%%%%%%%%%%%%%%%%%%

\section{Achievable rate regions}

%%%%%%%%%%%%%%%%%%%%%%%%%%%%%%%%%%%%%%%%%%%%%%%%%%%%%%%%%%%%%%%%%
\label{sec:bounds}
 We first derive achievable rate regions for the {\it simplest,} FMABC, PMABC, FTDBC, and PTDBC protocols using only conventional (for comparison) multiple access channel (MAC) and broadcast channel (BC) coding techniques - along the lines of Theorem 15.3.6 in \cite{Cover:2006} for the MAC and the extended Marton's region of Theorem \ref{theorem:gbc} for the BC, as all protocols may be seen as combinations of MACs and BCs in the different phases. Due to their simplicity, the proofs for these simple cases are omitted and may be obtained as extensions of \cite{SKim:2007,SKim:2008a, SKim:2008b}.

 Subsequently, we ask whether  these rate regions may be improved upon by using the more elaborate coding techniques which exploit over-heard side information and own-message side-information as previously described: network coding, random binning and user cooperation. We then obtain achievable rate regions for the different protocols using different combinations of encoding schemes.
 The proofs of the schemes are included in the Appendix.

\subsection{Simplest Protocol}
\begin{theorem}
\label{theorem:simple}
An achievable rate region of the half-duplex bi-directional relay channel
under the simplest protocol with decode and forward relaying is the closure of the set of all points $(R_{\nzero,b},R_{b,\nzero})$ for all $b\in {\cal B}$ satisfying
\begin{align}
R_{\{\nzero\},{\cal B}} &< \Delta_1 I(X_\nzero^\pa ; Y_{\nr}^\pa ) \label{eq:simple:1}\\
R_{i,\nzero} &< \Delta_{i+1} I(X_i^{(i+1)};Y_\nr^{(i+1)})\label{eq:simple:2}\\
R_{{\cal B},\{\nzero\}} &< \Delta_{m+2} I(X_\nr^{(m+2)} ; Y_{\nzero}^{(m+2)} ) \label{eq:simple:3}\\
R_{\nzero,i} &< \Delta_{m+i+2} I(X_\nr^{(m+i+2)};Y_i^{(m+i+2)})\label{eq:simple:4}
\end{align}
for $i\in {\cal B}$ over all joint distributions
$\prod_{i=0}^m p^{(i+1)}(x_i)p^{(m+i+2)}(x_\nr)$, over the alphabet $\bigotimes_{i=0}^m {\cal X}_i \times {\cal X}_\nr$. \thmend
\end{theorem}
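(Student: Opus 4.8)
The plan is to exploit the fact that the \emph{simplest} protocol splits the block of $n$ channel uses into $2m+2$ temporally disjoint phases, each of which reduces to an ordinary point-to-point discrete memoryless channel, and then to invoke the classical channel coding theorem (with joint-typicality decoding) separately in each phase, combining the resulting error events by a union bound. I would first read the vector inequalities as sum-rate constraints: \eqref{eq:simple:1} means $\sum_{b\in{\cal B}}R_{\nzero,b}<\Delta_1 I(X_\nzero^\pa;Y_\nr^\pa)$ and \eqref{eq:simple:3} means $\sum_{b\in{\cal B}}R_{b,\nzero}<\Delta_{m+2}I(X_\nr^{(m+2)};Y_\nzero^{(m+2)})$, since transmitting $m$ independent messages over a point-to-point link is the same as transmitting a single message at their combined rate; the remaining constraints \eqref{eq:simple:2} and \eqref{eq:simple:4} are ordinary per-link rates because node $i$ reaches $\nr$ in its own phase $i+1$ and $\nr$ reaches node $i$ in its own phase $m+i+2$.

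For the code construction, in phase $\ell$ of duration $n\Delta_{\ell,n}$ I would draw each codeword i.i.d. from the phase-$\ell$ input law appearing in the product distribution of the theorem: in phase $1$, node $\nzero$ holds a codebook ${\bf x}_\nzero^\pa(w_{\nzero,1},\cdots,w_{\nzero,m})$ of $\prod_b\lfloor 2^{nR_{\nzero,b}}\rfloor$ codewords generated i.i.d.\ from $p^\pa(x_\nzero)$; in phase $i+1$, node $i$ holds a codebook ${\bf x}_i^{(i+1)}(w_{i,\nzero})$ from $p^{(i+1)}(x_i)$; in phase $m+2$, the relay holds ${\bf x}_\nr^{(m+2)}(w_{1,\nzero},\cdots,w_{m,\nzero})$ from $p^{(m+2)}(x_\nr)$; and in phase $m+i+2$, the relay holds ${\bf x}_\nr^{(m+i+2)}(w_{\nzero,i})$ from $p^{(m+i+2)}(x_\nr)$. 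Encoding: during the multiple-access division node $\nzero$ and each node $i\in{\cal B}$ send the codewords indexed by their own messages; the relay decodes $({\bf x}_\nzero^\pa,{\bf y}_\nr^\pa)$ and each $({\bf x}_i^{(i+1)},{\bf y}_\nr^{(i+1)})$ by joint typicality, forming estimates $\hat w_{\nzero,b},\hat w_{b,\nzero}$ of all $2m$ messages by the end of phase $m+1$, and during the broadcast division it re-encodes these estimates into ${\bf x}_\nr^{(m+2)}$ (toward $\nzero$) and ${\bf x}_\nr^{(m+i+2)}$ (toward node $i$). Decoding: node $\nzero$ recovers $(w_{1,\nzero},\cdots,w_{m,\nzero})$ from $({\bf x}_\nr^{(m+2)},{\bf y}_\nzero^{(m+2)})$ — it already knows its own outgoing messages, so nothing more is required — and node $i$ recovers $w_{\nzero,i}$ from $({\bf x}_\nr^{(m+i+2)},{\bf y}_i^{(m+i+2)})$.

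For the error analysis I would union-bound $P[E_{i,j}\ \text{for some}\ i,j]$ by the sum of (i) the relay's decoding-error events in phases $1,\cdots,m+1$, which by the AEP and standard joint-typicality (packing) arguments vanish as $n\to\infty$ whenever $\sum_b R_{\nzero,b}<\Delta_{1,n}I(X_\nzero^\pa;Y_\nr^\pa)$ and $R_{i,\nzero}<\Delta_{i+1,n}I(X_i^{(i+1)};Y_\nr^{(i+1)})$, and (ii) the terminal nodes' decoding-error events in the broadcast phases, which likewise vanish whenever $\sum_b R_{b,\nzero}<\Delta_{m+2,n}I(X_\nr^{(m+2)};Y_\nzero^{(m+2)})$ and $R_{\nzero,i}<\Delta_{m+i+2,n}I(X_\nr^{(m+i+2)};Y_i^{(m+i+2)})$; any error caused by the relay forwarding an incorrect estimate is already accounted for in (i). Since $\Delta_{\ell,n}\to\Delta_\ell$, letting $n\to\infty$ and then taking the closure over all product input distributions $\prod_{i=0}^m p^{(i+1)}(x_i)p^{(m+i+2)}(x_\nr)$ yields exactly the claimed region, and the usual random-coding/expurgation step extracts a deterministic code sequence. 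I do not expect any genuine obstacle here — the proof is essentially $2m+2$ parallel applications of Shannon's point-to-point theorem; the only things needing care are the bookkeeping of phase indices and durations and the observation that $\nzero$'s prior knowledge of $W_{\nzero,b}$ is why \eqref{eq:simple:3} sums over ${\cal B}$ rather than over all of ${\cal M}$. This is also why the authors regard the proof as routine and relegate it (as with the other ``simple'' protocols) to a straightforward extension of \cite{SKim:2007,SKim:2008a,SKim:2008b}.
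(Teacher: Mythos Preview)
Your proposal is correct and is precisely the approach the paper has in mind; indeed the paper omits this proof entirely, stating only that it ``may be obtained as extensions of \cite{SKim:2007,SKim:2008a,SKim:2008b},'' and your decomposition into $2m+2$ independent point-to-point links with joint-typicality decoding in each phase is exactly that extension. One small remark: your closing comment that ``$\nzero$'s prior knowledge of $W_{\nzero,b}$ is why \eqref{eq:simple:3} sums over ${\cal B}$'' slightly misattributes the cause --- in the simplest protocol there is no network coding, so the relay in phase $m+2$ simply re-encodes only the uplink messages $\{w_{i,\nzero}\}_{i\in{\cal B}}$, and own-message side information never enters; the constraint is purely a consequence of what the relay chooses to transmit, not of what node $\nzero$ can subtract.
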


\subsection{FMABC Protocol}

\begin{theorem}
\label{theorem:FMABC}
An achievable rate region of the half-duplex bi-directional relay channel
under the FMABC protocol with decode and forward relaying is the closure of the set of all points $(R_{\nzero,b},R_{b,\nzero})$ for all $b\in {\cal B}$ satisfying
\begin{align}
R_{S,{\cal M}} &< \Delta_1 I(X_{S}^\pa ; Y_{\nr}^\pa | X_{\bar{S}}^\pa,Q) \label{eq:FMABC:1}\\
R_{{\cal M},S} &< \sum_{i\in S} \Delta_2 I(U_i^\pb;Y_i^\pb) -  \Delta_2 I(U_i^\pb;U_{S(i)}^\pb)\label{eq:FMABC:2}
\end{align}
for $S\subseteq {\cal M}$ over all joint distributions
$p(q)\prod_{i=0}^m p^\pa(x_i|q) p^\pb(u_0,\cdots,u_m,x_\nr)$, where $U_j$'s are the auxiliary random variables with $|{\cal
Q}| \leq 2^{m+1}-1$ over the alphabet $(\bigotimes_{i=0}^m {\cal X}_i \times {\cal U}_i) \times{\cal X}_\nr \times {\cal Q}$. \thmend
\end{theorem}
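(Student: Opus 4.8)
The plan is to establish Theorem~\ref{theorem:FMABC} by combining a standard two-phase block-Markov argument: in phase~1 (duration $\Delta_1$) the terminal nodes and base station act as transmitters in a multiple-access channel to the relay $\nr$, and in phase~2 (duration $\Delta_2$) the relay broadcasts to all terminal and base-station nodes using the extended Marton scheme of Theorem~\ref{theorem:gbc}, modified to exploit own-message side information. First I would set up the codebook: in phase~1, generate for each $i\in{\cal M}$ an i.i.d.\ codebook ${\bf x}_i^\pa(w_{i,\bar i})$ according to $p^\pa(x_i|q)$ conditioned on a time-sharing sequence ${\bf q}$, where $w_{i,\bar i}$ denotes the message(s) originating at node $i$ (namely $w_{\nzero,b}$ for $i=\nzero$, or $w_{i,\nzero}$ for $i\in{\cal B}$). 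The relay, at the end of phase~1, decodes all $2m$ messages via a joint-typicality MAC decoder; the standard MAC error analysis (Theorem~15.3.6 of \cite{Cover:2006}, generalized to $m+1$ senders) yields exactly the constraints \eqref{eq:FMABC:1}, i.e.\ $R_{S,{\cal M}} < \Delta_1 I(X_S^\pa;Y_\nr^\pa\mid X_{\bar S}^\pa,Q)$ for every $S\subseteq{\cal M}$, after accounting for the phase duration $\Delta_1$ in the exponent.

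Next I would handle the broadcast phase. The relay forms the $m$ network-coded symbols $w_{\nr_i}=w_{\nzero,i}\oplus w_{i,\nzero}$ — wait, for the plain (non-``N'') FMABC protocol there is no XOR; instead the relay simply treats the pair $(w_{\nzero,i},w_{i,\nzero})$, or rather the full message vector destined for the users, as $m$ independent ``broadcast messages'' and applies Theorem~\ref{theorem:gbc}. Concretely, for each $i$ the relay must convey to node $i$ the message $w_{\nzero,i}$ (new to node $i$) and to node $\nzero$ all of $w_{\{\nzero\},{\cal B}}$'s counterparts $w_{i,\nzero}$. The key modification of Marton's region is that each receiving node already knows its \emph{own} transmitted message from phase~1, so the rate it must ``pay for'' in the broadcast phase is reduced: node $i$ need only resolve $w_{\nzero,i}$, and node $\nzero$ need only resolve $w_{{\cal B},\nzero}$. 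This is why the broadcast constraint \eqref{eq:FMABC:2} reads $R_{{\cal M},S}<\sum_{i\in S}\Delta_2\big(I(U_i^\pb;Y_i^\pb)-I(U_i^\pb;U_{S(i)}^\pb)\big)$ over $S\subseteq{\cal M}$ — it is precisely the extended Marton bound of Theorem~\ref{theorem:gbc} applied with the $m+1$ ``virtual receivers'' indexed by ${\cal M}$, with the durations scaled by $\Delta_2$. I would invoke Theorem~\ref{theorem:gbc} essentially as a black box here, taking care that the own-message knowledge correctly re-indexes which rate appears on the left-hand side (the message from $\nzero$ to $i$ is decoded at $i$, and the messages from ${\cal B}$ to $\nzero$ are jointly decoded at $\nzero$).

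Then I would combine the two phases: reliable communication requires both the phase-1 MAC constraints and the phase-2 broadcast constraints to hold simultaneously, and a union bound over all phase-1 and phase-2 error events gives vanishing total error probability as $n\to\infty$ provided all inequalities in \eqref{eq:FMABC:1}--\eqref{eq:FMABC:2} are strict. The time-sharing random variable $Q$ with $|{\cal Q}|\le 2^{m+1}-1$ arises from convexifying the MAC region via the usual Carath\'eodory-type argument (the number of active constraints in the MAC region being $2^{m+1}-1$), and the joint distribution factors as $p(q)\prod_{i=0}^m p^\pa(x_i|q)\,p^\pb(u_0,\dots,u_m,x_\nr)$ because phase-1 inputs are conditionally independent given $Q$ and the phase-2 codeword depends only on the decoded messages. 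The main obstacle I anticipate is bookkeeping rather than a deep new idea: one must carefully track, for each subset $S\subseteq{\cal M}$, exactly which messages are ``new'' at which receiver so that the own-message side information is subtracted correctly, and one must verify that the extended Marton codebook generation (the joint-typicality encoding step, relying on Lemma~\ref{lemma:gbc}) remains valid when the rates are the reduced, side-information-aware rates. A secondary technicality is ensuring the phase durations $\Delta_1,\Delta_2$ enter the mutual-information exponents consistently (codeword lengths $n\Delta_{1,n}$ and $n\Delta_{2,n}$) and that $\Delta_{\ell,n}\to\Delta_\ell$ does not disturb the typicality arguments, which follows from continuity of entropy.
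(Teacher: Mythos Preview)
Your proposal is correct and matches the paper's (omitted) approach: the paper explicitly states that Theorems~\ref{theorem:simple}--\ref{theorem:PTDBC} follow from combining the standard MAC region (Theorem~15.3.6 of \cite{Cover:2006}) for phase~1 with the extended Marton region (Theorem~\ref{theorem:gbc}) for phase~2, which is exactly your decomposition. One small clarification worth noting: in the plain FMABC protocol there is no genuine ``modification'' of Marton's region to exploit side information---the relay simply sets up an $(m{+}1)$-receiver broadcast problem with message $M_\nzero=(w_{\none,\nzero},\dots,w_{\nm,\nzero})$ to node $\nzero$ and $M_i=w_{\nzero,i}$ to node $i$, and then Theorem~\ref{theorem:gbc} is applied verbatim; the own-message knowledge only enters in that the relay omits from $M_i$ the messages node $i$ already has, so \eqref{eq:FMABC:2} is literally \eqref{eq:sum} with the rate identifications $R_\nzero=R_{{\cal B},\{\nzero\}}$ and $R_i=R_{\nzero,i}$ (contrast this with FMABC-N, where the XOR and the resulting decoding with side information do change the broadcast analysis).
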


\subsection{FMABC-N Protocol}

We consider the FMABC protocol in which Network coding is employed at the relay to combine messages on a flow-by-flow basis - i.e. the message from node $i(\in {\cal B})$ to node $\nzero$ and vice-versa are combined at the relay. In the following theorem, the $U_i$ variables are the auxiliary random variables similar to those seen in the extension of Marton's region of Theorem \ref{theorem:gbc}.
\begin{theorem}
\label{theorem:FMABC-N}
An achievable rate region of the half-duplex bi-directional relay channel
under the FMABC-N protocol with decode and forward relaying is the closure of the set of all points $(R_{\nzero,b},R_{b,\nzero})$ for all $b\in {\cal B}$ satisfying
\begin{align}
R_{S,{\cal M}} &< \Delta_1 I(X_{S}^\pa ; Y_{\nr}^\pa | X_{\bar{S}}^\pa,Q) \label{eq:FMABC:DF:GC:1}\\
R_{\{\nzero\},T} &< \sum_{i\in T} \Delta_2 I(U_i^\pb;Y_i^\pb) -  \Delta_2 I(U_i^\pb;U_{T(i)}^\pb)\label{eq:FMABC:DF:GC:6}\\
R_{T,\{\nzero\}} &< \Delta_2 I(U_T^\pb ; Y_{\nzero}^\pb, U_{\bar T}^\pb ) \label{eq:FMABC:DF:GC:3}
\end{align}
for $S\subseteq {\cal M}$ and $T\subseteq {\cal B}$ over all joint distributions
$p(q)\prod_{i=0}^m p^\pa(x_i|q) p^\pb(u_1,\cdots,u_m,x_\nr)$, where $U_j$'s are the auxiliary random variables with $|{\cal
Q}| \leq 2^{m+1}-1$ over the alphabet $\bigotimes_{i=0}^m {\cal X}_i \times\bigotimes_{j=1}^m {\cal U}_j \times{\cal X}_\nr \times {\cal Q}$. \thmend
\end{theorem}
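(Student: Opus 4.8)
The plan is to graft a flow-by-flow network code onto the proof of Theorem~\ref{theorem:FMABC}, using the extended Marton construction of Theorem~\ref{theorem:gbc} in the broadcast phase, but modified so that the relay re-transmits the exclusive-or of the two directions of each flow and so that each receiver exploits its own-message side information. In phase~$\pa$ all of $\nzero,\none,\ldots,\nm$ transmit simultaneously to $\nr$: node~$\nzero$ uses one codebook indexed by the tuple $(w_{\nzero,\none},\ldots,w_{\nzero,\nm})$, node~$i\in{\cal B}$ uses a codebook indexed by $w_{i,\nzero}$, all drawn i.i.d.\ from $p^\pa(x_i|q)$ with a common time-sharing sequence ${\bf q}$. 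The relay jointly decodes all $2m$ messages from ${\bf y}_\nr^\pa$. Since this is an $(m+1)$-user multiple-access channel, the standard analysis gives a vanishing decoding error provided \eqref{eq:FMABC:DF:GC:1} holds for every $S\subseteq{\cal M}$; this step is identical to the multiple-access half of Theorem~\ref{theorem:FMABC}, and a support-lemma argument (the region has one active constraint per non-empty subset of ${\cal M}$) yields $|{\cal Q}|\le 2^{m+1}-1$.

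Having decoded everything, $\nr$ forms $w_{\nr_i}=w_{\nzero,i}\oplus w_{i,\nzero}$ (zero-padding the shorter message, so $w_{\nr_i}$ may be taken to have any rate $\rho_i$ with $\rho_i\ge R_{\nzero,i}$ and $\rho_i\ge R_{i,\nzero}$) and sets $w_\nr=(w_{\nr_\none},\ldots,w_{\nr_\nm})$. For phase~$\pb$ it runs the Theorem~\ref{theorem:gbc} construction on $w_\nr$, but keeps the size of each auxiliary codebook as a free parameter: generate ${\bf u}_i(\ell_i)$ uniform on $A(U_i^\pb)$ for $\ell_i\in\{0,\ldots,\lfloor 2^{nS_i}\rfloor-1\}$ with $\rho_i<S_i<\Delta_2 I(U_i^\pb;Y_i^\pb)$, bin them into $2^{n\rho_i}$ bins $B^i_{w_{\nr_i}}$, pick $(\ell_1,\ldots,\ell_m)$ with $\ell_i\in B^i_{w_{\nr_i}}$ forming a jointly typical tuple, and send ${\bf x}_\nr$ jointly typical with the chosen ${\bf u}_i$'s. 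By Lemma~\ref{lemma:gbc} (applied with per-flow search-space size $2^{n(S_i-\rho_i)}$) such a selection exists with high probability as long as $\sum_{i\in S}\rho_i<\sum_{i\in S}\big(S_i-\Delta_2 I(U_i^\pb;U_{S(i)}^\pb)\big)$ for all $S\subseteq{\cal B}$. Terminal node~$i$ knows $w_{i,\nzero}$; it decodes the fine index $\ell_i$ from $({\bf u}_i,{\bf y}_i^\pb)$ (possible since $S_i<\Delta_2 I(U_i^\pb;Y_i^\pb)$), reads off the bin $w_{\nr_i}$, and recovers $w_{\nzero,i}=w_{\nr_i}\oplus w_{i,\nzero}$. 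Node~$\nzero$ knows all of $w_{\nzero,\none},\ldots,w_{\nzero,\nm}$; it performs joint-typicality decoding of the tuple $({\bf u}_1^\pb,\ldots,{\bf u}_m^\pb,{\bf y}_\nzero^\pb)$ to recover $(\ell_1,\ldots,\ell_m)$, hence all $w_{\nr_i}$, hence every $w_{i,\nzero}=w_{\nr_i}\oplus w_{\nzero,i}$.

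The error analysis is a union bound over: the phase-$\pa$ decoding error at $\nr$; the event that no valid Marton selection exists (Lemma~\ref{lemma:gbc}); for each terminal~$i$, the event $E_i$ that some $\tilde\ell_i\ne\ell_i$ is jointly typical with ${\bf y}_i^\pb$, whose probability vanishes under $S_i<\Delta_2 I(U_i^\pb;Y_i^\pb)$; and, for each $T\subseteq{\cal B}$, the event $E^\nzero_T$ that node~$\nzero$'s decoded fine-index tuple is wrong exactly on the flows in $T$. For $E^\nzero_T$ we give node~$\nzero$ a genie carrying the true ${\bf u}_j^\pb$ for $j\in\bar T$; a joint-typicality count over the $\prod_{i\in T}2^{nS_i}$ candidate tuples then shows $P[E^\nzero_T]\to 0$ whenever $\sum_{i\in T}\big(S_i-\Delta_2 I(U_i^\pb;U_{T(i)}^\pb)\big)<\Delta_2 I(U_T^\pb;Y_\nzero^\pb,U_{\bar T}^\pb)$, where we used that $\sum_{i\in T}\Delta_2 I(U_i^\pb;U_{T(i)}^\pb)$ is exactly the Marton cost $\Delta_2\big(\sum_{i\in T}H(U_i^\pb)-H(U_T^\pb)\big)$. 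Combining the Marton-selection constraint with $\rho_i\ge R_{\nzero,i}$ and $S_i<\Delta_2 I(U_i^\pb;Y_i^\pb)$ gives $\sum_{i\in T}R_{\nzero,i}\le\sum_{i\in T}\rho_i<\sum_{i\in T}\big(\Delta_2 I(U_i^\pb;Y_i^\pb)-\Delta_2 I(U_i^\pb;U_{T(i)}^\pb)\big)$, i.e.\ \eqref{eq:FMABC:DF:GC:6}; combining it with $\rho_i\ge R_{i,\nzero}$ and the $E^\nzero_T$ condition gives $\sum_{i\in T}R_{i,\nzero}\le\sum_{i\in T}\rho_i<\sum_{i\in T}\big(S_i-\Delta_2 I(U_i^\pb;U_{T(i)}^\pb)\big)<\Delta_2 I(U_T^\pb;Y_\nzero^\pb,U_{\bar T}^\pb)$, i.e.\ \eqref{eq:FMABC:DF:GC:3}. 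Eliminating the auxiliary variables $\{S_i\}$ and $\{\rho_i\}$ (Fourier--Motzkin) yields the region \eqref{eq:FMABC:DF:GC:1}--\eqref{eq:FMABC:DF:GC:3}; taking $\epsilon\to 0$, $n\to\infty$ finishes the argument, which is relegated to the appendix.

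I expect the node-$\nzero$ error event $E^\nzero_T$ to be the crux. Unlike a terminal node, node~$\nzero$ must \emph{jointly} decode $m$ auxiliary codewords that are statistically dependent (Marton selection makes $(U_1^\pb,\ldots,U_m^\pb)$ jointly, not independently, typical) while simultaneously using its own-message side information to shrink the effective message set; producing the exponent $I(U_T^\pb;Y_\nzero^\pb,U_{\bar T}^\pb)$ for \emph{every} $T$ -- rather than the conditional form $I(U_T^\pb;Y_\nzero^\pb|U_{\bar T}^\pb)$ -- hinges on the Marton-cost term in the selection constraint exactly cancelling the one in the $E^\nzero_T$ bound, so that the surviving eliminated constraint is the clean one. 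A secondary nuisance is checking that the Fourier--Motzkin step introduces no spurious inequalities and that $\rho_i$ and $S_i$ can always be chosen feasibly (e.g.\ $\rho_i=\max(R_{\nzero,i},R_{i,\nzero})$) whenever the stated inequalities hold; and, as in Theorem~\ref{theorem:gbc}, the strong-typicality/Markov-lemma machinery is needed to carry typicality from ${\bf x}_\nr$ to the pairs $({\bf u}_i,{\bf y}_i^\pb)$ and to the tuple $({\bf u}_1^\pb,\ldots,{\bf u}_m^\pb,{\bf y}_\nzero^\pb)$.
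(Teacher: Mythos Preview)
Your construction is essentially correct in spirit, but the Fourier--Motzkin step does \emph{not} come out clean, and this is not a ``secondary nuisance'': your scheme achieves a strictly smaller region than \eqref{eq:FMABC:DF:GC:1}--\eqref{eq:FMABC:DF:GC:3}. The problem is that neither of your decoders actually uses own-message side information to shrink the search space; you invoke it only at the final $\oplus$ step. Concretely, terminal $i$ decodes the fine index $\ell_i$ over the full codebook, forcing $S_i<\Delta_2 I(U_i^\pb;Y_i^\pb)$, and node $\nzero$ decodes $(\ell_1,\ldots,\ell_m)$ over $\prod_{i\in T}2^{nS_i}$ candidates, forcing $\sum_{i\in T}(S_i-\Delta_2 I(U_i^\pb;U_{T(i)}^\pb))<\Delta_2 I(U_T^\pb;Y_\nzero^\pb,U_{\bar T}^\pb)$. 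Chaining the Marton constraint $\sum_{i\in T}\rho_i<\sum_{i\in T}(S_i-\Delta_2 I(U_i^\pb;U_{T(i)}^\pb))$ through \emph{either} of these bounds, together with $\rho_i\ge\max(R_{\nzero,i},R_{i,\nzero})$, yields $\sum_{i\in T}\max(R_{\nzero,i},R_{i,\nzero})<\Delta_2 I(U_T^\pb;Y_\nzero^\pb,U_{\bar T}^\pb)$ and $\sum_{i\in T}\max(R_{\nzero,i},R_{i,\nzero})<\sum_{i\in T}\big(\Delta_2 I(U_i^\pb;Y_i^\pb)-\Delta_2 I(U_i^\pb;U_{T(i)}^\pb)\big)$, not \eqref{eq:FMABC:DF:GC:6} and \eqref{eq:FMABC:DF:GC:3}. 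Already for $m=1$ your region collapses to $\max(R_{\nzero,\none},R_{\none,\nzero})<\min\big(\Delta_2 I(U_\none^\pb;Y_\none^\pb),\,\Delta_2 I(U_\none^\pb;Y_\nzero^\pb)\big)$, which is strictly inside the rectangle the theorem claims.

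The paper's fix is to make the binning \emph{asymmetric}: the bin size is $2^{n(\Delta_2 I(U_i^\pb;Y_i^\pb)-R_{\nzero,i})}$ while the number of bins is $2^{n\max(R_{\nzero,i},R_{i,\nzero})}$, so the total codebook $R_{\nr_i}$ is allowed to exceed $\Delta_2 I(U_i^\pb;Y_i^\pb)$. Terminal $i$ then \emph{must} use its knowledge of $w_{i,\nzero}$ to restrict the admissible bin indices to $2^{nR_{\nzero,i}}$ values, so the effective list at node $i$ has size exactly $2^{n\Delta_2 I(U_i^\pb;Y_i^\pb)}$ regardless of $R_{i,\nzero}$; this yields \eqref{eq:FMABC:DF:GC:6} directly from the Marton-selection condition (no elimination needed). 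At node $\nzero$ the paper does something different from your genie: it enumerates the \emph{message} hypotheses $\tilde w_{T,\{\nzero\}}$ and, for each, runs the relay's deterministic Marton selector to obtain the corresponding $\tilde w_{\nr_T}$, so the list at node $\nzero$ has size $2^{nR_{T,\{\nzero\}}}$ rather than $\prod_{i\in T}2^{nS_i}$. This gives \eqref{eq:FMABC:DF:GC:3} immediately, again with no auxiliary rates to eliminate.
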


The rigorous proof is provided in Appendix \ref{app:FMABC-N}. We note that for the FMABC protocol only FMABC and FMABC-N (with network coding) are possible as there is no over-heard side information: during each phase every node is either transmitting or receiving - none are just listening. Hence, Random Binning and Cooperation schemes are impossible. Under the PMABC protocol however, Network Coding, Random Binning and Cooperation are all possible. We describe protocols with Network coding and Random binning, with and without Cooperation next.

\subsection{PMABC Protocol}

\begin{theorem}
\label{theorem:PMABC}
An achievable rate region of the half-duplex bi-directional relay channel
under the PMABC protocol with decode and forward relaying is the closure of the set of all points $(R_{\nzero,b},R_{b,\nzero})$ for all $b\in {\cal B}$ satisfying
\begin{align}
R_{\nzero,i} &< \Delta_i I(X_{\nzero}^{(i)} ; Y_{\nr}^{(i)} | X_{i}^{(i)},Q) \label{eq:PMABC:1}\\
R_{i,\nzero} &< \Delta_i I(X_{i}^{(i)} ; Y_{\nr}^{(i)} | X_{\nzero}^{(i)},Q) \label{eq:PMABC:2}\\
R_{\nzero,i}+ R_{i,\nzero} &< \Delta_i I(X_{\nzero}^{(i)},X_{i}^{(i)} ; Y_{\nr}^{(i)} | Q) \label{eq:PMABC:3}\\
R_{{\cal M},S} &< \sum_{i\in S} \Delta_{m+1} I(U_i^{(m+1)};Y_i^{(m+1)}) -  \Delta_{m+1} I(U_i^{(m+1)};U_{S(i)}^{(m+1)})\label{eq:PMABC:4}
\end{align}
for $i\in {\cal B}$ and $S\subseteq {\cal M}$ over all joint distributions
$p(q)\prod_{i=1}^m p^{(i)}(x_\nzero|q)p^{(i)}(x_i|q) p^{(m+1)}(u_0,\cdots,u_m,x_\nr)$, where $U_j$'s are the auxiliary random variables with $|{\cal
Q}| \leq 3m$ over the alphabet $(\bigotimes_{i=0}^m {\cal X}_i \times {\cal U}_i) \times{\cal X}_\nr \times {\cal Q}$. \thmend
\end{theorem}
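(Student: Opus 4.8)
The plan is to treat the PMABC protocol as a concatenation of $m$ single-user multiple-access phases (phases $1,\dots,m$, in which node $\nzero$ and node $i$ simultaneously transmit to the relay $\nr$ during phase $i$) followed by one broadcast phase (phase $m+1$, in which $\nr$ broadcasts to all terminal nodes), and to invoke standard random-coding arguments for each component with a time-sharing variable $Q$ threaded throughout. First I would fix a distribution of the stated product form and generate, for each $i\in{\cal B}$, independent codebooks ${\bf x}_\nzero^{(i)}(w_{\nzero,i})$ and ${\bf x}_i^{(i)}(w_{i,\nzero})$ of block length $n\Delta_{i,n}$, each drawn i.i.d. according to $p^{(i)}(x_\nzero|q)$ and $p^{(i)}(x_i|q)$ conditioned on a common time-sharing sequence ${\bf q}$. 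The relay, after phase $m$, holds estimates $\hat w_{\nzero,i},\hat w_{i,\nzero}$ for all $i$; it then applies the extended Marton construction of Theorem \ref{theorem:gbc} to the broadcast phase, generating auxiliary sequences ${\bf u}_i^{(m+1)}$ indexed by bin indices and picking jointly typical representatives, exactly as in the proof of Theorem \ref{theorem:gbc}, before sending a jointly typical ${\bf x}_\nr^{(m+1)}$.

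The decoding and error analysis then splits into two independent pieces. For the multiple-access part, phase $i$ is a two-user MAC from $(X_\nzero^{(i)},X_i^{(i)})$ to $Y_\nr^{(i)}$; by the standard MAC achievability result (Theorem 15.3.6 in \cite{Cover:2006}, conditioned on $Q$), the relay can recover both $w_{\nzero,i}$ and $w_{i,\nzero}$ with vanishing error probability provided \eqref{eq:PMABC:1}--\eqref{eq:PMABC:3} hold, the rates being scaled by $\Delta_i$ because phase $i$ occupies a fraction $\Delta_{i,n}\to\Delta_i$ of the block. A union bound over the $m$ phases keeps the aggregate multiple-access error vanishing. For the broadcast part, each terminal node $i$ uses joint-typicality decoding of $({\bf u}_i^{(m+1)},{\bf y}_i^{(m+1)})$ to recover its own bin index; the error events $E_{\text{en}}$, $E_{\text{emp}}$, $E_i$ are precisely those bounded in the proof of Theorem \ref{theorem:gbc}, and the encoding-existence condition together with the decoding condition collapse — after absorbing $\epsilon$ — to the single family of constraints \eqref{eq:PMABC:4} for all $S\subseteq{\cal M}$, with the factor $\Delta_{m+1}$ accounting for the broadcast-phase duration. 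Finally I would invoke the AEP and let $\epsilon\to0$, $n\to\infty$, and conclude by the usual cardinality argument (support lemma / Carathéodory) that $|{\cal Q}|\le 3m$ suffices, since $Q$ need only preserve the $3m$ active mutual-information quantities appearing in \eqref{eq:PMABC:1}--\eqref{eq:PMABC:4}.

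Two points deserve care. First, because the $m$ flows are handled in \emph{separate} non-overlapping phases rather than simultaneously, the multiple-access side produces no cross-flow constraints — the only coupling among flows comes through the shared broadcast phase, where the Marton sum-rate bound \eqref{eq:PMABC:4} over $S\subseteq{\cal M}$ is the sole place the flows interact; one must check that the relay's forwarded message set is $\{w_{\nzero,i}\}\cup\{w_{i,\nzero}\}$ (no network coding here, so $2m$ distinct messages, indexed over ${\cal M}$) and that this matches the index set in \eqref{eq:PMABC:4}. Second, the time-sharing variable $Q$ must be common to all phases so that the conditional distributions $p^{(i)}(x_\nzero|q)p^{(i)}(x_i|q)$ and $p^{(m+1)}(u_0,\dots,u_m,x_\nr)$ can be realized consistently; I would generate ${\bf q}$ once at the start and reuse its phase-$i$ restriction in each phase. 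I expect the main obstacle to be purely bookkeeping rather than conceptual: correctly tracking the per-phase block lengths $n\Delta_{\ell,n}$, ensuring the $\Delta_{\ell,n}\to\Delta_\ell$ limits interchange with the error-probability limits, and verifying that the Marton encoding-success condition (Lemma \ref{lemma:gbc}) combined with the per-receiver decoding conditions yields exactly \eqref{eq:PMABC:4} with no missing or redundant constraints. Since all ingredients are standard extensions of \cite{SKim:2007, SKim:2008a, SKim:2008b} and Theorem \ref{theorem:gbc}, the detailed proof is deferred to the Appendix.
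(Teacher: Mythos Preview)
Your proposal is correct and matches the paper's own treatment: the paper explicitly omits the proof of Theorem~\ref{theorem:PMABC}, stating only that it follows from standard MAC achievability (Theorem~15.3.6 in \cite{Cover:2006}) for the $m$ two-user multiple-access phases and the extended Marton region (Theorem~\ref{theorem:gbc}) for the broadcast phase, as extensions of \cite{SKim:2007,SKim:2008a,SKim:2008b}. Your decomposition, codebook construction, error analysis, and cardinality argument are exactly the intended ones; the only minor slip is that the $|{\cal Q}|\le 3m$ bound comes from the $3m$ constraints \eqref{eq:PMABC:1}--\eqref{eq:PMABC:3} alone, since \eqref{eq:PMABC:4} does not involve $Q$.
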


\subsection{PMABC-NR Protocol}
We consider the PMABC protocol in which Network coding is employed at the relay to combine messages on a flow-by-flow basis,   along with Random Binning at the base-station node $\nzero$ to allow the end-nodes to exploit information over-heard in the phases during which they are not transmitting.  In the following theorem, the $U_i$ variables are the auxiliary random variables similar to those seen in the extension of Marton's region of Theorem \ref{theorem:gbc}, while $V_{0i}$ are auxiliary random variables used for binning at the base-station node $\nzero$.

\begin{theorem}
\label{theorem:PMABC-NR}
An achievable rate region of the half-duplex bi-directional relay channel
under the PMABC-NR protocol is the closure of the set of all points $(R_{\nzero,b},R_{b,\nzero})$ for all $b\in {\cal B}$ satisfying
\begin{align}
R_{\{\nzero\},T} + R_{S,\{\nzero\}} &< \sum_{s\in S} \Delta_s I(V_{\nzero T}^{(s)},X_s^{(s)};Y_\nr^{(s)},V_{\nzero {\bar {T}}}^{(s)}|Q) + \sum_{s\in {\bar {S}}} \Delta_s I(V_{\nzero T}^{(s)};Y_\nr^{(s)},V_{\nzero {\bar {T}}}^{(s)}|X_s^{(s)},Q) \label{eq:PMABC:GC:1}\\
R_{\{\nzero\},S} &< \sum_{i\in S} \sum_{j=1}^m \left(\Delta_j I(V_{\nzero i}^{(j)};Y_i^{(j)}|Q) - \Delta_j I(V_{\nzero i}^{(j)};V_{\nzero S(i)}^{(j)}|Q)\right) +\nonumber\\
&~~~~~~~~~  \Delta_{m+1} I(U_{i}^{(m+1)};Y_i^{(m+1)}) - \Delta_{m+1} I(U_i^{(m+1)};U_{S(i)}^{(m+1)})\label{eq:PMABC:GC:4}\\
R_{S,\{\nzero\}} &< \Delta_{m+1} I(U_S^{(m+1)} ; Y_{\nzero}^{(m+1)},U_{\bar S}^{(m+1)} )\label{eq:PMABC:GC:5}
\end{align}
for all $S,T\subseteq {\cal B}$ over all joint distributions
$p(q)\cdot\left[\prod_{i=1}^m p^{(i)}(v_{\nzero \none},\cdots,v_{\nzero \nm},x_\nzero|q)p^{(i)}(x_i|q)\right]$ $\cdot p^{(m+1)}(u_1,\cdots,u_m,x_\nr)$, where $V_{\nzero j}$ are the Random binning auxiliary random variables at node $\nzero$, $U_j$'s are the auxiliary Marton-like random variables used at node $\nr$ and $V_{\nzero T} \eqdef\{V_{\nzero s}| s\in T\}$ with $|{\cal Q}| \leq 2^{2m}+2^m$ over the alphabet $\bigotimes_{i=0}^m {\cal X}_i \times\bigotimes_{j=1}^m \left({\cal V}_{\nzero j} \times{\cal U}_j \right)\times{\cal X}_\nr \times {\cal Q}$. \thmend
\end{theorem}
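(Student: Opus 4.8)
The plan is to realize the PMABC-NR region as a careful superposition, over the $m+1$ phases of the PMABC protocol, of three ingredients: (a) a Marton-style binning code at the base station $\nzero$, spread over the $m$ multiple-access phases, whose auxiliary codewords ${\bf v}_{\nzero i}^{(j)}$ allow a terminal node $i$ to exploit what it overhears of $\nzero$'s transmission in the phases $j\neq i$ during which $i$ is silent; (b) per-flow (flow-by-flow) network coding at the DF relay, which after the multiple-access phases forms $w_{\nr_i}=w_{\nzero,i}\oplus w_{i,\nzero}$ and broadcasts the network-coded message $w_\nr$ in phase $m+1$ using the extended Marton code of Theorem \ref{theorem:gbc} with auxiliary codewords ${\bf u}_i^{(m+1)}$; and (c) the own-message side information $w_{i,\nzero}$ at node $i$ and $\{w_{\nzero,i}\}$ at node $\nzero$, used to strip the XOR. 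This parallels the use of random binning in the single-pair TDBC protocol of \cite{SKim:2007}, generalised to $m$ flows and to a Marton broadcast; unlike the -NRC variants there is no cooperation here, so no sliding-window/Markov-lemma argument is needed.

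First I would carry out the random code generation and encoding. For each phase $j\in\{1,\cdots,m\}$, generate the tuple $({\bf v}_{\nzero\none}^{(j)},\cdots,{\bf v}_{\nzero\nm}^{(j)})$ according to the phase-$j$ input distribution together with a bin structure so that (a portion of) $w_{\nzero,i}$ indexes a bin of ${\bf v}_{\nzero i}^{(j)}$-codewords; generate the point-to-point codebook ${\bf x}_i^{(i)}(w_{i,\nzero})$ for each $i\in{\cal B}$; and, after the relay has formed $w_\nr$, generate $\{{\bf u}_i^{(m+1)}\}$ and ${\bf x}_\nr^{(m+1)}$ exactly as in the proof of Theorem \ref{theorem:gbc}. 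In phase $j$, node $\nzero$ picks a jointly typical representative from the product of the $m$ bins indexed by $w_{\nzero,\cdot}$ and sends ${\bf x}_\nzero^{(j)}$ jointly typical with it; the mutual-covering argument of Lemma \ref{lemma:gbc} guarantees such a tuple exists with high probability and is responsible for the $-\,\Delta_j I(V_{\nzero i}^{(j)};V_{\nzero S(i)}^{(j)}|Q)$ penalties in \eqref{eq:PMABC:GC:4}, while the relay's broadcast encoding contributes the $-\,\Delta_{m+1} I(U_i^{(m+1)};U_{S(i)}^{(m+1)})$ penalties.

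The error analysis then splits over three decoders. (i) The relay, at the end of phase $m$, decodes by joint typicality node $\nzero$'s binning codewords (carrying $\{w_{\nzero,i}\}$) together with each ${\bf x}_s^{(s)}$ (carrying $w_{s,\nzero}$); writing the error event for a pair of subsets $(S,T)$ — the messages $w_{s,\nzero}$, $s\in S$, and $w_{\nzero,t}$, $t\in T$, decoded wrongly while their complements are revealed by a genie — and summing the per-phase packing bounds (in phase $s$ the codeword ${\bf x}_s^{(s)}$ is on the decoded side when $s\in S$ and is conditioned on otherwise, while ${\bf v}_{\nzero\bar T}^{(s)}$ is always side information) yields exactly the combined MAC constraint \eqref{eq:PMABC:GC:1}, which is what lets the relay form every $w_{\nr_i}$. (ii) Node $\nzero$, knowing all $w_{\nzero,i}$, needs only the XORs, i.e. the codewords ${\bf u}_i^{(m+1)}$; the standard subset/genie error bound for a joint-typicality decoder of Marton-correlated codewords from ${\bf y}_\nzero^{(m+1)}$ gives \eqref{eq:PMABC:GC:5}. (iii) Node $i$ combines the overheard ${\bf y}_i^{(j)}$, $j\neq i$ (which pin down ${\bf v}_{\nzero i}^{(j)}$), with ${\bf y}_i^{(m+1)}$ (which pins down ${\bf u}_i^{(m+1)}$) and, using its own $w_{i,\nzero}$ to undo the XOR, recovers $w_{\nzero,i}$; collecting the subset error events over $S\subseteq{\cal B}$, the packing contributions $\Delta_j I(V_{\nzero i}^{(j)};Y_i^{(j)}|Q)$ from the $m$ multiple-access phases add to $\Delta_{m+1} I(U_i^{(m+1)};Y_i^{(m+1)})$ from the broadcast phase, and after cancellation against the covering penalties one is left with \eqref{eq:PMABC:GC:4}. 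Finally a time-sharing variable $Q$, with the stated cardinality bound $|{\cal Q}|\leq 2^{2m}+2^m$ from a Carath\'eodory-type argument, is introduced and the rate region is convexified in the usual way; its closure gives the claimed region.

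The step I expect to be the main obstacle is the node-$i$ decoder of part (iii): its side information is spread over $m$ distinct multiple-access phases with phase-dependent channel statistics, the binning codewords $\{{\bf v}_{\nzero i}^{(j)}\}_i$ are Marton-correlated within each phase, and they must be ``stitched'' to the relay's Marton-correlated broadcast codeword ${\bf u}_i^{(m+1)}$ — so for every subset $S$ the error-exponent computation has to account simultaneously for the covering constraints used at the encoders and the packing constraints used at this decoder, and show the cross terms telescope to precisely \eqref{eq:PMABC:GC:4}. A secondary technical point is verifying that \eqref{eq:PMABC:GC:1} is not merely necessary but sufficient for the relay to reconstruct every $w_{\nzero,i}$ and $w_{i,\nzero}$ from its combined multi-phase observations, so that the network-coded $w_\nr$ is well defined; this is the part that must be adapted most closely from the single-pair analyses of \cite{SKim:2007, SKim:2008a}.
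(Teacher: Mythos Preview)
Your overall architecture is right and matches the paper's: Marton-type binning at node $\nzero$ across the $m$ multiple-access phases, DF with per-flow network coding at the relay, extended Marton (Theorem~\ref{theorem:gbc}) in phase $m{+}1$, and own-message side information used to undo the XOR. Constraints \eqref{eq:PMABC:GC:1} and \eqref{eq:PMABC:GC:5} arise exactly as you describe.

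The gap is in your step (iii), and it is precisely the point you flagged as the obstacle. You have the relay XOR the \emph{full} messages, $w_{\nr_i}=w_{\nzero,i}\oplus w_{i,\nzero}$, and then propose that a joint decoder at node $i$ will produce \eqref{eq:PMABC:GC:4} after some telescoping of packing and covering exponents. The paper does \emph{not} do this, and with the full-message XOR the relay's Marton covering lemma already forces $\sum_{i\in S}\max\{R_{\nzero,i},R_{i,\nzero}\}$ to sit below $\sum_{i\in S}[\Delta_{m+1}I(U_i;Y_i)-\Delta_{m+1}I(U_i;U_{S(i)})]$, which is strictly more restrictive than \eqref{eq:PMABC:GC:4} and loses the direct-link terms. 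The device you are missing is an explicit \emph{rate split}: the paper introduces an intermediate bin rate $R_{\nzero,i:1}$, partitions ${\cal S}_{\nzero,i}$ into $2^{nR_{\nzero,i:1}}$ bins, and has the relay XOR only the \emph{bin index}, $w_i=\hat w_{\nzero,i:1}\oplus \hat w_{i,\nzero}$. Node $\nzero$'s covering constraint then reads $\sum_{i\in S}(R_{\nzero,i}-R_{\nzero,i:1})<\sum_{i\in S}\sum_j[\Delta_j I(V_{\nzero i}^{(j)};Y_i^{(j)})-\Delta_j I(V_{\nzero i}^{(j)};V_{\nzero S(i)}^{(j)})]$ and the relay's covering constraint reads $\sum_{i\in S}R_{\nzero,i:1}<\sum_{i\in S}[\Delta_{m+1}I(U_i;Y_i)-\Delta_{m+1}I(U_i;U_{S(i)})]$; adding them and eliminating $R_{\nzero,i:1}$ gives \eqref{eq:PMABC:GC:4} directly.

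A consequence is that node $i$'s decoder is not joint but \emph{sequential} and, by construction, imposes no rate constraints at all: first decode $\tilde w_{\nr_i}$ from ${\bf y}_i^{(m+1)}$ (the number of candidate codewords equals $2^{n\Delta_{m+1}I(U_i;Y_i)}$ once $w_{i,\nzero}$ is used, so this step is automatic), extract $\tilde w_{\nzero,i:1}$, then decode $\tilde w_{\nzero,i:2}$ from $\{{\bf y}_i^{(j)}\}_{j\neq i}$ within the reduced bin $B^i_R(\tilde w_{\nzero,i:1})$ (again automatic by the choice of codebook sizes). So there is no ``telescoping'' to carry out; the entire content of \eqref{eq:PMABC:GC:4} lives in the two encoding-side covering constraints, not in the decoder's packing analysis. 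Once you insert the intermediate rate $R_{\nzero,i:1}$ and move the XOR to the bin-index level, the step you anticipated as hardest becomes the easiest.
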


The rigorous proof is provided in Appendix \ref{app:PMABC-NR}.

\subsection{PMABC-NRC Protocol}

We now allow the terminal nodes to {\it Cooperate} with each other in resolving the messages $w_{0,i}, \, \forall i\in {\cal B}$ - in addition to the flow-by-flow Network coding at the relay and the Random Binning at the base-station.
%consider the PMABC protocol in which Network coding is employed at the relay to combine messages on a flow-by-flow basis,   along with Random Binning at the base-station node $\nzero$ to allow the end-nodes to exploit information over-heard in the phases during which they are not transmitting.
In the following theorem:

$\bullet$ The $U_i$  variables are the auxiliary random variables similar to those seen in the extension of Marton's region of Theorem \ref{theorem:gbc}.

$\bullet$ The $V_{0i}$ are auxiliary random variables used for broadcasting at the base-station node $\nzero$.

$\bullet$ The $V_{i1}$ are the auxiliary random variables for transmitting new information from node $i$ to node $\nzero$.

$\bullet$ The $V_{i2}$ are the auxiliary random variables for transmitting cooperative information from node $i$ to other terminal nodes in the set ${\cal I}_i$, defined as the set of nodes which can decode ${\hat {\bf y}}_i^{(m+1)}(w_{\{i\},{\cal B}})$ at the end of transmission of node $i$. In a similar vein, ${\cal J}_i$ as the set of nodes whose quantized channel output is used at node $i$. Thus, ${\cal I}_i$ and ${\cal J}_i$ satisfy
${\cal I}_i = \{j|i \in {\cal J}_j~,~\forall j\}.$ Note that we will derive achievable rate regions for {\it given} sets ${\cal I}_i$ and ${\cal J}_i$ and that the question of which sets are optimal will depend on the choice of metric, and are left open - in evaluating our bounds we union over all possible choices of these sets.

\begin{theorem}
\label{theorem:PMABC-NRC}
An achievable rate region of the half-duplex bi-directional relay channel
under the PMABC-NRC protocol is the closure of the set of all points $(R_{\nzero,b},R_{b,\nzero})$  under given sets ${\cal I}_b$ and ${\cal J}_b$ for all $b\in {\cal B}$ satisfying
\begin{align}
R_{\{\nzero\},S} &< \sum_{j=1}^m \Delta_j I(V_{\nzero S}^{(j)};Y_\nr^{(j)},V_{\nzero {\bar S}}^{(j)}|V_{j1}^{(j)},Q) \label{eq:PMABC:DF:GC:1}\\
R_{i,\nzero} &< \Delta_i I(V_{i1}^{(i)};Y_\nr^{(i)}|Q) \label{eq:PMABC:DF:GC:2}\\
R_{\{\nzero\},S} &< \sum_{i\in S} \left(\sum_{j\in {\cal J}_i} \Delta_j I(V_{\nzero i}^{(j)};Y_i^{(j)}|V_{j2}^{(j)},Q) - \Delta_j I(V_{\nzero i}^{(j)};V_{\nzero S(i)}^{(j)}|Q) + \right.\nonumber\\
&~~~~~~~~~~~\left.\sum_{j\not\in {\cal J}_i} \Delta_j I(V_{\nzero i}^{(j)};Y_i^{(j)}|Q)  - \Delta_j I(V_{\nzero i}^{(j)};V_{\nzero S(i)}^{(j)}|Q)  \right)+ \nonumber \\
&~~~~~~~~ \Delta_{m+1} I(U_i^{(m+1)};Y_i^{(m+1)},{\hat Y}_{{\cal J}_i}^{(m+1)}) -  \Delta_{m+1}I(U_i^{(m+1)};U_{S(i)}^{(m+1)})\label{eq:PMABC:DF:GC:4}\\
R_{S,\{\nzero\}} &< \Delta_{m+1} I(U_S^{(m+1)} ; Y_{\nzero}^{(m+1)}, U_{\bar S}^{(m+1)} )\label{eq:PMABC:DF:GC:5}\\
R_{i,\nzero} &< \Delta_i I(V_{i1}^{(i)};Y_\nr^{(i)}|Q) + \Delta_i I(V_{i2}^{(i)};Y_{{\cal I}_i^{\min}}^{(i)}|Q) - \nonumber\\
&~~~~\Delta_i I(V_{i1}^{(i)};V_{i2}^{(i)}|Q) - \Delta_{m+1} I(Y_i^{(m+1)};{\hat Y}_i^{(m+1)}|Y_{{\cal I}_i^{\min}}^{(m+1)})\label{eq:PMABC:DF:GC:6}
\end{align}
subject to
\begin{align}
\Delta_{m+1} I(Y_i^{(m+1)};{\hat Y}_i^{(m+1)}|Y_j^{(m+1)}) < \Delta_i I(V_{i 2}^{(i)};Y_j^{(i)}|Q) ~~\text{ for all } j \in {\cal I}_i \label{eq:PMABC:DF:GC:7}
\end{align}
where ${\cal I}_i^{\min} = {\text{argmin}}_{j\in {\cal I}_i}\{\Delta_i I(V_{i2}^{(i)};Y_j^{(i)}|Q) - \Delta_{m+1} I(Y_i^{(m+1)};{\hat Y}_i^{(m+1)}| Y_j^{(m+1)})\}$ for all $i\in {\cal B}$ over all joint distributions
$p(q)\cdot\left(\prod_{i=1}^m p^{(i)}(v_{\nzero \none},\cdots,v_{\nzero \nm},x_\nzero|q)p^{(i)}(v_{i 1},v_{i 2},x_i|q) \right)$ $\cdot p^{(m+1)}(u_1,\cdots,u_m,x_\nr) \cdot p^{(m+1)}(y_{\cal B }|x_{\nr})$ $\cdot\left(\prod_{i=1}^m p^{(m+1)}({\hat y}_i|y_i)\right)$, where $V_{\nzero i}, V_{i 1}, V_{i 2}, U_i$'s are the auxiliary random variables and $V_{\nzero T} \eqdef\{V_{\nzero s}| s\in T\}$ with $|{\cal Q}| \leq 2^{m+1} + m^2 + m+2$ over the alphabet $\bigotimes_{i=0}^m {\cal X}_i \times\bigotimes_{j=1}^m \left({\cal V}_{\nzero j}\times {\cal V}_{j 1}\times {\cal V}_{j 2} \times{\cal U}_j\right) \times{\cal X}_\nr\times {\cal Q}$. \thmend
\end{theorem}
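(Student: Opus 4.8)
The plan is to establish this region by a block-Markov random coding argument with sliding-window decoding over the $K+1$ slots described in Section~\ref{sec:prelim}, taking $n\rightarrow\infty$ first and then $K\rightarrow\infty$ so that the rate loss incurred in the first and last slots vanishes. The construction superposes four layers: (i) the extended Marton construction of Theorem~\ref{theorem:gbc} at the relay in the broadcast phase $m+1$, applied to the network-coded messages $w_{\nr_i}=w_{\nzero,i}\oplus w_{i,\nzero}$ on the auxiliaries $U_i$, but exploiting that node $\nzero$ and node $i$ each already know one of the two constituent messages and can strip it off; (ii) a Wyner--Ziv-style compress-and-forward layer at each terminal node $i$, which quantizes its broadcast-phase observation $\bfy_i^{(m+1)}$ into $\hat\bfy_i$ and bins it against the observations $\bfy_j^{(m+1)}$ available at the cooperating receivers $j\in{\cal I}_i$; (iii) a Marton/binning layer at node $\nzero$ over the $m$ multiple-access phases, on the auxiliaries $V_{\nzero j}$, carrying $\{w_{\nzero,j}\}$ over the direct links while the relay simultaneously decodes them; and (iv) the new-message auxiliary $V_{i1}$ and the cooperation-index auxiliary $V_{i2}$ at node $i$, superimposed to form $\bfx_i$.

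\textbf{Codebook generation and encoding.} Per slot, generate $\{\bfu_i^{(m+1)}\}$ as in the proof of Theorem~\ref{theorem:gbc} (Marton codewords with an inner binning layer of rate $R_{\nzero,i}+R_{i,\nzero}$) and a conditionally typical $\bfx_\nr$; generate $\{\bfv_{\nzero j}^{(j)}\}$ with their own Marton binning layer and a conditional $\bfx_\nzero$; generate $\bfv_{i1}^{(i)}$ at rate $\approx R_{i,\nzero}$, $\bfv_{i2}^{(i)}$ at a rate fixed by the Wyner--Ziv step below, a conditional $\bfx_i$, and a compression codebook $\{\hat\bfy_i\}$ of size $2^{n\Delta_{m+1}I(Y_i;\hat Y_i)}$ partitioned into bins of size $2^{n\Delta_{m+1}I(Y_i;\hat Y_i\mid Y_j)}$. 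In phase $i$ of slot $k$, node $i$ transmits $\bfx_i$ formed from $\bfv_{i1}$ carrying $w_{i,\nzero|(k)}$ and $\bfv_{i2}$ carrying the bin index of the compression $\hat\bfy_i$ of slot $k-1$, while node $\nzero$ transmits $\bfx_\nzero$ Marton-encoding $\{w_{\nzero,j|(k)}\}$ on the $\bfv_{\nzero j}$; after the relay broadcasts in phase $m+1$ of slot $k$, node $i$ selects $\hat\bfy_i$ jointly typical with $\bfy_i^{(m+1)}$ and records its bin index. At the end of the multiple-access phases of slot $k$ the relay decodes $\{w_{\nzero,i|(k)}\}$ and $\{w_{i,\nzero|(k)}\}$ by joint typicality of $(\bfv_{\nzero i}^{(i)},\bfv_{i1}^{(i)},\bfx_\nzero,\bfx_i,\bfy_\nr^{(i)})$ in each phase; the single-user, sum-rate, and Marton-subset conditions of this step, summed over phases and over $S\subseteq{\cal B}$, yield \eqref{eq:PMABC:DF:GC:1} and \eqref{eq:PMABC:DF:GC:2}. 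It then forms $w_{\nr_i}$ and broadcasts the Marton codeword $\bfx_\nr$.

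\textbf{Decoding and error analysis.} There are four decoding steps, each analyzed with strong joint typicality. First, each cooperating node $j\in{\cal I}_i$ recovers the bin index carried on $\bfv_{i2}^{(i)}$ and, together with its own $\bfy_j^{(m+1)}$ and the Markov lemma (Lemma~4.1 of \cite{berger:1977}) applied along $X_\nr\rightarrow Y_i\rightarrow\hat Y_i$ with $Y_j$ as side information, recovers $\hat\bfy_i^{(m+1)}$; this succeeds provided $\Delta_{m+1}I(Y_i;\hat Y_i\mid Y_j)<\Delta_i I(V_{i2}^{(i)};Y_j^{(i)}\mid Q)$ for all $j\in{\cal I}_i$, which is exactly \eqref{eq:PMABC:DF:GC:7}. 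Second, node $\nzero$ decodes $w_{{\cal B},\{\nzero\}}$ from the relay broadcast using its own-message side information (equivalently $U_{\bar S}$) in Marton fashion, giving \eqref{eq:PMABC:DF:GC:5}. Third, each node $i$ decodes $w_{\nzero,i}$ from (a) the direct links over all $m$ multiple-access phases on $\bfv_{\nzero i}$, conditioning on $V_{j2}^{(j)}$ for $j\in{\cal J}_i$ and paying the Marton penalty $I(V_{\nzero i};V_{\nzero S(i)})$, together with (b) the relay broadcast on $\bfu_i$ with the decoded $\hat Y_{{\cal J}_i}^{(m+1)}$ folded in and the Marton penalty $I(U_i;U_{S(i)})$; summing the resulting mutual informations over a subset $S$ gives \eqref{eq:PMABC:DF:GC:1}$/$\eqref{eq:PMABC:DF:GC:4}. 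Fourth, the joint decoding at the relay of $w_{i,\nzero}$ on $V_{i1}$ together with the compression index on $V_{i2}$, accounting for their correlation via $-I(V_{i1};V_{i2})$ and subtracting the worst-case Wyner--Ziv overhead $\Delta_{m+1}I(Y_i;\hat Y_i\mid Y_{{\cal I}_i^{\min}})$ where ${\cal I}_i^{\min}$ is the bottleneck cooperating receiver, gives \eqref{eq:PMABC:DF:GC:6}. A Fourier--Motzkin elimination of the rate-split and auxiliary-codebook rates over all $S,T\subseteq{\cal B}$ yields the stated region, and the bound $|{\cal Q}|\le 2^{m+1}+m^2+m+2$ follows from the usual Carath\'eodory argument applied to \eqref{eq:PMABC:DF:GC:1}--\eqref{eq:PMABC:DF:GC:7}.

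\textbf{Main obstacle.} The delicate part is running the sliding-window argument so that all the side-information dependencies close consistently: guaranteeing joint typicality of the transmitted $\bfx_i$ with the compressed sequence $\hat\bfy_i$ across a non-transitive typicality relation (handled by the Markov lemma), and correctly accounting in \eqref{eq:PMABC:DF:GC:6} for the fact that $\bfv_{i2}$ must simultaneously convey the Wyner--Ziv bin index to every node in ${\cal I}_i$ (hence the $\mathrm{argmin}$ defining ${\cal I}_i^{\min}$ and the term $\Delta_{m+1}I(Y_i;\hat Y_i\mid Y_{{\cal I}_i^{\min}})$) while sharing node $i$'s input budget with the message-bearing $\bfv_{i1}$. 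The remaining bookkeeping --- the two nested Marton penalties at node $\nzero$ and at the relay, and the Fourier--Motzkin elimination --- is routine but lengthy, and the full argument is deferred to the appendix.
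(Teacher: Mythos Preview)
Your plan is essentially the paper's proof: block-Markov sliding window, Marton coding at both node $\nzero$ (on the $V_{\nzero j}$) and at the relay (on the $U_i$), network coding on the relay message, Wyner--Ziv compression at the terminals with the Markov lemma, and superposition of $V_{i1},V_{i2}$ at node $i$. The error events and the Carath\'eodory bound are handled the same way.

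Two points where your description diverges from the paper's actual mechanics and would trip you up when writing out the details. First, the relay does \emph{not} network-code the full message $w_{\nzero,i}\oplus w_{i,\nzero}$; it network-codes a \emph{bin index} $w_{\nzero,i:1}\oplus w_{i,\nzero}$, where $w_{\nzero,i:1}$ comes from a random partition of ${\cal S}_{\nzero,i}$. Node $i$ first recovers $w_{\nzero,i:1}$ from the relay broadcast, then uses it to thin the list of candidate $w_{\nzero,i:2}$ before resolving $w_{\nzero,i}$ from the direct-link observations; this two-stage decoding is what produces the additive structure in \eqref{eq:PMABC:DF:GC:4}, and the free parameters $R_{\nzero,i:1}$ are simply chosen well rather than eliminated by Fourier--Motzkin. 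Second, constraint \eqref{eq:PMABC:DF:GC:6} does not come from joint decoding at the relay --- the relay decodes only $V_{i1}$ (and the $V_{\nzero j}$), treating $V_{i2}$ as part of the channel. The $-\Delta_i I(V_{i1};V_{i2})$ term and the $-\Delta_{m+1}I(Y_i;\hat Y_i\mid Y_{{\cal I}_i^{\min}})$ term arise instead from the Marton \emph{encoding} constraint at node $i$: one must find a jointly typical pair $(\bfv_{i1},\bfv_{i2})$ inside the product of the message-bin $B^{i1}_{w_{i,\nzero}}$ and the compression-bin $B^{i2}_{w_{\{i\},{\cal B}}}$, and Lemma~\ref{lemma:gbc} gives exactly \eqref{eq:PMABC:DF:GC:6}.
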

\begin{remark}
 \eqref{eq:PMABC:DF:GC:1} and \eqref{eq:PMABC:DF:GC:2} result from the multiple access period
 \footnote{\eqref{eq:PMABC:DF:GC:1} and \eqref{eq:PMABC:DF:GC:2} are suboptimal for multiple access channel. Generally this is the channel in which two transmitters are simultaneously broadcasting to multiple receivers. We introduce a simpler suboptimal scheme here and leave the optimization for the future work.}, while \eqref{eq:PMABC:DF:GC:4} and \eqref{eq:PMABC:DF:GC:5} result from the relay broadcast period. Also, \eqref{eq:PMABC:DF:GC:6} and  \eqref{eq:PMABC:DF:GC:7} result from the cooperation between terminal nodes. The rigorous proof is provided in Appendix \ref{app:PMABC-NRC}.
\end{remark}

\subsection{FTDBC Protocol}

\begin{theorem}
\label{theorem:FTDBC}
An achievable rate region of the half-duplex bi-directional relay channel
under the FTDBC protocol with decode and forward relaying is the closure of the set of all points $(R_{\nzero,b},R_{b,\nzero})$ for all $b\in {\cal B}$ satisfying
\begin{align}
R_{\{\nzero\},{\cal B}} &< \Delta_1 I(X_{\nzero}^\pa ; Y_{\nr}^\pa) \label{eq:FTDBC:1}\\
R_{i,\nzero} &< \Delta_{i+1} I(X_i^{(i+1)}; Y_\nr^{(i+1)}) \label{eq:FTDBC:2}\\
R_{{\cal M},S} &< \sum_{i\in S} \Delta_{m+2} I(U_i^{(m+2)};Y_i^{(m+2)}) -  \Delta_{m+2} I(U_i^{(m+2)};U_{S(i)}^{(m+2)})\label{eq:FTDBC:3}
\end{align}
for $i\in {\cal B}$ and $S\subseteq {\cal M}$ over all joint distributions
$\prod_{i=0}^m p^{(i+1)}(x_i) p^{(m+2)}(u_0,\cdots,u_m,x_\nr)$, where $U_j$'s are the auxiliary random variables over the alphabet $(\bigotimes_{i=0}^m {\cal X}_i \times {\cal U}_i) \times{\cal X}_\nr$. \thmend
\end{theorem}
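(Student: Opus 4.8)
The plan is to combine standard multiple-access-channel coding for the uplink phases with the extended Marton broadcast region of Theorem~\ref{theorem:gbc} for the downlink phase, together with flow-by-flow network coding at the decode-and-forward relay, in a manner directly parallel to the FMABC analysis in Appendix~\ref{app:FMABC-N}. First I would describe the temporal structure: phase~$1$ is node $\nzero\rightarrow\nr$, phases $2,\dots,m+1$ are node $i\rightarrow\nr$ for $i=1,\dots,m$ (each a point-to-point link since in the FTDBC protocol all terminals transmit sequentially), and phase $m+2$ is the relay broadcast $\nr\rightarrow(\nzero,\none,\dots,\nm)$. In phase~$1$, node $\nzero$ sends ${\bf x}_\nzero^\pa(w_{\nzero,\none},\dots,w_{\nzero,\nm})$, which the relay decodes reliably provided $R_{\{\nzero\},{\cal B}}<\Delta_1 I(X_\nzero^\pa;Y_\nr^\pa)$, giving \eqref{eq:FTDBC:1}; in phase~$i+1$ node $i$ sends ${\bf x}_i^{(i+1)}(w_{i,\nzero})$, decoded reliably when $R_{i,\nzero}<\Delta_{i+1}I(X_i^{(i+1)};Y_\nr^{(i+1)})$, giving \eqref{eq:FTDBC:2}.

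Next I would set up the broadcast phase. The relay, having decoded all $\{w_{\nzero,i}\}$ and $\{w_{i,\nzero}\}$, forms the network-coded symbols $w_{\nr_i}=w_{\nzero,i}\oplus w_{i,\nzero}$ and the message tuple $w_\nr=(w_{\nr_\none},\dots,w_{\nr_\nm})$, then applies the extended-Marton encoder of Theorem~\ref{theorem:gbc}: generate auxiliary codewords ${\bf u}_i^{(m+2)}(w_{\nr_i})$, use the binning/joint-typicality selection step to pick representatives so that $({\bf u}_\none^{(m+2)},\dots,{\bf u}_\nm^{(m+2)})$ is jointly typical, and send ${\bf x}_\nr^{(m+2)}$ jointly typical with them. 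Here the key point, which is where the bi-directional side information enters, is that each terminal node $i$ knows its own message $w_{i,\nzero}$ and the base-station knows all of $w_{\{\nzero\},{\cal B}}$; hence node $i$ decoding $w_{\nr_i}$ recovers $w_{\nzero,i}=w_{\nr_i}\oplus w_{i,\nzero}$, and node $\nzero$ decoding the set $\{w_{\nr_i}\}_{i\in S}$ recovers $\{w_{i,\nzero}\}_{i\in S}$. Applying Theorem~\ref{theorem:gbc} to the broadcast channel $X_\nr\rightarrow(Y_\nzero,Y_\none,\dots,Y_\nm)$ (treating the base station as one of the receivers interested in all flows) yields precisely the constraint \eqref{eq:FTDBC:3} on $R_{{\cal M},S}$ for $S\subseteq{\cal M}$, after scaling mutual information terms by the phase duration $\Delta_{m+2}$.

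Finally I would collect the error events---relay decoding errors in each uplink phase, the Marton encoding-failure event $E_{\text{en}}$, the empirical-joint-typicality event $E_{\text{emp}}$, and the per-receiver decoding events $E_i$ in the broadcast phase---bound $P[E]$ by their sum via the union bound, invoke the AEP and Theorem~\ref{theorem:gbc}'s analysis (in particular Lemma~\ref{lemma:gbc}) to drive each term to zero as $n\rightarrow\infty$, and note the cardinality bound on $|{\cal Q}|$ follows from the usual Carath\'eodory-type argument on the number of active rate constraints. The main obstacle---though it is more bookkeeping than conceptual---is verifying that the network-coding XOR step is compatible with the Marton binning: one must check that the rate of each $w_{\nr_i}$ that the Marton scheme can support, namely $\max(R_{\nzero,i},R_{i,\nzero})$-worth of information when only the XOR is transmitted, correctly translates into the stated region where $R_{{\cal M},S}$ is constrained; this is handled exactly as in the FMABC-N proof of Appendix~\ref{app:FMABC-N}, with the only difference being the sequential (rather than simultaneous) uplink, which decouples the uplink constraints into the $m+1$ separate point-to-point bounds \eqref{eq:FTDBC:1}--\eqref{eq:FTDBC:2} instead of a MAC region with a $Q$ time-sharing variable.
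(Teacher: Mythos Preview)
Your proposal has a conceptual mismatch with the theorem as stated: you build the broadcast phase on \emph{flow-by-flow network coding} ($w_{\nr_i}=w_{\nzero,i}\oplus w_{i,\nzero}$), but the plain FTDBC protocol of Theorem~\ref{theorem:FTDBC} does \emph{not} use network coding. This is visible in Table~\ref{table:protocol} (FTDBC has ``--'' in the Network-coding column) and, more decisively, in the form of the theorem itself: the joint distribution is $p^{(m+2)}(u_0,\ldots,u_m,x_\nr)$ with $m{+}1$ auxiliary variables $U_0,\ldots,U_m$, and the broadcast constraint \eqref{eq:FTDBC:3} is on $R_{{\cal M},S}$ for $S\subseteq{\cal M}=\{\nzero,\none,\ldots,\nm\}$. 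A network-coded relay has only $m$ XOR messages and hence only $m$ auxiliaries $U_1,\ldots,U_m$; the resulting region would look like the FMABC-N constraints \eqref{eq:FMABC:DF:GC:6}--\eqref{eq:FMABC:DF:GC:3} (separate bounds on $R_{\{\nzero\},T}$ and $R_{T,\{\nzero\}}$ for $T\subseteq{\cal B}$), not the single family \eqref{eq:FTDBC:3} indexed by $S\subseteq{\cal M}$.

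The intended argument is simpler than yours. After the $m{+}1$ uplink phases (which you handle correctly, giving \eqref{eq:FTDBC:1}--\eqref{eq:FTDBC:2}), the relay possesses all of $\{w_{\nzero,i}\}_{i\in{\cal B}}$ and $\{w_{i,\nzero}\}_{i\in{\cal B}}$ and simply forms $m{+}1$ independent messages---one per receiver: the message for node $\nzero$ is the tuple $(w_{\none,\nzero},\ldots,w_{\nm,\nzero})$ of rate $R_{{\cal B},\{\nzero\}}$, and the message for node $i\in{\cal B}$ is $w_{\nzero,i}$ of rate $R_{\nzero,i}$. Applying the extended Marton region of Theorem~\ref{theorem:gbc} directly to this $(m{+}1)$-receiver broadcast channel $X_\nr\rightarrow(Y_\nzero,Y_\none,\ldots,Y_\nm)$, with per-receiver rate $R_i$ identified as above, gives exactly $\sum_{i\in S}R_i=R_{{\cal M},S}$ and hence \eqref{eq:FTDBC:3}, after scaling by $\Delta_{m+2}$. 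No own-message side-information or XOR step is invoked here; that is reserved for the FTDBC-NR and FTDBC-NRC variants. Your error-analysis outline (union bound over uplink decoding, Marton encoding failure, and per-receiver broadcast decoding) is otherwise fine once the network-coding layer is removed.
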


\subsection{FTDBC-NR Protocol}
We next consider the FTDBC protocol in which Network coding is employed at the relay to combine messages on a flow-by-flow basis,   along with Random Binning at the base-station node $\nzero$ to allow the end-nodes to exploit information over-heard in the phases during which they are not transmitting.  In the following theorem, the $U_i$ variables are the auxiliary random variables similar to those seen in the extension of Marton's region of Theorem \ref{theorem:gbc}, while $V_{0i}$ are auxiliary random variables used for broadcasting at the base-station node $\nzero$.

\begin{figure}[t]
 \begin{center}
  \epsfig{figure=./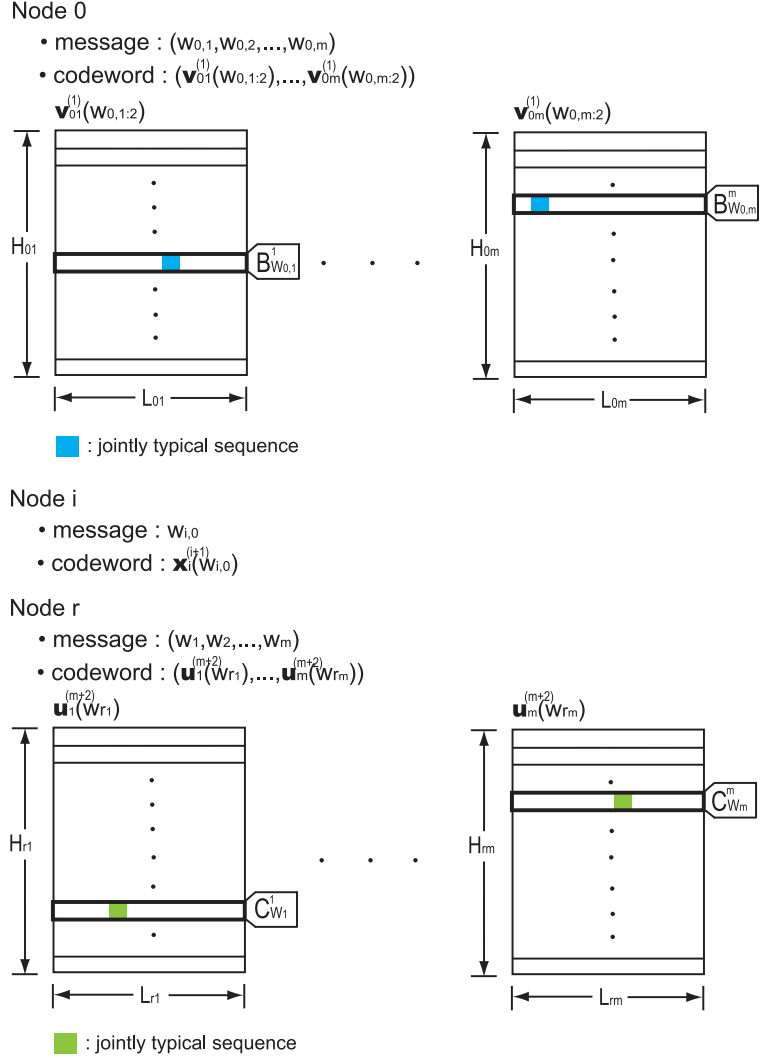, width=10cm}
  \caption{A diagram of encoders in the FTDBC-NR protocol with $\frac1n\log L_{\nzero i} = \Delta_1 I(V_{\nzero i}^\pa;Y_i^\pa) + R_{\nzero,i:1} - R_{\nzero,i}$, $\frac1n \log H_{\nzero,i} = R_{\nzero,i}$, $\frac1n\log L_{\nr i} = \Delta_{m+2} I(U_i^{(m+2)};Y_i^{(m+2)}) - R_{\nzero,i:1}$ and $\frac1n \log H_{\nr i} = \max \{R_{\nzero,i:1},R_{i,\nzero:1}\}$ for all $i \in [1,m]$. Also, $B^i_{w_{\nzero,i}}$ is the subset (bin) of the set $\{w_{\nzero,i:2}\}$ indexed by $w_{\nzero,i}$. Similarly, $C^i_{w_{i}}$ is the subset (bin) of the set $\{w_{\nr_i}\}$ indexed by $w_{i}$. }
  \label{fig:protocol_ftdbc}
 \end{center}
\end{figure}

\begin{theorem}
\label{theorem:FTDBC-NR}
An achievable rate region of the half-duplex bi-directional relay channel
under the FTDBC-NR protocol is the closure of the set of all points  $(R_{\nzero,b},R_{b,\nzero})$ for all $b\in {\cal B}$ satisfying
\begin{align}
R_{\{\nzero\},S} &< \Delta_1 I(V_{\nzero S}^\pa;Y_\nr^\pa, V_{\nzero {\bar S}}^\pa) \label{eq:FTDBC:GC:1}\\
R_{i,\nzero}&< \Delta_{i+1} I(X_i^{(i+1)};Y_\nr^{(i+1)})\label{eq:FTDBC:GC:2}\\
R_{\{\nzero\},S} &< \sum_{i\in S} \Delta_1 I(V_{\nzero i}^\pa;Y_i^\pa) - \Delta_1 I(V_{\nzero i}^\pa;V_{\nzero S(i)}^\pa) + \nonumber\\
&~~~~~~~~~\Delta_{m+2} I(U_i^{(m+2)};Y_i^{(m+2)}) - \Delta_{m+2}I(U_i^{(m+2)};U_{S(i)}^{(m+2)}) \label{eq:FTDBC:GC:4}\\
R_{S,\{\nzero\}} &< \sum_{i\in S} \Delta_{i+1} I(X_i^{(i+1)};Y_\nzero^{(i+1)}) + \Delta_{m+2} I(U_S^{(m+2)};Y_\nzero^{(m+2)},U_{\bar S}^{(m+2)} )\label{eq:FTDBC:GC:5}
\end{align}
for all $i \in {\cal B}$ and $S \subseteq {\cal B}$ over all joint distributions
$p^\pa(v_{\nzero \none},\cdots,v_{\nzero \nm},x_\nzero)$ $ \cdot \left( \prod_{j=1}^m p^{(j+1)}(x_j)\right)\cdot$ \\$p^{(m+2)}(u_\none,\cdots,u_\nm,x_\nr)$, where $V_{\nzero j}, U_j$'s are the auxiliary random variables and $V_{\nzero T} \eqdef\{V_{\nzero s}| s\in T\}$ over the alphabet $\bigotimes_{i=0}^m {\cal X}_i \times\bigotimes_{j=1}^m ({\cal V}_{\nzero j}\times {\cal U}_j) \times{\cal X}_\nr$. \thmend
\end{theorem}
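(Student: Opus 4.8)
\emph{Proof plan.} The achievability scheme is a single-block code over the $m+2$ phases of the FTDBC protocol that superposes the extended Marton construction of Theorem~\ref{theorem:gbc} with random binning in the TDBC style of \cite{SKim:2007}, applied twice: once at the base station in phase~$1$ (which broadcasts to the $m$ terminal nodes while the relay decodes everything) and once at the relay in phase~$m+2$. First I would rate-split every message, $w_{\nzero,i}=(w_{\nzero,i:1},w_{\nzero,i:2})$ and $w_{i,\nzero}=(w_{i,\nzero:1},w_{i,\nzero:2})$, the ``$:1$'' portions being what the direct links deliver and the ``$:2$'' portions being what the relay must resolve; the split rates $R_{\nzero,i:1},R_{i,\nzero:1},\dots$ are the auxiliary variables to be eliminated at the end. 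For codebook generation: in phase~$1$, node~$\nzero$ builds the Marton codebooks $\{{\bf v}_{\nzero i}(\cdot)\}_{i=1}^m$ exactly as in the proof of Theorem~\ref{theorem:gbc} --- roughly $2^{n\Delta_1 I(V_{\nzero i}^\pa;Y_i^\pa)}$ sequences per $i$, grouped into bins whose indices encode the relevant part of $w_{\nzero,i}$ (the sizes $L_{\nzero i},H_{\nzero i}$ of Fig.~\ref{fig:protocol_ftdbc}) --- then picks a jointly typical tuple $({\bf v}_{\nzero 1},\dots,{\bf v}_{\nzero m})$ from the designated bins (possible with high probability by Lemma~\ref{lemma:gbc}) and sends ${\bf x}_\nzero$ jointly typical with it. In phase~$i+1$, $i\in[1,m]$, node~$i$ transmits an i.i.d.\ codeword ${\bf x}_i(w_{i,\nzero})$. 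After the multiple-access phases the relay has decoded all $w_{\nzero,i}$ and $w_{i,\nzero}$, combines them flow-by-flow into the network-coded symbols $w_{\nr_i}$ (the XOR of the relevant parts, zero-padded so the shorter message matches the longer, as encoded by $H_{\nr i}$ in Fig.~\ref{fig:protocol_ftdbc}), and in phase~$m+2$ builds a second family of Marton codebooks $\{{\bf u}_i(\cdot)\}$ with bins indexed by $w_{\nr_i}$, again selecting a jointly typical tuple by Lemma~\ref{lemma:gbc} and sending ${\bf x}_\nr$ accordingly.

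Decoding and error analysis then proceed by reading one rate inequality off each error event. The relay jointly decodes $w_{\{\nzero\},{\cal B}}$ at the end of phase~$1$ and $w_{i,\nzero}$ at the end of phase~$i+1$; bounding $P[E_{\{\nzero\},{\cal B}}^\pa]$ and $P[E_{\{i\},\{\nzero\}}^{(i+1)}]$ by the usual joint-typicality/AEP estimates yields \eqref{eq:FTDBC:GC:1} and \eqref{eq:FTDBC:GC:2}. Terminal node~$i$ list-decodes its overheard ${\bf v}_{\nzero i}$ from $({\bf v}_{\nzero i},{\bf y}_i^\pa)$ in phase~$1$ and decodes its relay codeword ${\bf u}_i$ from $({\bf u}_i,{\bf y}_i^{(m+2)})$ in phase~$m+2$, using its own message $w_{i,\nzero}$ to strip off the network-coded component; intersecting the two lists and charging the Marton binning penalties $\Delta_1 I(V_{\nzero i}^\pa;V_{\nzero S(i)}^\pa)$ and $\Delta_{m+2}I(U_i^{(m+2)};U_{S(i)}^{(m+2)})$ gives \eqref{eq:FTDBC:GC:4}. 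Symmetrically, node~$\nzero$ uses its overheard copies of ${\bf x}_i$ from phases $i+1$ together with a \emph{joint} decoding of $({\bf u}_S\mid{\bf u}_{\bar S})$ from ${\bf y}_\nzero^{(m+2)}$ --- knowing every $w_{\nzero,i}$, hence every network-coded bin --- to recover $w_{{\cal B},\{\nzero\}}$, producing \eqref{eq:FTDBC:GC:5} (with no Marton penalty on the $U_S$ term, because of the joint decoding). A union bound over the Marton-selection failures (the $E_{\text{en}}$-type events of Theorem~\ref{theorem:gbc}), the empirical-joint-typicality failure ($E_{\text{emp}}$-type), and the above list-decoding events, followed by $n\to\infty$, gives an achievable region in the split rates; a Fourier--Motzkin elimination of the ``$:1$''/``$:2$'' auxiliary rates then collapses it to exactly \eqref{eq:FTDBC:GC:1}--\eqref{eq:FTDBC:GC:5}, and standard packing-lemma counting handles the codebook-size constraints.

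Since the FTDBC-NR protocol involves neither user cooperation nor a sliding window, no block-Markov or Markov-lemma machinery is needed: the scheme is essentially the FTDBC region of Theorem~\ref{theorem:FTDBC} augmented by flow-by-flow network coding at the relay and by random binning that stitches the direct-link side information onto the relay broadcast, in the spirit of \cite{SKim:2007,SKim:2008b}. The main obstacle is the bookkeeping --- keeping the bin/within-bin index structure of the two nested Marton codes consistent so that the overheard information on the direct links combines \emph{additively} with the relay broadcast in \eqref{eq:FTDBC:GC:4}--\eqref{eq:FTDBC:GC:5}, checking that the Marton codeword-selection step of Theorem~\ref{theorem:gbc}/Lemma~\ref{lemma:gbc} still succeeds once the codeword index sets are constrained to carry the split messages, and then carrying out the Fourier--Motzkin elimination cleanly so that precisely the stated four families of inequalities survive with no extraneous ones. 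The remaining ingredients --- the AEP/joint-typicality error bounds --- are routine.
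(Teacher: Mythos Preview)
Your proposal is correct and matches the paper's proof essentially line for line: the paper (Appendix~\ref{app:FTDBC-NR}) uses precisely the two nested Marton/binning constructions you describe --- one at node~$\nzero$ in phase~$1$ via the $V_{\nzero i}$ codebooks with bins $B^i_j$, and one at the relay in phase~$m+2$ via the $U_i$ codebooks with bins $C^i_j$ --- together with i.i.d.\ point-to-point codes in phases $2,\dots,m+1$, flow-by-flow XOR at the relay, and the same sequential list-intersection decoding at the terminal nodes; the auxiliary rates $R_{\nzero,i:1},R_{i,\nzero:1}$ are then eliminated by ``proper choices'' (your Fourier--Motzkin step) to obtain \eqref{eq:FTDBC:GC:1}--\eqref{eq:FTDBC:GC:5}. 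The only cosmetic discrepancy is that in the paper's indexing the ``$:1$'' part is the coarse bin index carried by the relay and the direct link resolves the remainder, i.e.\ the reverse of your labeling, but this does not affect the argument.
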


\begin{remark}
\eqref{eq:FTDBC:GC:1} and \eqref{eq:FTDBC:GC:2} correspond to the transmissions from ${\cal M}$ to the relay $\nr$, while \eqref{eq:FTDBC:GC:4} -- \eqref{eq:FTDBC:GC:5} correspond to the relay broadcast phase. An example of the encoding is provided in  \Fig \ref{fig:protocol_ftdbc}, where we see the different auxiliary random variables at nodes $\nzero$, $i$ and the relay $\nr$. The rigorous proof is provided in Appendix \ref{app:FTDBC-NR}.
\end{remark}

\subsection{FTDBC-NRC protocol}

We allow the terminal nodes to {\it Cooperate} with each other in resolving the messages $w_{0,i}, \, \forall i\in {\cal B}$ - in addition to the flow-by-flow Network coding at the relay and the Random Binning at the base-station.
%consider the PMABC protocol in which Network coding is employed at the relay to combine messages on a flow-by-flow basis,   along with Random Binning at the base-station node $\nzero$ to allow the end-nodes to exploit information over-heard in the phases during which they are not transmitting.
In the following theorem the interpretations of the auxiliary random variables $U_i, V_{0i}, V_{i1}, V_{i2}$ are the same as those of Theorem \ref{theorem:PMABC-NRC}.

\begin{theorem}
\label{theorem:FTDBC-NRC}
An achievable rate region of the half-duplex bi-directional relay channel
under the FTDBC-NRC protocol is the closure of the set of all points  $(R_{\nzero,b},R_{b,\nzero})$  under given sets ${\cal I}_b$ and ${\cal J}_b$ for all $b\in {\cal B}$ satisfying
\begin{align}
R_{\{\nzero\},S} &< \Delta_1 I(V_{\nzero S}^\pa;Y_\nr^\pa, V_{\nzero {\bar S}}^\pa) \label{eq:FTDBC:GC:CO:1}\\
R_{i,\nzero}&< \Delta_{i+1} I(V_{i1}^{(i+1)};Y_\nr^{(i+1)})\label{eq:FTDBC:GC:CO:2}\\
R_{\{\nzero\},S} &< \sum_{i\in S} \Delta_1 I(V_{\nzero i}^\pa;Y_i^\pa) - \Delta_1 I(V_{\nzero i}^\pa;V_{\nzero S(i)}^\pa) + \nonumber\\
&~~~~~~~~~\Delta_{m+2} I(U_i^{(m+2)};Y_i^{(m+2)},{\hat Y}_{{\cal J}_i}^{(m+2)}) - \Delta_{m+2}I(U_{i}^{(m+2)};U_{S(i)}^{(m+2)})\label{eq:FTDBC:GC:CO:4}\\
R_{S,\{\nzero\}} &< \sum_{i\in S} \Delta_{i+1} I(V_{i1}^{(i+1)};Y_\nzero^{(i+1)}) + \Delta_{m+2} I(U_S^{(m+2)};Y_\nzero^{(m+2)},U_{\bar S}^{(m+2)} )\label{eq:FTDBC:GC:CO:5}\\
R_{i,\nzero} &< \Delta_{i+1} I(V_{i1}^{(i+1)};Y_\nr^{(i+1)}) + \Delta_{i+1} I(V_{i2}^{(i+1)};Y_{{\cal I}_i^{\min}}^{(i+1)}) - \nonumber\\
&~~~~\Delta_{i+1} I(V_{i1}^{(i+1)};V_{i2}^{(i+1)}) - \Delta_{m+2} I(Y_i^{(m+2)};{\hat Y}_i^{(m+2)}|Y_{{\cal I}_i^{\min}}^{(m+2)}) \label{eq:FTDBC:GC:CO:6}
\end{align}
subject to
\begin{align}
\Delta_{m+2} I(Y_i^{(m+2)};{\hat Y}_i^{(m+2)}|Y_j^{(m+2)}) < \Delta_{i+1} I(V_{i 2}^{(i+1)};Y_j^{(i+1)}) ~~\text{ for all } j \in {\cal I}_i \label{eq:FTDBC:GC:CO:7}
\end{align}
where ${\cal I}_i^{\min} = {\text{argmin}}_{j\in {\cal I}_i}\{\Delta_{i+1} I(V_{i2}^{(i+1)};Y_j^{(i+1)}) - \Delta_{m+2} I(Y_i^{(m+2)};{\hat Y}_i^{(m+2)}| Y_j^{(m+2)})\}$ for all $i \in {\cal B}$ and $S\subseteq {\cal B}$ over all joint distributions
$p^{(1)}(v_{\nzero \none},\cdots,v_{\nzero\nm},x_\nzero)\prod_{j=1}^m p^{(j+1)}(v_{j 1},v_{j 2},x_j)$ $p^{(m+2)}(u_1,\cdots,u_m,x_\nr)$\\$ p^{(m+2)}(y_{\cal B}|x_{\nr})\cdot\prod_{k=1}^m p^{(m+2)}({\hat y}_k|y_k)$, where $V_{\nzero j}, V_{j 1}, V_{j 2}, U_j$'s are the auxiliary random variables and $V_{\nzero T} \eqdef\{V_{\nzero s}| s\in T\}$ over the alphabet $\bigotimes_{i=0}^m {\cal X}_i \times\bigotimes_{j=1}^m ({\cal V}_{\nzero j}\times {\cal V}_{j 1}\times{\cal V}_{j 2} \times{\cal U}_j) \times{\cal X}_\nr$. \thmend
\end{theorem}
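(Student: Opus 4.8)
The plan is to prove Theorem~\ref{theorem:FTDBC-NRC} by merging the random-binning/extended-Marton construction used for the FTDBC-NR protocol (Appendix~\ref{app:FTDBC-NR}) with the sliding-window compress-and-forward cooperation mechanism developed for the PMABC-NRC protocol (Appendix~\ref{app:PMABC-NRC}), specialized to the $m+2$ temporal phases of the FTDBC protocol: phase~$1$ has node~$\nzero$ transmitting to $\nr$, phase~$i+1$ has node~$i$ transmitting to $\nr$ for each $i\in{\cal B}$, and phase~$m+2$ has $\nr$ broadcasting to ${\cal B}$. Because cooperation is block-Markov in nature, I would partition the transmission into $K+1$ slots of $m+2$ phases each, split every message $w_{i,j}$ into sub-blocks $\{w_{i,j|(k)}\}_{k=1}^{K}$, and have each terminal~$i$ carry, in phase~$i+1$ of slot~$k+1$, both fresh information toward $\nzero$ and the Wyner--Ziv bin index of its quantized broadcast-phase observation ${\hat {\bf y}}_i^{(m+2)}$ from slot~$k$; the prefactor $K/(K+1)\to1$ as $K\to\infty$.

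The codebook is built in four layers. At node~$\nzero$ I would generate the Marton-like auxiliaries $({\bf v}_{\nzero\none}^\pa,\ldots,{\bf v}_{\nzero\nm}^\pa)$ jointly with an extra binning layer, so that the index of ${\bf v}_{\nzero i}^\pa$ delivers the ``direct'' portion of $w_{\nzero,i}$ to terminal~$i$ while the ``relayed'' portion is passed to $\nr$ (exactly as in Fig.~\ref{fig:protocol_ftdbc}), then draw ${\bf x}_\nzero^\pa$ conditionally. At each terminal~$i$ I would generate $({\bf v}_{i1}^{(i+1)},{\bf v}_{i2}^{(i+1)})$ jointly, with ${\bf v}_{i1}$ carrying $w_{i,\nzero|(k)}$ and ${\bf v}_{i2}$ carrying the compression index, then draw ${\bf x}_i^{(i+1)}$; and I would generate a quantization codebook $\{{\hat {\bf y}}_i^{(m+2)}\}$ according to $p^{(m+2)}({\hat y}_i|y_i)$, binned for Wyner--Ziv recovery at the nodes of ${\cal I}_i$. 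At the relay I would form the flow-by-flow network-coded messages $w_{\nr_i}=w_{\nzero,i}\oplus w_{i,\nzero}$ and generate the Marton-like auxiliaries $({\bf u}_\none^{(m+2)},\ldots,{\bf u}_\nm^{(m+2)})$ jointly, with binning, then draw ${\bf x}_\nr^{(m+2)}$. Decoding proceeds as follows: $\nr$ decodes $w_{i,\nzero}$ from phase~$i+1$ via $({\bf v}_{i1}^{(i+1)},{\bf y}_\nr^{(i+1)})$, which yields \eqref{eq:FTDBC:GC:CO:2}, and jointly decodes $w_{\nzero,\cdot}$ from phase~$1$, yielding \eqref{eq:FTDBC:GC:CO:1}; node~$\nzero$ overhears ${\bf v}_{i1}^{(i+1)}$ in each phase~$i+1$ and after the broadcast phase jointly decodes $U_S^{(m+2)}$ using $({\bf y}_\nzero^{(m+2)},{\bf u}_{\bar S}^{(m+2)})$, then strips its own $w_{\nzero,i}$ to recover $w_{i,\nzero}$, giving \eqref{eq:FTDBC:GC:CO:5}; each terminal~$j$ first decodes the compression bin indices of the nodes in ${\cal J}_j$ from their transmission phases, which requires \eqref{eq:FTDBC:GC:CO:7}, and here the Markov chain $x_i\to y_i\to{\hat y}_i$ together with strong typicality lets the Markov lemma of \cite{berger:1977} establish joint typicality of $({\bf x},{\hat {\bf y}})$; then, using $({\bf y}_j^{(m+2)},{\hat {\bf y}}_{{\cal J}_j}^{(m+2)})$ plus its overheard ${\bf v}_{\nzero i}^\pa$'s, terminal~$j$ decodes $U_j^{(m+2)}$ and the direct portions to recover $w_{\nzero,j}$, yielding \eqref{eq:FTDBC:GC:CO:4}; finally, the requirement that the quantization rate $\Delta_{m+2}I(Y_i^{(m+2)};{\hat Y}_i^{(m+2)}|Y_{{\cal I}_i^{\min}}^{(m+2)})$ be embedded, jointly with the fresh message, into $({\bf v}_{i1}^{(i+1)},{\bf v}_{i2}^{(i+1)})$ at a joint-generation cost of $\Delta_{i+1}I(V_{i1}^{(i+1)};V_{i2}^{(i+1)})$ produces \eqref{eq:FTDBC:GC:CO:6}, with ${\cal I}_i^{\min}$ chosen as the worst-case helper. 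A union bound over the finitely many (per-slot) error events, with the AEP and $K\to\infty$, drives $P[E_{i,\nzero}],P[E_{\nzero,i}]\to0$.

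The claimed bound on $|{\cal Q}|$ then follows by the standard Fenchel--Eggleston--Carath\'eodory argument applied to the convex hull of the region over input distributions, counting the active constraints. I expect the main obstacle to be bookkeeping rather than any single deep step: one must simultaneously (a) split $w_{\nzero,i}$ into a direct part binned into $V_{\nzero i}$ and a relayed part network-coded into $U_i$, (b) thread the block-Markov/sliding-window indices so that the compression index produced in slot~$k$ is precisely the one carried by $V_{i2}$ in slot~$k+1$ and consumed by the helped nodes, and (c) verify that, for every subset $S\subseteq{\cal B}$, the Marton packing constraints at $\nzero$ and at $\nr$, the Wyner--Ziv binning constraints \eqref{eq:FTDBC:GC:CO:7}, and the direct-link constraints combine into exactly \eqref{eq:FTDBC:GC:CO:1}--\eqref{eq:FTDBC:GC:CO:6} --- getting the conditioning structure right (the $V_{\nzero{\bar S}}$, $U_{\bar S}$, $V_{j1}$ and $V_{j2}$ conditionings) is the delicate part. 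Throughout one must impose the consistency relation ${\cal I}_i=\{j\mid i\in{\cal J}_j\}$ so that the descriptions ``$V_{i2}$ is decodable exactly by ${\cal I}_i$'' and ``node~$j$ uses exactly ${\hat Y}_{{\cal J}_j}$'' remain compatible.
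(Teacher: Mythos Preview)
Your proposal is correct and matches the paper's approach exactly: the paper's own proof is only a one-line outline stating that the argument ``follows the same argument as the proof of Theorem~3 in \cite{SKim:2007} and Theorem~\ref{theorem:PMABC-NRC},'' i.e., precisely the FTDBC-NR construction of Appendix~\ref{app:FTDBC-NR} spliced with the sliding-window compress-and-forward cooperation of Appendix~\ref{app:PMABC-NRC}, which is what you have laid out in detail. One small slip: there is no time-sharing variable $Q$ in Theorem~\ref{theorem:FTDBC-NRC} (unlike the PMABC theorems, each FTDBC phase has a single transmitter), so the Fenchel--Bunt/Carath\'eodory step you mention at the end is not needed here and should be dropped.
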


\emph{Proof outline :} The proof of Theorem \ref{theorem:FTDBC-NRC} follows the same argument as the proof of Theorem 3 in \cite{SKim:2007} and Theorem \ref{theorem:PMABC-NRC}.

\subsection{PTDBC Protocol}

\begin{theorem}
\label{theorem:PTDBC}
An achievable rate region of the half-duplex bi-directional relay channel
under the PTDBC protocol with decode and forward relaying is the closure of the set of all points $(R_{\nzero,b},R_{b,\nzero})$ for all $b\in {\cal B}$ satisfying
\begin{align}
R_{\{\nzero\},{\cal B}} &< \Delta_1 I(X_{\nzero}^\pa ; Y_{\nr}^\pa) \label{eq:PTDBC:1}\\
R_{T,\{\nzero\}} &< \Delta_{2} I(X_T^\pb; Y_\nr^\pb|X_{\bar T}^\pb,Q) \label{eq:PTDBC:2}\\
R_{{\cal M},S} &< \sum_{i\in S} \Delta_{3} I(U_i^\pc;Y_i^\pc) -  \Delta_{3} I(U_i^\pc;U_{S(i)}^\pc)\label{eq:PTDBC:3}
\end{align}
for $S\subseteq {\cal M}$ and $T\subseteq {\cal B}$ over all joint distributions
$p(q)p^\pa(x_\nzero|q)\prod_{i=1}^m p^\pb(x_i|q) p^\pc(u_0,\cdots,u_m,x_\nr)$, where $U_j$'s are the auxiliary random variables with $|{\cal Q}|\leq 2^m-1$ over the alphabet $(\bigotimes_{i=0}^m {\cal X}_i \times {\cal U}_i) \times{\cal X}_\nr\times {\cal Q}$. \thmend
\end{theorem}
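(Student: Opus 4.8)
The plan is to obtain this region by the phase-by-phase argument used for the simple protocols in \cite{SKim:2007,SKim:2008a,SKim:2008b}: decompose PTDBC into its three time-disjoint phases --- a point-to-point link $\nzero\to\nr$ in phase $\pa$, an $m$-user multiple-access channel $\{\none,\dots,\nm\}\to\nr$ in phase $\pb$, and an $(m+1)$-receiver broadcast channel $\nr\to{\cal M}$ in phase $\pc$ --- analyze each with the appropriate standard inner bound, and intersect. Since the phases occupy disjoint time intervals and every message is decoded at the end of the phase in which its codewords arrive ($W_{\{\nzero\},{\cal B}}$ at $\nr$ after phase $\pa$, $W_{{\cal B},\{\nzero\}}$ at $\nr$ after phase $\pb$, and the messages carried by $\nr$ after phase $\pc$), a union bound gives $P[E]\le P[E^\pa]+P[E^\pb]+P[E^\pc]$, so it suffices to make each term vanish as $n\to\infty$.

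Phase $\pa$: node $\nzero$ sends the aggregate message $W_{\{\nzero\},{\cal B}}$ to $\nr$ over $X_\nzero^\pa\to Y_\nr^\pa$ using a blocklength-$n\Delta_{1,n}$ code, reliable iff $R_{\{\nzero\},{\cal B}}<\Delta_1 I(X_\nzero^\pa;Y_\nr^\pa)$, i.e. \eqref{eq:PTDBC:1}; concavity of $I(X;Y)$ in the input law makes a time-sharing variable superfluous here, which is why $Q$ is absent from \eqref{eq:PTDBC:1}. Phase $\pb$: nodes $\none,\dots,\nm$ send $W_{i,\nzero}$ over the MAC $(X_\none^\pb,\dots,X_\nm^\pb)\to Y_\nr^\pb$ with inputs $\prod_{i=1}^m p^\pb(x_i|q)$ and time-sharing $Q$; by Theorem~15.3.6 of \cite{Cover:2006} the relay reliably decodes all of $W_{{\cal B},\{\nzero\}}$ provided $R_{T,\{\nzero\}}<\Delta_2 I(X_T^\pb;Y_\nr^\pb|X_{\bar T}^\pb,Q)$ for every $T\subseteq{\cal B}$, which is \eqref{eq:PTDBC:2}.

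Phase $\pc$ is the only real step. The relay now holds all $2m$ messages and faces the broadcast channel $X_\nr^\pc\to(Y_\nzero^\pc,Y_\none^\pc,\dots,Y_\nm^\pc)$ with $m+1$ independent demands: node $\nzero$ wants the bundle $W_{{\cal B},\{\nzero\}}$ and node $i\in{\cal B}$ wants $W_{\nzero,i}$. I would apply Theorem~\ref{theorem:gbc} essentially verbatim, re-indexing the receiver set from ${\cal B}$ to ${\cal M}$ (ordered with $\nzero$ first), with auxiliary variable $U_\nzero$ carrying the bundle for $\nzero$ and $U_i$ carrying node $i$'s message; for each $S\subseteq{\cal M}$ this bounds the total rate of the streams destined to receivers in $S$ by $\sum_{i\in S}\Delta_3 I(U_i^\pc;Y_i^\pc)-\Delta_3 I(U_i^\pc;U_{S(i)}^\pc)$. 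The translation to network notation is immediate: since $R_{i,j}=0$ for $i,j\in{\cal B}$, the stream to node $i\in{\cal B}$ has rate $R_{{\cal M},\{i\}}=R_{\nzero,i}$ and the stream to $\nzero$ has rate $R_{{\cal M},\{\nzero\}}=\sum_{i\in{\cal B}}R_{i,\nzero}$, and $R_{{\cal M},S}=\sum_{j\in S}R_{{\cal M},\{j\}}$ gives exactly \eqref{eq:PTDBC:3}. Because node $\nzero$'s demanded messages $W_{{\cal B},\{\nzero\}}$ are independent of its own transmitted $W_{\{\nzero\},{\cal B}}$ (and likewise at each terminal), the own-message side information confers no advantage in this non-network-coded protocol, so no further constraint appears; combining the three phases, letting $n\to\infty$ with $\Delta_{\ell,n}\to\Delta_\ell$, and closing yields the stated region.

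I do not expect a genuine obstacle --- PTDBC is among the ``simple'' protocols whose proofs the paper omits. The points needing care are the phase-$\pc$ bookkeeping (that all $m$ messages bound for $\nzero$ share the single auxiliary variable $U_\nzero$, and that ${\cal M}$ with $\nzero$ ordered first plays the role of ${\cal B}$ in Theorem~\ref{theorem:gbc}) and the cardinality bound $|{\cal Q}|\le 2^m-1$, which follows from the standard Fenchel--Eggleston--Carath\'eodory argument applied to the $2^m-1$ active MAC constraints of \eqref{eq:PTDBC:2} indexed by the nonempty subsets $T\subseteq{\cal B}$ (phases $\pa$ and $\pc$ introduce no dependence on $Q$ and hence no extra dimensions).
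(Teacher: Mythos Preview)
Your proposal is correct and follows precisely the approach the paper intends: it explicitly omits the proof of this theorem, noting that the ``simple'' protocols (including PTDBC) are combinations of MAC and BC phases whose achievability follows from Theorem~15.3.6 of \cite{Cover:2006} and the extended Marton region of Theorem~\ref{theorem:gbc}, exactly as you decompose it. Your bookkeeping for phase $\pc$ (bundling $W_{{\cal B},\{\nzero\}}$ onto $U_\nzero$ and re-indexing Theorem~\ref{theorem:gbc} over ${\cal M}$) and your Fenchel--Bunt justification of $|{\cal Q}|\le 2^m-1$ via the $2^m-1$ MAC constraints in \eqref{eq:PTDBC:2} are both what the paper's pattern (cf.\ the cardinality remarks in Theorems~\ref{theorem:FMABC-N} and~\ref{theorem:PTDBC-NR}) would require.
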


\subsection{PTDBC-NR Protocol}
We next consider the PTDBC protocol in which Network coding is employed at the relay to combine messages on a flow-by-flow basis,   along with Random Binning at the base-station node $\nzero$ to allow the end-nodes to exploit information over-heard in the phases during which they are not transmitting.

\begin{theorem}
\label{theorem:PTDBC-NR}
An achievable rate region of the half-duplex bi-directional relay channel
under the PTDBC-NR protocol is the closure of the set of all points  $(R_{\nzero,b},R_{b,\nzero})$ for all $b\in {\cal B}$ satisfying
\begin{align}
R_{\{\nzero\},S} &< \Delta_1 I(V_{\nzero S}^\pa;Y_\nr^\pa, V_{\nzero {\bar S}}^\pa) \label{eq:PTDBC:GC:1}\\
R_{S,\{\nzero\}}&< \Delta_{2} I(X_S^\pb;Y_\nr^\pb|X_{\bar S}^\pb,Q)\label{eq:PTDBC:GC:2}\\
R_{\{\nzero\},S} &< \sum_{i\in S} \Delta_1 I(V_{\nzero i}^\pa;Y_i^\pa) - \Delta_1 I(V_{\nzero i}^\pa;V_{\nzero S(i)}^\pa) + \Delta_{3} I(U_i^\pc;Y_i^\pc) - \Delta_{3}I(U_i^\pc;U_{S(i)}^\pc) \label{eq:PTDBC:GC:4}\\
R_{S,\{\nzero\}} &<  \Delta_{2} I(X_S^\pb;Y_\nzero^\pb|X_{\bar S}^\pb,Q) + \Delta_{3} I(U_S^\pc;Y_\nzero^\pc,U_{\bar S}^\pc )\label{eq:PTDBC:GC:5}
\end{align}
for all $S \subseteq {\cal B}$ over all joint distributions
$p(q)p^\pa(v_{\nzero \none},\cdots,v_{\nzero \nm},x_\nzero)$ $ \cdot \left( \prod_{j=1}^m p^\pb(x_j|q)\right)\cdot$ \\$p^\pc(u_\none,\cdots,u_\nm,x_\nr)$, where $V_{\nzero j}, U_j$'s are the auxiliary random variables and $V_{\nzero T} \eqdef\{V_{\nzero s}| s\in T\}$ with $|{\cal Q}|\leq 2^m-1$ over the alphabet $\bigotimes_{i=0}^m {\cal X}_i \times\bigotimes_{j=1}^m ({\cal V}_{\nzero j}\times {\cal U}_j) \times{\cal X}_\nr \times {\cal Q}$. \thmend
\end{theorem}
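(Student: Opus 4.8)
The plan is to adapt the arguments used to prove Theorems~\ref{theorem:FTDBC-NR} and~\ref{theorem:PMABC-NR} (Appendices~\ref{app:FTDBC-NR} and~\ref{app:PMABC-NR}), observing that the PTDBC protocol is exactly the FTDBC protocol with the $m$ sequential terminal-to-relay phases collapsed into a single multiple-access phase~$\pb$. One coding block of length~$n$ suffices --- there is no block-Markov structure and, since the ``-NR'' scheme has no terminal cooperation, no compression --- so beyond the standard joint-typicality and AEP estimates the only ingredients are the multiple-access inner bound (Theorem~15.3.6 of~\cite{Cover:2006}) for phase~$\pb$, the extended Marton region of Theorem~\ref{theorem:gbc}, and the mutual-covering Lemma~\ref{lemma:gbc} applied at the two broadcasting nodes.

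First I would set up the codebooks, following the FTDBC-NR construction of Appendix~\ref{app:FTDBC-NR} (cf.\ Fig.~\ref{fig:protocol_ftdbc}). Fix the stated joint distribution and a time-sharing sequence~${\bf q}$, and split each $R_{\nzero,i}$ as $R_{\nzero,i:1}+R_{\nzero,i:2}$ --- the part terminal~$i$ recovers already in phase~$\pa$ and the remainder it obtains from the relay --- and, similarly, split $R_{i,\nzero}$. In phase~$\pa$ node~$\nzero$ uses a Marton-style code: for each $i$, a ${\cal V}_{\nzero i}$-codebook whose size is inflated to absorb the Marton penalty $\Delta_1 I(V_{\nzero i}^\pa;V_{\nzero S(i)}^\pa)$, partitioned into bins indexed by $w_{\nzero,i}$ with a finer sub-label carrying $w_{\nzero,i:1}$; it selects one codeword per intended bin making the whole $m$-tuple jointly $\epsilon$-typical (possible w.h.p.\ by Lemma~\ref{lemma:gbc}) and sends $\bfx_\nzero^\pa$ jointly typical with that tuple. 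In phase~$\pb$ each terminal~$i$ sends $\bfx_i^\pb(w_{i,\nzero})$ from an independently generated rate-$R_{i,\nzero}$ codebook conditioned on~${\bf q}$. In phase~$\pc$ the relay forms $w_{\nr_i}=w_{\nzero,i}\oplus w_{i,\nzero}$ and broadcasts it with a second Marton code $\{{\bf u}_i^\pc\}$, each ${\cal U}_i$-codebook binned so that its bin label carries precisely the residual $w_{\nr_i}$-part still needed by terminal~$i$ (resp.\ node~$\nzero$) after its overhearing phase; the relay selects a jointly typical $({\bf u}_\none^\pc,\dots,{\bf u}_\nm^\pc)$ tuple (again by Lemma~\ref{lemma:gbc}) and sends $\bfx_\nr^\pc$ jointly typical with it.

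Next I would bound, to $0$ by the AEP and over all subsets $S\subseteq{\cal B}$, the relevant error events (including the Marton covering-failure events, which are what make the resulting constraints range over all $S$): (i) the relay decodes $\{w_{\nzero,i}\}$ from $\bfy_\nr^\pa$, giving~\eqref{eq:PTDBC:GC:1} (the $V_{\nzero{\bar S}}^\pa$ appears because in the subset bound the relay may treat the complementary codewords as known); (ii) the relay jointly MAC-decodes $\{w_{i,\nzero}\}$ from $\bfy_\nr^\pb$, giving~\eqref{eq:PTDBC:GC:2}; (iii) terminal~$i$ decodes $w_{\nzero,i}$ from $(\bfy_i^\pa,\bfy_i^\pc)$ using its own $w_{i,\nzero}$, the rate it resolves in phase~$\pa$ adding to the rate it resolves in phase~$\pc$, which yields the two Marton terms (with their $-I(\cdot;\cdot_{S(i)})$ penalties) of~\eqref{eq:PTDBC:GC:4}; and (iv) node~$\nzero$ decodes $\{w_{i,\nzero}\}$ from $(\bfy_\nzero^\pb,\bfy_\nzero^\pc)$ using its own $\{w_{\nzero,i}\}$, its overheard phase-$\pb$ MAC list supplying the joint term $\Delta_2 I(X_S^\pb;Y_\nzero^\pb|X_{\bar S}^\pb,Q)$ which in~\eqref{eq:PTDBC:GC:5} replaces the per-phase sum of~\eqref{eq:FTDBC:GC:5}. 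A Fourier--Motzkin elimination of the split rates $R_{\nzero,i:1},R_{\nzero,i:2},R_{i,\nzero:1},\dots$ then collapses everything to the four stated families, and $|{\cal Q}|\le 2^m-1$ follows from the support lemma applied to the phase-$\pb$ MAC constraint.

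The hard part will be the simultaneous bookkeeping of the two Marton-with-binning layers against the side-information lists: one must verify that for every receiver and every $S$ the bin structure of $\{{\bf u}_i^\pc\}$ can at once be refined by each receiver's phase-$\pa$ (terminals) or phase-$\pb$ (node~$\nzero$) overheard list, respect the joint-typicality constraint coupling the $m$ relay codewords, and still allow node~$\nzero$ to MAC-decode all $m$ of the ${\bf u}_i^\pc$ jointly (hence the $U_{\bar S}^\pc$ in~\eqref{eq:PTDBC:GC:5}), so that all per-subset error exponents stay positive simultaneously; keeping the Marton penalties matched with the overheard-list reductions across all $S$ is precisely the combinatorial content absorbed by the Fourier--Motzkin step. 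Once this is arranged the remaining estimates are routine and identical in form to those of Appendix~\ref{app:FTDBC-NR}.
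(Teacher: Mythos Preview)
Your proposal is correct and matches the paper's own approach: the paper proves this theorem only by the one-line remark that ``the proof of Theorem~\ref{theorem:PTDBC-NR} follows the same argument as the proof of Theorem~3 in \cite{SKim:2007} and Theorem~\ref{theorem:PMABC-NR}.''  Your plan --- taking the FTDBC-NR proof of Appendix~\ref{app:FTDBC-NR} as the template and replacing the $m$ sequential terminal phases by the single MAC phase~$\pb$, so that the per-terminal sums in \eqref{eq:FTDBC:GC:2} and \eqref{eq:FTDBC:GC:5} become the joint MAC expressions in \eqref{eq:PTDBC:GC:2} and \eqref{eq:PTDBC:GC:5} --- is exactly the right specialization, and arguably a cleaner way to phrase the reduction than the paper's remark, since PTDBC is structurally FTDBC with the terminal phases collapsed.
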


\begin{remark}
\eqref{eq:PTDBC:GC:1} and \eqref{eq:PTDBC:GC:2} correspond to the transmissions from ${\cal M}$ to the relay $\nr$, while \eqref{eq:PTDBC:GC:4} -- \eqref{eq:PTDBC:GC:5} correspond to the relay broadcast phase. The proof of Theorem \ref{theorem:PTDBC-NR} follows the same argument as the proof of Theorem 3 in \cite{SKim:2007} and Theorem \ref{theorem:PMABC-NR}.
\end{remark}

%%%%%%%%%%%%%%%%%%%%%%%%%%%%%%%%%%%%%%%%%%%%%%%%%%%%%%%%%%%%%

\section{Outer Bounds}

%%%%%%%%%%%%%%%%%%%%%%%%%%%%%%%%%%%%%%%%%%%%%%%%%%%%%%%%%%%%%
\label{sec:outer_bounds}
We now derive outer bounds for the FMABC, PMABC, FTDBC and PTDBC protocols using the following
cut-set bound lemma tailored to multi-phase protocols first derived in \cite{SKim:2007}, and included for completeness. Given subsets $S,T
\subseteq {\cal M} = \{1,2,\cdots,m\}$, and $\bar{S}\eqdef {\cal M}\backslash S$, we define $W_{S,T} \eqdef
\{W_{i,j} | i\in S , j\in T\}$ and $R_{S,T} = \lim_{n\rightarrow
\infty }\frac1n H(W_{S,T})$.

\begin{lemma}
\label{lemma:out} If in some network the information rates
$\{R_{i,j}\}$ are achievable for a protocol $\prot$ with relative
phase durations $\{\Delta_\ell\}$, then for every $\epsilon>0$ and all $S
\subset  {\cal M}$
\begin{align}
 R_{S,\bar{S}} \leq
   \sum_\ell \Delta_\ell I(X^{(\ell)}_{S};Y^{(\ell)}_{\bar{S}}|X^{(\ell)}_{\bar{S}},Q)+\epsilon,
\end{align}
for a family of conditional distributions $p^{(\ell)}(x_1, x_2,
\ldots, x_m|q)$ and a discrete time-sharing random variable $Q$ with
distribution $p(q)$. Furthermore, each $p^{(\ell)}(x_1, x_2, \ldots,
x_m|q)p(q)$ must satisfy the constraints of phase $\ell$ of protocol
$\prot$. \thmend
\end{lemma}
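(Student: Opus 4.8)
The plan is to run the standard cut-set argument (cf. \cite{Cover:2006}), threading the half-duplex phase structure through a time-sharing variable. Fix an achievable rate tuple $\{R_{i,j}\}$ and a block length $n$, and let $\epsilon_n\to 0$ bound the total decoding error probability. For $j\in\bar S$, node $j$ reconstructs $W_{i,j}$ for every $i\in S$ from its received block and its own messages $W_{\{j\},{\cal M}}$ with vanishing error, so the union bound together with Fano's inequality gives $H(W_{S,\bar S}\mid \{Y^k_j\}_{j\in\bar S,\,k\le n},W_{\bar S,{\cal M}})\le n\epsilon_n$. Since all messages are mutually independent, $H(W_{S,\bar S})=H(W_{S,\bar S}\mid W_{\bar S,{\cal M}})$, whence $H(W_{S,\bar S})\le I(W_{S,\bar S};\{Y^k_j\}_{j\in\bar S,\,k\le n}\mid W_{\bar S,{\cal M}})+n\epsilon_n$.

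Next I would single-letterize. Writing $Y^k_{\bar S}\eqdef\{Y^k_j:j\in\bar S\}$ and expanding by the chain rule over $k$, the right-hand side becomes $\sum_{k=1}^n I(W_{S,\bar S};Y^k_{\bar S}\mid Y^1_{\bar S},\dots,Y^{k-1}_{\bar S},W_{\bar S,{\cal M}})+n\epsilon_n$, and each term is at most $I(X^k_S;Y^k_{\bar S}\mid X^k_{\bar S})$. This uses two facts: (i) by the encoder definition each $X^k_j$, $j\in\bar S$, is a deterministic function of $(W_{\{j\},{\cal M}},Y^1_j,\dots,Y^{k-1}_j)$ — for the relay, which carries no message of its own, the dependence is on its past received symbols only — so conditioning on $(Y^1_{\bar S},\dots,Y^{k-1}_{\bar S},W_{\bar S,{\cal M}})$ already fixes $X^k_{\bar S}$ and hence $H(Y^k_{\bar S}\mid Y^1_{\bar S},\dots,Y^{k-1}_{\bar S},W_{\bar S,{\cal M}})\le H(Y^k_{\bar S}\mid X^k_{\bar S})$; and (ii) the channel is discrete memoryless, so given the full input vector $X^k_{\cal M}=(X^k_S,X^k_{\bar S})$ the output $Y^k_{\bar S}$ is conditionally independent of the past and of every message, which yields $H(Y^k_{\bar S}\mid Y^1_{\bar S},\dots,Y^{k-1}_{\bar S},W_{\bar S,{\cal M}},W_{S,\bar S})\ge H(Y^k_{\bar S}\mid X^k_S,X^k_{\bar S})$ after further conditioning on $X^k_S$. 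Subtracting gives the per-letter inequality, so $H(W_{S,\bar S})\le\sum_{k=1}^n I(X^k_S;Y^k_{\bar S}\mid X^k_{\bar S})+n\epsilon_n$.

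The last step is to re-organize the sum by the phases of $\prot$. Partition the $n$ channel uses into the consecutive phases, with $n\Delta_{\ell,n}$ uses in phase $\ell$, and let $Q$ record the position of a channel use within its phase; then $\frac1n\sum_{k=1}^n I(X^k_S;Y^k_{\bar S}\mid X^k_{\bar S})=\sum_\ell \Delta_{\ell,n}\,I(X^{(\ell)}_S;Y^{(\ell)}_{\bar S}\mid X^{(\ell)}_{\bar S},Q)$, where $p^{(\ell)}(x_1,\dots,x_m\mid q)$ is the input law induced by the code at that slot, and because the nodes that are silent in phase $\ell$ emit $\varnothing$, this law meets the phase-$\ell$ constraints of $\prot$. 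Dividing by $n$, invoking $R_{S,\bar S}=\lim_{n\to\infty}\frac1n H(W_{S,\bar S})$, and passing to the limit using $\Delta_{\ell,n}\to\Delta_\ell$, $\epsilon_n\to 0$ and continuity of mutual information on the finite alphabets gives $R_{S,\bar S}\le\sum_\ell \Delta_\ell I(X^{(\ell)}_S;Y^{(\ell)}_{\bar S}\mid X^{(\ell)}_{\bar S},Q)+\epsilon$ for every $\epsilon>0$, which is the claim. I expect the only genuine obstacle to be this final bookkeeping: one must verify that the per-phase conditional input distributions read off from an arbitrary code are admissible for $\prot$ — in particular that they respect the half-duplex ``$\varnothing$'' pattern of each phase — so that the inequality is quantified over the correct family of distributions; the remainder is the textbook cut-set manipulation.
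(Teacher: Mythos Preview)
Your argument is the standard cut-set derivation (Fano, chain rule, memoryless-channel Markov step, then averaging via a time-sharing variable), correctly adapted to the phase structure and the half-duplex $\varnothing$ convention; the paper does not actually prove this lemma but imports it from \cite{SKim:2007}, and your approach is exactly what that reference does. One small bookkeeping remark: when you pass to the limit you should phrase it as extracting a convergent subsequence of the $n$-dependent input laws (compactness of the probability simplex over the finite alphabets) rather than invoking ``continuity,'' since the distributions themselves vary with $n$; this is routine and does not affect the conclusion.
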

The FMABC, PMABC, FTDBC and PTDBC outer bounds are obtained by applying the above lemma to the different protocols; in each protocol the ``cuts'' will look different depending on what nodes are permitted to transmit during each phase, leading to different outer bound regions.

\subsection{FMABC protocol}
\begin{theorem}
\label{theorem:FMABC:out} (Outer bound) The capacity region of the
half-duplex bi-directional relay channel under the FMABC protocol is outer bounded by the set of all points $(R_{\nzero,b},R_{b,\nzero})$ for all $b\in {\cal B}$ satisfying
\begin{align}
R_{\{\nzero\},{\cal B}} &\leq \Delta_1 I(X_\nzero^\pa;Y_\nr^\pa| X_{\cal B}^\pa,Q)\label{eq:FMABC:out:1}\\
R_{{\cal B},\{\nzero\}} &\leq \Delta_2 I(X_\nr^\pb;Y_\nzero^\pb)\label{eq:FMABC:out:2}\\
R_{S,\{\nzero\}} &\leq \Delta_1 I(X_S^\pa;Y_\nr^\pa|X_{\bar{S}}^\pa,X_\nzero^\pa,Q)\label{eq:FMABC:out:3}\\
R_{\{\nzero\},S} &\leq \Delta_2 I(X_\nr^\pb;Y_S^\pb)\label{eq:FMABC:out:4}
\end{align}
for all choices of the joint distribution
$p(q)\prod_{i=0}^m p^\pa(x_i|q)$ $p^\pb(x_{\nr})$ with $|{\cal Q}| \leq 2^{m}$ over the restricted alphabet $\bigotimes_{i=0}^m {\cal X}_i \times {\cal X}_\nr\times {\cal Q}$ for all possible $S\subseteq {\cal B}$. \thmend
\end{theorem}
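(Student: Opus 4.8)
The plan is to apply the cut-set-type bound of Lemma~\ref{lemma:out} to the FMABC protocol, which has exactly two phases: in phase~$1$ all nodes in ${\cal M} = \{\nzero,\none,\cdots,\nm\}$ transmit to the relay $\nr$ (a multiple-access phase), and in phase~$2$ the relay $\nr$ broadcasts to all terminal nodes and the base station. First I would identify, for each cut $S\subset{\cal M}$ of the network that Lemma~\ref{lemma:out} ranges over, which of the four bound types \eqref{eq:FMABC:out:1}--\eqref{eq:FMABC:out:4} it produces. The relay $\nr$ is an intermediate node carrying no message of its own, so the relevant cuts are those separating a subset of ${\cal M}$ from its complement with the relay placed on one side or the other; by the standard argument one evaluates the information flow across the cut during each phase, keeping only the phase in which the transmitting side is actually active and conditioning on the codewords of the nodes on the receiving side.

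The four inequalities then arise as follows. Taking $S=\{\nzero\}$ (so the base station is cut off from ${\cal B}$, with the relay grouped with ${\cal B}$): in phase~$1$ node $\nzero$ transmits to $\nr$ while ${\cal B}$ also transmits, giving $R_{\{\nzero\},{\cal B}}\le \Delta_1 I(X_\nzero^\pa;Y_\nr^\pa\mid X_{\cal B}^\pa,Q)$ since in phase~$2$ node~$\nzero$ is silent; this is \eqref{eq:FMABC:out:1}. Symmetrically, the cut $S={\cal B}$ with $\nr$ on the ${\cal B}$ side gives, from phase~$2$ in which $\nr$ broadcasts and $\nzero$ listens, $R_{{\cal B},\{\nzero\}}\le \Delta_2 I(X_\nr^\pb;Y_\nzero^\pb)$, which is \eqref{eq:FMABC:out:2}. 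For a proper subset $S\subseteq{\cal B}$, the cut separating $S$ from $\bar S\cup\{\nzero\}$ (relay on the latter side) active in phase~$1$ yields \eqref{eq:FMABC:out:3}, $R_{S,\{\nzero\}}\le \Delta_1 I(X_S^\pa;Y_\nr^\pa\mid X_{\bar S}^\pa,X_\nzero^\pa,Q)$, because all of $\bar S$, $\nzero$ transmit in phase~$1$ and only $\nr$ can relay the messages $W_{S,\nzero}$ onward in phase~$2$; and the cut separating $\{\nzero\}\cup\bar S$ (with $\nr$) from $S$, active in phase~$2$, gives \eqref{eq:FMABC:out:4}, $R_{\{\nzero\},S}\le \Delta_2 I(X_\nr^\pb;Y_S^\pb)$, since the messages $W_{\nzero,S}$ must cross to $S$ during the broadcast phase and only $\nr$ transmits then. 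Throughout I would use that $\varnothing$-symbols drop out of the mutual informations (as stipulated in the preliminaries) so that each phase contributes only when the cut's source side is active.

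Two technical points need care. First, the input distribution: Lemma~\ref{lemma:out} allows a general $p^{(\ell)}(x_1,\dots,x_m\mid q)$, but the FMABC protocol's phase constraints force phase~$2$ to have only $X_\nr$ active and phase~$1$ to have $X_\nr$ inactive; moreover the messages $W_{\{\nzero\},{\cal M}}$ and $W_{\{i\},{\cal M}}$ are mutually independent, so by a standard single-letterization the phase-$1$ distribution factors as $\prod_{i=0}^m p^\pa(x_i\mid q)$ and phase~$2$ as $p^\pb(x_\nr)$ (independent of $q$, since in the broadcast phase the relay's codeword is a function of all decoded messages and no time-sharing is needed — or $Q$ can be absorbed). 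Second, the cardinality bound $|{\cal Q}|\le 2^m$ follows from Carath\'eodory/Fenchel–Eggleston applied to the region defined by \eqref{eq:FMABC:out:1}--\eqref{eq:FMABC:out:4}: the number of constraints whose right-hand sides depend on $Q$ through $p^\pa(\cdot\mid q)$ is governed by the subsets $S\subseteq{\cal B}$ (together with the two ``extreme'' cuts), and preserving those functionals plus the distribution itself requires at most $2^m$ mass points. The main obstacle I anticipate is not any single inequality — each is a routine cut-set evaluation — but rather the bookkeeping of matching every cut $S\subset{\cal M}$ in Lemma~\ref{lemma:out} to one of the four listed bounds and verifying that no cut yields a tighter or different constraint (in particular checking that cuts placing the relay ``incorrectly'' give vacuous or dominated bounds), together with pinning down the precise cardinality constant; these are the steps I would write out most carefully.
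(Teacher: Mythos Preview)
Your proposal is correct and follows essentially the same approach as the paper: both apply the multi-phase cut-set Lemma~\ref{lemma:out}, exploit the FMABC phase constraints $Y_{\cal M}^\pa=X_\nr^\pa=\varnothing$ and $X_{\cal M}^\pb=Y_\nr^\pb=\varnothing$ to reduce each cut to a single-phase mutual information, and then invoke Fenchel--Bunt/Carath\'eodory for the bound $|{\cal Q}|\le 2^m$. The only cosmetic difference is organizational: the paper parametrizes cuts by $S\subseteq{\cal B}$ and four ``types'' $S_1=\{\nzero\}\cup S$, $S_2=\{\nr\}\cup S$, $S_3=\{\nzero,\nr\}\cup S$, $S_4=S$, then explicitly shows that the $S_1$- and $S_2$-type bounds are all dominated by the single extreme cases $S=\varnothing$ (yielding \eqref{eq:FMABC:out:1} and \eqref{eq:FMABC:out:2}); you instead go directly to the four surviving bounds and defer the domination check to your final ``bookkeeping'' step---this is the same argument in a different order. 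One small clarification: the product form $\prod_{i=0}^m p^\pa(x_i\mid q)$ is justified not merely by message independence but by the FMABC half-duplex structure (every node in ${\cal M}$ transmits throughout phase~1 and therefore receives nothing, so each $X_i^\pa$ is a function only of $W_{\{i\},{\cal M}}$), which is what you should spell out rather than calling it ``standard single-letterization.''
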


\begin{proof}
%We use Lemma~\ref{lemma:out} to prove the
%Theorem~\ref{theorem:FMABC:out}.
For every $S\subseteq {\cal B}$,
there exist 4 types of cut-sets: $S_1 =\{\nzero\} \cup S$, $S_2
=\{\nr\} \cup S$, $S_3 = \{\nzero,\nr\}\cup S$ and $S_4 = S$, with corresponding rate pairs $R_{S,\{\nzero\}}$ and $R_{\{\nzero\},S}$. By definition of the FMABC protocol,
\begin{align}
 Y_{\cal M}^\pa &= X_\nr^\pa = \varnothing \label{FMABC:GC:out:1} \\
 X_{\cal M}^\pb &= Y_\nr^\pb = \varnothing. \label{FMABC::GCout:2}
\end{align}

Thus, the corresponding outer bounds for a given subset $S$ are:
\begin{align}
S_1 &: R_{\{\nzero\},{\bar S}} \leq \Delta_1 I(X_\nzero^\pa, X_S^\pa; Y_\nr^\pa
|X_{\bar S}^\pa,Q) + \epsilon, \label{FMABC:GC:out:3}\\
S_2 &: R_{S,\{\nzero\}}\leq \Delta_2 I(X_\nr^\pb; Y_{\nzero}^\pb ,Y_{\bar S}^\pb )+\epsilon,
\label{FMABC:GC:out:4}\\
S_3 &: R_{\{\nzero\},{\bar S}}\leq \Delta_2 I(X_\nr^\pb; Y_{\bar S}^\pb )+\epsilon,
\label{FMABC:GC:out:5}\\
S_4 &: R_{S,\{\nzero\}}\leq \Delta_1 I(X_S^\pa; Y_\nr^\pa |X_{\bar S}^\pa,X_\nzero^\pa ,Q)+\epsilon,
\label{FMABC:GC:out:6}
\end{align}
In \eqref{FMABC:GC:out:3}, we have
\begin{align}
R_{\{\nzero\},{\bar S}} &\leq R_{\{\nzero\},{\cal B}} \\
&\leq \Delta_1 I(X_\nzero^\pa;Y_\nr^\pa|X_{\cal B}^\pa,Q) + \epsilon \\
&= \Delta_1 I(X_\nzero^\pa;Y_\nr^\pa|X_S^\pa,X_{\bar S}^\pa,Q) + \epsilon \\
&\leq \Delta_1 I(X_\nzero^\pa, X_S^\pa; Y_\nr^\pa
|X_{\bar S}^\pa,Q) + \epsilon
\end{align}
Similarly, in \eqref{FMABC:GC:out:4}, we have
\begin{align}
R_{S,\{\nzero\}} &\leq R_{{\cal B},\{\nzero\}} \\
&\leq \Delta_2 I(X_\nr^\pb;Y_\nzero^\pb) + \epsilon \\
&\leq \Delta_2 I(X_\nr^\pb;Y_\nzero^\pb,Y_{\bar S}^\pb) + \epsilon
\end{align}
Thus, the following inequalities capture all possible $S_1$'s and $S_2$'s, respectively.
\begin{align}
R_{\{\nzero\},{\cal B}}&\leq \Delta_1 I(X_\nzero^\pa;Y_\nr^\pa|X_{\cal B}^\pa,Q) + \epsilon \label{FMABC:GC:out:13}\\
R_{{\cal B},\{\nzero\}}&\leq \Delta_2 I(X_\nr^\pb;Y_\nzero^\pb) + \epsilon \label{FMABC:GC:out:12}
\end{align}
Since $\epsilon > 0$ is arbitrary, \eqref{FMABC:GC:out:6},
\eqref{FMABC:GC:out:13}, \eqref{FMABC:GC:out:12} and \eqref{FMABC:GC:out:5} yield Theorem~\ref{theorem:FMABC:out}. By
Fenchel-Bunt's extension of the Carath\'{e}odory theorem in
\cite{Hiriart:2001}, it is sufficient to restrict $|{\cal Q}| \leq
2^{m}$.
\end{proof}

\subsection{PMABC protocol}
\begin{theorem}
\label{theorem:PMABC:out} (Outer bound) The capacity region of the
half-duplex bi-directional relay channel under the PMABC protocol is outer bounded by the set of all points $(R_{\nzero,b},R_{b,\nzero})$ for all $b\in {\cal B}$ satisfying
\begin{align}
R_{\{\nzero\},{\cal B}}&\leq \sum_{i=1}^m \Delta_i I(X_\nzero^{(i)};Y_\nr^{(i)}|X_i^{(i)},Q) \label{eq:PMABC:out:1}\\
R_{{\cal B},\{\nzero\}}&\leq \Delta_{m+1} I(X_\nr^{(m+1)}; Y_\nzero^{(m+1)})\label{eq:PMABC:out:2}\\
R_{S,\{\nzero\}}&\leq \sum_{i\in S} \Delta_i I(X_i^{(i)};Y_\nr^{(i)},Y_{\bar S}^{(i)}|X_\nzero^{(i)},Q)\label{eq:PMABC:out:3}\\
R_{\{\nzero\},S} &\leq \sum_{i\in {\bar S}}\Delta_i I(X_\nzero^{(i)},X_i^{(i)};Y_S^{(i)}|Q) +\sum_{i\in {S}}\Delta_i I(X_\nzero^{(i)};Y_{S\setminus\{i\}}^{(i)}|X_i^{(i)},Q) + \nonumber\\
&~~~~\Delta_{m+1}I(X_\nr^{(m+1)};Y_S^{(m+1)}) \label{eq:PMABC:out:4}
\end{align}
for all choices of the joint distribution
$p(q)\prod_{i=1}^m p^{(i)}(x_\nzero|q)p^{(i)}(x_i|q)p^{(m+1)}(x_{\nr})$ with $|{\cal Q}| \leq 2^{m+1}-1$ over the restricted alphabet $\bigotimes_{i=0}^m {\cal X}_i \times {\cal X}_\nr\times {\cal Q}$ for all possible $S\subseteq {\cal B}$. \thmend
\end{theorem}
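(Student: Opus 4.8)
The plan is to obtain the four families of inequalities by applying the cut-set Lemma~\ref{lemma:out} to four families of cut-sets, in exactly the spirit of the proof of Theorem~\ref{theorem:FMABC:out} above, and then to simplify each resulting bound using the transmit/receive pattern that defines the PMABC protocol. Recall that the node set is $\{\nzero,\none,\ldots,\nm,\nr\}$, that during phase $i$ ($1\le i\le m$) only node $\nzero$ and node $i$ transmit while the relay and the remaining terminal nodes listen, and that during phase $m+1$ only the relay $\nr$ transmits while the nodes in ${\cal M}$ listen; in particular
\begin{align}
& X_\nr^{(i)}=Y_\nzero^{(i)}=Y_i^{(i)}=X_j^{(i)}=\varnothing \quad (1\le i\le m,\ j\in{\cal B}\setminus\{i\}),\nonumber\\
& X_{\cal M}^{(m+1)}=Y_\nr^{(m+1)}=\varnothing .\nonumber
\end{align}

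First I would fix $S\subseteq{\cal B}$ and use the four cuts $T_1=\{\nzero\}$, $T_2=\{\nr\}\cup{\cal B}$, $T_3=S$ and $T_4=\{\nzero,\nr\}\cup\bar S$. Since the relay carries no message of its own, the only messages crossing these cuts are, respectively, $W_{\{\nzero\},{\cal B}}$, $W_{{\cal B},\{\nzero\}}$, $W_{S,\{\nzero\}}$ and $W_{\{\nzero\},S}$, so Lemma~\ref{lemma:out} applied to the four cuts produces the four required families. For $T_2$ every multiple-access phase drops out because $Y_\nzero^{(i)}=\varnothing$, leaving only the broadcast phase and hence $R_{{\cal B},\{\nzero\}}\le\Delta_{m+1}I(X_\nr^{(m+1)};Y_\nzero^{(m+1)})$, which is \eqref{eq:PMABC:out:2}; here the broadcast phase has the single transmitter $\nr$, so the mixture over the time-sharing variable may be replaced by a fixed input and $Q$ dropped. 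For $T_3=S$ only the phases $i\in S$ survive (in every other phase no node of $S$ transmits), and in phase $i\in S$ one has $X_S^{(i)}=X_i^{(i)}$ while the sink side observes $(Y_\nr^{(i)},Y_{\bar S}^{(i)})$ conditioned on the still-active $X_\nzero^{(i)}$, giving \eqref{eq:PMABC:out:3}.

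Next I would treat $T_4=\{\nzero,\nr\}\cup\bar S$: the phases $i\in\bar S$ contribute $\Delta_i I(X_\nzero^{(i)},X_i^{(i)};Y_S^{(i)}|Q)$ (both transmitters of phase $i$ lie on the source side, the set $S$ being idle and listening), the phases $i\in S$ contribute $\Delta_i I(X_\nzero^{(i)};Y_{S\setminus\{i\}}^{(i)}|X_i^{(i)},Q)$ (now the transmitter $i$ lies on the sink side), and the broadcast phase contributes $\Delta_{m+1}I(X_\nr^{(m+1)};Y_S^{(m+1)})$, which together form \eqref{eq:PMABC:out:4}. Finally, the cut $T_1=\{\nzero\}$ makes the broadcast phase vanish ($X_{T_1}^{(m+1)}=\varnothing$), and each multiple-access phase $i$ leaves a term in which $\nzero$ transmits, $X_i^{(i)}$ is conditioned upon, and the relay's observation $Y_\nr^{(i)}$ appears on the sink side, producing \eqref{eq:PMABC:out:1}. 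The union of these inequalities over all $S\subseteq{\cal B}$ uses at most $2^{m+1}-1$ distinct cuts, so by Fenchel--Bunt's extension of Carath\'eodory's theorem \cite{Hiriart:2001} the time-sharing alphabet may be restricted to $|{\cal Q}|\le 2^{m+1}-1$.

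The main obstacle is the per-phase bookkeeping inside each cut: one must decide carefully, phase by phase, which transmitted signals fall on the source side and, above all, which received signals are non-null on the sink side — the latter being the delicate point in PMABC, because an idle terminal does overhear during the multiple-access phases in which it is not the active transmitter, and one must check that these contributions collapse exactly to the four recorded families while the remaining cut-sets yield only redundant inequalities. The only other point needing care is the deletion of $Q$ in the single-transmitter broadcast phase, which rests on $I(X_\nr;Y\mid Q)\le\max_q I(X_\nr;Y\mid Q=q)$ together with the fact that the maximising conditional input law is itself admissible for that phase.
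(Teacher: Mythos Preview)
Your overall approach---applying Lemma~\ref{lemma:out} to the cuts $T_1=\{\nzero\}$, $T_2=\{\nr\}\cup\mathcal{B}$, $T_3=S$, $T_4=\{\nzero,\nr\}\cup\bar S$ and then simplifying phase by phase---is exactly the route the paper intends (it merely says the argument mirrors that of Theorem~\ref{theorem:FMABC:out}), and your derivations of \eqref{eq:PMABC:out:2}, \eqref{eq:PMABC:out:3} and \eqref{eq:PMABC:out:4} are correct.

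There is, however, a genuine gap in your treatment of \eqref{eq:PMABC:out:1}. For the cut $T_1=\{\nzero\}$ the sink side is $\bar T_1=\{\nr\}\cup\mathcal{B}$, and under PMABC every node of $\mathcal{B}\setminus\{i\}$ is \emph{listening} in phase $i$. Hence Lemma~\ref{lemma:out} actually yields
\[
R_{\{\nzero\},\mathcal{B}}\ \le\ \sum_{i=1}^m \Delta_i\, I\bigl(X_\nzero^{(i)};\,Y_\nr^{(i)},Y_{\mathcal{B}\setminus\{i\}}^{(i)}\ \big|\ X_i^{(i)},Q\bigr),
\]
not the tighter expression in \eqref{eq:PMABC:out:1} with the overheard outputs $Y_{\mathcal{B}\setminus\{i\}}^{(i)}$ removed. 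You flag exactly this issue in your closing paragraph (``one must check that these contributions collapse exactly to the four recorded families'') but you never carry out that check---and in fact it does \emph{not} collapse. Contrast this with FMABC, where \eqref{eq:FMABC:out:1} has only $Y_\nr$ on the sink side precisely because in phase~1 of FMABC every terminal transmits and so none overhears; that is the structural difference between the two protocols, and it is why ``the relay's observation $Y_\nr^{(i)}$ appears on the sink side'' is an incomplete account here. As written, your argument establishes the displayed bound above, not \eqref{eq:PMABC:out:1}.
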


\emph{Proof outline :} The proof of Theorem \ref{theorem:PMABC:out} follows the same argument as the proof of Theorem \ref{theorem:FMABC:out}.

\subsection{FTDBC protocol}
\begin{theorem}
\label{theorem:FTCBC:out} (Outer bound) The capacity region of the
half-duplex bi-directional relay channel under the FTDBC protocol is outer bounded by the set of all points $(R_{\nzero,b},R_{b,\nzero})$ for all $b\in {\cal B}$ satisfying
\begin{align}
R_{\{\nzero\},S}&\leq \min \left\{\Delta_1 I(X_\nzero^\pa;Y_\nr^\pa,Y_S^\pa) + \sum_{i\in {\bar S}} \Delta_{i+1} I(X_i^{(i+1)};Y_\nr^{(i+1)},Y_S^{(i+1)}),\right.\nonumber\\
&~~~\left.\Delta_1 I(X_\nzero^\pa;Y_S^\pa) + \sum_{i\in {\bar S}} \Delta_{i+1} I(X_i^{(i+1)};Y_S^{(i+1)}) + \Delta_{m+2} I(X_\nr^{(m+2)} ;Y_S^{(m+2)}) \right\}\label{eq:FTDBC:out:1}\\
R_{S,\{\nzero\}}&\leq \min\left\{\sum_{i\in S} \Delta_{i+1}
I(X_i^{(i+1)}; Y_\nzero^{(i+1)}
,Y_{\bar S}^{(i+1)}) + \Delta_{m+2} I(X_\nr^{(m+2)};Y_\nzero^{(m+2)},Y_{\bar S}^{(m+2)}),\right.\nonumber\\
&~~~\left.\sum_{i\in S}
\Delta_{i+1}I(X_i^{(i+1)};Y_\nzero^{(i+1)},Y_{\bar
S}^{(i+1)},Y_\nr^{(i+1)})\right\} \label{eq:FTDBC:out:2}
\end{align}
for all choices of the joint distribution
$\prod_{i=0}^m p^{(i+1)}(x_i)$ $p^{(m+2)}(x_{\nr})$  over the restricted alphabet $\bigotimes_{i=0}^m {\cal X}_i \times {\cal X}_\nr$ for all possible $S\subseteq {\cal B}$. \thmend
\end{theorem}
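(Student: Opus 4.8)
The plan is to invoke the multi-phase cut-set bound of Lemma~\ref{lemma:out}, along the lines of the proof of Theorem~\ref{theorem:FMABC:out}, specialised to the phase structure of the FTDBC protocol: phase~$1$ has node $\nzero$ transmitting and every other node listening, phase~$i+1$ (for $i\in{\cal B}$) has node $i$ transmitting, and phase~$m+2$ has the relay $\nr$ transmitting. First I would record the half-duplex constraints this forces, i.e.\ $X_i^{\pa}=\varnothing$ for $i\in{\cal B}\cup\{\nr\}$ and $Y_\nzero^{\pa}=\varnothing$; $X_j^{(i+1)}=\varnothing$ for $j\neq i$ and $Y_i^{(i+1)}=\varnothing$; and $X_j^{(m+2)}=\varnothing$ for all $j\in{\cal M}$ with $Y_\nr^{(m+2)}=\varnothing$. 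The key structural feature is that every phase has a single transmitter, and these transmitters ($\nzero$, then $1,\dots,m$, then $\nr$) are all distinct.

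Next, for a fixed $S\subseteq{\cal B}$ (write $\bar S\eqdef{\cal B}\setminus S$) I would apply Lemma~\ref{lemma:out} to the four cuts that isolate exactly the message set $W_{\{\nzero\},S}$ or exactly $W_{S,\{\nzero\}}$. For $R_{\{\nzero\},S}$ the source side must contain $\nzero$ and, to avoid also capturing $W_{\nzero,\bar S}$, all of $\bar S$, while $S$ lies on the sink side, leaving the relay free: with $\nr$ on the sink side the bound collapses---every term whose phase-transmitter lies on the sink side vanishes since $I(\varnothing;\cdot)=0$, and conditioning on the $\varnothing$-valued sink inputs disappears---to $\Delta_1 I(X_\nzero^{\pa};Y_\nr^{\pa},Y_S^{\pa})+\sum_{i\in\bar S}\Delta_{i+1}I(X_i^{(i+1)};Y_\nr^{(i+1)},Y_S^{(i+1)})+\epsilon$, the first term of the $\min$ in \eqref{eq:FTDBC:out:1}; with $\nr$ on the source side the phase-$(m+2)$ contribution $\Delta_{m+2}I(X_\nr^{(m+2)};Y_S^{(m+2)})$ reappears and the $Y_\nr$'s drop, giving the second term. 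Symmetrically, for $R_{S,\{\nzero\}}$ the source side contains $S$ and the sink side contains $\nzero$ and $\bar S$; placing $\nr$ on the source side yields the first term of \eqref{eq:FTDBC:out:2} and placing $\nr$ on the sink side the second. In every case, because the terminals exchange messages only with $\nzero$, putting $\bar S$ on the source side (for the $\nzero\to S$ flows) or on the sink side (for the $S\to\nzero$ flows) guarantees that $W_{\bar S,\nzero}$ and $W_{\nzero,\bar S}$ do not cross the cut, so the left-hand side of Lemma~\ref{lemma:out} equals exactly $R_{\{\nzero\},S}=\sum_{i\in S}R_{\nzero,i}$ (resp.\ $R_{S,\{\nzero\}}$).

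It then remains to remove the time-sharing variable $Q$ supplied by Lemma~\ref{lemma:out}---this is why the statement carries neither a $Q$ nor a cardinality bound on ${\cal Q}$. Here I would use that each surviving term has the form $\Delta_\ell I(X_{j}^{(\ell)};Y_A^{(\ell)}\mid Q)$ for the single phase-$\ell$ transmitter $j$ and some receiver set $A$; since the channel is memoryless and does not depend on $Q$, the chain $Q-X_j^{(\ell)}-Y_A^{(\ell)}$ is Markov, so $I(X_j^{(\ell)};Y_A^{(\ell)}\mid Q)\le I(X_j^{(\ell)};Y_A^{(\ell)})$ evaluated under the marginal $p^{(\ell)}(x_j)=\sum_q p(q)\,p^{(\ell)}(x_j\mid q)$; as the transmitters differ from phase to phase, these marginals automatically combine into the product distribution $\prod_{i=0}^m p^{(i+1)}(x_i)\,p^{(m+2)}(x_\nr)$ appearing in the theorem. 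Finally, letting $\epsilon\to0$, intersecting for each direction the two bounds obtained from the two placements of $\nr$, and ranging over all $S\subseteq{\cal B}$ yields \eqref{eq:FTDBC:out:1}--\eqref{eq:FTDBC:out:2}. The only genuinely error-prone step is the combinatorial bookkeeping of the second paragraph: checking that these four cuts are precisely the ones giving constraints directly on $R_{\{\nzero\},S}$ and $R_{S,\{\nzero\}}$ (other placements of $\bar S$ or of individual nodes of $S$ merely reproduce, for a different choice of $S$, inequalities already in the family), and that the $\varnothing$-substitutions have been carried through consistently in all $m+2$ phases.
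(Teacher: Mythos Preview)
Your proposal is correct and follows essentially the same route as the paper, which simply states that the proof ``follows the same argument as the proof of Theorem~\ref{theorem:FMABC:out}'': apply Lemma~\ref{lemma:out} to the four cuts obtained by fixing $\{\nzero\}\cup\bar S$ (resp.\ $S$) on the source side and placing $\nr$ on either side, then simplify using the FTDBC half-duplex constraints. Your explicit justification for eliminating the time-sharing variable $Q$---via the Markov chain $Q\!-\!X_j^{(\ell)}\!-\!Y_A^{(\ell)}$ available because each phase has a single transmitter, and the observation that the resulting per-phase marginals assemble into the product form $\prod_{i=0}^m p^{(i+1)}(x_i)\,p^{(m+2)}(x_\nr)$---fills in a step the paper leaves implicit in its proof sketch.
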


\emph{Proof outline :} The proof of Theorem \ref{theorem:FTCBC:out} follows the same argument as the proof of Theorem \ref{theorem:FMABC:out}.

\subsection{PTDBC protocol}
\begin{theorem}
\label{theorem:PTCBC:out} (Outer bound) The capacity region of the
half-duplex bi-directional relay channel under the PTDBC protocol is outer bounded by the set of all points $(R_{\nzero,b},R_{b,\nzero})$ for all $b\in {\cal B}$ satisfying
\begin{align}
R_{\{\nzero\},S}&\leq \min \left\{\Delta_1 I(X_\nzero^\pa;Y_\nr^\pa,Y_S^\pa) + \Delta_{2} I(X_{\bar S}^\pb;Y_\nr^\pb|X_S^\pb,Q),\right.\nonumber\\
&~~~~~~~~~~\left.\Delta_1 I(X_\nzero^\pa;Y_S^\pa) + \Delta_3 I(X_\nr^\pc ;Y_S^\pc) \right\}\label{eq:PTDBC:out:1}\\
R_{S,\{\nzero\}}&\leq \min\left\{ \Delta_2 I(X_S^\pb; Y_\nzero^\pb|X_{\bar S}^\pb,Q) + \Delta_3 I(X_\nr^\pb;Y_\nzero^\pb,Y_{\bar S}^\pb),\right.\nonumber\\
&~~~~~~~~~~\left.\Delta_2I(X_S^\pb;Y_\nzero^\pb,Y_\nr^\pb|X_{\bar
S}^\pb,Q)\right\} \label{eq:PTDBC:out:2}
\end{align}
for all choices of the joint distribution
$p(q)p^\pa(x_\nzero)\prod_{i=1}^m p^\pb(x_i|q)$ $p^\pc(x_{\nr})$  with $|{\cal Q}| \leq 2^m -1$ over the restricted alphabet $\bigotimes_{i=0}^m {\cal X}_i \times {\cal X}_\nr \times {\cal Q}$ for all possible $S\subseteq {\cal B}$. \thmend
\end{theorem}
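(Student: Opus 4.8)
\emph{Proof outline.} The plan is to specialize the multi-phase cut-set bound of Lemma~\ref{lemma:out} to the three phases of the PTDBC protocol, following verbatim the strategy used for Theorem~\ref{theorem:FMABC:out}. First I would record the half-duplex structure of PTDBC: in phase~$1$ only $\nzero$ transmits, so $X_i^\pa=\varnothing$ for every $i\in{\cal B}$ and $X_\nr^\pa=\varnothing$; in phase~$2$ the nodes of ${\cal B}$ multiple-access the relay, so $X_\nzero^\pb=X_\nr^\pb=\varnothing$ and $Y_i^\pb=\varnothing$ for every $i\in{\cal B}$; in phase~$3$ only $\nr$ transmits, so $X_i^\pc=\varnothing$ for every $i\in{\cal M}$. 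Since phases~$1$ and~$3$ each have a single active transmitter, the concavity of mutual information in the input distribution lets one replace the phase-$1$ and phase-$3$ laws by their $Q$-averages without enlarging the bound (the phases factor independently), so that $Q$ survives only in the phase-$2$ law; this is why the target distribution is $p(q)\,p^\pa(x_\nzero)\prod_{i=1}^m p^\pb(x_i|q)\,p^\pc(x_\nr)$.

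Next, for a fixed $S\subseteq{\cal B}$ I would bound $R_{\{\nzero\},S}$ using the two cuts $S'=\{\nzero\}\cup\bar S$ (so $\bar{S'}=\{\nr\}\cup S$) and $S'=\{\nzero,\nr\}\cup\bar S$ (so $\bar{S'}=S$); because $W_{i,j}=0$ for $i,j\in{\cal B}$ and no message is destined to $\nr$, in each case $R_{S',\bar{S'}}=R_{\{\nzero\},S}$. Plugging the first cut into Lemma~\ref{lemma:out} and deleting the $\varnothing$ terms leaves a phase-$1$ contribution $\Delta_1 I(X_\nzero^\pa;Y_\nr^\pa,Y_S^\pa)$ and a phase-$2$ contribution $\Delta_2 I(X_{\bar S}^\pb;Y_\nr^\pb|X_S^\pb,Q)$, while phase~$3$ drops out since $S'$ is silent then; the second cut leaves $\Delta_1 I(X_\nzero^\pa;Y_S^\pa)$ from phase~$1$ (phase~$2$ drops out because $Y_S^\pb=\varnothing$) and $\Delta_3 I(X_\nr^\pc;Y_S^\pc)$ from phase~$3$. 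Intersecting the two gives \eqref{eq:PTDBC:out:1}. Symmetrically, I would bound $R_{S,\{\nzero\}}$ with the cuts $S'=S$, which yields $\Delta_2 I(X_S^\pb;Y_\nzero^\pb,Y_\nr^\pb|X_{\bar S}^\pb,Q)$, and $S'=S\cup\{\nr\}$, which yields $\Delta_2 I(X_S^\pb;Y_\nzero^\pb|X_{\bar S}^\pb,Q)+\Delta_3 I(X_\nr^\pc;Y_\nzero^\pc,Y_{\bar S}^\pc)$; their minimum is \eqref{eq:PTDBC:out:2}. Letting $\epsilon\downarrow 0$ in Lemma~\ref{lemma:out} removes the slack term.

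Finally I would handle the cardinality of $Q$: the region is the intersection over $S\subseteq{\cal B}$ of the half-spaces produced above, and $Q$ enters only through $p^\pb(x_1,\dots,x_m|q)$, so Fenchel--Bunt's extension of Carath\'{e}odory's theorem~\cite{Hiriart:2001}, invoked exactly as in the proof of Theorem~\ref{theorem:FMABC:out}, allows restricting to $|{\cal Q}|\leq 2^m-1$ without shrinking the region. The only part that is more than routine is the combinatorial bookkeeping of the middle step: for every $S$, each of the four cuts, and each of the three phases one must correctly determine which inputs and outputs equal $\varnothing$ (and hence which mutual-information terms vanish or shed their conditioning), and one must also check that the single family $\{p^{(\ell)}(\cdot|q)\}$ furnished by Lemma~\ref{lemma:out} simultaneously validates all four cut inequalities; beyond that, the argument is a direct transcription of the FMABC outer-bound proof.
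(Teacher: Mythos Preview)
Your proposal is correct and follows essentially the same approach as the paper, which simply states that the proof ``follows the same argument as the proof of Theorem~\ref{theorem:FMABC:out}.'' In fact you supply more detail than the paper does: you correctly identify the four relevant cuts, carry out the phase-by-phase evaluation using the half-duplex constraints of PTDBC, handle the removal of $Q$ from the single-transmitter phases, and invoke Fenchel--Bunt for the cardinality bound, all in direct analogy with the FMABC case.
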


\emph{Proof outline :} The proof of Theorem \ref{theorem:PTCBC:out} follows the same argument as the proof of Theorem \ref{theorem:FMABC:out}.

These outer bounds will be evaluated in Gaussian noise, as described next.

%%%%%%%%%%%%%%%%%%%%%%%%%%%%%%%%%%%%%%%%%%%%%%%%%%%%%%%%%%%%%

\section{Gaussian noise channel}

%%%%%%%%%%%%%%%%%%%%%%%%%%%%%%%%%%%%%%%%%%%%%%%%%%%%%%%%%%%%%
\label{sec:Gaussian}
In this section, we evaluate the achievable rate regions and outer bounds obtained in the previous section to an AWGN channel.  To do so, we assume an additive white Gaussian noise (AWGN) channel model, assume Gaussian input distributions for the achievability schemes, which may or may not be optimal, and evaluate the mutual information terms. The resulting rate regions are indeed quite complex: our goals are to demonstrate, at least under Gaussian input distributions, that:

$\bullet$ Network coding, Random Binning, and Cooperation all improve the rate regions over simpler schemes which employ standard multiple access and broadcast channel coding schemes.

$\bullet$ There are no inclusion relationships between FMABC, PMABC, FTDBC and PTDBC regions.

Furthermore, we include entirely the somewhat laborious analytical expressions in the Gaussian noise regime; we hope that these expressions may be of use to the further exploration of the obtained regions in for example the high or low SNR regimes, which is the subject of current investigation \cite{kim:inprep}.

\subsection{Channel model}
The corresponding mathematical channel model is, for each channel use $k$ :
\begin{align}
{\bf Y}[k] = {\bf H}{\bf X}[k] + {\bf Z}[k]
\end{align}
where,
\begin{align}
{\bf Y}[k] = \left[
               \begin{array}{c}
                 Y_{\nzero}[k] \\
                 Y_{\none}[k] \\
                 {\vdots} \\
                 Y_{\nm}[k] \\
                 Y_{\nr}[k] \\
               \end{array}
             \right] ,~~~
{\bf X}[k] = \left[
               \begin{array}{c}
                 X_{\nzero}[k] \\
                 X_{\none}[k] \\
                 {\vdots} \\
                 X_{\nm}[k] \\
                 X_{\nr}[k]\\
               \end{array}
             \right] ,~~~
{\bf Z}[k] = \left[
               \begin{array}{c}
                 Z_{\nzero}[k] \\
                 Z_{\none}[k] \\
                 {\vdots} \\
                 Z_{\nm}[k] \\
                 Z_{\nr}[k]\\
               \end{array}
             \right]
\end{align}
and
\begin{align}
{\bf H} = \left[
            \begin{array}{cccc}
              0 & h_{\nzero,\none} & \cdots & h_{\nzero,\nr} \\
              h_{\none,\nzero} & 0 & \cdots & h_{\none,\nr} \\
              \vdots & \ddots & \vdots & \vdots \\
              h_{\nr,\nzero} & h_{\nr,\none} & \cdots & 0 \\
            \end{array}
          \right]
\end{align}
where ${\bf Y}[k]$, ${\bf X}[k]$ and ${\bf Z}[k]$ are in $(\complex^*)^{(m+2) \times 1} = (\complex \cup \{\varnothing\})^{(m+2)\times 1}$, and ${\bf H} \in \complex^{(m+2)\times(m+2)}$. In phase $\ell$, if node $i$ is in transmission mode $X_{i}[k]$ follows the input distribution $X_{i}^{(\ell)} \sim {\cal CN}(0,P_{i})$. Otherwise, $X_{i}[k]= \varnothing$, which means that the input symbol does not exist in the above mathematical channel model.

$h_{i,j}$ is the effective
channel gain between transmitter $i$ and receiver $j$. We assume that each node is
fully aware of the channel gains, i.e., full CSI. The noise at all
receivers is independent, of unit power, additive, white Gaussian,
complex and circularly symmetric. For convenience of analysis, we
also define the function $C(x) \eqdef \log_2(1+x)$.

For the simplest case $m=2$, we get the following achievable rate regions for the Gaussian channel from Theorems \ref{theorem:FMABC} $\thicksim$ \ref{theorem:PTDBC-NR}. For the analysis of the cooperation scheme, we assume $\hat{Y}_{\none}$ are zero mean Gaussians and define $P_y \eqdef E[Y_\none^2]$ , $P_{\hat{y}} \eqdef E[\hat{Y}_\none^2]$ and $\sigma_y \eqdef E[\hat{Y}_\none Y_\none]$. Then the relation between the received $Y_\none$ and the compressed $\hat{Y}_\none$ are given by the following equivalent channel model:
\begin{align}
\hat{Y}_\none [k] = h_{\none \hat{\none}} Y_\none[k] + Z_{\hat{\none}}[k]
\end{align}
where $h_{\none \hat{\none}} = \frac{\sigma_y}{P_y}$ and $Z_{\hat{\none}} \sim {\cal CN}(0,P_{\hat{y}} - \frac{\sigma_y^2}{P_y})$. We note that in the following, $P_{\hat{y}}$ and $\sigma_y$ are unknown variables corresponding to the quantization which we numerically optimize. For convenience of analysis, we assume that $|h_{\nr,\none}|>|h_{\nr,\ntwo}|$, $|h_{\nzero,\none}|>|h_{\nzero,\ntwo}|$ and $|h_{\none,\nr}|>|h_{\none,\ntwo}|$. Since the given channel is degraded as $X_\nr \rightarrow Y_\none \rightarrow Y_\ntwo$, ${\cal I}_\none = \{\ntwo\}$ and ${\cal I}_\ntwo = \varnothing$.

\subsection{Achievable rate regions in the Gaussian channel with $m=2$}
We apply the bounds found in previous sections to the Gaussian channel with three terminal nodes $(\nzero,\none,\ntwo)$ and one relay node $(\nr)$. The detail evaluations of necessary mutual information terms are defined in Appendix \ref{app:evaluation}.
\begin{itemize}
\item Simplest Protocol
\begin{align}
&\text{From \eqref{eq:simple:1}}\left\{
  \begin{array}{l}
    R_{\nzero,\none} + R_{\nzero,\ntwo} < \Delta_1 C(|h_{\nzero,\nr}|^2 P_\nzero)\\
  \end{array}
\right.\\
&\text{From \eqref{eq:simple:2}}\left\{
  \begin{array}{l}
    R_{\none,\nzero} < \Delta_2 C(|h_{\none,\nr}|^2 P_\none)\\
    R_{\ntwo,\nzero} < \Delta_3 C(|h_{\ntwo,\nr}|^2 P_\ntwo)\\
  \end{array}
 \right.\\
&\text{From \eqref{eq:simple:3}}\left\{
  \begin{array}{l}
    R_{\none,\nzero} + R_{\ntwo,\nzero} < \Delta_4 C(|h_{\nr,\nzero}|^2 P_\nr)\\
  \end{array}
  \right.\\
&\text{From \eqref{eq:simple:4}}\left\{
  \begin{array}{l}
    R_{\nzero,\none} < \Delta_5 C(|h_{\nr,\none}|^2 P_\nr)\\
    R_{\nzero,\ntwo} < \Delta_6 C(|h_{\nr,\ntwo}|^2 P_\nr)\\
  \end{array}
\right.
\end{align}
To obtain the regions numerically, we optimize $\Delta_1$, $\Delta_2$, $\Delta_3$, $\Delta_4$, $\Delta_5$ and $\Delta_6$ for the given channel mutual informations to maximize the achievable rate regions.

\item FMABC Protocol
\begin{align}
&\text{From \eqref{eq:FMABC:1}}\left\{
  \begin{array}{l}
    R_{\nzero,\none} + R_{\nzero,\ntwo} < \Delta_1 C(|h_{\nzero,\nr}|^2 P_\nzero)\\
    R_{\none,\nzero} < \Delta_1 C(|h_{\none,\nr}|^2 P_\none)\\
    R_{\ntwo,\nzero} < \Delta_1 C(|h_{\ntwo,\nr}|^2 P_\ntwo)\\
    R_{\nzero,\none} + R_{\nzero,\ntwo} + R_{\none,\nzero} < \Delta_1 C(|h_{\nzero,\nr}|^2 P_\nzero + |h_{\none,\nr}|^2 P_\none)\\
    R_{\nzero,\none} + R_{\nzero,\ntwo} + R_{\ntwo,\nzero}< \Delta_1 C(|h_{\nzero,\nr}|^2 P_\nzero+ |h_{\ntwo,\nr}|^2 P_\ntwo)\\
    R_{\none,\nzero} + R_{\ntwo,\nzero} < \Delta_1 C(|h_{\none,\nr}|^2 P_\none + |h_{\ntwo,\nr}|^2 P_\ntwo)\\
    R_{\nzero,\none} + R_{\nzero,\ntwo}+ R_{\none,\nzero} + R_{\ntwo,\nzero} < \Delta_1 C(|h_{\nzero,\nr}|^2 P_\nzero + |h_{\none,\nr}|^2 P_\none + |h_{\ntwo,\nr}|^2 P_\ntwo)
  \end{array}
\right.\\
&\text{From \eqref{eq:FMABC:2}}\left\{
  \begin{array}{l}
    R_{\none,\nzero} + R_{\ntwo,\nzero} < \Delta_2 C_{B}(P_\nr,h_{\nr,\nzero},{\bar\lambda}_{\nr1}^\pb,{\bar \beta}_\nr^\pb)\\
    R_{\nzero,\none} < \Delta_2 C_{B}(P_\nr,h_{\nr,\nzero},{\bar\lambda}_{\nr2}^\pb,{\bar \beta}_\nr^\pb)\\
    R_{\nzero,\ntwo} < \Delta_2 C_{B}(P_\nr,h_{\nr,\nzero},{\bar\lambda}_{\nr3}^\pb,{\bar \beta}_\nr^\pb)\\
    R_{\none,\nzero} + R_{\ntwo,\nzero} + R_{\nzero,\none} <  \Delta_2 (C_{B}(P_\nr,h_{\nr,\nzero},{\bar\lambda}_{\nr1}^\pb,{\bar \beta}_\nr^\pb) + C_{B}(P_\nr,h_{\nr,\nzero},{\bar\lambda}_{\nr2}^\pb,{\bar \beta}_\nr^\pb)) - \\
    ~~~~~~~~~~~~~~~~~~~~~~~~~~\Delta_2 C_{BE} ({\bar\lambda}_{\nr1}^\pb,{\bar\lambda}_{\nr2}^\pb,{\bar \beta}_\nr^\pb)\\
    R_{\none,\nzero} + R_{\ntwo,\nzero} + R_{\nzero,\ntwo} <  \Delta_2 (C_{B}(P_\nr,h_{\nr,\nzero},{\bar\lambda}_{\nr1}^\pb,{\bar \beta}_\nr^\pb) + C_{B}(P_\nr,h_{\nr,\nzero},{\bar\lambda}_{\nr3}^\pb,{\bar \beta}_\nr^\pb)) - \\
    ~~~~~~~~~~~~~~~~~~~~~~~~~~\Delta_2 C_{BE} ({\bar\lambda}_{\nr1}^\pb,{\bar\lambda}_{\nr3}^\pb,{\bar \beta}_\nr^\pb)\\
    R_{\nzero,\none} + R_{\nzero,\ntwo} <  \Delta_2 (C_{B}(P_\nr,h_{\nr,\nzero},{\bar\lambda}_{\nr2}^\pb,{\bar \beta}_\nr^\pb) + C_{B}(P_\nr,h_{\nr,\nzero},{\bar\lambda}_{\nr3}^\pb,{\bar \beta}_\nr^\pb)) - \\
    ~~~~~~~~~~~~~~~~~~~~~~~~~~\Delta_2 C_{BE} ({\bar\lambda}_{\nr2}^\pb,{\bar\lambda}_{\nr3}^\pb,{\bar \beta}_\nr^\pb)\\
    R_{\none,\nzero} + R_{\ntwo,\nzero} + R_{\nzero,\none} + R_{\nzero,\ntwo} <  \Delta_2 (C_{B}(P_\nr,h_{\nr,\nzero},{\bar\lambda}_{\nr1}^\pb,{\bar \beta}_\nr^\pb) + C_{B}(P_\nr,h_{\nr,\nzero},{\bar\lambda}_{\nr2}^\pb,{\bar \beta}_\nr^\pb) + \\
     ~~~C_{B}(P_\nr,h_{\nr,\nzero},{\bar\lambda}_{\nr3}^\pb,{\bar \beta}_\nr^\pb) - C_{BE2} ({\bar\lambda}_{\nr3}^\pb,{\bar\lambda}_{\nr1}^\pb,{\bar\lambda}_{\nr2}^\pb,{\bar \beta}_\nr^\pb) - C_{BE} ({\bar\lambda}_{\nr1}^\pb,{\bar\lambda}_{\nr2}^\pb,{\bar \beta}_\nr^\pb))\\
  \end{array}
\right.
\end{align}
To obtain the regions numerically, we optimize $\Delta_1$, $\Delta_2$, ${\bf\Lambda}_\nr^\pb$ and ${\bar\beta}_{\nr}^\pb$ for the given channel mutual informations to maximize the achievable rate regions.

\item FMABC-N Protocol
\begin{align}
&\text{From \eqref{eq:FMABC:DF:GC:1}}\left\{
  \begin{array}{l}
    R_{\nzero,\none} + R_{\nzero,\ntwo} < \Delta_1 C(|h_{\nzero,\nr}|^2 P_\nzero)\\
    R_{\none,\nzero} < \Delta_1 C(|h_{\none,\nr}|^2 P_\none)\\
    R_{\ntwo,\nzero} < \Delta_1 C(|h_{\ntwo,\nr}|^2 P_\ntwo)\\
    R_{\nzero,\none} + R_{\nzero,\ntwo} + R_{\none,\nzero} < \Delta_1 C(|h_{\nzero,\nr}|^2 P_\nzero + |h_{\none,\nr}|^2 P_\none)\\
    R_{\nzero,\none} + R_{\nzero,\ntwo} + R_{\ntwo,\nzero}< \Delta_1 C(|h_{\nzero,\nr}|^2 P_\nzero+ |h_{\ntwo,\nr}|^2 P_\ntwo)\\
    R_{\none,\nzero} + R_{\ntwo,\nzero} < \Delta_1 C(|h_{\none,\nr}|^2 P_\none + |h_{\ntwo,\nr}|^2 P_\ntwo)\\
    R_{\nzero,\none} + R_{\nzero,\ntwo}+ R_{\none,\nzero} + R_{\ntwo,\nzero} < \Delta_1 C(|h_{\nzero,\nr}|^2 P_\nzero + |h_{\none,\nr}|^2 P_\none + |h_{\ntwo,\nr}|^2 P_\ntwo)
  \end{array}
\right.\\
&\text{From \eqref{eq:FMABC:DF:GC:6}}\left\{
  \begin{array}{l}
     R_{\nzero,\none} < \Delta_2 C_{B}(P_\nr,h_{\nr,\none},{\bar \lambda}_{\nr1}^\pb,{\bar\beta}_\nr^\pb)\\
    R_{\nzero,\ntwo} < \Delta_2 C_{B}(P_\nr,h_{\nr,\ntwo},{\bar \lambda}_{\nr2}^\pb,{\bar\beta}_\nr^\pb)\\
    R_{\nzero,\none} + R_{\nzero,\ntwo} < \Delta_2(C_{B}(P_\nr,h_{\nr,\none},{\bar \lambda}_{\nr1}^\pb,{\bar\beta}_\nr^\pb) + C_{B}(P_\nr,h_{\nr,\ntwo},{\bar\lambda}_{\nr 2}^\pb,{\bar\beta}_\nr^\pb)- \\
    ~~~~~~~~~~~~~~~~~~~~~~C_{BE}({\bar\lambda}_{\nr 1}^\pb,{\bar \lambda}_{\nr 2}^\pb,{\bar\beta}_\nr^\pb))
  \end{array}
\right.\\
&\text{From \eqref{eq:FMABC:DF:GC:3}}\left\{
  \begin{array}{l}
   R_{\none,\nzero} < \Delta_2  C_{C}(P_\nr,h_{\nr,\nzero},{\bar \lambda}_{\nr 1}^\pb,{\bar \lambda}_{\nr 2}^\pb,{\bar\beta}_\nr^\pb)\\
   R_{\ntwo,\nzero} < \Delta_2  C_{C}(P_\nr,h_{\nr,\nzero},{\bar \lambda}_{\nr 2}^\pb,{\bar \lambda}_{\nr 1}^\pb,{\bar\beta}_\nr^\pb)\\
   R_{\none,\nzero} + R_{\ntwo,\nzero} < \Delta_2 C(|h_{\nr,\nzero}|^2 P_\nr)
  \end{array}
\right.
\end{align}
To obtain the regions numerically, we optimize $\Delta_1$, $\Delta_2$, ${\bf\Lambda}_{\nr}^\pb$ and $\beta_{\nr}^\pb$ for the given channel mutual informations to maximize the achievable rate regions.

\item PMABC Protocol
\begin{align}
&\text{From \eqref{eq:PMABC:1}}\left\{
  \begin{array}{l}
    R_{\nzero,\none} < \Delta_1 C(|h_{\nzero,\nr}|^2 P_\nzero)\\
    R_{\nzero,\ntwo} < \Delta_2 C(|h_{\nzero,\nr}|^2 P_\nzero)\\
  \end{array}
\right.\\
&\text{From \eqref{eq:PMABC:2}}\left\{
  \begin{array}{l}
    R_{\none,\nzero} < \Delta_1 C(|h_{\none,\nr}|^2 P_\none)\\
    R_{\ntwo,\nzero} < \Delta_2 C(|h_{\ntwo,\nr}|^2 P_\ntwo)\\
  \end{array}
\right.\\
&\text{From \eqref{eq:PMABC:3}}\left\{
  \begin{array}{l}
    R_{\nzero,\none} + R_{\none,\nzero} < \Delta_1 C(|h_{\nzero,\nr}|^2 P_\nzero + |h_{\none,\nr}|^2 P_\none)\\
    R_{\nzero,\ntwo} + R_{\ntwo,\nzero} < \Delta_2 C(|h_{\nzero,\nr}|^2 P_\nzero + |h_{\ntwo,\nr}|^2 P_\ntwo)\\
  \end{array}
\right.\\
&\text{From \eqref{eq:PMABC:4}}\left\{
  \begin{array}{l}
    R_{\none,\nzero} + R_{\ntwo,\nzero} < \Delta_3 C_{B}(P_\nr,h_{\nr,\nzero},{\bar\lambda}_{\nr1}^\pc,{\bar \beta}_\nr^\pc)\\
    R_{\nzero,\none} < \Delta_3 C_{B}(P_\nr,h_{\nr,\nzero},{\bar\lambda}_{\nr2}^\pc,{\bar \beta}_\nr^\pc)\\
    R_{\nzero,\ntwo} < \Delta_3 C_{B}(P_\nr,h_{\nr,\nzero},{\bar\lambda}_{\nr3}^\pc,{\bar \beta}_\nr^\pc)\\
    R_{\none,\nzero} + R_{\ntwo,\nzero} + R_{\nzero,\none} <  \Delta_3 (C_{B}(P_\nr,h_{\nr,\nzero},{\bar\lambda}_{\nr1}^\pc,{\bar \beta}_\nr^\pc) + C_{B}(P_\nr,h_{\nr,\nzero},{\bar\lambda}_{\nr2}^\pc,{\bar \beta}_\nr^\pc)) - \\
    ~~~~~~~~~~~~~~~~~~~~~~~~~~\Delta_3 C_{BE} ({\bar\lambda}_{\nr1}^\pc,{\bar\lambda}_{\nr2}^\pc,{\bar \beta}_\nr^\pc)\\
    R_{\none,\nzero} + R_{\ntwo,\nzero} + R_{\nzero,\ntwo} <  \Delta_3 (C_{B}(P_\nr,h_{\nr,\nzero},{\bar\lambda}_{\nr1}^\pc,{\bar \beta}_\nr^\pc) + C_{B}(P_\nr,h_{\nr,\nzero},{\bar\lambda}_{\nr3}^\pc,{\bar \beta}_\nr^\pc)) - \\
    ~~~~~~~~~~~~~~~~~~~~~~~~~~\Delta_3 C_{BE} ({\bar\lambda}_{\nr1}^\pc,{\bar\lambda}_{\nr3}^\pc,{\bar \beta}_\nr^\pc)\\
    R_{\nzero,\none} + R_{\nzero,\ntwo} <  \Delta_3 (C_{B}(P_\nr,h_{\nr,\nzero},{\bar\lambda}_{\nr2}^\pc,{\bar \beta}_\nr^\pc) + C_{B}(P_\nr,h_{\nr,\nzero},{\bar\lambda}_{\nr3}^\pc,{\bar \beta}_\nr^\pc)) - \\
    ~~~~~~~~~~~~~~~~~~~~~~~~~~\Delta_3 C_{BE} ({\bar\lambda}_{\nr2}^\pc,{\bar\lambda}_{\nr3}^\pc,{\bar \beta}_\nr^\pc)\\
    R_{\none,\nzero} + R_{\ntwo,\nzero} + R_{\nzero,\none} + R_{\nzero,\ntwo} <  \Delta_3 (C_{B}(P_\nr,h_{\nr,\nzero},{\bar\lambda}_{\nr1}^\pc,{\bar \beta}_\nr^\pc) + C_{B}(P_\nr,h_{\nr,\nzero},{\bar\lambda}_{\nr2}^\pc,{\bar \beta}_\nr^\pc) + \\
     ~~~C_{B}(P_\nr,h_{\nr,\nzero},{\bar\lambda}_{\nr3}^\pc,{\bar \beta}_\nr^\pc) - C_{BE2} ({\bar\lambda}_{\nr3}^\pc,{\bar\lambda}_{\nr1}^\pc,{\bar\lambda}_{\nr2}^\pc,{\bar \beta}_\nr^\pc) - C_{BE} ({\bar\lambda}_{\nr1}^\pc,{\bar\lambda}_{\nr2}^\pc,{\bar \beta}_\nr^\pc))\\
  \end{array}
\right.
\end{align}
To obtain the regions numerically, we optimize $\Delta_1$, $\Delta_2$, $\Delta_3$, ${\bf\Lambda}_\nr^\pc$ and ${\bar\beta}_{\nr}^\pc$ for the given channel mutual informations to maximize the achievable rate regions.

\item PMABC-NR Protocol
\begin{align}
&\text{From \eqref{eq:PMABC:GC:1}}\left\{
  \begin{array}{l}
    R_{\none,\nzero} < \Delta_1  C(|h_{\none,\nr}|^2 P_\none)\\
    R_{\ntwo,\nzero} < \Delta_2  C(|h_{\ntwo,\nr}|^2 P_\ntwo)\\
    R_{\nzero,\none} < \Delta_1  C_C(P_\nzero,h_{\nzero,\nr},{\bar \lambda}_{\nzero 1}^\pa,{\bar \lambda}_{\nzero 2}^\pa,\beta_\nzero^\pa) + \Delta_2 C_C(P_\nzero,h_{\nzero,\nr},{\bar \lambda}_{\nzero 1}^\pb,{\bar \lambda}_{\nzero 2}^\pb,\beta_\nzero^\pb) \\
    R_{\nzero,\none} + R_{\none,\nzero} < \Delta_1 C_M(P_\none,P_\nzero,h_{\none,\nr},h_{\nzero,\nr},{\bar \lambda}_{\nzero 2}^\pa,{\bar \beta}_{\nzero}^\pa) + \\
    ~~~~~~~~~~~~~~~~~~ \Delta_1  C_C(P_\nzero,h_{\nzero,\nr},{\bar \lambda}_{\nzero 1}^\pa,{\bar \lambda}_{\nzero 2}^\pa,{\bar \beta}_\nzero^\pa) +  \Delta_2 C_C(P_\nzero,h_{\nzero,\nr},{\bar \lambda}_{\nzero 1}^\pb,{\bar \lambda}_{\nzero 2}^\pb,{\bar \beta}_\nzero^\pb)\\
    R_{\nzero,\none} + R_{\ntwo,\nzero} < \Delta_2 C_M(P_\ntwo,P_\nzero,h_{\ntwo,\nr},h_{\nzero,\nr},{\bar \lambda}_{\nzero 2}^\pb,{\bar \beta}_{\nzero}^\pb) + \\
    ~~~~~~~~~~~~~~~~~~ \Delta_1  C_C(P_\nzero,h_{\nzero,\nr},{\bar \lambda}_{\nzero 1}^\pa,{\bar \lambda}_{\nzero 2}^\pa,{\bar \beta}_\nzero^\pa) +  \Delta_2 C_C(P_\nzero,h_{\nzero,\nr},{\bar \lambda}_{\nzero 1}^\pb,{\bar \lambda}_{\nzero 2}^\pb,{\bar \beta}_\nzero^\pb)\\
    R_{\nzero,\none} + R_{\none,\nzero}+ R_{\ntwo,\nzero} <  \Delta_1 C_M(P_\none,P_\nzero,h_{\none,\nr},h_{\nzero,\nr},{\bar \lambda}_{\nzero 2 }^\pa,{\bf\beta}_{\nzero}^\pa) + \\
    ~~~~~~~~~~~~~~~~~~~~~~~~~~\Delta_2 C_M(P_\ntwo,P_\nzero,h_{\ntwo,\nr},h_{\nzero,\nr},{\bar \lambda}_{\nzero 2}^\pb,{\bar \beta}_{\nzero}^\pb) + \\
    ~~~~~~~~~~~~~~~~~~~~~~~~~~ \Delta_1  C_C(P_\nzero,h_{\nzero,\nr},{\bar \lambda}_{\nzero 1}^\pa,{\bar \lambda}_{\nzero 2}^\pa,{\bar \beta}_\nzero^\pa) + \\
    ~~~~~~~~~~~~~~~~~~~~~~~~~~ \Delta_2 C_C(P_\nzero,h_{\nzero,\nr},{\bar \lambda}_{\nzero 1}^\pb,{\bar \lambda}_{\nzero 2}^\pb,{\bar \beta}_\nzero^\pb)\\
    R_{\nzero,\ntwo} < \Delta_1  C_C(P_\nzero,h_{\nzero,\nr},{\bar \lambda}_{\nzero 2}^\pa,{\bar \lambda}_{\nzero 1}^\pa,{\bar \beta}_\nzero^\pa) + \Delta_2 C_C(P_\nzero,h_{\nzero,\nr},{\bar \lambda}_{\nzero 2}^\pb,{\bar \lambda}_{\nzero 1}^\pb,{\bar \beta}_\nzero^\pb) \\
    R_{\nzero,\ntwo} + R_{\none,\nzero} < \Delta_1 C_M(P_\none,P_\nzero,h_{\none,\nr},h_{\nzero,\nr},{\bar \lambda}_{\nzero 1}^\pa,{\bar\beta}_{\nzero}^\pa) + \\
    ~~~~~~~~~~~~~~~~~~ \Delta_1  C_C(P_\nzero,h_{\nzero,\nr},{\bar \lambda}_{\nzero 2}^\pa,{\bar \lambda}_{\nzero 1}^\pa,{\bar \beta}_\nzero^\pa) +  \Delta_2 C_C(P_\nzero,h_{\nzero,\nr},{\bar \lambda}_{\nzero 2}^\pb,{\bar \lambda}_{\nzero 1}^\pb,{\bar \beta}_\nzero^\pb)\\
  \end{array}
\right.\\
&\text{From \eqref{eq:PMABC:GC:1}}\left\{
  \begin{array}{l}
    R_{\nzero,\ntwo} + R_{\ntwo,\nzero} < \Delta_2 C_M(P_\ntwo,P_\nzero,h_{\ntwo,\nr},h_{\nzero,\nr},{\bar \lambda}_{\nzero 1}^\pb,{\bar \beta}_{\nzero}^\pb) + \\
    ~~~~~~~~~~~~~~~~~~ \Delta_1  C_C(P_\nzero,h_{\nzero,\nr},{\bar \lambda}_{\nzero 2}^\pa,{\bar \lambda}_{\nzero 1}^\pa,{\bar\beta}_\nzero^\pa) +  \Delta_2 C_C(P_\nzero,h_{\nzero,\nr},{\bar \lambda}_{\nzero 2}^\pb,{\bar \lambda}_{\nzero 1}^\pb,{\bar\beta}_\nzero^\pb)\\
    R_{\nzero,\ntwo} + R_{\none,\nzero}+ R_{\ntwo,\nzero} <  \Delta_1 C_M(P_\none,P_\nzero,h_{\none,\nr},h_{\nzero,\nr},{\bar \lambda}_{\nzero 1 }^\pa,{\bar \beta}_{\nzero}^\pa) + \\
    ~~~~~~~~~~~~~~~~~~~~~~~~~~\Delta_2 C_M(P_\ntwo,P_\nzero,h_{\ntwo,\nr},h_{\nzero,\nr},{\bar \lambda}_{\nzero 1}^\pb,{\bar \beta}_{\nzero}^\pb) + \\
    ~~~~~~~~~~~~~~~~~~~~~~~~~~ \Delta_1  C_C(P_\nzero,h_{\nzero,\nr},{\bar \lambda}_{\nzero 2}^\pa,{\bar \lambda}_{\nzero 1}^\pa,{\bar\beta}_\nzero^\pa) + \\
    ~~~~~~~~~~~~~~~~~~~~~~~~~~ \Delta_2 C_C(P_\nzero,h_{\nzero,\nr},{\bar \lambda}_{\nzero 2}^\pb,{\bar \lambda}_{\nzero 1}^\pb,{\bar \beta}_\nzero^\pb)\\
    R_{\nzero,\none} + R_{\nzero,\ntwo} < \Delta_1 C(|h_{\nzero,\nr}|^2 P_\nzero) + \Delta_2 C(|h_{\nzero,\nr}|^2 P_\nzero)\\
    R_{\nzero,\none} + R_{\nzero,\ntwo} + R_{\none,\nzero}< \Delta_1 C(|h_{\nzero,\nr}|^2 P_\nzero + |h_{\none,\nr}|^2 P_\none) + \Delta_2 C(|h_{\nzero,\nr}|^2 P_\nzero)\\
    R_{\nzero,\none} + R_{\nzero,\ntwo} + R_{\ntwo,\nzero}< \Delta_1 C(|h_{\nzero,\nr}|^2 P_\nzero) + \Delta_2 C(|h_{\nzero,\nr}|^2 P_\nzero + |h_{\ntwo,\nr}|^2 P_\ntwo)\\
    R_{\nzero,\none} + R_{\nzero,\ntwo} +  R_{\none,\nzero} + R_{\ntwo,\nzero}< \Delta_1 C(|h_{\nzero,\nr}|^2 P_\nzero + |h_{\none,\nr}|^2 P_\none) + \Delta_2 C(|h_{\nzero,\nr}|^2 P_\nzero + |h_{\ntwo,\nr}|^2 P_\ntwo)\\
  \end{array}
\right.\\
&\text{From \eqref{eq:PMABC:GC:4}}\left\{
  \begin{array}{l}
   R_{\nzero,\none} < \Delta_2 C_{BI}(P_\nzero,P_\ntwo,h_{\nzero,\none},h_{\ntwo,\none},{\bar \lambda}_{\nzero 1}^\pb,{\bar\beta}_\nzero^\pb) + \Delta_3 C_{B}(P_\nr,h_{\nr,\none},{\bar \lambda}_{\nr 1}^\pc,{\bar\beta}_\nr^\pc) \\
    R_{\nzero,\ntwo} < \Delta_1 C_{BI}(P_\nzero,P_\none,h_{\nzero,\ntwo},h_{\none,\ntwo},{\bar \lambda}_{\nzero 2}^\pa,{\bar\beta}_\nzero^\pa) + \Delta_3 C_{B}(P_\nr,h_{\nr,\ntwo},{\bar \lambda}_{\nr 2}^\pc,{\bar\beta}_\nr^\pc)\\
   R_{\nzero,\none} + R_{\nzero,\ntwo} < \Delta_1 C_{BI}(P_\nzero,P_\none,h_{\nzero,\ntwo},h_{\none,\ntwo},{\bar \lambda}_{\nzero 2 }^\pa,{\bar\beta}_\nzero^\pa) +\\
   ~~~~~~~~~~~~~~~~~~\Delta_2 C_{BI}(P_\nzero,P_\ntwo,h_{\nzero,\none},h_{\ntwo,\none},{\bar \lambda}_{\nzero 1}^\pb,{\bar \beta}_\nzero^\pb) +\\
   ~~~~~~~~~~~~~~~~~~\Delta_3(C_{B}(P_\nr,h_{\nr,\none},{\bar \lambda}_{\nr 1}^\pc,{\bar\beta}_\nr^\pc) + C_{B}(P_\nr,h_{\nr,\ntwo},{\bar \lambda}_{\nr 2 }^\pc,{\bar\beta}_\nr^\pc))- \\
   ~~~~~~~~~~~~~~~~~~\Delta_1 C_{BE}({\bar \lambda}_{\nzero 1}^\pa,{\bar \lambda}_{\nzero 2}^\pa,{\bar\beta}_\nzero^\pa) - \Delta_2 C_{BE}({\bar \lambda}_{\nzero 1}^\pb,{\bar \lambda}_{\nzero 2}^\pb,{\bar\beta}_\nzero^\pb) - \\
   ~~~~~~~~~~~~~~~~~~\Delta_3 C_{BE}({\bar \lambda}_{\nr 1}^\pc,{\bar \lambda}_{\nr 2}^\pc,{\bar\beta}_\nr^\pc)
  \end{array}
\right.\\
&\text{From \eqref{eq:PMABC:GC:5}}\left\{
  \begin{array}{l}
    R_{\none,\nzero} <  \Delta_3 C_{C}(P_\nr,h_{\nr,\nzero},{\bar \lambda}_{\nr 1}^\pc,{\bar \lambda}_{\nr 2}^\pc,{\bar\beta}_\nr^\pc)\\
    R_{\ntwo,\nzero} <  \Delta_3 C_{C}(P_\nr,h_{\nr,\nzero},{\bar \lambda}_{\nr 2}^\pc,{\bar \lambda}_{\nr 1}^\pc,{\bar\beta}_\nr^\pc)\\
    R_{\none,\nzero} + R_{\ntwo,\nzero} < \Delta_3 C\left(|h_{\nr,\nzero}|^2 P_\nr\right)\\
  \end{array}
\right.
\end{align}
 To obtain the regions numerically, we optimize $\Delta_1$, $\Delta_2$, $\Delta_3$, ${\bf\Lambda}_{\nzero}^\pa$, ${\bar\beta}_\nzero^\pa$, ${\bf\Lambda}_{\nzero}^\pb$, ${\bar\beta}_\nzero^\pb$, ${\bf\Lambda}_{\nr}^\pc$ and ${\bar\beta}_\nr^\pc$, for the given channel mutual informations to maximize the achievable rate regions.

\item PMABC-NRC Protocol
\begin{align}
&\text{From \eqref{eq:PMABC:DF:GC:1}}\left\{
  \begin{array}{l}
    R_{\nzero,\none} < \Delta_1 C_{CI} (P_\nzero,P_\none,h_{\nzero,\nr},h_{\none,\nr},{\bar \lambda}_{\nzero 1}^\pa,{\bar \lambda}_{\nzero 2 }^\pa,{\bar \lambda}_{\none 1}^\pa,{\bar\beta}_\nzero^\pa,{\bar\beta}_\none^\pa) + \\
    ~~~~~~~~~\Delta_2 C_{CI} (P_\nzero,P_\none,h_{\nzero,\nr},h_{\ntwo,\nr},{\bar \lambda}_{\nzero 1}^\pb,{\bar \lambda}_{\nzero 2}^\pb,1,{\bf\beta}_\nzero^\pb,1) \\
    R_{\nzero,\ntwo} < \Delta_1 C_{CI} (P_\nzero,P_\none,h_{\nzero,\nr},h_{\none,\nr},{\bar \lambda}_{\nzero 2}^\pa,{\bar \lambda}_{\nzero 1 }^\pa,{\bar \lambda}_{\none 1}^\pa,{\bar\beta}_\nzero^\pa,{\bar\beta}_\none^\pa) + \\
    ~~~~~~~~~\Delta_2 C_{CI} (P_\nzero,P_\none,h_{\nzero,\nr},h_{\ntwo,\nr},{\bar \lambda}_{\nzero 2}^\pb,{\bar \lambda}_{\nzero 1 }^\pb,1,{\bar \beta}_\nzero^\pb,1) \\
    R_{\nzero,\none} + R_{\nzero,\ntwo} < \Delta_1 C_{M}(P_\nzero,P_\none,h_{\nzero,\nr},h_{\none,\nr},{\bar \lambda}_{\none 1}^\pa,{\bar\beta}_{\none}^\pa) +\Delta_2 C(|h_{\nzero,\nr}|^2 P_\nzero)
  \end{array}
\right.\\
&\text{From \eqref{eq:PMABC:DF:GC:2}}\left\{
  \begin{array}{l}
    R_{\none,\nzero} < \Delta_1 C_{BI}(P_\none,P_\nzero,h_{\none,\nr},h_{\nzero,\nr},{\bar \lambda}_{\none 1}^\pa,{\bar\beta}_\none^\pa)\\
    R_{\ntwo,\nzero} < \Delta_2 C_{BI}(P_\ntwo,P_\nzero,h_{\ntwo,\nr},h_{\nzero,\nr},1,1)
  \end{array}
\right.\\
&\text{From \eqref{eq:PMABC:DF:GC:4}}\left\{
  \begin{array}{l}
  R_{\nzero,\none} < \Delta_2 C_{BI}(P_\nzero,P_\ntwo,h_{\nzero,\none},h_{\ntwo,\none},{\bar \lambda}_{\nzero 1}^\pb,{\bar\beta}_\nzero^\pb) + \Delta_3 C_{B}(P_\nr,h_{\nr,\none},{\bar \lambda}_{\nr 1}^\pc,{\bar\beta}_\nr^\pc)\\
    R_{\nzero,\ntwo} < \Delta_1 C_{BM} (P_\nzero,P_\none,h_{\nzero,\ntwo},h_{\none,\ntwo},{\bar \lambda}_{\nzero 2}^\pa,{\bar \lambda}_{\none 2 }^\pa,{\bar\beta}_\nzero^\pa,{\bar\beta}_\none^\pa) + \\
    ~~~~~~~~~\Delta_3 C_{BC}(P_\nr,h_{\nr,\none},h_{\nr,\ntwo},{\bar \lambda}_{\nr 2}^\pc,{\bar\beta}_{\nr}^\pc,P_{\hat \none},\sigma_\none)\\
   R_{\nzero,\none} + R_{\nzero,\ntwo} < \Delta_1 C_{BM} (P_\nzero,P_\none,h_{\nzero,\ntwo},h_{\none,\ntwo},{\bar \lambda}_{\nzero 2}^\pa,{\bar \lambda}_{\none 2}^\pa,{\bar\beta}_\nzero^\pa,{\bar\beta}_\none^\pa) +\\
   ~~~~~~~~~~~~~~~~~~\Delta_2 C_{BI}(P_\nzero,P_\ntwo,h_{\nzero,\none},h_{\ntwo,\none},{\bar \lambda}_{\nzero 1}^\pb,{\bar\beta}_\nzero^\pb) +\\
   ~~~~~~~~~~~~~~~~~~\Delta_3(C_{B}(P_\nr,h_{\nr,\none},{\bar \lambda}_{\nr 1}^\pc,{\bar\beta}_\nr^\pc) + C_{BC}(P_\nr,h_{\nr,\none},h_{\nr,\ntwo},{\bar \lambda}_{\nr 2}^\pc,{\bar\beta}_{\nr}^\pc,P_{\hat \none},\sigma_\none))- \\
   ~~~~~~~~~~~~~~~~~~\Delta_1 C_{BE}({\bar \lambda}_{\nzero 1}^\pa,{\bar \lambda}_{\nzero 2}^\pa,{\bar\beta}_\nzero^\pa) - \Delta_2 C_{BE}({\bar \lambda}_{\nzero 1}^\pb,{\bar \lambda}_{\nzero 2}^\pb,{\bar\beta}_\nzero^\pb) - \\
   ~~~~~~~~~~~~~~~~~~\Delta_3 C_{BE}({\bar \lambda}_{\nr 1}^\pc,{\bar \lambda}_{\nr 2}^\pc,{\bar\beta}_\nr^\pc)
  \end{array}
\right.\\
&\text{From \eqref{eq:PMABC:DF:GC:5}}\left\{
  \begin{array}{l}
    R_{\none,\nzero} <  \Delta_3 C_{C}(P_\nr,h_{\nr,\nzero},{\bar \lambda}_{\nr 1}^\pc,{\bar \lambda}_{\nr 2}^\pc,{\bar\beta}_\nr^\pc)\\
    R_{\ntwo,\nzero} <  \Delta_3 C_{C}(P_\nr,h_{\nr,\nzero},{\bar \lambda}_{\nr 2}^\pc,{\bar \lambda}_{\nr1}^\pc,{\bar\beta}_\nr^\pc)\\
    R_{\none,\nzero} + R_{\ntwo,\nzero} < \Delta_3 C\left(|h_{\nr,\nzero}|^2 P_\nr\right)\\
  \end{array}
\right.\\
&\text{From \eqref{eq:PMABC:DF:GC:6}}\left\{
  \begin{array}{l}
    R_{\none,\nzero} <  \Delta_1 C_{BI}(P_\none,P_\nzero,h_{\none,\nr},h_{\nzero,\nr},{\bar \lambda}_{\none 1}^\pa,{\bar\beta}_\none^\pa) + \\
    ~~~~~~~~~ \Delta_1 C_{BI}(P_\none,P_\nzero,h_{\none,\ntwo},h_{\nzero,\ntwo},{\bar \lambda}_{\none 2}^\pa,{\bar\beta}_\none^\pa) - \\
    ~~~~~~~~~ \Delta_1 C_{BE}({\bar \lambda}_{\none 1}^\pa,{\bar \lambda}_{\none 2}^\pa,{\bar\beta}_\none^\pa) - \Delta_3 C\left(\frac{(\sigma_\none)^2(1-P^{**})}{P_{\hat{\none}} (|h_{\nr,\none}|^2 +1) -(\sigma_\none)^2}\right)
  \end{array}
\right.\\
&\text{From \eqref{eq:PMABC:DF:GC:7}}\left\{
  \begin{array}{l}
   \Delta_3 C\left(\frac{(\sigma_\none)^2(1-P^{**})}{P_{\hat{\none}} (|h_{\nr,\none}|^2 +1) -(\sigma_\none)^2}\right) <  \Delta_1 C_{BI}(P_\none,P_\nzero,h_{\none,\ntwo},h_{\nzero,\ntwo},{\bar \lambda}_{\none 2}^\pa,{\bar\beta}_\none^\pa)\\
  \end{array}
\right.
\end{align}
where
\begin{align}
P^{**} & = \frac{|h_{\nr,\ntwo}|^2 P_\nr}{|h_{\nr,\ntwo}|^2 P_\nr
+1}\cdot \frac{|h_{\nr,\none}|^2 P_\nr}{|h_{\nr,\none}|^2 P_\nr +1}
\end{align}
 To obtain the regions numerically, we optimize $\Delta_1$, $\Delta_2$, $\Delta_3$, ${\bf\Lambda}_{\nzero}^\pa$, ${\bar\beta}_\nzero^\pa$, ${\bf \Lambda}_{\none}^\pa$, ${\bar\beta}_\none^\pa$, ${\bf\Lambda}_{\nzero}^\pb$, ${\bar\beta}_\nzero^\pb$, ${\bf\Lambda}_{\nr}^\pc$, ${\bar\beta}_\nr^\pc$, $P_{\hat \none}$ and $\sigma_{\none}$ for the given channel mutual informations to maximize the achievable rate regions.

\item FTDBC Protocol
\begin{align}
&\text{From \eqref{eq:FTDBC:1}}\left\{
  \begin{array}{l}
    R_{\nzero,\none} + R_{\nzero,\ntwo} < \Delta_1 C(|h_{\nzero,\nr}|^2 P_\nzero)\\
  \end{array}
\right.\\
&\text{From \eqref{eq:FTDBC:2}}\left\{
  \begin{array}{l}
    R_{\none,\nzero} < \Delta_2 C(|h_{\none,\nr}|^2 P_\none)\\
    R_{\ntwo,\nzero} < \Delta_3 C(|h_{\ntwo,\nr}|^2 P_\ntwo)\\
  \end{array}
\right.\\
&\text{From \eqref{eq:FTDBC:3}}\left\{
  \begin{array}{l}
    R_{\none,\nzero} + R_{\ntwo,\nzero} < \Delta_4 C_{B}(P_\nr,h_{\nr,\nzero},{\bar\lambda}_{\nr1}^\pd,{\bar \beta}_\nr^\pd)\\
    R_{\nzero,\none} < \Delta_4 C_{B}(P_\nr,h_{\nr,\nzero},{\bar\lambda}_{\nr2}^\pd,{\bar \beta}_\nr^\pd)\\
    R_{\nzero,\ntwo} < \Delta_4 C_{B}(P_\nr,h_{\nr,\nzero},{\bar\lambda}_{\nr3}^\pd,{\bar \beta}_\nr^\pd)\\
    R_{\none,\nzero} + R_{\ntwo,\nzero} + R_{\nzero,\none} <  \Delta_4 (C_{B}(P_\nr,h_{\nr,\nzero},{\bar\lambda}_{\nr1}^\pd,{\bar \beta}_\nr^\pd) + C_{B}(P_\nr,h_{\nr,\nzero},{\bar\lambda}_{\nr2}^\pd,{\bar \beta}_\nr^\pd)) - \\
    ~~~~~~~~~~~~~~~~~~~~~~~~~~\Delta_4 C_{BE} ({\bar\lambda}_{\nr1}^\pd,{\bar\lambda}_{\nr2}^\pd,{\bar \beta}_\nr^\pd)\\
    R_{\none,\nzero} + R_{\ntwo,\nzero} + R_{\nzero,\ntwo} <  \Delta_4 (C_{B}(P_\nr,h_{\nr,\nzero},{\bar\lambda}_{\nr1}^\pd,{\bar \beta}_\nr^\pd) + C_{B}(P_\nr,h_{\nr,\nzero},{\bar\lambda}_{\nr3}^\pd,{\bar \beta}_\nr^\pd)) - \\
    ~~~~~~~~~~~~~~~~~~~~~~~~~~\Delta_4 C_{BE} ({\bar\lambda}_{\nr1}^\pd,{\bar\lambda}_{\nr3}^\pd,{\bar \beta}_\nr^\pd)\\
    R_{\nzero,\none} + R_{\nzero,\ntwo} <  \Delta_4 (C_{B}(P_\nr,h_{\nr,\nzero},{\bar\lambda}_{\nr2}^\pd,{\bar \beta}_\nr^\pd) + C_{B}(P_\nr,h_{\nr,\nzero},{\bar\lambda}_{\nr3}^\pd,{\bar \beta}_\nr^\pd)) - \\
    ~~~~~~~~~~~~~~~~~~~~~~~~~~\Delta_4 C_{BE} ({\bar\lambda}_{\nr2}^\pd,{\bar\lambda}_{\nr3}^\pd,{\bar \beta}_\nr^\pd)\\
    R_{\none,\nzero} + R_{\ntwo,\nzero} + R_{\nzero,\none} + R_{\nzero,\ntwo} <  \Delta_4 (C_{B}(P_\nr,h_{\nr,\nzero},{\bar\lambda}_{\nr1}^\pd,{\bar \beta}_\nr^\pd) + C_{B}(P_\nr,h_{\nr,\nzero},{\bar\lambda}_{\nr2}^\pd,{\bar \beta}_\nr^\pd) + \\
     ~~~C_{B}(P_\nr,h_{\nr,\nzero},{\bar\lambda}_{\nr3}^\pd,{\bar \beta}_\nr^\pd) - C_{BE2} ({\bar\lambda}_{\nr3}^\pd,{\bar\lambda}_{\nr1}^\pd,{\bar\lambda}_{\nr2}^\pd,{\bar \beta}_\nr^\pd) - C_{BE} ({\bar\lambda}_{\nr1}^\pd,{\bar\lambda}_{\nr2}^\pd,{\bar \beta}_\nr^\pd))\\
  \end{array}
\right.
\end{align}
To obtain the regions numerically, we optimize $\Delta_1$, $\Delta_2$, $\Delta_3$, $\Delta_4$, ${\bf\Lambda}_\nr^\pd$ and ${\bar\beta}_{\nr}^\pd$ for the given channel mutual informations to maximize the achievable rate regions.

\item FTDBC-NR Protocol
\begin{align}
&\text{From \eqref{eq:FTDBC:GC:1}}\left\{
  \begin{array}{l}
    R_{\nzero,\none} < \Delta_1  C_{C}(P_\nzero,h_{\nzero,\nr},{\bar \lambda}_{\nzero 1}^\pa,{\bar \lambda}_{\nzero 2}^\pa,{\bar\beta}_\nzero^\pa)\\
    R_{\nzero,\ntwo} < \Delta_1  C_{C}(P_\nzero,h_{\nzero,\nr},{\bar \lambda}_{\nzero 2}^\pa,{\bar \lambda}_{\nzero 1}^\pa,{\bar\beta}_\nzero^\pa)\\
    R_{\nzero,\none} + R_{\nzero,\ntwo} < \Delta_1 C(|h_{\nzero,\nr}|^2 P_\nzero)\\
  \end{array}
\right.\\
&\text{From \eqref{eq:FTDBC:GC:2}}\left\{
  \begin{array}{l}
   R_{\none,\nzero} <  \Delta_2 C(|h_{\none,\nr}|^2 P_\none)\\
   R_{\ntwo,\nzero} <  \Delta_3 C(|h_{\ntwo,\nr}|^2 P_\ntwo)\\
  \end{array}
\right.\\
&\text{From \eqref{eq:FTDBC:GC:4}}\left\{
  \begin{array}{l}
   R_{\nzero,\none} < \Delta_1 C_{B}(P_\nzero,h_{\nzero,\none},{\bar \lambda}_{\nzero 1}^\pa,{\bar\beta}_\nzero^\pa) + \Delta_4 C_{B}(P_\nr,h_{\nr,\none},{\bar \lambda}_{\nr 1}^\pd,{\bar\beta}_\nr^\pd) \\
    R_{\nzero,\ntwo} < \Delta_1 C_{B}(P_\nzero,h_{\nzero,\ntwo},{\bar \lambda}_{\nzero 2}^\pa,{\bar\beta}_\nzero^\pa) + \Delta_4 C_{B}(P_\nr,h_{\nr,\ntwo},{\bar \lambda}_{\nr 2}^\pd,{\bar\beta}_\nr^\pd)\\
   R_{\nzero,\none} + R_{\nzero,\ntwo} < \Delta_1(C_{B}(P_\nzero,h_{\nzero,\none},{\bar \lambda}_{\nzero 1}^\pa,{\bar\beta}_\nzero^\pa) + C_{B}(P_\nzero,h_{\nzero,\ntwo},{\bar \lambda}_{\nzero 2}^\pa,{\bar\beta}_\nzero^\pa)) +\\
   ~~~~~~~~~~~~~~~~~\Delta_4(C_{B}(P_\nr,h_{\nr,\none},{\bar \lambda}_{\nr 1}^\pd,{\bar\beta}_\nr^\pd) + C_{B}(P_\nr,h_{\nr,\ntwo},{\bar \lambda}_{\nr 2 }^\pd,{\bar\beta}_\nr^\pd))- \\
   ~~~~~~~~~~~~~~~~~\Delta_1 C_{BE}({\bar \lambda}_{\nzero 1}^\pa,{\bar \lambda}_{\nzero 2}^\pa,{\bar\beta}_\nzero^\pa) - \Delta_4 C_{BE}({\bar \lambda}_{\nr 1}^\pd,{\bar \lambda}_{\nr 2}^\pd,{\bar\beta}_\nr^\pd)
  \end{array}
\right.\\
&\text{From \eqref{eq:FTDBC:GC:5}}\left\{
  \begin{array}{l}
    R_{\none,\nzero} <  \Delta_2 C(|h_{\none,\nzero}|^2 P_\none) + \Delta_4 C_{C}(P_\nr,h_{\nr,\nzero},{\bar \lambda}_{\nr 1}^\pd,{\bar \lambda}_{\nr 2 }^\pd,{\bar\beta}_\nr^\pd)\\
    R_{\ntwo,\nzero} <  \Delta_3 C(|h_{\ntwo,\nzero}|^2 P_\ntwo) + \Delta_4 C_{C}(P_\nr,h_{\nr,\nzero},{\bar \lambda}_{\nr 2}^\pd,{\bar \lambda}_{\nr 1}^\pd,{\bar\beta}_\nr^\pd)\\
    R_{\none,\nzero} + R_{\ntwo,\nzero} < \Delta_2 C(|h_{\none,\nzero}|^2 P_\none)  +  \Delta_3 C(|h_{\ntwo,\nzero}|^2 P_\ntwo) + \Delta_4 C\left(|h_{\nr,\nzero}|^2 P_\nr\right)\\
  \end{array}
\right.
\end{align}
 To obtain the regions numerically, we optimize $\Delta_1$, $\Delta_2$, $\Delta_3$, $\Delta_4$, ${\bf \Lambda}_{\nzero}^\pa$, ${\bar\beta}_\nzero^\pa$, ${\bf\Lambda}_{\nr}^\pd$ and ${\bar\beta}_\nr^\pd$  for the given channel mutual informations to maximize the achievable rate regions.

\item FTDBC-NRC Protocol
\begin{align}
&\text{From \eqref{eq:FTDBC:GC:CO:1}}\left\{
  \begin{array}{l}
    R_{\nzero,\none} < \Delta_1  C_{C}(P_\nzero,h_{\nzero,\nr},{\bar \lambda}_{\nzero 1}^\pa,{\bar \lambda}_{\nzero 2}^\pa,{\bar\beta}_\nzero^\pa)\\
    R_{\nzero,\ntwo} < \Delta_1  C_{C}(P_\nzero,h_{\nzero,\nr},{\bar \lambda}_{\nzero 2}^\pa,{\bar \lambda}_{\nzero 1}^\pa,{\bar\beta}_\nzero^\pa)\\
    R_{\nzero,\none} + R_{\nzero,\ntwo} < \Delta_1 C(|h_{\nzero,\nr}|^2 P_\nzero)\\
  \end{array}
\right.\\
&\text{From \eqref{eq:FTDBC:GC:CO:2}}\left\{
  \begin{array}{l}
   R_{\none,\nzero} <  \Delta_2 C_B(P_\none,h_{\none,\nr},{\bar \lambda}_{\none 1}^\pb,{\bar\beta}_\none^\pb)\\
   R_{\ntwo,\nzero} <  \Delta_3 C(|h_{\ntwo,\nr}|^2 P_\ntwo)\\
  \end{array}
\right.\\
&\text{From \eqref{eq:FTDBC:GC:CO:4}}\left\{
  \begin{array}{l}
   R_{\nzero,\none} < \Delta_1 C_{B}(P_\nzero,h_{\nzero,\none},{\bar \lambda}_{\nzero 1}^\pa,{\bar\beta}_\nzero^\pa) + \Delta_4 C_{B}(P_\nr,h_{\nr,\none},{\bar \lambda}_{\nr 1}^\pd,{\bar\beta}_\nr^\pd) \\
    R_{\nzero,\ntwo} < \Delta_1 C_{B}(P_\nzero,h_{\nzero,\ntwo},{\bar \lambda}_{\nzero 2}^\pa,{\bar\beta}_\nzero^\pa) + \Delta_4 C_{BC}(P_\nr,h_{\nr,\none},h_{\nr,\ntwo},{\bar \lambda}_{\nr 2}^\pd,{\bar\beta}_{\nr}^\pd,P_{\hat \none},\sigma_\none)\\
   R_{\nzero,\none} + R_{\nzero,\ntwo} < \Delta_1(C_{B}(P_\nzero,h_{\nzero,\none},{\bar \lambda}_{\nzero 1}^\pa,{\bar\beta}_\nzero^\pa) + C_{B}(P_\nzero,h_{\nzero,\ntwo},{\bar \lambda}_{\nzero 2}^\pa,{\bar\beta}_\nzero^\pa)) +\\
   ~~~~~~~~~~~~~~~~~\Delta_4(C_{B}(P_\nr,h_{\nr,\none},{\bar \lambda}_{\nr 1}^\pd,{\bar\beta}_\nr^\pd) + C_{BC}(P_\nr,h_{\nr,\none},h_{\nr,\ntwo},{\bar \lambda}_{\nr 2}^\pd,{\bar\beta}_{\nr}^\pd,P_{\hat \none},\sigma_\none))- \\
   ~~~~~~~~~~~~~~~~~\Delta_1 C_{BE}({\bar \lambda}_{\nzero 1}^\pa,{\bar \lambda}_{\nzero 2}^\pa,{\bar\beta}_\nzero^\pa) - \Delta_4 C_{BE}({\bar \lambda}_{\nr 1}^\pd,{\bar \lambda}_{\nr 2}^\pd,{\bar\beta}_\nr^\pd)
  \end{array}
\right.\\
&\text{From \eqref{eq:FTDBC:GC:CO:5}}\left\{
  \begin{array}{l}
    R_{\none,\nzero} <  \Delta_2 C_B(P_\none,h_{\none,\nzero},{\bar \lambda}_{\none 1}^\pb,{\bar\beta}_\none^\pb) + \Delta_4 C_{C}(P_\nr,h_{\nr,\nzero},{\bar \lambda}_{\nr 1}^\pd,{\bar \lambda}_{\nr 2}^\pd,{\bar\beta}_\nr^\pd)\\
    R_{\ntwo,\nzero} <  \Delta_3 C(|h_{\ntwo,\nzero}|^2 P_\ntwo) + \Delta_4 C_{C}(P_\nr,h_{\nr,\nzero},{\bar \lambda}_{\nr 2}^\pd,{\bar \lambda}_{\nr 1 }^\pd,{\bar\beta}_\nr^\pd)\\
    R_{\none,\nzero} + R_{\ntwo,\nzero} < \Delta_2 C_B(P_\none,h_{\none,\nzero},{\bar \lambda}_{\none 1}^\pb,{\bar\beta}_\none^\pb)  +  \Delta_3 C(|h_{\ntwo,\nzero}|^2 P_\ntwo) + \Delta_4 C\left(|h_{\nr,\nzero}|^2 P_\nr\right)\\
  \end{array}
\right.\\
&\text{From \eqref{eq:FTDBC:GC:CO:6}}\left\{
  \begin{array}{l}
    R_{\none,\nzero} <  \Delta_2 C_B(P_\none,h_{\none,\nr},{\bar \lambda}_{\none 1}^\pb,{\bar\beta}_\none^\pb) + \Delta_2 C_B(P_\none,h_{\none,\ntwo},{\bar \lambda}_{\none 2}^\pb,{\bar\beta}_\none^\pb) - \\
    ~~~~~~~~~~\Delta_2 C_{BE}({\bar \lambda}_{\none 1}^\pb,{\bar \lambda}_{\none 2}^\pb,{\bar\beta}_\none^\pb)
    - \Delta_4 C\left(\frac{(\sigma_\none)^2(1-P^{**})}{P_{\hat{\none}} (|h_{\nr,\none}|^2 +1) -(\sigma_\none)^2}\right)\\
  \end{array}
\right.\\
&\text{From \eqref{eq:FTDBC:GC:CO:7}}\left\{
  \begin{array}{l}
   \Delta_4 C\left(\frac{(\sigma_\none)^2(1-P^{**})}{P_{\hat{\none}} (|h_{\nr,\none}|^2 +1) -(\sigma_\none)^2}\right) <  \Delta_2 C_B(P_\none,h_{\none,\ntwo},{\bar \lambda}_{\none 2}^\pb,{\bar\beta}_\none^\pb)\\
  \end{array}
\right.
\end{align}
where
\begin{align}
P^{**} & = \frac{|h_{\nr,\ntwo}|^2 P_\nr}{|h_{\nr,\ntwo}|^2 P_\nr
+1}\cdot \frac{|h_{\nr,\none}|^2 P_\nr}{|h_{\nr,\none}|^2 P_\nr +1}
\end{align}
To obtain the regions numerically, we optimize $\Delta_1$,
$\Delta_2$, $\Delta_3$, $\Delta_4$, ${\bf\Lambda}_{\nzero}^\pa$,
${\bar\beta}_\nzero^\pa$,
${\bf\Lambda}_{\none}^\pb$,
${\bar\beta}_\none^\pb$, ${\bf\Lambda}_{\nr}^\pd$,
${\bar\beta}_\nr^\pd$, $P_{\hat \none}$ and $\sigma_\none$ for the
given channel mutual informations to maximize the achievable rate
regions.

\item PTDBC Protocol
\begin{align}
&\text{From \eqref{eq:PTDBC:1}}\left\{
  \begin{array}{l}
    R_{\nzero,\none} + R_{\nzero,\ntwo} < \Delta_1 C(|h_{\nzero,\nr}|^2 P_\nzero)\\
  \end{array}
\right.\\
&\text{From \eqref{eq:PTDBC:2}}\left\{
  \begin{array}{l}
    R_{\none,\nzero} < \Delta_2 C(|h_{\none,\nr}|^2 P_\none)\\
    R_{\ntwo,\nzero} < \Delta_2 C(|h_{\ntwo,\nr}|^2 P_\ntwo)\\
    R_{\none,\nzero} + R_{\ntwo,\nzero} < \Delta_2 C(|h_{\none,\nr}|^2 P_\none+|h_{\ntwo,\nr}|^2 P_\ntwo)\\
  \end{array}
\right.\\
&\text{From \eqref{eq:PTDBC:3}}\left\{
  \begin{array}{l}
    R_{\none,\nzero} + R_{\ntwo,\nzero} < \Delta_3 C_{B}(P_\nr,h_{\nr,\nzero},{\bar\lambda}_{\nr1}^\pc,{\bar \beta}_\nr^\pc)\\
    R_{\nzero,\none} < \Delta_3 C_{B}(P_\nr,h_{\nr,\nzero},{\bar\lambda}_{\nr2}^\pc,{\bar \beta}_\nr^\pc)\\
    R_{\nzero,\ntwo} < \Delta_3 C_{B}(P_\nr,h_{\nr,\nzero},{\bar\lambda}_{\nr3}^\pc,{\bar \beta}_\nr^\pc)\\
    R_{\none,\nzero} + R_{\ntwo,\nzero} + R_{\nzero,\none} <  \Delta_3 (C_{B}(P_\nr,h_{\nr,\nzero},{\bar\lambda}_{\nr1}^\pc,{\bar \beta}_\nr^\pc) + C_{B}(P_\nr,h_{\nr,\nzero},{\bar\lambda}_{\nr2}^\pc,{\bar \beta}_\nr^\pc)) - \\
    ~~~~~~~~~~~~~~~~~~~~~~~~~~\Delta_3 C_{BE} ({\bar\lambda}_{\nr1}^\pc,{\bar\lambda}_{\nr2}^\pc,{\bar \beta}_\nr^\pc)\\
    R_{\none,\nzero} + R_{\ntwo,\nzero} + R_{\nzero,\ntwo} <  \Delta_3 (C_{B}(P_\nr,h_{\nr,\nzero},{\bar\lambda}_{\nr1}^\pc,{\bar \beta}_\nr^\pc) + C_{B}(P_\nr,h_{\nr,\nzero},{\bar\lambda}_{\nr3}^\pc,{\bar \beta}_\nr^\pc)) - \\
    ~~~~~~~~~~~~~~~~~~~~~~~~~~\Delta_3 C_{BE} ({\bar\lambda}_{\nr1}^\pc,{\bar\lambda}_{\nr3}^\pc,{\bar \beta}_\nr^\pc)\\
    R_{\nzero,\none} + R_{\nzero,\ntwo} <  \Delta_3 (C_{B}(P_\nr,h_{\nr,\nzero},{\bar\lambda}_{\nr2}^\pc,{\bar \beta}_\nr^\pc) + C_{B}(P_\nr,h_{\nr,\nzero},{\bar\lambda}_{\nr3}^\pc,{\bar \beta}_\nr^\pc)) - \\
    ~~~~~~~~~~~~~~~~~~~~~~~~~~\Delta_3 C_{BE} ({\bar\lambda}_{\nr2}^\pc,{\bar\lambda}_{\nr3}^\pc,{\bar \beta}_\nr^\pc)\\
    R_{\none,\nzero} + R_{\ntwo,\nzero} + R_{\nzero,\none} + R_{\nzero,\ntwo} <  \Delta_3 (C_{B}(P_\nr,h_{\nr,\nzero},{\bar\lambda}_{\nr1}^\pc,{\bar \beta}_\nr^\pc) + C_{B}(P_\nr,h_{\nr,\nzero},{\bar\lambda}_{\nr2}^\pc,{\bar \beta}_\nr^\pc) + \\
     ~~~C_{B}(P_\nr,h_{\nr,\nzero},{\bar\lambda}_{\nr3}^\pc,{\bar \beta}_\nr^\pc) - C_{BE2} ({\bar\lambda}_{\nr3}^\pc,{\bar\lambda}_{\nr1}^\pc,{\bar\lambda}_{\nr2}^\pc,{\bar \beta}_\nr^\pc) - C_{BE} ({\bar\lambda}_{\nr1}^\pc,{\bar\lambda}_{\nr2}^\pc,{\bar \beta}_\nr^\pc))\\
  \end{array}
\right.
\end{align}
To obtain the regions numerically, we optimize $\Delta_1$, $\Delta_2$, $\Delta_3$, ${\bf\Lambda}_\nr^\pc$ and ${\bar\beta}_{\nr}^\pc$ for the given channel mutual informations to maximize the achievable rate regions.

\item PTDBC-NR Protocol
\begin{align}
&\text{From \eqref{eq:PTDBC:GC:1}}\left\{
  \begin{array}{l}
    R_{\nzero,\none} < \Delta_1  C_{C}(P_\nzero,h_{\nzero,\nr},{\bar \lambda}_{\nzero 1}^\pa,{\bar \lambda}_{\nzero 2}^\pa,{\bar\beta}_\nzero^\pa)\\
    R_{\nzero,\ntwo} < \Delta_1  C_{C}(P_\nzero,h_{\nzero,\nr},{\bar \lambda}_{\nzero 2}^\pa,{\bar \lambda}_{\nzero 1}^\pa,{\bar\beta}_\nzero^\pa)\\
    R_{\nzero,\none} + R_{\nzero,\ntwo} < \Delta_1 C(|h_{\nzero,\nr}|^2 P_\nzero)\\
  \end{array}
\right.\\
&\text{From \eqref{eq:PTDBC:GC:2}}\left\{
  \begin{array}{l}
   R_{\none,\nzero} <  \Delta_2 C(|h_{\none,\nr}|^2 P_\none)\\
   R_{\ntwo,\nzero} < \Delta_2 C(|h_{\ntwo,\nr}|^2 P_\ntwo)\\
   R_{\none,\nzero} + R_{\ntwo,\nzero} < \Delta_2 C(|h_{\none,\nr}|^2 P_\none+|h_{\ntwo,\nr}|^2 P_\ntwo)\\
  \end{array}
\right.\\
&\text{From \eqref{eq:PTDBC:GC:4}}\left\{
  \begin{array}{l}
   R_{\nzero,\none} < \Delta_1 C_{B}(P_\nzero,h_{\nzero,\none},{\bar \lambda}_{\nzero 1}^\pa,{\bar\beta}_\nzero^\pa) + \Delta_3 C_{B}(P_\nr,h_{\nr,\none},{\bar \lambda}_{\nr 1}^\pc,{\bar\beta}_\nr^\pc) \\
    R_{\nzero,\ntwo} < \Delta_1 C_{B}(P_\nzero,h_{\nzero,\ntwo},{\bar \lambda}_{\nzero 2}^\pa,{\bar\beta}_\nzero^\pa) + \Delta_3 C_{B}(P_\nr,h_{\nr,\ntwo},{\bar \lambda}_{\nr 2}^\pc,{\bar\beta}_\nr^\pc)\\
   R_{\nzero,\none} + R_{\nzero,\ntwo} < \Delta_1(C_{B}(P_\nzero,h_{\nzero,\none},{\bar \lambda}_{\nzero 1}^\pa,{\bar\beta}_\nzero^\pa) + C_{B}(P_\nzero,h_{\nzero,\ntwo},{\bar \lambda}_{\nzero 2}^\pa,{\bar\beta}_\nzero^\pa)) +\\
   ~~~~~~~~~~~~~~~~~\Delta_3(C_{B}(P_\nr,h_{\nr,\none},{\bar \lambda}_{\nr 1}^\pc,{\bar\beta}_\nr^\pc) + C_{B}(P_\nr,h_{\nr,\ntwo},{\bar \lambda}_{\nr 2 }^\pc,{\bar\beta}_\nr^\pc))- \\
   ~~~~~~~~~~~~~~~~~\Delta_1 C_{BE}({\bar \lambda}_{\nzero 1}^\pa,{\bar \lambda}_{\nzero 2}^\pa,{\bar\beta}_\nzero^\pa) - \Delta_3 C_{BE}({\bar \lambda}_{\nr 1}^\pc,{\bar \lambda}_{\nr 2}^\pc,{\bar\beta}_\nr^\pc)
  \end{array}
\right.\\
&\text{From \eqref{eq:PTDBC:GC:5}}\left\{
  \begin{array}{l}
    R_{\none,\nzero} <  \Delta_2 C(|h_{\none,\nzero}|^2 P_\none) + \Delta_3 C_{C}(P_\nr,h_{\nr,\nzero},{\bar \lambda}_{\nr 1}^\pc,{\bar \lambda}_{\nr 2 }^\pc,{\bar\beta}_\nr^\pc)\\
    R_{\ntwo,\nzero} <  \Delta_2 C(|h_{\ntwo,\nzero}|^2 P_\ntwo) + \Delta_3 C_{C}(P_\nr,h_{\nr,\nzero},{\bar \lambda}_{\nr 2}^\pc,{\bar \lambda}_{\nr 1}^\pc,{\bar\beta}_\nr^\pc)\\
    R_{\none,\nzero} + R_{\ntwo,\nzero} < \Delta_2 C(|h_{\none,\nzero}|^2 P_\none +|h_{\ntwo,\nzero}|^2 P_\ntwo) + \Delta_3 C\left(|h_{\nr,\nzero}|^2 P_\nr\right)\\
  \end{array}
\right.
\end{align}
 To obtain the regions numerically, we optimize $\Delta_1$, $\Delta_2$, $\Delta_3$, ${\bf \Lambda}_{\nzero}^\pa$, ${\bar\beta}_\nzero^\pa$, ${\bf\Lambda}_{\nr}^\pc$ and ${\bar\beta}_\nr^\pc$  for the given channel mutual informations to maximize the achievable rate regions.

\end{itemize}

\subsection{Outer bounds in the Gaussian channel with $m=2$}

\begin{itemize}
\item FMABC Protocol
\begin{align}
&\text{From \eqref{eq:FMABC:out:1}}\left\{
  \begin{array}{l}
    R_{\nzero,\none} + R_{\nzero,\ntwo} < \Delta_1 C(|h_{\nzero,\nr}|^2 P_\nzero)\\
  \end{array}
\right.\\
&\text{From \eqref{eq:FMABC:out:2}}\left\{
  \begin{array}{l}
    R_{\none,\nzero} + R_{\ntwo,\nzero} < \Delta_2 C(|h_{\nr,\nzero}|^2 P_\nr)
  \end{array}
\right.\\
&\text{From \eqref{eq:FMABC:out:3}}\left\{
  \begin{array}{l}
    R_{\none,\nzero} < \Delta_1 C(|h_{\none,\nr}|^2 P_\none)\\
    R_{\ntwo,\nzero} < \Delta_1 C(|h_{\ntwo,\nr}|^2 P_\ntwo)\\
    R_{\none,\nzero} + R_{\ntwo,\nzero} < \Delta_1 C(|h_{\none,\nr}|^2 P_\none + |h_{\ntwo,\nr}|^2 P_\ntwo)\\
  \end{array}
\right.\\
&\text{From \eqref{eq:FMABC:out:4}}\left\{
  \begin{array}{l}
   R_{\nzero,\none} < \Delta_2  C(|h_{\nr,\none}|^2 P_\nr)\\
   R_{\nzero,\ntwo} < \Delta_2  C(|h_{\nr,\ntwo}|^2 P_\nr)\\
   R_{\nzero,\none} + R_{\nzero,\ntwo} < \Delta_2  C(|h_{\nr,\none}|^2 P_\nr + |h_{\nr,\ntwo}|^2 P_\nr)\\
  \end{array}
\right.
\end{align}
To obtain the regions numerically, we optimize $\Delta_1$ and
$\Delta_2$ for the given channel mutual informations to maximize the
achievable rate regions.

\item PMABC Protocol
\begin{align}
&\text{From \eqref{eq:PMABC:out:1}}\left\{
  \begin{array}{l}
    R_{\nzero,\none} + R_{\nzero,\ntwo} < (\Delta_1 + \Delta_2) C(|h_{\nzero,\nr}|^2 P_\nzero)\\
  \end{array}
\right.\\
&\text{From \eqref{eq:PMABC:out:2}}\left\{
  \begin{array}{l}
    R_{\none,\nzero} + R_{\ntwo,\nzero} < \Delta_3 C(|h_{\nr,\nzero}|^2 P_\nr)
  \end{array}
\right.\\
&\text{From \eqref{eq:PMABC:out:3}}\left\{
  \begin{array}{l}
    R_{\none,\nzero} < \Delta_1 C(|h_{\none,\nr}|^2 P_\none + |h_{\none,\ntwo}|^2 P_\none)\\
    R_{\ntwo,\nzero} < \Delta_2 C(|h_{\ntwo,\nr}|^2 P_\ntwo + |h_{\ntwo,\none}|^2 P_\ntwo)\\
    R_{\none,\nzero} + R_{\ntwo,\nzero} < \Delta_1 C(|h_{\none,\nr}|^2 P_\none) + \Delta_2 C(|h_{\ntwo,\nr}|^2 P_\ntwo)\\
  \end{array}
\right.\\
&\text{From \eqref{eq:PMABC:out:4}}\left\{
  \begin{array}{l}
   R_{\nzero,\none} < \Delta_2 C(|h_{\nzero,\none}|^2 P_\nzero + |h_{\ntwo,\none}|^2 P_\ntwo) + \Delta_3  C(|h_{\nr,\none}|^2 P_\nr)\\
   R_{\nzero,\ntwo} < \Delta_1 C(|h_{\nzero,\ntwo}|^2 P_\nzero + |h_{\none,\ntwo}|^2 P_\none) + \Delta_3  C(|h_{\nr,\ntwo}|^2 P_\nr)\\
   R_{\nzero,\none} + R_{\nzero,\ntwo} < \Delta_1 C(|h_{\nzero,\ntwo}|^2 P_\nzero) + \Delta_2 C(|h_{\nzero,\none}|^2 P_\nzero) + \Delta_3 C(|h_{\nr,\none}|^2 P_\nr + |h_{\nr,\ntwo}|^2 P_\nr)\\
  \end{array}
\right.
\end{align}
To obtain the regions numerically, we optimize $\Delta_1$,
$\Delta_2$ and $\Delta_3$ for the given channel mutual informations
to maximize the achievable rate regions.

\item FTDBC Protocol
\begin{align}
&\text{From \eqref{eq:FTDBC:out:1}}\left\{
  \begin{array}{l}
    R_{\nzero,\none} < \Delta_1 C(|h_{\nzero,\nr}|^2 P_\nzero + |h_{\nzero,\none}|^2 P_\nzero) + \Delta_3 C(|h_{\ntwo,\nr}|^2 P_\ntwo + |h_{\ntwo,\none}|^2 P_\ntwo)\\
    R_{\nzero,\none} < \Delta_1 C(|h_{\nzero,\none}|^2 P_\nzero) + \Delta_3 C(|h_{\ntwo,\none}|^2 P_\ntwo) + \Delta_4 C(|h_{\nr,\none}|^2 P_\nr)\\
    R_{\nzero,\ntwo} < \Delta_1 C(|h_{\nzero,\nr}|^2 P_\nzero + |h_{\nzero,\ntwo}|^2 P_\nzero) + \Delta_2 C(|h_{\none,\nr}|^2 P_\none + |h_{\none,\ntwo}|^2 P_\none)\\
    R_{\nzero,\ntwo} < \Delta_1 C(|h_{\nzero,\ntwo}|^2 P_\nzero) + \Delta_2 C(|h_{\none,\ntwo}|^2 P_\none) + \Delta_4 C(|h_{\nr,\ntwo}|^2 P_\nr)\\
    R_{\nzero,\none} + R_{\nzero,\ntwo} < \Delta_1 C(|h_{\nzero,\nr}|^2 P_\nzero + |h_{\nzero,\none}|^2 P_\nzero + |h_{\nzero,\ntwo}|^2 P_\nzero)\\
    R_{\nzero,\none} + R_{\nzero,\ntwo} < \Delta_1 C(|h_{\nzero,\none}|^2 P_\nzero + |h_{\nzero,\ntwo}|^2 P_\nzero) + \Delta_4 C(|h_{\nr,\none}|^2 P_\nr + |h_{\nr,\ntwo}|^2 P_\nr)\\
  \end{array}
\right.\\
&\text{From \eqref{eq:FTDBC:out:2}}\left\{
  \begin{array}{l}
    R_{\none,\nzero} < \Delta_2 C(|h_{\none,\nzero}|^2 P_\none + |h_{\none,\ntwo}|^2 P_\none) + \Delta_4 C(|h_{\nr,\nzero}|^2 P_\nr + |h_{\nr,\ntwo}|^2 P_\nr)\\
    R_{\none,\nzero} < \Delta_2 C(|h_{\none,\nzero}|^2 P_\none + |h_{\none,\ntwo}|^2 P_\none + |h_{\none,\nr}|^2 P_\none)\\
    R_{\ntwo,\nzero} < \Delta_3 C(|h_{\ntwo,\nzero}|^2 P_\ntwo + |h_{\ntwo,\none}|^2 P_\ntwo) + \Delta_4 C(|h_{\nr,\nzero}|^2 P_\nr + |h_{\nr,\none}|^2 P_\nr)\\
    R_{\ntwo,\nzero} < \Delta_3 C(|h_{\ntwo,\nzero}|^2 P_\ntwo + |h_{\ntwo,\none}|^2 P_\ntwo + |h_{\ntwo,\nr}|^2 P_\ntwo)\\
    R_{\none,\nzero} + R_{\ntwo,\nzero} < \Delta_2 C(|h_{\none,\nzero}|^2 P_\none) + \Delta_3 C(|h_{\ntwo,\nzero}|^2 P_\ntwo) + \Delta_4 C(|h_{\nr,\nzero}|^2 P_\nr)\\
    R_{\none,\nzero} + R_{\ntwo,\nzero} < \Delta_2 C(|h_{\none,\nzero}|^2 P_\none + |h_{\none,\nr}|^2 P_\none) + \Delta_3 C(|h_{\ntwo,\nzero}|^2 P_\ntwo + |h_{\ntwo,\nr}|^2 P_\ntwo)\\
  \end{array}
\right.
\end{align}
To obtain the regions numerically, we optimize $\Delta_1$,
$\Delta_2$, $\Delta_3$ and $\Delta_4$ for the given channel mutual
informations to maximize the achievable rate regions.

\item PTDBC Protocol
\begin{align}
&\text{From \eqref{eq:PTDBC:out:1}}\left\{
  \begin{array}{l}
    R_{\nzero,\none} < \Delta_1 C(|h_{\nzero,\nr}|^2 P_\nzero + |h_{\nzero,\none}|^2 P_\nzero) + \Delta_2 C(|h_{\ntwo,\nr}|^2 P_\ntwo)\\
    R_{\nzero,\none} < \Delta_1 C(|h_{\nzero,\none}|^2 P_\nzero) + \Delta_3 C(|h_{\nr,\none}|^2 P_\nr)\\
    R_{\nzero,\ntwo} < \Delta_1 C(|h_{\nzero,\nr}|^2 P_\nzero + |h_{\nzero,\ntwo}|^2 P_\nzero) + \Delta_2 C(|h_{\none,\nr}|^2 P_\none)\\
    R_{\nzero,\ntwo} < \Delta_1 C(|h_{\nzero,\ntwo}|^2 P_\nzero) + \Delta_3 C(|h_{\nr,\ntwo}|^2 P_\nr)\\
    R_{\nzero,\none} + R_{\nzero,\ntwo} < \Delta_1 C(|h_{\nzero,\nr}|^2 P_\nzero + |h_{\nzero,\none}|^2 P_\nzero + |h_{\nzero,\ntwo}|^2 P_\nzero)\\
    R_{\nzero,\none} + R_{\nzero,\ntwo} < \Delta_1 C(|h_{\nzero,\none}|^2 P_\nzero + |h_{\nzero,\ntwo}|^2 P_\nzero) + \Delta_3 C(|h_{\nr,\none}|^2 P_\nr + |h_{\nr,\ntwo}|^2 P_\nr)\\
  \end{array}
\right.\\
&\text{From \eqref{eq:PTDBC:out:2}}\left\{
  \begin{array}{l}
    R_{\none,\nzero} < \Delta_2 C(|h_{\none,\nzero}|^2 P_\none) + \Delta_3 C(|h_{\nr,\nzero}|^2 P_\nr + |h_{\nr,\ntwo}|^2 P_\nr)\\
    R_{\none,\nzero} < \Delta_2 C(|h_{\none,\nzero}|^2 P_\none + |h_{\none,\nr}|^2 P_\none)\\
    R_{\ntwo,\nzero} < \Delta_2 C(|h_{\ntwo,\nzero}|^2 P_\ntwo) + \Delta_3 C(|h_{\nr,\nzero}|^2 P_\nr + |h_{\nr,\none}|^2 P_\nr)\\
    R_{\ntwo,\nzero} < \Delta_2 C(|h_{\ntwo,\nzero}|^2 P_\ntwo  + |h_{\ntwo,\nr}|^2 P_\ntwo)\\
    R_{\none,\nzero} + R_{\ntwo,\nzero} < \Delta_2 C(|h_{\none,\nzero}|^2 P_\none + |h_{\ntwo,\nzero}|^2 P_\ntwo) + \Delta_3 C(|h_{\nr,\nzero}|^2 P_\nr)\\
    R_{\none,\nzero} + R_{\ntwo,\nzero} < \Delta_2 C(|h_{\none,\nzero}|^2 P_\none + |h_{\none,\nr}|^2 P_\none + |h_{\ntwo,\nzero}|^2 P_\ntwo + |h_{\ntwo,\nr}|^2 P_\ntwo)\\
  \end{array}
\right.
\end{align}
To obtain the regions numerically, we optimize $\Delta_1$,
$\Delta_2$, and $\Delta_3$ for the given channel mutual
informations to maximize the achievable rate regions.

\end{itemize}

%%%%%%%%%%%%%%%%%%%%%%%%%%%%%%%%%%%%%%%%%%%%%%%%%%%%%%%%%%%%%

\section{Numerical analysis}

%%%%%%%%%%%%%%%%%%%%%%%%%%%%%%%%%%%%%%%%%%%%%%%%%%%%%%%%%%%%%
\label{sec:regions}
In this section, we numerically evaluate the rate regions obtained in
the previous section for the case of two terminal nodes $m=2$.
We first compare the achievable rate regions and outer bounds of the different protocols, using different combinations of encoding schemes, for both reciprocal channel $H_1$ and asymmetric channel $H_2$:

\begin{align}
{\bf H_1} = \left[
            \begin{array}{cccc}
              0 & 0.3 & 0.05 & 1 \\
              0.3 & 0 & 1.5 & 1 \\
              0.05 & 1.5 & 0 & 0.2 \\
              1 & 1 & 0.2 & 0 \\
            \end{array}
          \right]~~
          {\bf H_2} = \left[
            \begin{array}{cccc}
              0 & 0.9 & 0.4 & 1 \\
              0 & 0 & 0.02 & 1 \\
              0 & 0.02 & 0 & 0.5 \\
              1 & 1 & 0.5 & 0 \\
            \end{array}
          \right].
\end{align}
We then proceed to examine the maximal sum-rate $R_{\nzero,\none}+R_{\nzero,\ntwo}+R_{\none,\nzero}+R_{\ntwo,\nzero}$.
Finally, we dedicate the last section to the evaluation of the cooperation coding gain.

\subsection{Achievable rate region comparisons}
We compare the achievable rate regions of the different protocols, using different combinations of encoding schemes, with the simplest protocol. We set $P_\nzero= P_\none = P_\ntwo = P_\nr =0$ dB and $H=H_1$.
In \Fig \ref{fig:comp_protocol}, there are three achievable rate regions; 1) the simplest protocol (Simple), 2) convex hull of the FMABC, PMABC, FTDBC, and PTDBC protocols (MB) and 3) convex hull of the FMABC-N, PMABC-NR, FTDBC-NR, and PTDBC-NR protocols (MB-NR).
The 4-dimensional rate regions  $(R_{\nzero,\none},R_{\nzero,\ntwo},R_{\none,\nzero},R_{\ntwo,\nzero})$ are projected onto $(R_{\nzero,\none}+R_{\nzero,\ntwo}, R_{\none,\nzero}+R_{\ntwo,\nzero})$ 2-dimensional space.
For more realistic comparison, we add lower limits of individual data rates, i.e., $R_{\nzero,\none}\geq 0.01, R_{\nzero,\ntwo}\geq 0.01,R_{\none,\nzero}\geq 0.01, R_{\ntwo,\nzero}\geq 0.01$ to guarantee minimum information flow in each data link.
Without this limitation, the maximum sum-rates will be reached when both the transmission rates $R_{\nzero,\ntwo} $ and $R_{\ntwo,\nzero}$ equal zero at least in the Simplest case because the link between the relay and the node 2 is very poor. We want to emphasize that the value of the minimum data rate (set as 0.01 here)  do not affect the following simulation outcomes.
The Simple region is outer bounded by the MB region. This implies
that the proposed protocols using only conventional MAC and extended Marton's broadcasting coding largely enhance the performance. Furthermore, we can significantly  improve the achievable rate region by Network coding and Random binning schemes (in MB-NR).

%--------
\begin{figure}[t]
 \begin{center}
  \epsfig{figure=./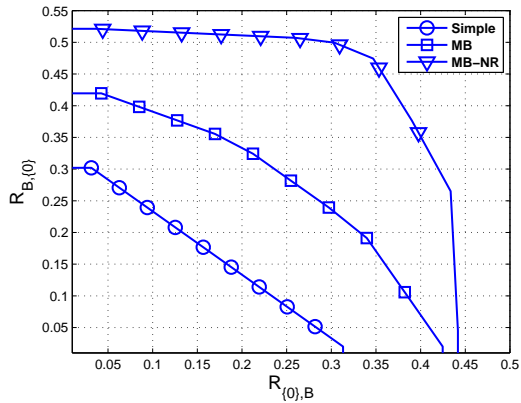, width=7.5cm}
  \caption{Comparison of protocol and coding gains with $P_\nzero= P_\none = P_\ntwo = P_\nr =0$ dB, $H=H_1$ and rate constraints ($R_{\nzero,\none}\geq 0.01, R_{\nzero,\ntwo}\geq 0.01,R_{\none,\nzero}\geq 0.01, R_{\ntwo,\nzero}\geq 0.01$).}
  \label{fig:comp_protocol}
 \end{center}
\end{figure}

%\subsection{Rate region comparisons with $m=2$}
We next evaluate the achievable rate regions and outer bounds for different SNR regimes and channel conditions. %are plotted in \Figs (\ref{fig:region_low}, \ref{fig:region_low_nr}) and \Figs (\ref{fig:region_mid2}, \ref{fig:region_high}), respectively.
We plot the inner bounds of the FMABC-N, PMABC-NR, FTDBC-NR, and PTDBC-NR protocols. The main outcome is that different protocols are optimal under different channel conditions.
This is because the amount of side information and multiple access interference is different. In the low SNR regime (\Fig \ref{fig:region_low}), the FMABC-N protocol outperforms the other protocols since the amount of both side information and multiple access interference is relatively small.  However, in the high SNR regime (\Fig \ref{fig:region_high}), the FTDBC-NR protocol becomes the best since it exploits side information more effectively.

In \Fig \ref{fig:region_low_nr} and \ref{fig:region_mid2}, the PMABC-NR protocol outperforms the other three protocols. The first case (\Fig \ref{fig:region_low_nr}) is when channel is asymmetric as $\nzero\rightarrow \none$ and $\nzero\rightarrow \ntwo$ are very good but the opposite direct links are almost disconnected. In this case, using side information in nodes $\none$ and $\ntwo$ and multiple access for node $\nzero$ would be the best choice since the quality of direct links are different. %As expected, in \Fig \ref{fig:region_low_nr} the PMABC-NR protocol achieves the best performance.
The second is when we have different power allocation.
Indeed, if we allow larger input power for the base station (node $\nzero$) and relay (node $\nr$),
the direct links from the base station are good enough to convey information. The terminal nodes can then exploit the side information efficiently.  Therefore, the PMABC-NR protocol has the best performance in this channel condition.

There is no significant different between two TDBC protocols, the FTDBC-NR and PTDBC-NR protocols. Since only difference is using multiple access or sequential transmitting for terminal nodes $(\in {\cal B})$, two type of side information, i.e., one is from node $\nzero$ to node $i$, the other is the opposite direction are both available. The small gap is from the efficiency of multiple access and generated interference.   

\begin{figure*}
  \hfill
  \begin{minipage}[t]{.45\textwidth}
    \begin{center}
       \epsfig{figure=./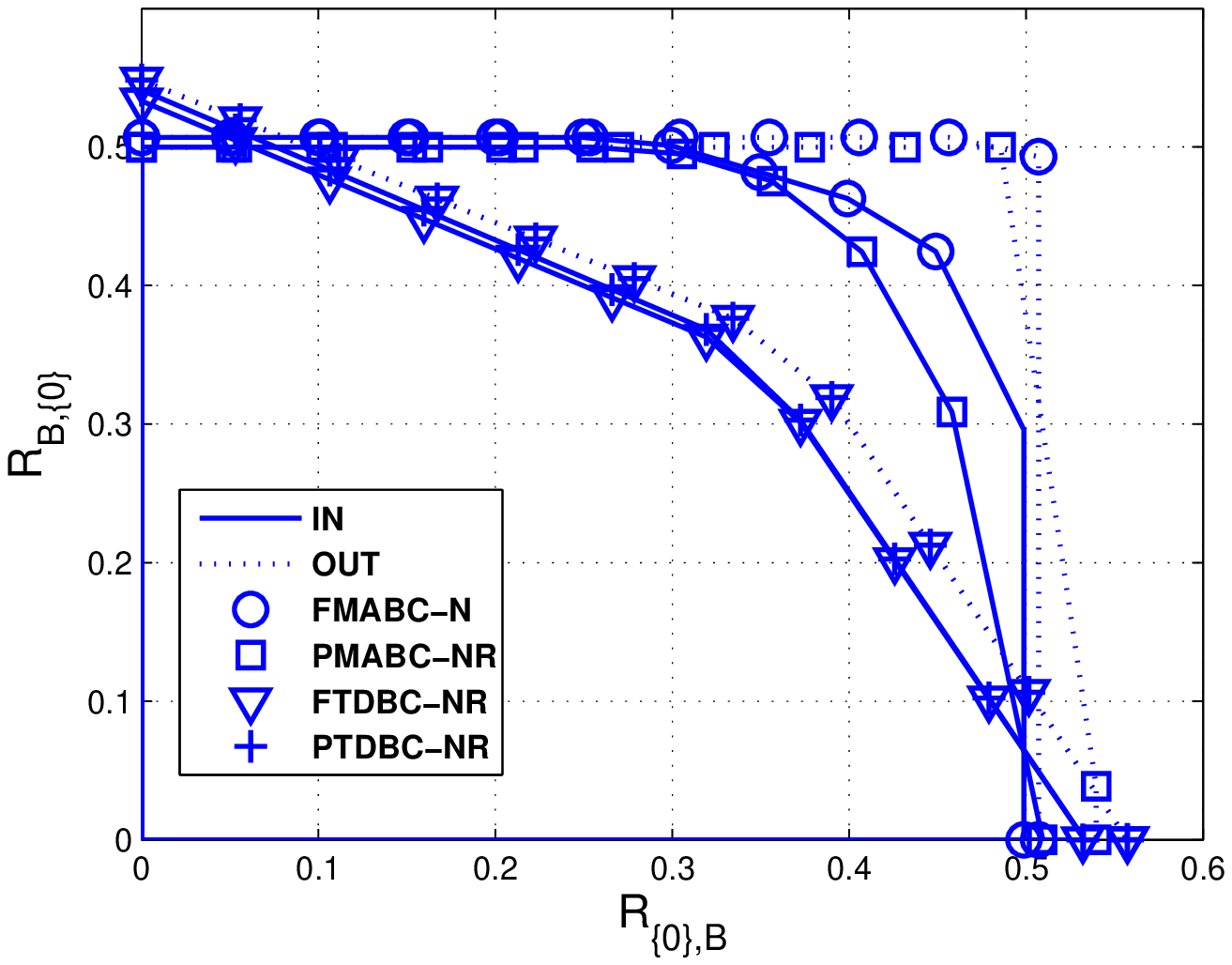, width=7.5cm}
      \caption{\baselineskip=10pt Comparison with $P_\nzero= P_\none = P_\ntwo = P_\nr =0$ dB, $H=H_1$.}
            \label{fig:region_low}
    \end{center}
  \end{minipage}
  \hfill
  \begin{minipage}[t]{.45\textwidth}
    \begin{center}
        \epsfig{figure=./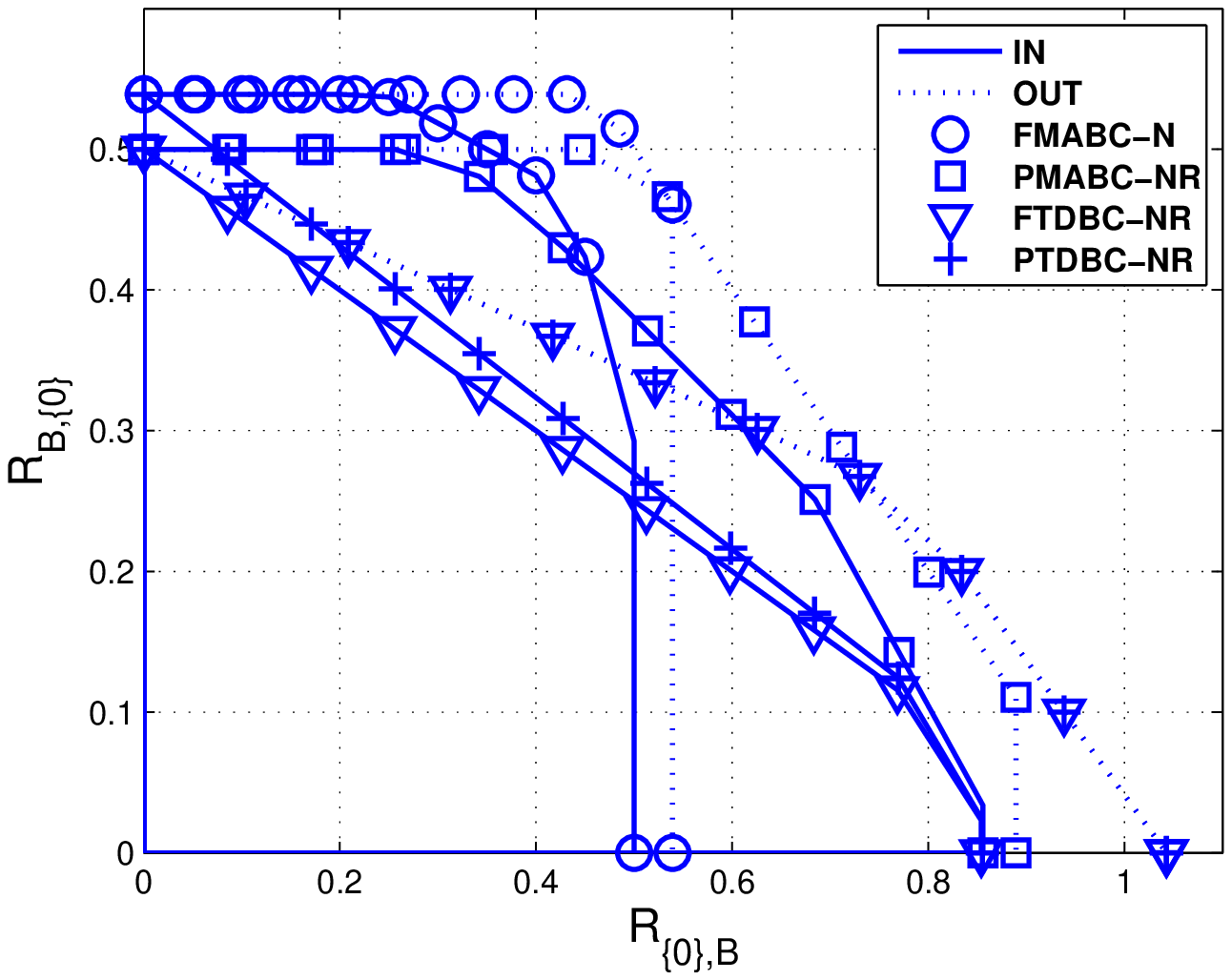, width=7.5cm}
      \caption{\baselineskip=10pt Comparison with $P_\nzero= P_\none = P_\ntwo = P_\nr =0$ dB, $H=H_2$.}
            \label{fig:region_low_nr}
    \end{center}
  \end{minipage}
  \hfill
\end{figure*}

\begin{figure*}
  \hfill
  \begin{minipage}[t]{.45\textwidth}
    \begin{center}
       \epsfig{figure=./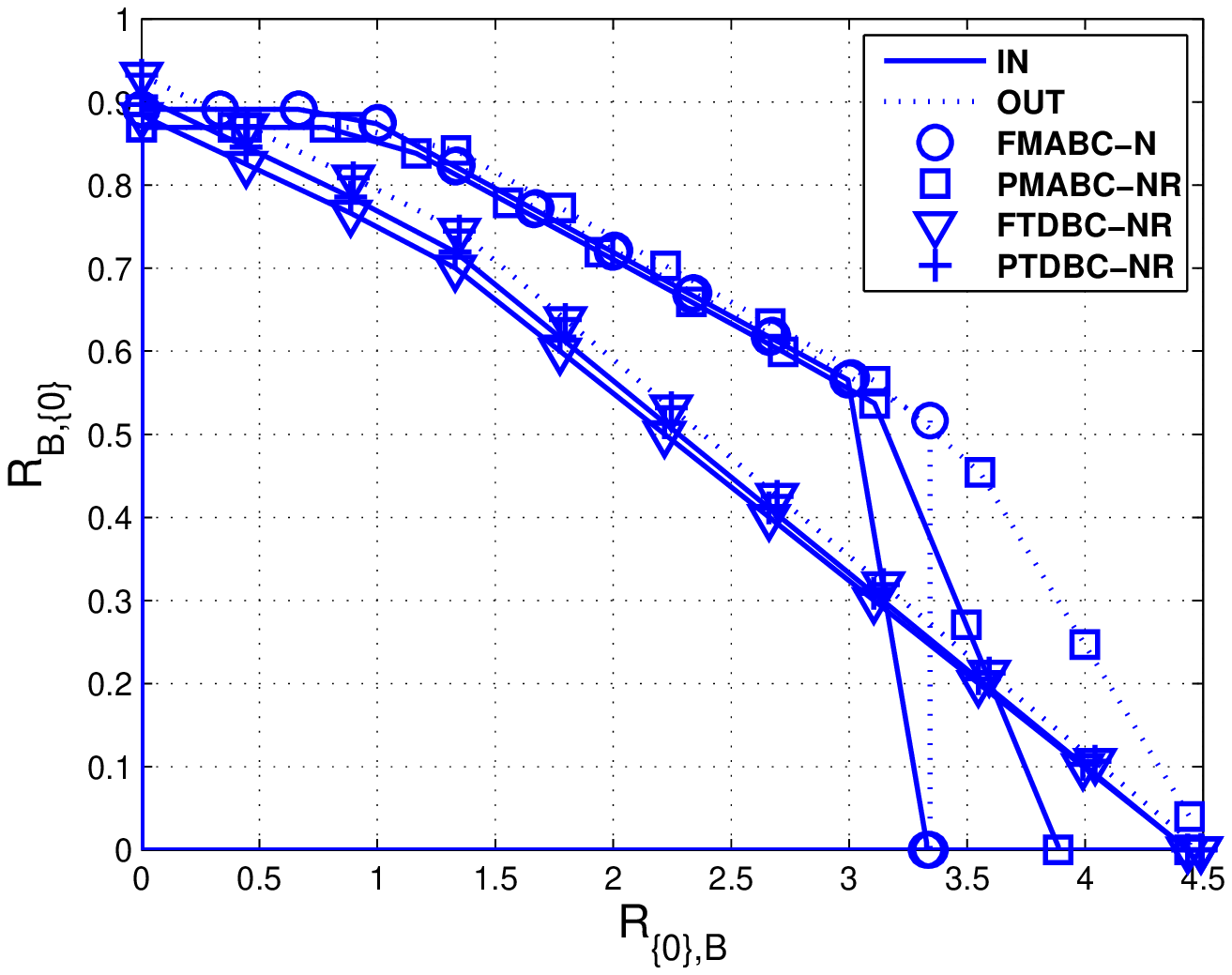, width=7.5cm}
      \caption{\baselineskip=10pt Comparison with $P_\nzero= P_\nr = 20, P_\none = P_\ntwo =0$ dB, $H=H_1$.}
            \label{fig:region_mid2}
    \end{center}
  \end{minipage}
  \hfill
  \begin{minipage}[t]{.45\textwidth}
    \begin{center}
        \epsfig{figure=./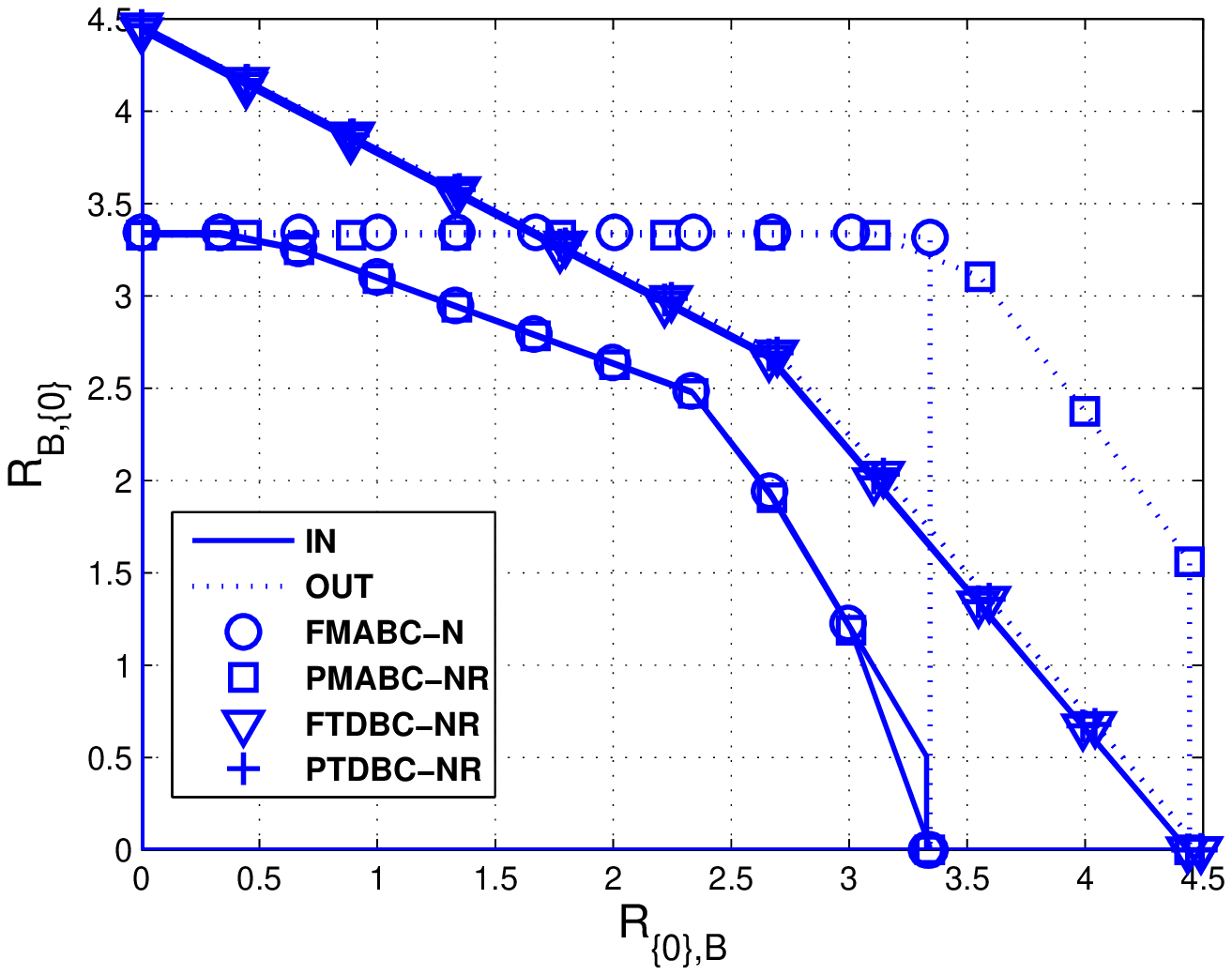, width=7.5cm}
      \caption{\baselineskip=10pt Comparison with $P_\nzero= P_\none = P_\ntwo = P_\nr=20$ dB, $H=H_1$.}
      \label{fig:region_high}
    \end{center}
  \end{minipage}
  \hfill
\end{figure*}

\subsection{Sum rate comparison}
In \Fig \ref{fig:sum_rate}, we plot the maximum sum rates $(R_{\nzero,\none}+R_{\nzero,\ntwo}+R_{\none,\nzero}+R_{\ntwo,\nzero})$ of FMABC-N, PMABC-NR, FTDBC-NR and PTDBC-NR protocols and the corresponding outer bounds.
As expected, in the low SNR regime ($\leq 15$ dB) two MABC protocols are better than the two FTDBC protocols, while the FTDBC protocols are better in the high SNR regime ($\geq 15$ dB). We also note that the slope for MABC protocols changes at around $\leq 15$ dB.
Indeed, the multiple access interference increases with SNR  and limit the maximum sum rate of the MABC protocols. However, the slopes are the same in the outer bounds since the sum rate constraints don't affect to the outer bounds.

\begin{figure}[t]
 \begin{center}
  \epsfig{figure=./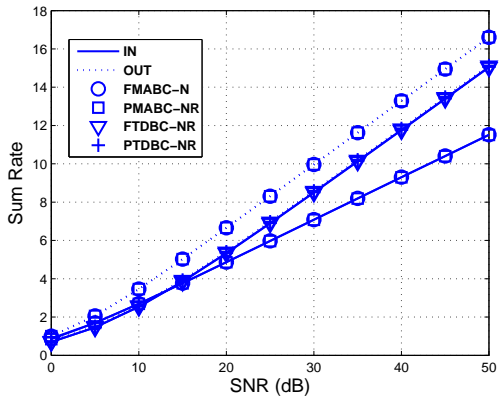, width=7.5cm}
      \caption{\baselineskip=10pt Sum rate comparison with $H=H_1$.}
            \label{fig:sum_rate}
 \end{center}
\end{figure}

\subsection{Cooperation coding gain}
To show the cooperation coding gain, we plot the achievable rate region of the different protocols with and without cooperation.  In \Fig \ref{fig:coop_ind}, we fixed the data rates
 $(R_{\nzero,\none},R_{\ntwo,\nzero})$ to the rate pair ((0.19,0.01) and plot rate regions in the $(R_{\none,\nzero},R_{\nzero,\ntwo})$ domain.  We do this to highlight the cooperation gain, which comes from re-allocating node $\none$'s transmission resources (i.e. relative power)  to the two information flows; $\none\rightarrow \nr ~(R_{\none,\nzero})$ and $ \none\rightarrow \ntwo~(R_{\nzero,\ntwo})$.  As expected -NRC protocols achieve much better performance than -NR protocols. Notably, the cooperation protocols improve $R_{\nzero,\ntwo}$ without any degradation of $R_{\none,\nzero}$ in the FTDBC protocol. In contrast, the maximum $R_{\none,\nzero}$ of the PMABC-NRC protocol is less than that of its PMABC-NR only protocol.
We explain this by the fact that in our achievable rate region, we used a simplified and sub-optimal (successive decoding like)  receiver in the PMABC-NRC protocol instead of using a fully general joint-decoder (as is done in the simpler PMABC-NR protocol), which limits the $R_{\none,\nzero}$. If we were to enhance the PMABC-NRC scheme by using the general joint decoder, the maximum $R_{\none,\nzero}$ would be reached and the overall performance would improve -  a technically challenging task left for future work.
% However, we will remain here this improvement for the further work.
We furthermore expect the gains of cooperation to increase if many more terminal nodes are able to exploit node $\none$'s cooperative broadcasting; however these situations  with current regions are too complex to be evaluated numerically.

\begin{figure}[t]
 \begin{center}
   \epsfig{figure=./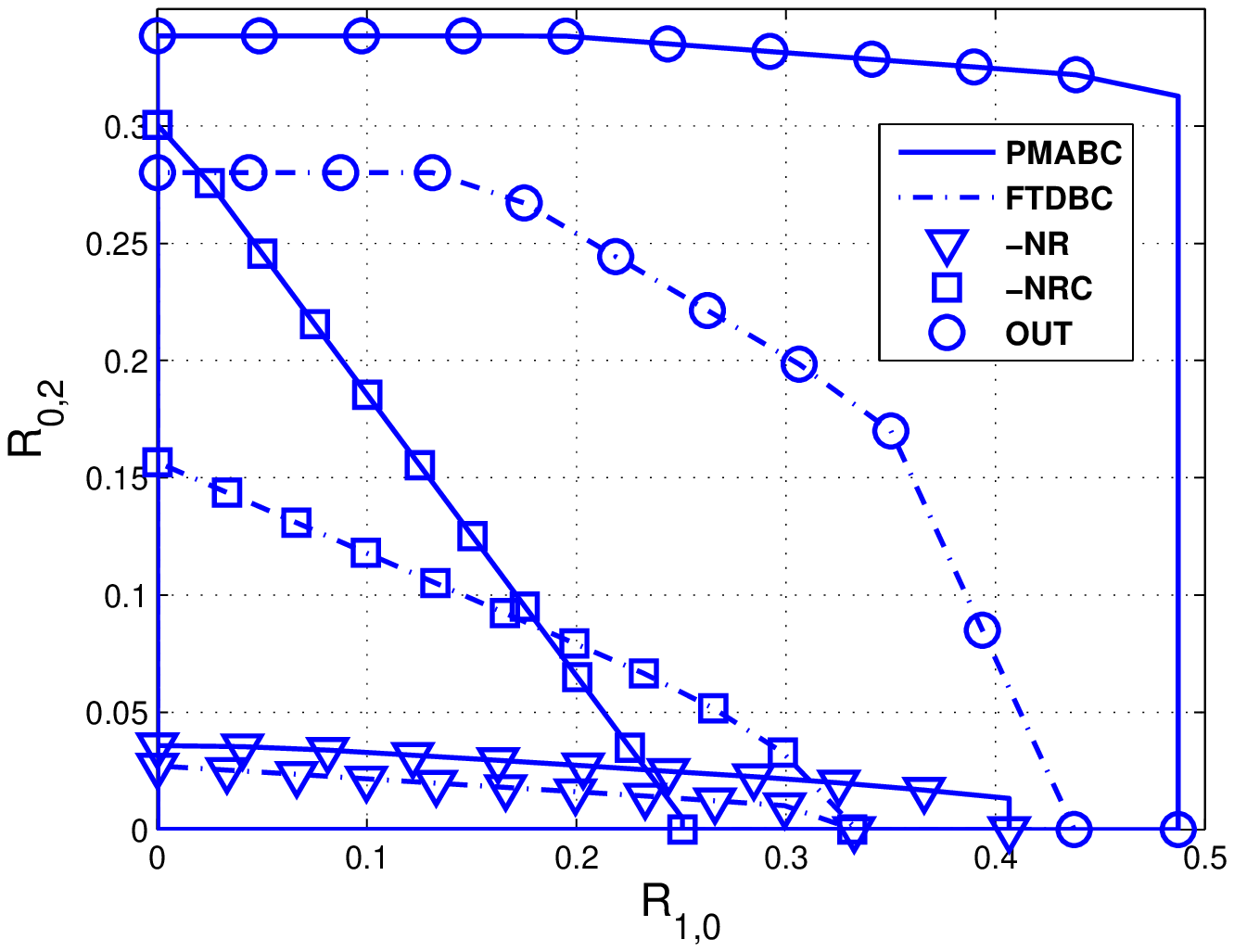, width=7.5cm}
  \caption{Comparison with $P_\nzero= P_\none = P_\ntwo = P_\nr=0$ dB, $H=H_1$ and $R_{\nzero,\none} =0.19$, $R_{\ntwo,\nzero} = 0.01$.}
  \label{fig:coop_ind}
 \end{center}
\end{figure}

%%%%%%%%%%%%%%%%%%%%%%%%%%%%%%%%%%%%%%%%%%%%%%%%%%%%%%

\section{Conclusion}

%%%%%%%%%%%%%%%%%%%%%%%%%%%%%%%%%%%%%%%%%%%%%%%%%%%%%%
\label{sec:conclusion}
In this paper, we introduced four new temporal protocols which
fully exploit the two-way nature of the data and outperform simple routing or multi-hop communication schemes:  the FMABC, PMABC, FTDBC, and PTDBC protocols.
We also proposed various coding schemes: Network coding, Random binning and user cooperations which exploit
over-heard and own-message side information.
 We derived achievable rate regions as well as outer bounds for the proposed protocols, using different combinations of encoding schemes, for a decode and forward relay. We compared these regions in the Gaussian noise channel. With numerical evaluations, we verified that different protocols achieve the best performance in different channel conditions and we highlighted the relative gains achieved by network coding, random binning and compress-and-forward-type
cooperation between terminal nodes.

\appendices
%%%%%%%%%%%%%%%%%%%%%%%%%%%%%%%%%%%%%%%%%%%%%%%%%%%%%%

\section{Lemma for Theorem \ref{theorem:gbc}}

%%%%%%%%%%%%%%%%%%%%%%%%%%%%%%%%%%%%%%%%%%%%%%%%%%%%%%
\label{app:lemma}

\begin{lemma}
\label{lemma:gbc}
For a given subset $S\subseteq {\cal B}$, $|S|>1$, we define ${\bf w} = \{w_i,i\in S\}$, ${\bf w}_0 = \{w_{i0}|i\in S\}$, ${\bf u}({\bf w}_0) = \{{\bf u}_i(w_{i0})|i\in S\}$, ${\bf U} = \{U_i|i\in S\}$ and the set
\begin{align}
D_{\bf w} \eqdef \left\{{\bf u}({\bf w}_0) \in A({\bf U}) | {\bf w}_0 \in \bigotimes_{i\in S} B^i_{w_i} \right\}.
\end{align}
Then for any choice of ${\bf w}$, $\epsilon>0$ and sufficiently large $n$ :
\begin{align}
P[\|D_{\bf w} \| = 0] \leq \epsilon
\end{align}
with
\begin{align}
\sum_{i\in S} R_i &< \sum_{i\in S} \left(I(U_i;Y_i) - I(U_i;U_{S(i)}) \right) -|S|\epsilon -\delta(\epsilon)
\end{align}
where $\delta(\epsilon)\rightarrow 0$ as $\epsilon\rightarrow 0$. \thmend
\end{lemma}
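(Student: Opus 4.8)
\emph{Proof plan.} The natural tool is the second-moment (Chebyshev) method applied to $N_{\bf w}\eqdef\|D_{\bf w}\|$. Let $\chi_{{\bf w}_0}$ be the indicator of the event ${\bf u}({\bf w}_0)\in A({\bf U})$, so that $N_{\bf w}=\sum_{{\bf w}_0\in\bigotimes_{i\in S}B^i_{w_i}}\chi_{{\bf w}_0}$, and observe that $P[\|D_{\bf w}\|=0]=P[N_{\bf w}=0]\le \mathrm{Var}(N_{\bf w})/(\mathbb{E} N_{\bf w})^2$. It therefore suffices to show that $\mathbb{E} N_{\bf w}$ grows exponentially in $n$ while $\mathrm{Var}(N_{\bf w})/(\mathbb{E} N_{\bf w})^2\to 0$; the slack $\delta(\epsilon)$ is chosen large enough to absorb the polynomial and floor-function losses incurred by strong typicality and to leave strictly positive exponents.

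\emph{First moment.} Since the codewords ${\bf u}_i(w_{i0})$ are independent and uniform over $A(U_i)$, and since by consistency of strong typicality $A(U_S)\subseteq\bigotimes_{i\in S}A(U_i)$, one gets for each fixed ${\bf w}_0$ that $P[{\bf u}({\bf w}_0)\in A({\bf U})]=\|A(U_S)\|/\prod_{i\in S}\|A(U_i)\|=2^{-n(\sum_{i\in S}I(U_i;U_{S(i)})\,\pm\,\delta(\epsilon))}$, using $\sum_{i\in S}H(U_i)-H(U_S)=\sum_{i\in S}I(U_i;U_{S(i)})$ by the chain rule. Multiplying by the number of candidates $\prod_{i\in S}\|B^i_{w_i}\|=2^{n(\sum_{i\in S}(I(U_i;Y_i)-R_i-\epsilon)\,\pm\,\delta(\epsilon))}$ yields $\mathbb{E} N_{\bf w}=2^{n(\sum_{i\in S}(I(U_i;Y_i)-I(U_i;U_{S(i)})-R_i-\epsilon)\,\pm\,\delta(\epsilon))}$, whose exponent is a strictly positive multiple of $n$ precisely under the hypothesis $\sum_{i\in S}R_i<\sum_{i\in S}(I(U_i;Y_i)-I(U_i;U_{S(i)}))-|S|\epsilon-\delta(\epsilon)$; hence $\mathbb{E} N_{\bf w}\to\infty$ exponentially.

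\emph{Second moment and conclusion.} Expand $\mathbb{E}[N_{\bf w}^2]=\sum_{{\bf w}_0,{\bf w}_0'}P[{\bf u}({\bf w}_0)\in A({\bf U}),\,{\bf u}({\bf w}_0')\in A({\bf U})]$ and group the ordered pairs by their agreement set $T\eqdef\{i\in S:w_{i0}=w_{i0}'\}$. The block $T=\emptyset$ uses disjoint, hence independent, codewords and contributes exactly $(\mathbb{E} N_{\bf w})^2$, which cancels in $\mathrm{Var}(N_{\bf w})$; the block $T=S$ is the diagonal and contributes $\mathbb{E} N_{\bf w}=o\big((\mathbb{E} N_{\bf w})^2\big)$. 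For a non-empty $T\subsetneq S$, condition on the shared codewords ${\bf u}_T$: only ${\bf u}_T\in A(U_T)$ contributes, and a conditional strong-typicality count shows that each of the two independent completions ${\bf u}_{S\setminus T},{\bf u}'_{S\setminus T}$ is jointly typical with ${\bf u}_T$ with probability $2^{-n(\alpha_T\pm\delta(\epsilon))}$, where $\alpha_T\eqdef\sum_{i\in S\setminus T}H(U_i)-H(U_{S\setminus T}\mid U_T)$. Summing over the at most $\prod_{i\in S}\|B^i_{w_i}\|\cdot\prod_{i\in S\setminus T}\|B^i_{w_i}\|$ such pairs, dividing by $(\mathbb{E} N_{\bf w})^2$, and using the identity $\alpha_T+\big(\sum_{i\in T}H(U_i)-H(U_T)\big)=\sum_{i\in S}I(U_i;U_{S(i)})$, this block contributes a quantity of order $2^{-n\big(\sum_{i\in T}(I(U_i;Y_i)-R_i-\epsilon)-\sum_{i\in T}I(U_i;U_{T(i)})\big)}$ to $\mathrm{Var}(N_{\bf w})/(\mathbb{E} N_{\bf w})^2$. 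Each such exponent is strictly positive: for $|T|=1$ it reduces to $R_i<I(U_i;Y_i)-\epsilon$, which holds by the codebook construction; for $|T|\ge 2$ it is precisely the rate inequality of the present lemma read off for the subset $T$, which holds in the setting of Theorem~\ref{theorem:gbc} (where \eqref{eq:sum} is imposed for every subset). Hence every off-diagonal block vanishes exponentially, $\mathrm{Var}(N_{\bf w})/(\mathbb{E} N_{\bf w})^2\to 0$, and Chebyshev gives $P[\|D_{\bf w}\|=0]\le\epsilon$ for all $n$ large, with $\delta(\epsilon)\to 0$ as $\epsilon\to 0$. I expect the variance step to be the main obstacle: the bookkeeping over the agreement patterns $T$, the careful use of strong (rather than weak) typicality to justify the conditional completion estimate, and the verification that every resulting exponent is non-negative — which is exactly what forces the subset rate conditions into play rather than only the single inequality for $S$.
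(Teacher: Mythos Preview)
Your approach is the same second-moment/Chebyshev method as the paper, and your first-moment computation matches theirs exactly. The difference is in the variance step. The paper, following the two-user argument of El Gamal--van der Meulen, bounds every off-diagonal pair $({\bf w}_0,{\bf w}_0')$ by the product of the two marginal typicality probabilities and then cancels against $(\mathbb{E} N_{\bf w})^2$, leaving only the diagonal. That product bound is valid when the two index-tuples share at most one coordinate (the $|T|\le 1$ case, where the shared mutual-information term $\sum_{i\in T}I(U_i;U_{T(i)})$ vanishes), which covers the original two-receiver lemma; but for $|S|\ge 3$ and an agreement set $T$ with $|T|\ge 2$ the two events are positively correlated by a factor $2^{n\sum_{i\in T}I(U_i;U_{T(i)})}$, so the product bound is not an upper bound. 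Your stratification by $T$ is the rigorous way to handle this.

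The price you pay is exactly what you flag: the exponent of the block-$T$ contribution is positive only under the subset-$T$ rate inequality, so your argument establishes $P[\|D_{\bf w}\|=0]\le\epsilon$ under the family of constraints $\sum_{i\in T}R_i<\sum_{i\in T}(I(U_i;Y_i)-I(U_i;U_{T(i)}))-|T|\epsilon-\delta(\epsilon)$ for all $T\subseteq S$ with $|T|\ge 2$, rather than under the single $S$-inequality the lemma literally states. Since Theorem~\ref{theorem:gbc} imposes \eqref{eq:sum} for every subset anyway, this is harmless for the application; in effect you prove the version of the lemma that the theorem actually needs, with a variance analysis that is tighter than the paper's.
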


\begin{proof}
We use the similar proofs to Lemma in \cite{Gamal:1981}. From Chebychev's inequality, we have
\begin{align}
P[\|D_{\bf w} \| = 0] \leq P[|\|D_{\bf w}\| - E[\|D_{\bf w}\|]| > \epsilon E[\|D_{\bf w}\|]] \leq \frac{(\sigma[\|D_{\bf w} \|])^2}{\epsilon^2 (E[\|D_{\bf w}\|])^2}
\end{align}
and
\begin{align}
P[{\bf u}({\bf w}_0)\in D_{\bf w}] &\geq 2^{n(H({\bf U}) - \sum_{i\in S} H(U_i) -\delta(\epsilon))}\\
& = 2^{-n(\sum_{i\in S} I(U_i;U_{S(i)}) + \delta(\epsilon))}
\end{align}
Therefore,
\begin{align}
E[\|D_{\bf w}\|] &= \prod_{i\in S} \|B^i_{w_i}\| \cdot P[{\bf u}({\bf w}_0)\in D_{\bf w}]\\
&\geq 2^{n(\sum_{i\in S} (I(U_i;Y_i) - R_i -\epsilon) - \sum_{j\in S} I(U_j;U_{S(j)}) - \delta(\epsilon))}\label{lemma:gbc:1}
\end{align}
To find an upper bound of $(\sigma[\|D_{\bf w} \|])^2$, we first estimate $E[\|D_{\bf w}\|^2]$.
\begin{align}
E[\|D_{\bf w}\|^2] =&\sum_{{\bf u}({\bf w}_0) \in D_{\bf w}} P[{\bf u}({\bf w}_0)\in D_{\bf w}] + \nonumber\\ &\sum_{\twolines{{\bf u}({\bf w}_0),{\bf u}({\bf w}_1) \in D_{\bf w}}{{\bf u}({\bf w}_0)\neq {\bf u}({\bf w}_1)}} P[{\bf u}({\bf w}_0)\in D_{\bf w}]\cdot P[{\bf u}({\bf w}_1)\in D_{\bf w}]\\
\leq& \prod_{i\in S} \|B^i_{w_i}\| \cdot 2^{-n(\sum_{j\in S} I(U_j;U_{S(j)}) - \delta(\epsilon))}+ \nonumber\\
& \left( \prod_{i\in S} \|B^i_{w_i}\|^2 - \prod_{i\in S} \|B^i_{w_i}\|\right) \cdot 2^{-2n(\sum_{j\in S} I(U_j;U_{S(j)}) - \delta(\epsilon))} \label{lemma:gbc:2}
\end{align}
Thus, from \eqref{lemma:gbc:1} and \eqref{lemma:gbc:2} we have
\begin{align}
(\sigma[\|D_{\bf w} \|])^2 &= E[\|D_{\bf w}\|^2] - E[\|D_{\bf w}\|]^2\\
&\leq \prod_{i\in S} \|B^i_{w_i}\| \cdot 2^{-n(\sum_{j\in S} I(U_j;U_{S(j)}) - \delta(\epsilon))} \\
&\leq 2^{n(\sum_{i\in S} (I(U_i;Y_i) - R_i -\epsilon) - \sum_{j\in S} I(U_j;U_{S(j)}) + \delta(\epsilon))}\label{lemma:gbc:3}
\end{align}
From \eqref{lemma:gbc:1} and \eqref{lemma:gbc:3} for sufficiently large $n$
\begin{align}
P[\|D_{\bf w} \| = 0] \leq \epsilon
\end{align}
with
\begin{align}
\sum_{i\in S} R_i &< \sum_{i\in S} \left(I(U_i;Y_i) - I(U_i;U_{S(i)})\right)  -|S|\epsilon -\delta(\epsilon)
\end{align}
\end{proof}

%%%%%%%%%%%%%%%%%%%%%%%%%%%%%%%%%%%%%%%%%%%%%%%%%%%%%%

\section{Proof of Theorem \ref{theorem:FMABC-N}}

%%%%%%%%%%%%%%%%%%%%%%%%%%%%%%%%%%%%%%%%%%%%%%%%%%%%%%
\label{app:FMABC-N}

\begin{proof}
{\em Random code generation: } For simplicity
of exposition we take $|{\cal Q}| = 1$.
\begin{enumerate}
\item Phase 1 variables: Generate random $(n\cdot\Delta_{1,n})$-length sequences
\begin{itemize}
  \item ${\bf x}^\pa_\nzero(w_{\{\nzero\},{\cal B}})$ i.i.d. with $p^\pa(x_\nzero)$, $w_{\{\nzero\},{\cal B}} \in {\cal S}_{\{\nzero\},{\cal B}}$
  \item ${\bf x}^\pa_i(w_{i,\nzero})$ i.i.d. with $p^\pa(x_i)$, $w_{i,\nzero} \in {\cal S}_{i,\nzero}$, $\forall i\in{\cal B}$
\end{itemize}
\item Phase 2 variables: Generate random $(n\cdot\Delta_{2,n})$-length sequences
\begin{itemize}
  \item ${\bf u}^\pb_i(w_{\nr_i})$ with $p^\pb({\bf u}_i)$, $w_{\nr_i}  \in \{0,1,\cdots,\lfloor 2^{nR_{\nr_i}} \rfloor-1\} \eqdef  {\cal S}_{\nr_i}$, where
  \begin{align}
p^\pb({\bf u}_i) =\left\{
              \begin{array}{ll}
                \frac{1}{\|A^\pb(U_i) \|}, & {\bf u}_i\in A^\pb(U_i) \\
                0, & \hbox{otherwise.}
              \end{array}
            \right.
\end{align}
and $R_{\nr_i} = \max\{R_{\nzero,i},R_{i,\nzero}\} + (\Delta_2 I(U_i^\pb;Y_i^\pb) -4\epsilon - R_{\nzero,i})$ for $i\in[1,m]$.
\end{itemize}
and define bin  $B_j^i \eqdef \{w_{\nr_i} | w_{\nr_i} \in [j\cdot\lfloor2^{n(\Delta_2 I(U_i^\pb;Y_i^\pb)-R_{\nzero,i}-4\epsilon)}\rfloor, (j+1)\cdot\lfloor2^{n(\Delta_2 I(U_i^\pb;Y_i^\pb)-R_{\nzero,i}-4\epsilon)}\rfloor -1]\}$ for $j\in \{0,1,\cdots,\lfloor 2^{n\max\{R_{\nzero,i},R_{i,\nzero}\}}\rfloor -1\}$.
\end{enumerate}

{\em Encoding: } During phase 1, encoders of terminal nodes send the codewords ${\bf x}^\pa_\nzero(w_{\{\nzero\},{\cal B}})$, ${\bf x}^\pa_i(w_{i,\nzero})$. Relay $\nr$ estimates $\hat{w}_{\nzero,i}$ and $\hat{w}_{i,\nzero}$ after phase 1 using jointly typical decoding, then constructs $w_{i} =\hat{w}_{\nzero,i} \oplus \hat{w}_{i,\nzero}$. To transmit  messages $(w_1,\cdots,w_m)$, pick  $(w_{\nr_1},\cdots,w_{\nr_m})\in \bigotimes_{i=1}^m B^i_{w_i}$ such that ${\bf u}_T^\pb(w_{\nr_T})\in A^\pb(U_T)$, $\forall T\subseteq {\cal B}$, $|T|>1$, where $w_{\nr_T} \eqdef \{w_{\nr_i}|i\in T\}$. Such a $(w_{\nr_1},\cdots,w_{\nr_m})$ exists with high probability if
\begin{align}
\sum_{i\in T} R_{\nzero,i} &< \sum_{i\in T} \left(\Delta_2 I(U_i^\pb;Y_i^\pb) - \Delta_2 I(U_i^\pb;U_{T(i)}^\pb) \right) -|T|\epsilon -\delta(\epsilon)~~~~\forall T\subseteq {\cal B}~,~|T|>1  \label{eq:FMABC:DF:GC:7}
\end{align}
from Lemma \ref{lemma:gbc}. Then the relay finds a ${\bf x}_\nr^\pb$ jointly typical with $({\bf u}_1^\pb(w_{\nr_1}),\cdots,{\bf u}_m^\pb(w_{\nr_m}))$ and assigns it as the codeword corresponding to $(w_1,\cdots,w_m)$. Relay $\nr$ sends ${\bf x}^\pb_\nr(w_1,\cdots,w_m)$ during phase 2.

{\em Decoding: } Node $\nzero$ estimates $\tilde{w}_{i,\nzero}$ after phase 2 using jointly typical decoding. Since $w_{i} = w_{\nzero,i} \oplus w_{i,\nzero}$ and $\nzero$ knows $w_{\nzero,i}$, node $\nzero$ can reduce the number of possible $w_{\nr_i}$  and likewise at node $i$, the cardinality of $w_{\nr_i}$ is $2^{n(\Delta_2 I(U_i^\pb;Y_i^\pb)-4\epsilon)}$.

{\em Error analysis: }
Recall that $w_{\nr_T} = \{w_{\nr_i}| i\in T\}$.
Then, %$\forall i\in [1,m]$,
\begin{align}
P[E] &= P[E_{enc} \cup E_{dec}] \leq P[E_{enc}] + P[E_{dec}]\\
     &\leq \epsilon + \sum_{i=1}^m P[E_{\{\nzero\},\{i\}}] + P[E_{\{i\},\{\nzero\}}]
\end{align}
where $E$ is the entire decoding error event, $E_{enc}$ is the set of encoding error events, and $E_{dec}$ is the set of decoding error events. $P[E_{enc}]$ may be driven to $0$ as $n\rightarrow \infty$ by \eqref{eq:FMABC:DF:GC:7} and $E_{dec}$ may be decomposed into individual decoding error events on each link as:
\begin{align}
  P[E_{\{\nzero\},\{i\}}] & \leq P[E_{{\cal M},\{\nr\}}^\pa \cup E_{\{\nr\},\{i\}}^\pb]\\
  & \leq P[E_{{\cal M},\{\nr\}}^\pa] + P[E_{\{\nr\},\{i\}}^\pb | \bar{E}_{{\cal M},\{\nr\}}^\pa]\\
  P[E_{\{i\},\{\nzero\}}] & \leq P[E_{{\cal M},\{\nr\}}^\pa \cup E_{\{\nr\},\{\nzero\}}^\pb]\\
  & \leq P[E_{{\cal M},\{\nr\}}^\pa] + P[E_{\{\nr\},\{\nzero\}}^\pb | \bar{E}_{{\cal M},\{\nr\}}^\pa]
\end{align}
$P[E_{{\cal M},\{\nr\}}^\pa]$ tends to zero as $n \rightarrow \infty$ by Theorem 15.3.6 in \cite{Cover:2006} and the condition in \eqref{eq:FMABC:DF:GC:1}.
\begin{align}
 P[E_{\{\nr\},\{i\}}^\pb | \bar{E}_{{\cal M},\{\nr\}}^\pa] \leq& P[\bar{D}^\pb({\bf u}_i (w_{\nr_i}),{\bf y}_i)] +
  P[\cup_{\tilde{w}_{\nr_i} \in \cup B^i_{{\tilde w}_{\nzero,i}\oplus w_{i,\nzero}}} D^\pb({\bf u}_i( \tilde{w}_{\nr_i}),{\bf y}_i)]\\
  \leq& 2\epsilon  \label{eq:FMABC:DF:GC:4}\\
 P[E_{\{\nr\},\{\nzero\}}^\pb | \bar{E}_{{\cal M},\{\nr\}}^\pa] \leq& P[\bar{D}^\pb({\bf u}_{\cal B}(w_{\nr_{\cal B}}),{\bf y}_\nzero)] +\nonumber\\
  & \sum_{T\subseteq{\cal B}} P[\cup_{\tilde{w}_{T,\{\nzero\}} \not= w_{T,\{\nzero\}}} D^\pb({\bf u}_T(\tilde{w}_{\nr_T}),{\bf u}_{\bar T}(w_{\nr_{\bar T}}),{\bf y}_\nzero)]\\
  \leq& \epsilon + \sum_{T\subseteq {\cal B}} 2^{nR_{T,\{\nzero\}}} 2^{-n\cdot\Delta_{2,n}(I(U_T^\pb;Y_\nzero^\pb,U_{\bar T}^\pb)-\epsilon')} \label{eq:FMABC:DF:GC:5}
\end{align}
The total cardinality of the case (${\tilde w}_{\nr_T} \neq w_{\nr_T})$ is bounded by $2^{nR_{T,\{\nzero\}}}$ since ${\tilde w}_{\nr_T}$ is uniquely specified by  ${\tilde w}_{T,\{\nzero\}}$ if $w_{{\bar T},\{\nzero\}}$ is given.

Since $\epsilon > 0$ is arbitrary, the conditions of Theorem \ref{theorem:FMABC-N} and the AEP property ensure that the right hand sides of \eqref{eq:FMABC:DF:GC:4} and \eqref{eq:FMABC:DF:GC:5} tend
to 0 as $n \rightarrow \infty$.  By
Fenchel-Bunt's theorem in \cite{Hiriart:2001}, it is sufficient to
restrict $|{\cal Q}| \leq 2^{m+1} -1$ since we have $2^{m+1} -1$ in \eqref{eq:FMABC:DF:GC:1}.
\end{proof}

%%%%%%%%%%%%%%%%%%%%%%%%%%%%%%%%%%%%%%%%%%%%%%%%%%%%%%

\section{Proof of Theorem \ref{theorem:PMABC-NR}}

%%%%%%%%%%%%%%%%%%%%%%%%%%%%%%%%%%%%%%%%%%%%%%%%%%%%%%
\label{app:PMABC-NR}

\begin{proof}
{\em Random code generation: } For simplicity
of exposition only, we take $|{\cal Q}| = 1$. For all $i\in [1,m]$, first we generate a partition of ${\cal S}_{\nzero,i}$  randomly by independently assigning every index $w_{\nzero,i} \in {\cal S}_{\nzero,i}$ to a set ${\cal S}_{\nzero,i}(k)$, with a uniform distribution over the indices $k \in
\{0, \ldots, \lfloor 2^{nR_{\nzero,i:1}} \rfloor - 1\} \eqdef {\cal S}_{\nzero,i:1}$. We denote by $s_{\nzero,i}(w_{\nzero,i})$
the index $k$ of ${\cal S}_{\nzero,i}(k)$ to which $w_{\nzero,i}$  belongs.

\begin{enumerate}
\item Phase $j$ ($\in [1,m]$): Generate random $(n\cdot\Delta_{j,n})$-length sequences
\begin{itemize}
  \item ${\bf v}^{(j)}_{\nzero i}(w_{\nzero,i:2})$ with $p^{(j)}({\bf v}_{\nzero i})$, $w_{\nzero,i:2}  \in \{0,1,\cdots,\lfloor 2^{nR_{\nzero,i:2}} \rfloor-1\} \eqdef  {\cal S}_{\nzero,i:2}$, where
  \begin{align}
p^{(j)}({\bf v}_{\nzero i}) =\left\{
              \begin{array}{ll}
                \frac{1}{\|A^{(j)}(V_{\nzero i}) \|}, & {\bf v}_{\nzero i}\in A^{(j)}(V_{\nzero i}) \\
                0, & \hbox{otherwise.}
              \end{array}
            \right.
\end{align}
and $R_{\nzero,i:2} = R_{\nzero,i:1} + \sum_{j=1}^m \Delta_j I(V_{\nzero i}^{(j)};Y_i^{(j)}) -\epsilon'$. Then we define bin $B_k^{i} \eqdef \{w_{\nzero,i:2} | w_{\nzero,i:2} \in [k\cdot\lfloor2^{n(\sum_{j=1}^m\Delta_j I(V_{\nzero i}^{(j)};Y_i^{(j)})+R_{\nzero,i:1}-R_{\nzero,i}-\epsilon')}\rfloor, (k+1)\cdot\lfloor2^{n(\sum_{j=1}^m \Delta_j I(V_{\nzero i}^{(j)};Y_i^{(j)})+R_{\nzero,i:1}-R_{\nzero,i}-\epsilon')}\rfloor -1]\}$ for $k\in \{0,1,\cdots,\lfloor 2^{n R_{\nzero,i}}\rfloor -1\}$.
 \item ${\bf x}^{(j)}_j(w_{j,\nzero})$ i.i.d. with $p^{(j)}(x_j)$, $w_{j,\nzero} \in {\cal S}_{j,\nzero}$
\end{itemize}

\item Phase $m+1$: Generate random $(n\cdot\Delta_{m+1,n})$-length sequences
\begin{itemize}
  \item ${\bf u}^{(m+1)}_i(w_{\nr_i})$ with $p^{(m+1)}({\bf u}_i)$, $w_{\nr_i}  \in \{0,1,\cdots,\lfloor 2^{nR_{\nr_i}} \rfloor-1\} \eqdef  {\cal S}_{\nr_i}$, where
  \begin{align}
p^{(m+1)}({\bf u}_i) =\left\{
              \begin{array}{ll}
                \frac{1}{\|A^{(m+1)}(U_i) \|}, & {\bf u}_i\in A^{(m+1)}(U_i) \\
                0, & \hbox{otherwise.}
              \end{array}
            \right.
\end{align}
and $R_{\nr_i} = \max\{R_{\nzero,i:1},R_{i,\nzero}\} + (\Delta_{m+1} I(U_i^{(m+1)};Y_i^{(m+1)}) -4\epsilon - R_{\nzero,i:1})$.
\end{itemize}
and define bin  $C_k^i \eqdef \{w_{\nr_i} | w_{\nr_i} \in [k\cdot\lfloor2^{n(\Delta_{m+1} I(U_i^{(m+1)};Y_i^{(m+1)})-R_{\nzero,i:1}-4\epsilon)}\rfloor ,$ \\$(k+1)\cdot\lfloor2^{n(\Delta_{m+1} I(U_i^{(m+1)};Y_i^{(m+1)})-R_{\nzero,i:1}-4\epsilon)}\rfloor -1]\}$ for $k\in \{0,1,\cdots,\lfloor 2^{n\max\{R_{\nzero,i:1},R_{i,\nzero}\}}\rfloor -1\}$.
\end{enumerate}

{\em Encoding: }
\begin{enumerate}
\item To transmit $(w_{\nzero,1},\cdots,w_{\nzero,m})$, node $\nzero$ picks $(w_{\nzero,1:2},\cdots,w_{\nzero,m:2})\in \bigotimes_{i=1}^m B^i_{w_{\nzero,i}}$ such that \\${\bf v}_{\nzero S}^{(j)}(w_{\{\nzero\} ,S:2})\in A^{(j)}(V_{\nzero S} )$, $\forall j\in[1,m], \forall S \subseteq {\cal B}~,~ |S|>1$. Such a $(w_{\nzero,1:2},\cdots,w_{\nzero,m:2})$ exists with high probability if
\begin{align}
\sum_{i\in S} \left(R_{\nzero,i} -R_{\nzero,i:1}\right) &< \sum_{i\in S} \sum_{j=1}^m \Delta_j I(V_{\nzero i}^{(j)};Y_i^{(j)}) - \Delta_j I(V_{\nzero i}^{(j)};V_{\nzero S(i)}^{(j)}) -\delta(\epsilon) \label{eq:PMABC:GC:10}
\end{align}
for $S\subseteq {\cal B}~,~ |S|>1$ from Lemma \ref{lemma:gbc}.  Then node $\nzero$ finds a ${\bf x}_\nzero^{(i)}$ jointly typical with \\$({\bf v}_{\nzero 1}^{(i)}(w_{\nzero,1:2}),\cdots,{\bf v}_{\nzero m}^{(i)}(w_{\nzero,m:2}))$ and designate it as the codeword corresponding to $(w_{\nzero,1},\cdots,w_{\nzero,m})$ for all $i\in[1,m]$. Node $\nzero$ sends ${\bf x}^{(i)}_\nzero(w_{\nzero,1},\cdots,w_{\nzero,m})$ during phase $i$.

\item
During phase $i$, encoder of terminal node $i$ sends the codeword ${\bf x}^{(i)}_i(w_{i,\nzero})$ for $i\in[1,m]$.

\item
Relay $\nr$ estimates $\hat{w}_{\nzero,i}$ and $\hat{w}_{i,\nzero}$ after phase $m$ using jointly typical decoding, then constructs $w_{i} =\hat{w}_{\nzero,i:1} \oplus \hat{w}_{i,\nzero}$. To transmit a pair of messages $(w_1,\cdots,w_m)$, pick a pair $(w_{\nr_1},\cdots,w_{\nr_m})\in \bigotimes_{i=1}^m C^i_{w_i} $ such that ${\bf u}_S^{(m+1)}(w_{\nr_S})\in A^{(m+1)}(U_S)$. Such a $(w_{\nr_1},\cdots,w_{\nr_m})$ exists with high probability if
\begin{align}
\sum_{i\in S} R_{\nzero,i:1} &< \sum_{i\in S} \Delta_{m+1} I(U_i^{(m+1)};Y_i^{(m+1)}) - \Delta_{m+1} I(U_i^{(m+1)};U_{S(i)}^{(m+1)})  -|S|\epsilon -\delta(\epsilon) \label{eq:PMABC:GC:11}
\end{align}
for $S\subseteq {\cal B}~,~|S| >1$ from Lemma \ref{lemma:gbc}. Then the relay finds a ${\bf x}_\nr^{(m+1)}$ jointly typical with \\$({\bf u}_1^{(m+1)}(w_{\nr_1}),\cdots,{\bf u}_m^{(m+1)}(w_{\nr_m}))$ and designate it as the codeword corresponding to $(w_1,\cdots,w_m)$. Relay $\nr$ sends ${\bf x}^{(m+1)}_\nr(w_1,\cdots,w_m)$ during phase $m+1$.
\end{enumerate}

{\em Decoding: } For all $i\in [1,m]$,
\begin{enumerate}
\item Node $\nzero$ estimates $\tilde{w}_{i,\nzero}$ after phase $m+1$ using jointly typical decoding. Since $w_{i} = w_{\nzero,i:1} \oplus w_{i,\nzero}$ and $\nzero$ knows $w_{\nzero,i}$, node $\nzero$ can reduce the number of possible $w_{i}$.

\item Node $i$ estimates ${\tilde w}_{\nr_i}$ after phase $m+1$ using jointly typical decoding. Similar to the case of node $\nzero$, node $i$ can reduce the cardinality of $w_{\nr_i}$ to $2^{n(\Delta_{m+1} I(U_i^{(m+1)};Y_i^{(m+1)})-4\epsilon)}$ and node $i$ decodes ${\tilde w}_{\nzero,i:1}$ from the bin index of ${\tilde w}_{\nr_i}$. Then, node $i$ estimates ${\tilde w}_{\nzero,i:2}$ using jointly typical decoding of the sequence $({\bf v}_{\nzero i}^{(j)}({\tilde w}_{\nzero,i:2}),{\bf y}_i^{(j)})$, for all $j\neq i$. Since node $i$ knows the bin index $s_{{\tilde w}_{\nzero,i}}({\tilde w}_{\nzero,i})$ as ${\tilde w}_{\nzero,i:1}$, it can reduce the cardinality of $w_{\nzero,i:2}$ to $2^{n(\sum_{j\neq i}\Delta_j I(V_{\nzero i}^{(j)};Y_i^{(j)}) - \epsilon')}$. After decoding ${\tilde w}_{\nzero,i:2}$, node $i$ finally decodes ${\tilde w}_{\nzero,i}$ from the bin index of ${\tilde w}_{\nzero,i:2}$.
\end{enumerate}

{\em Error analysis: }
\begin{align}
P[E] &= P[E_{enc} \cup E_{dec}] \leq P[E_{enc}] + P[E_{dec}]\\
     &\leq \epsilon + \sum_{i=1}^m P[E_{\{\nzero\},\{i\}}] + P[E_{\{i\},\{\nzero\}}]
\end{align}
where $E$ is the  entire error event, $E_{enc}$ is the set of encoding error events, and $E_{dec}$ is the set of decoding error events. $P[E_{enc}]$ is upper bounded by sufficiently small number $\epsilon$ from  \eqref{eq:PMABC:GC:10} and \eqref{eq:PMABC:GC:11}. $E_{dec}$ can be separated by individual decoding error events in each link.
Then $\forall i \in [1,m]$,
\begin{align}
  P[E_{\{\nzero\},\{i\}}]  \leq& P[(\cup_{j=1}^m E_{\{\nzero,j\},\{\nr\}}^{(j)}) \cup E_{\{\nr\},\{i\}}^{(m+1)} \cup E_{\{\nzero\},\{i\}}^{(m+1)} ]\\
   \leq & P[\cup_{j=1}^m E_{\{\nzero,j\},\{\nr\}}^{(j)}] + P[ E_{\{\nr\},\{i\}}^{(m+1)}|\cap_{j=1}^m {\bar E}_{\{\nzero,j\},\{\nr\}}^{(j)}] + \nonumber\\
   &P[ E_{\{\nzero\},\{i\}}^{(m+1)}|(\cap_{j=1}^m {\bar E}_{\{\nzero,j\},\{\nr\}}^{(j)}) \cap \bar{ E}_{\{\nr\},\{i\}}^{(m+1)}] \\
  P[E_{\{i\},\{\nzero\}}]  \leq& P[(\cup_{j=1}^m E_{\{\nzero,j\},\{\nr\}}^{(j)}) \cup E_{\{\nr\},\{\nzero\}}^{(m+1)}]\\
   \leq & P[\cup_{j=1}^m E_{\{\nzero,j\},\{\nr\}}^{(j)}] + P[E_{\{\nr\},\{\nzero\}}^{(m+1)} | \cap_{j=1}^m {\bar E}_{\{\nzero,j\},\{\nr\}}^{(j)}]
\end{align}

Now we will show that $P[\cup_{j=1}^m E_{\{\nzero,j\},\{\nr\}}^{(j)}]$, $P[ E_{\{\nr\},\{i\}}^{(m+1)}|\cap_{j=1}^m {\bar E}_{\{\nzero,j\},\{\nr\}}^{(j)}]$, $P[ E_{\{\nzero\},\{i\}}^{(m+1)}|(\cap_{j=1}^m {\bar E}_{\{\nzero,j\},\{\nr\}}^{(j)}) \cap \bar{ E}_{\{\nr\},\{i\}}^{(m+1)}]$ and $P[E_{\{\nr\},\{\nzero\}}^{(m+1)} | \cap_{j=1}^m {\bar E}_{\{\nzero,j\},\{\nr\}}^{(j)}]$ tend to zero as $n\rightarrow \infty$. For the convenience of analysis we define $B^i_R(w_{\nzero,i:1})= \cup_{w_{\nzero,i} \in {\cal S}_{\nzero,i}(w_{\nzero,i:1})} B^i_{w_{\nzero,i}}$ and $C^i_R(w_{i,\nzero}) = \cup_{w_{\nzero,i:1} \in {\cal S}_{\nzero,i:1}} C^i_{w_{\nzero,i:1}\oplus w_{i,\nzero}}$. Then,
\begin{align}
 P[\cup_{j=1}^m &E_{\{\nzero,j\},\{\nr\}}^{(j)}] \nonumber\\
 \leq &\sum_{j=1}^m P[\bar{D}^{(j)}({\bf v}_{\nzero {\cal B}}(w_{\{\nzero\},{\cal B}:2}),{\bf x}_j( w_{j,\nzero}),{\bf y}_\nr)] +
  P\left[\bigcup_{\tilde{w}_{{\cal M},{\cal M}} \not= w_{{\cal M},{\cal M}}} \left(\bigcap_{j=1}^m D^{(j)}({\bf v}_{\nzero {\cal B}}( \tilde{w}_{\{\nzero\},{\cal B}:2}),{\bf x}_j( \tilde{w}_{j,\nzero}),{\bf y}_\nr)\right) \right]\\
 \leq& m\cdot \epsilon + \sum_{T,S\in {\cal B}}\sum_{\tilde{w}_{\{\nzero\},T}} \sum_{\tilde{w}_{S,\{\nzero\}}} \left[ \prod_{s\in S} P[D^{(s)}({\bf v}_{\nzero T}( \tilde{w}_{\{\nzero\},T:2}),{\bf v}_{\nzero {\bar T}}( w_{\{\nzero\},{\bar T}:2}),{\bf x}_s(\tilde{w}_{s,\nzero}),{\bf y}_\nr)]\cdot\right.\nonumber\\
 &~~~~~~~~~~~~~~~~~~~~~~~~~~~~~~\left.\prod_{s\in {\bar S}} P[D^{(s)}({\bf v}_{\nzero T}( \tilde{w}_{\{\nzero\},T:2}),{\bf v}_{\nzero {\bar T}}( w_{\{\nzero\},{\bar T}:2}),{\bf x}_s(w_{s,\nzero}),{\bf y}_\nr)]\right]\\
 \leq& m\epsilon + \sum_{T,S\in {\cal B}} 2^{n(R_{\{\nzero\},T}+R_{S,\{\nzero\}})}\cdot 2^{-n(\sum_{s\in S}\Delta_{s,n}I(V_{\nzero T} ^{(s)},X_s^{(s)};Y_\nr^{(s)},V_{\nzero {\bar T}}^{(s)}) +\sum_{s\in {\bar S}}\Delta_{s,n}I(V_{\nzero T} ^{(s)};Y_\nr^{(s)}, V_{\nzero {\bar T}}^{(s)} |X_{s}^{(s)})-\epsilon'')}\label{eq:PMABC:GC:6}
\end{align}
The total cardinality of the case (${\tilde w}_{\{\nzero\},T:2} \neq w_{\{\nzero\},T:2})$ is bounded by $2^{nR_{\{\nzero\},T}}$ since ${\tilde w}_{\{\nzero\},T:2}$ is uniquely specified with ${\tilde w}_{\{\nzero\},T}$ if $w_{\{\nzero\},{\bar T}:2}$ is given.
 Also,
\begin{align}
 P[ E_{\{\nr\},\{i\}}^{(m+1)}|\cap_{j=1}^m {\bar E}_{\{\nzero,j\},\{\nr\}}^{(j)}] \leq& P[\bar{D}^{(m+1)}({\bf u}_i (w_{\nr_i}),{\bf y}_i)] + P[\cup_{\tilde{w}_{\nr_i}\in C^i_R(w_{i,\nzero})} D^{(m+1)}({\bf u}_i(\tilde{w}_{\nr_i}),{\bf y}_i)]\\
 \leq& \epsilon + |C^i_R(w_{i,\nzero})|\cdot 2^{-n(\Delta_{m+1,n} I(U_i^{(m+1)};Y_i^{(m+1)}) - 3\epsilon)}
\label{eq:PMABC:GC:7}
\end{align}
\begin{align}
P[ E_{\{\nzero\},\{i\}}^{(m+1)}|(\cap_{j=1}^m {\bar E}_{\{\nzero,j\},\{\nr\}}^{(j)}) \cap \bar{ E}_{\{\nr\},\{i\}}^{(m+1)}] \leq& \sum_{j\neq i} P[\bar{D}^{(j)}({\bf v}_{\nzero i} (w_{\nzero,i:2}),{\bf y}_i)] +\nonumber\\
 & P\left[\cup_{\tilde{w}_{\nzero,i:2}\in B^i_R(w_{\nzero,i:1})} \left(\cap_{j\neq i} D^{(j)}({\bf v}_{\nzero i}(\tilde{w}_{\nzero,i:2}),{\bf y}_i)\right)\right]\\
 \leq& (m-1)\epsilon + |B^i_R(w_{\nzero,i:1})|\cdot 2^{-n(\sum_{j\neq i} \Delta_{j,n} I(V_{\nzero i}^{(j)};Y_i^{(j)}) - \epsilon'')}\label{eq:PMABC:GC:8}
\end{align}
\begin{align}
 P[E_{\{\nr\},\{\nzero\}}^{(m+1)} | \cap_{j=1}^m {\bar E}_{\{\nzero,j\},\{\nr\}}^{(j)}] \leq &P[\bar{D}^{(m+1)}({\bf u}_{\cal B}(w_{\nr_{\cal B}}),{\bf y}_\nzero)] + \nonumber\\
 & \sum_{S\subseteq{\cal B}} P[\cup_{\tilde{w}_{S,\{\nzero\}} \not= w_{S,\{\nzero\}}} D^{(m+1)}({\bf u}_S(\tilde{w}_{\nr_S}),{\bf u}_{\bar S}(w_{\nr_{\bar S}}),{\bf y}_\nzero)]\\
  \leq& \epsilon + \sum_{S\subseteq {\cal B}} 2^{nR_{S,\{\nzero\}}} 2^{-n\cdot\Delta_{m+1,n}(I(U_S^{(m+1)};Y_\nzero^{(m+1)},U_{\bar S}^{(m+1)})-\epsilon')} \label{eq:PMABC:GC:9}
\end{align}
The total cardinality of the case (${\tilde w}_{\nr_S} \neq w_{\nr_S})$ is bounded by $2^{nR_{S,\{\nzero\}}}$ since ${\tilde w}_{\nr_S}$ is uniquely specified with ${\tilde w}_{S,\{\nzero\}}$ if $w_{{\bar S},\{\nzero\}}$ is given.

Since $\epsilon > 0$ is arbitrary, with the conditions of Theorem \ref{theorem:PMABC-NR}, a proper choice of $\{R_{\nzero,i:1}\}$ and the AEP property, we can make the right hand sides of \eqref{eq:PMABC:GC:6}, \eqref{eq:PMABC:GC:7}, \eqref{eq:PMABC:GC:8} and \eqref{eq:PMABC:GC:9} tend
to 0 as $n \rightarrow \infty$.  By
Fenchel-Bunt's theorem in \cite{Hiriart:2001}, it is sufficient to
restrict $|{\cal Q}| \leq 2^{2m}+2^m$ since $2^{2m}$ inequalities are from \eqref{eq:PMABC:GC:1} and  $2^m$ inequalities are from \eqref{eq:PMABC:GC:4}.
\end{proof}

%%%%%%%%%%%%%%%%%%%%%%%%%%%%%%%%%%%%%%%%%%%%%%%%%%%%%%

\section{Proof of Theorem \ref{theorem:PMABC-NRC}}

%%%%%%%%%%%%%%%%%%%%%%%%%%%%%%%%%%%%%%%%%%%%%%%%%%%%%%
\label{app:PMABC-NRC}

\begin{proof}
{\em Random code generation: }For simplicity
of exposition only, we take $|{\cal Q}| = 1$. For all $i\in [1,m]$, nodes $\nzero$ and $i$ divide $w_{\nzero,i }$ and $w_{i,\nzero}$ into $K$ blocks, then nodes $\nzero$ and $i$ have message set $\{w_{\nzero,i|(1)},w_{\nzero,i|(2)},\cdots,w_{\nzero,i|(K)}\}$ and $\{w_{i,\nzero|(1)},w_{i,\nzero|(2)},\cdots,w_{i,\nzero|(K)}\}$, respectively. Then we generate a partition of ${\cal S}_{\nzero,i}$ randomly by independently assigning every index $w_{\nzero,i|(k)} \in {\cal S}_{\nzero,i}$ to a set ${\cal S}_{\nzero,i}(h)$, with a uniform distribution over the indices $h \in
\{0, \ldots, \lfloor 2^{nR_{\nzero,i:1}} \rfloor - 1\} \eqdef {\cal S}_{\nzero,i:1}$. We denote by $s_{\nzero,i}(w_{\nzero,i|(k)})$
the index $h$ of ${\cal S}_{\nzero,i}(h)$ to which $w_{\nzero,i|(k)}$  belongs.

\begin{enumerate}
\item Phase $j$ ($\in [1,m]$): Generate random $(n\cdot\Delta_{j,n})$-length sequences
\begin{itemize}
  \item ${\bf v}^{(j)}_{\nzero i}(w_{\nzero,i:2})$ with $p^{(j)}({\bf v}_{\nzero i})$, $w_{\nzero,i:2}  \in \{0,1,\cdots,\lfloor 2^{nR_{\nzero,i:2}} \rfloor-1\} \eqdef  {\cal S}_{\nzero,i:2}$, where
  \begin{align}
p^{(j)}({\bf v}_{\nzero i}) =\left\{
              \begin{array}{ll}
                \frac{1}{\|A^{(j)}(V_{\nzero i}) \|}, & {\bf v}_{\nzero i}\in A^{(j)}(V_{\nzero i}) \\
                0, & \hbox{otherwise.}
              \end{array}
            \right.
\end{align}
For convenience of analysis we define $R_{\nzero,i:3} = \sum_{j\in {\cal J}_i} \Delta_j I(V_{\nzero i}^{(j)};Y_i^{(j)}|V_{j 2}^{(j)}) + \sum_{j\not \in {\cal J}_i} \Delta_j I(V_{\nzero i}^{(j)};Y_i^{(j)})$ and take $R_{\nzero,i:2} = R_{\nzero,i:1} + R_{\nzero,i:3} -\epsilon'$. Then we define bin $B_h^{\nzero i} \eqdef \{w_{\nzero,i:2} | w_{\nzero,i:2} \in [h\cdot\lfloor2^{n(R_{\nzero,i:3}+R_{\nzero,i:1}-R_{\nzero,i}-\epsilon')}\rfloor , (h+1)\cdot\lfloor2^{n(R_{\nzero,i:3}+R_{\nzero,i:1}-R_{\nzero,i}-\epsilon')}\rfloor -1]\}$ for $h\in \{0,1,\cdots,\lfloor 2^{n R_{\nzero,i}}\rfloor -1\}$.
 \item${\bf v}^{(j)}_{j 1}(w_{j,\nzero:1})$ with $p^{(j)}({\bf v}_{j1})$, $w_{j,\nzero:1}  \in \{0,1,\cdots,\lfloor 2^{nR_{j,\nzero:1}} \rfloor-1\} \eqdef  {\cal S}_{j,\nzero:1}$, where
  \begin{align}
p^{(j)}({\bf v}_{j1}) =\left\{
              \begin{array}{ll}
                \frac{1}{\|A^{(j)}(V_{j1}) \|}, & {\bf v}_{j1}\in A^{(j)}(V_{j 1}) \\
                0, & \hbox{otherwise,}
              \end{array}
            \right.
\end{align}
and ${\bf v}^{(j)}_{j 2}(w_{\{j\},{\cal B}:1})$ with $p^{(j)}({\bf v}_{j2})$, $w_{\{j\},{\cal B}:1}  \in \{0,1,\cdots,\lfloor 2^{nR_{\{j\},{\cal B}:1}} \rfloor-1\} \eqdef  {\cal S}_{\{j\},{\cal B}:1}$, where
  \begin{align}
p^{(j)}({\bf v}_{j 2}) =\left\{
              \begin{array}{ll}
                \frac{1}{\|A^{(j)}(V_{j 2}) \|}, & {\bf v}_{j 2}\in A^{(j)}(V_{j 2}) \\
                0, & \hbox{otherwise.}
              \end{array}
            \right.
\end{align}
and $R_{j,\nzero:1} = \Delta_j I(V_{j1}^{(j)};Y_\nr^{(j)}) -4\epsilon$ , $R_{\{j\},{\cal B}:1} = R_{\{j\},{\cal B}} + \Delta_j I(V_{j2}^{(j)};Y_{{\cal I}_j^{\min}}^{(j)})$ \\$- \Delta_{m+1} I(Y_j^{(m+1)};{\hat Y}_j^{(m+1)}|Y_{{\cal I}_j^{\min}}^{(m+1)})-4\epsilon$. Then we define bin $B_h^{j1} \eqdef \{w_{j,\nzero:1} | w_{j,\nzero:1} \in [h\cdot\lfloor2^{n(\Delta_j I(V_{j1}^{(j)};Y_\nr^{(j)})-R_{j,\nzero}-\epsilon')}\rfloor , (h+1)\cdot\lfloor2^{n(\Delta_j I(V_{j1}^{(j)};Y_\nr^{(j)})-R_{j,\nzero}-\epsilon')}\rfloor -1]\}$ for $h\in \{0,1,\cdots,\lfloor 2^{n R_{j,\nzero}}\rfloor -1\}$. Similarly, $B_h^{j2} \eqdef \{w_{\{j\},{\cal B}:1} | w_{\{j\},{\cal B}:1} \in [h\cdot\lfloor2^{n(\Delta_j I(V_{j2}^{(j)};Y_{{\cal I}_j^{\min}}^{(j)}) - \Delta_{m+1} I(Y_j^{(m+1)};{\hat Y}_j^{(m+1)}|Y_{{\cal I}_j^{\min}}^{(m+1)})-\epsilon')}\rfloor , (h+1)\cdot\lfloor2^{n(\Delta_j I(V_{j2}^{(j)};Y_{{\cal I}_j^{\min}}^{(j)}) - \Delta_{m+1} I(Y_j^{(m+1)};{\hat Y}_j^{(m+1)}|Y_{{\cal I}_j^{\min}}^{(m+1)})-\epsilon')}\rfloor -1]\}$ for $h\in \{0,1,\cdots,\lfloor 2^{n R_{\{j\},{\cal B}}}\rfloor -1\}$.
\end{itemize}

\item Phase $m+1$: Generate random $(n\cdot\Delta_{m+1,n})$-length sequences
\begin{itemize}
  \item ${\bf u}^{(m+1)}_i(w_{\nr_i})$ with $p^{(m+1)}({\bf u}_i)$, $w_{\nr_i}  \in \{0,1,\cdots,\lfloor 2^{nR_{\nr_i}} \rfloor-1\} \eqdef  {\cal S}_{\nr_i}$, where
  \begin{align}
p^{(m+1)}({\bf u}_i) =\left\{
              \begin{array}{ll}
                \frac{1}{\|A^{(m+1)}(U_i) \|}, & {\bf u}_i\in A^{(m+1)}(U_i) \\
                0, & \hbox{otherwise.}
              \end{array}
            \right.
\end{align}
and $R_{\nr_i} = \max\{R_{\nzero,i:1},R_{i,\nzero}\} + (\Delta_{m+1} I(U_i^{(m+1)};Y_i^{(m+1)},{\hat Y}_{{\cal J}_i}^{(m+1)}) -4\epsilon - R_{\nzero,i:1})$.
\end{itemize}
and define bin  $C_h^i \eqdef \{w_{\nr_i} | w_{\nr_i} \in [h\cdot\lfloor2^{n(\Delta_{m+1} I(U_i^{(m+1)};Y_i^{(m+1)},{\hat Y}_{{\cal J}_i}^{(m+1)})-R_{\nzero,i:1}-4\epsilon)}\rfloor ,$ \\$(h+1)\cdot\lfloor2^{n(\Delta_{m+1} I(U_i^{(m+1)};Y_i^{(m+1)},{\hat Y}_{{\cal J}_i}^{(m+1)})-R_{\nzero,i:1}-4\epsilon)}\rfloor -1]\}$ for $h\in \{0,1,\cdots,\lfloor 2^{n\max\{R_{\nzero,i:1},R_{i,\nzero}\}}\rfloor -1\}$.
\end{enumerate}

{\em Encoding: } In slot $k$,
\begin{enumerate}
\item To transmit $(w_{\nzero,1|(k)},\cdots,w_{\nzero,m|(k)})$, node $\nzero$ picks $(w_{\nzero,1:2},\cdots,w_{\nzero,m:2})\in \bigotimes_{i=1}^m B^{\nzero i}_{w_{\nzero,i|(k)}}$ such that \\${\bf v}_{\nzero S}^{(j)}(w_{\{\nzero\} ,S:2})\in A^{(j)}(V_{\nzero S})$, $\forall j\in[1,m],~S\subseteq {\cal B},~|S|>1$. Such a $(w_{\nzero,1:2},\cdots,w_{\nzero,m:2})$ exists with high probability if
\begin{align}
\sum_{i\in S} \left(R_{\nzero,i} -R_{\nzero,i:1}\right) <& \sum_{i\in S} \left(\sum_{j\in {\cal J}_i}\Delta_j I(V_{\nzero i}^{(j)};Y_i^{(j)}|V_{j2}^{(i)}) - \Delta_j I(V_{\nzero i}^{(j)};V_{\nzero S(i)}^{(j)})  + \right. \nonumber\\
&~~~~~~~\left.\sum_{j\not\in {\cal J}_i} \Delta_j I(V_{\nzero i}^{(j)};Y_i^{(j)})- \Delta_j I(V_{\nzero i}^{(j)};V_{\nzero S(i)}^{(j)}) \right)  -\delta(\epsilon)\label{eq:PMABC:DF:GC:13}
\end{align}
for $S\subseteq {\cal B},~|S|>1$ from Lemma \ref{lemma:gbc}.  Then node $\nzero$ finds a ${\bf x}_\nzero^{(i)}$ jointly typical with \\$({\bf v}_{\nzero 1}^{(i)}(w_{\nzero,1:2}),\cdots,{\bf v}_{\nzero m}^{(i)}(w_{\nzero,m:2}))$ and designate it as the codeword corresponding to $(w_{\nzero,1|(k)},\cdots,w_{\nzero,m|(k)})$ for all $i\in[1,m]$. Node $\nzero$ sends ${\bf x}^{(i)}_\nzero(w_{\nzero,1|(k)},\cdots,w_{\nzero,m|(k)})$ during phase $i$.

\item Node $j (\in {\cal B})$ compresses ${\bf y}_j^{(m+1)}$ to $
\hat{\bf y}_j^{(m+1)}(w_{\{j\},{\cal B}|(k)})$ if there exists a $w_{\{j\},{\cal B}|(k)}$ such that \\$({\bf
y}_j^{(m+1)} , \hat{\bf y}_j^{(m+1)}(w_{\{j\},{\cal B}|(k)}))$ is in the jointly typical set at the end of phase $m+1$ in the slot $k$. There exists such a $w_{\{j\},{\cal B}|(k)}$
with high probability if
\begin{align}
R_{\{j\},{\cal B}} = \Delta_{m+1,n} I(Y_j^{(m+1)};\hat{Y}_j^{(m+1)}) + \epsilon \label{eq:PMABC:DF:GC:14}
\end{align}
and $n$ is sufficiently large.

To transmit $(w_{j,\nzero|(k)},w_{\{j\},{\cal B}|(k-1)})$, node $j$ picks  $(w_{j,\nzero:1},w_{\{j\},{\cal B}:1})$ such that \\$({\bf v}_{j 1 }^{(j)}(w_{j,\nzero :1}),{\bf v}_{j2}^{(j)}(w_{\{j\},{\cal B}:1}))\in A^{(j)}(V_{j1} V_{j2})$ and $(w_{j,\nzero:1},w_{\{j\},{\cal B}:1})\in B^{j1}_{w_{j,\nzero|(k)}} \times B^{j2}_{w_{\{j\},{\cal B}|(k-1)}}$. Such a $(w_{j,\nzero:1},w_{\{j\},{\cal B}:1})$ exists with high probability if
\begin{align}
R_{j,\nzero:1} + \Delta_{m+1} I(Y_j^{(m+1)};{\hat Y}_j^{(m+1)}|Y_{{\cal I}_j^{\min}}^{(m+1)}) <& \Delta_j I(V_{j1}^{(j)};Y_\nr^{(j)}) + \Delta_j I(V_{j2}^{(j)};Y_{{\cal I}_j^{\min}}^{(j)}) - \nonumber\\
&\Delta_j I(V_{j1}^{(j)};V_{j2}^{(j)}) -\delta(\epsilon)\label{eq:PMABC:DF:GC:15}
\end{align}
from Lemma \ref{lemma:gbc}.  Then node $j$ finds a ${\bf x}_j^{(j)}$ jointly typical with $({\bf v}_{j1 }^{(j)}(w_{j,\nzero:1}),{\bf v}_{j2}^{(j)}(w_{\{j\},{\cal B}:1}))$ and designate it as the codeword corresponding to $(w_{j,\nzero|(k)},w_{\{j\},{\cal B}|(k-1)})$. Node $j$ sends ${\bf x}^{(j)}_j(w_{j,\nzero|(k)},w_{\{j\},{\cal B}|(k-1)})$ during phase $j$.

\item
Relay $\nr$ estimates $\hat{w}_{\nzero,i|(k)}$ and $\hat{w}_{i,\nzero|(k)}$ after phase $m$ using jointly typical decoding, then constructs $w_{i} =\hat{w}_{\nzero,i:1} \oplus \hat{w}_{i,\nzero|(k)}$. To transmit a pair of messages $(w_1,\cdots,w_m)$, pick a pair $(w_{\nr_1},\cdots,w_{\nr_m})\in \bigotimes_{i=1}^m C^i_{w_i}$ such that ${\bf u}_S^{(m+1)}(w_{\nr_S})\in A^{(m+1)}(U_S), \forall S\subseteq {\cal B},~|S|>1$. Such a $(w_{\nr_1},\cdots,w_{\nr_m})$ exists with high probability if
\begin{align}
\sum_{i\in S} R_{\nzero,i:1} &< \sum_{i\in S} \Delta_{m+1} I(U_i^{(m+1)};Y_i^{(m+1)},{\hat Y}_{{\cal J}_i}^{(m+1)}) - \Delta_{m+1} I(U_i^{(m+1)};U_{S(i)}^{(m+1)})  -|S|\epsilon -\delta(\epsilon) \label{eq:PMABC:DF:GC:16}
\end{align}
for $S\subseteq {\cal B},~|S|>1$ from Lemma \ref{lemma:gbc}. Then the relay finds a ${\bf x}_\nr^{(m+1)}$ jointly typical with \\$({\bf u}_1^{(m+1)}(w_{\nr_1}),\cdots,{\bf u}_m^{(m+1)}(w_{\nr_m}))$ and designate it as the codeword corresponding to $(w_1,\cdots,w_m)$. Relay $\nr$ sends ${\bf x}^{(m+1)}_\nr(w_1,\cdots,w_m)$ during phase $m+1$.
\end{enumerate}

{\em Decoding: } For all $i\in [1,m]$,
\begin{enumerate}
\item Node $\nzero$ estimates $\tilde{w}_{i,\nzero|(k)}$ after phase $m+1$ using jointly typical decoding. Since $w_{i} = w_{\nzero,i:1} \oplus w_{i,\nzero|(k)}$ and $\nzero$ knows $w_{\nzero,i|(k)}$, node $\nzero$ can reduce the number of possible $w_{i|(k)}$.

\item Node $i$ decodes $\tilde{w}_{\{h\},{\cal B}|(k-1)}$ after phase $h$ if there exists a unique $\tilde{w}_{\{h\},{\cal B}|(k-1)}$ such that \\$({\bf v}_{h2}^{(h)}(\tilde{w}_{\{h\},{\cal B}|(k-1)}),{\bf y}_i^{(h)})\in A^{(h)}(V_{h2} Y_i)$ and $({\bf y}_i^{(m+1)}, {\hat{\bf y}}_h^{(m+1)}(\tilde{w}_{\{h\},{\cal B}|(k-1)})\in A^{(m+1)}(Y_i {\hat Y}_h)$ for $h \in {\cal J}_i$. Then, node $i$ estimates ${\tilde w}_{\nr_i}$ of $k-1$ slot after phase $m+1$ using jointly typical sequences $({\bf u}_\nr^{(m+1)},{\bf y}_i^{(m+1)}, {\bf y}_{{\cal J}_i}^{(m+1)})$. Similar to the case of node $\nzero$, node $i$ can reduce the cardinality of $w_{\nr_i}$ to $2^{n(\Delta_{m+1} I(U_i^{(m+1)};Y_i^{(m+1)},{\hat Y}_{{\cal J}_i}^{(m+1)})-4\epsilon)}$ and node $i$ decodes ${\tilde w}_{\nzero,i:1}$ from the bin index of ${\tilde w}_{\nr_i}$. Then, node $i$ estimates ${\tilde w}_{\nzero,i:2}$ using jointly typical decoding of the sequences $({\bf v}_{\nzero i}^{(j)}({\tilde w}_{\nzero,i:2}),{\bf v}_{j2}^{(j)}(w_{\{j\},{\cal B}:1}),{\bf y}_i^{(j)})$ for all $j\in {\cal J}_i$ and $({\bf v}_{\nzero i}^{(j)}({\tilde w}_{\nzero,i:2}),{\bf y}_i^{(j)})$ for all $j\not\in {\cal J}_i$  . Since node $i$ knows the bin index $s_{{\tilde w}_{\nzero,i}}({\tilde w}_{\nzero,i|(k-1)})$ as ${\tilde w}_{\nzero,i:1}$, it can reduce the cardinality of $w_{\nzero,i:2}$ to $2^{n(\sum_{j\in {\cal J}_i}\Delta_j I(V_{\nzero i}^{(j)};Y_i^{(j)}|V_{j2}^{(j)}) + \sum_{j\not\in {\cal J}_i}\Delta_j I(V_{\nzero i}^{(j)};Y_i^{(j)})- \epsilon')}$. After decoding ${\tilde w}_{\nzero,i:2}$, node $i$ finally decodes ${\tilde w}_{\nzero,i|(k-1)}$ from the bin index of ${\tilde w}_{\nzero,i:2}$.
\end{enumerate}

{\em Error analysis: }
\begin{align}
P[E] &= P[E_{enc} \cup E_{dec}] \leq P[E_{enc}] + P[E_{dec}]\\
     &\leq \epsilon + \sum_{i=1}^m P[E_{\{\nzero\},\{i\}}] + P[E_{\{i\},\{\nzero\}}]
\end{align}
where $E$ is the entire error event, $E_{enc}$ is the set of encoding error events, and $E_{dec}$ is the set of decoding error events. $P[E_{enc}]$ is upper bounded by sufficiently small number $\epsilon$ from \eqref{eq:PMABC:DF:GC:13}, \eqref{eq:PMABC:DF:GC:14}, \eqref{eq:PMABC:DF:GC:15} and \eqref{eq:PMABC:DF:GC:16}. $E_{dec}$ can be separated by individual decoding error events in each link.
Then
$\forall j \in [1,m]$,
\begin{align}
  P[E_{\{\nzero\},\{j\}}]  \leq& P[(\cup_{i=1}^m E_{\{\nzero,i\},\{\nr\}}^{(i)}) \cup (\cup_{h\in{\cal J}_j} E_{\{h\},\{j\}}^{(h)})\cup E_{\{\nr\},\{j\}}^{(m+1)} \cup  E_{\{\nzero\},\{j\}}^{(m+1)}]\\
   \leq & P[\cup_{i=1}^m E_{\{\nzero,i\},\{\nr\}}^{(i)}] + P[\cup_{h\in{\cal J}_j} E_{\{h\},\{j\}}^{(h)} | \cap_{i=1}^m {\bar E}_{\{\nzero,i\},\{\nr\}}^{(i)}] + \nonumber\\
  &P[ E_{\{\nr\},\{j\}}^{(m+1)}|(\cap_{h\in{\cal J}_j} {\bar E}_{\{h\},\{j\}}^{(h)}) \cap (\cap_{i=1}^m {\bar E}_{\{\nzero,i\},\{\nr\}}^{(i)})] + \nonumber\\
  &P[ E_{\{\nzero\},\{j\}}^{(m+1)}|(\cap_{h\in{\cal J}_j} {\bar E}_{\{h\},\{j\}}^{(h)}) \cap (\cap_{i=1}^m {\bar E}_{\{\nzero,i\},\{\nr\}}^{(i)}) \cap \bar{E}_{\{\nr\},\{j\}}^{(m+1)}]\\
  P[E_{\{j\},\{\nzero\}}]  \leq& P[(\cup_{i=1}^m E_{\{\nzero,i\},\{\nr\}}^{(i)}) \cup E_{\{\nr\},\{\nzero\}}^{(m+1)}]\\
   \leq & P[\cup_{i=1}^m E_{\{\nzero,i\},\{\nr\}}^{(i)}] + P[E_{\{\nr\},\{\nzero\}}^{(m+1)} | \cap_{i=1}^m {\bar E}_{\{\nzero,i\},\{\nr\}}^{(i)}]
\end{align}

Now we will show that $P[\cup_{i=1}^m E_{\{\nzero,i\},\{\nr\}}^{(i)}]$, $ P[\cup_{h\in{\cal J}_j} E_{\{h\},\{j\}}^{(h)} | \cap_{i=1}^m {\bar E}_{\{\nzero,i\},\{\nr\}}^{(i)}]$, $P[ E_{\{\nr\},\{j\}}^{(m+1)}|(\cap_{h\in{\cal J}_j} {\bar E}_{\{h\},\{j\}}^{(h)}) \cap (\cap_{i=1}^m {\bar E}_{\{\nzero,i\},\{\nr\}}^{(i)})]$, $P[ E_{\{\nzero\},\{j\}}^{(m+1)}|(\cap_{h\in{\cal J}_j} {\bar E}_{\{h\},\{j\}}^{(h)}) \cap (\cap_{i=1}^m {\bar E}_{\{\nzero,i\},\{\nr\}}^{(i)}) \cap \bar{E}_{\{\nr\},\{j\}}^{(m+1)}]$ and $P[E_{\{\nr\},\{\nzero\}}^{(m+1)} | \cap_{i=1}^m {\bar E}_{\{\nzero,i\},\{\nr\}}^{(i)}]$ tend to zero as $n\rightarrow \infty$. For the convenience of analysis we define $B^{\nzero i}_R(w_{\nzero,i:1})= \cup_{w_{\nzero,i} \in {\cal S}_{\nzero,i}(w_{\nzero,i:1})} B^{\nzero i}_{w_{\nzero,i}}$ and $C^i_R(w_{i,\nzero}) = \cup_{w_{\nzero,i:1} \in {\cal S}_{\nzero,i:1}} C^i_{w_{\nzero,i:1}\oplus w_{i,\nzero}}$. Similarly, $B^{i2}_R(j)= \cup_{w_{\{i\},{\cal B}} \in T_i^j} B^{i2}_{w_{\{i\},{\cal B}}}$ such that \\$T_i^j = \{w_{\{i\},{\cal B}} | ({\bf y}_j^{(m+1)},{\hat{\bf y}}_i^{(m+1)}(w_{\{i\},{\cal B}}))\in A^{(m+1)}(Y_j {\hat Y}_i) \}$.
\begin{align}
 P[\cup_{i=1}^m &E_{\{\nzero,i\},\{\nr\}}^{(i)}] \nonumber\\
 \leq &\sum_{i=1}^m \left(P[\bar{D}^{(i)}({\bf v}_{\nzero {\cal B}}(w_{\{\nzero\},{\cal B}:2}), {\bf v}_{i1}(w_{i,\nzero:1}),{\bf y}_\nr)] +  P[\cup_{{\tilde w}_{i,\nzero:1}\neq w_{i,\nzero:1}} D^{(i)}({\bf v}_{i1}({\tilde w}_{i,\nzero:1}),{\bf y}_\nr)]\right) + \nonumber\\
 & \sum_{S\subseteq {\cal B}} P\left[\cup_{\tilde{w}_{\{\nzero\},S}} (\cap_{i=1}^m D^{(i)}({\bf v}_{\nzero S}( \tilde{w}_{\{\nzero\},S:2}),{\bf v}_{\nzero {\bar S}}(w_{\{\nzero\},{\bar S}:2}),{\bf v}_{i1}(w_{i,\nzero:1}),{\bf y}_\nr)) \right]\\
 &\leq m\cdot\epsilon + \sum_{i=1}^m 2^{n(R_{i,\nzero:1} - \Delta_{i,n}I(V_{i1}^{(i)};Y_\nr^{(i)})+3\epsilon)} +
 \sum_{S \subseteq {\cal B}} 2^{n(R_{\{\nzero\},S} - \sum_{i=1}^m \Delta_{i,n}I(V_{\nzero S} ^{(i)};Y_\nr^{(i)},V_{\nzero {\bar S}}^{(i)} |V_{i1}^{(i)})+\epsilon')}\label{eq:PMABC:DF:GC:8}
\end{align}
The total cardinality of the case (${\tilde w}_{\{\nzero\},S:2} \neq w_{\{\nzero\},S:2})$ is bounded by $2^{nR_{\{\nzero\},S}}$ since ${\tilde w}_{\{\nzero\},S:2}$ is uniquely specified with ${\tilde w}_{\{\nzero\},S}$ if $w_{\{\nzero\},{\bar S}:2}$ is given.
\begin{align}
P[\cup_{h\in{\cal J}_j} E_{\{h\},\{j\}}^{(h)} | \cap_{i=1}^m {\bar E}_{\{\nzero,i\},\{\nr\}}^{(i)}] \leq \sum_{h\in{\cal J}_j} P[E_{\{h\},\{j\}}^{(h)} | \cap_{i=1}^m {\bar E}_{\{\nzero,i\},\{\nr\}}^{(i)}]
\end{align}
Also,
\begin{align}
P[E_{\{h\},\{j\}}^{(h)}| \cap_{i=1}^m {\bar E}_{\{\nzero,i\},\{\nr\}}^{(i)}] &\leq P[{\bar D}^{(h)}({\bf v}_{h2}(w_{h,j}),{\bf y}_j)] + \nonumber\\
&~~~P[\cup_{\tilde{w}_{\{h\},{\cal B}:1}\in B^{h2}_R(j)} D^{(h)}({\bf v}_{h2}(\tilde{w}_{\{h\},{\cal B}:1}),{\bf y}_j) ]\\
&\leq \epsilon + |B_R^{h2}(j)|\cdot 2^{-n(\Delta_{h,n}I(V_{h2}^{(h)};Y_{j}^{(h)})-3\epsilon)}\label{eq:PMABC:DF:GC:9}
\end{align}
and $|B_R^{h2}(j)| = 2^{n(\Delta_{m+1} I(Y_h^{(m+1)};{\hat Y}_h^{(m+1)}|Y_j^{(m+1)}) + \Delta_h I(V_{h2}^{(h)};Y_{{\cal I}_h^{\min}}^{(h)}) - \Delta_{m+1} I(Y_h^{(m+1)};{\hat Y}_h^{(m+1)}|Y_{{\cal I}_h^{\min}}^{(m+1)})-4\epsilon)}$.
\begin{align}
 P[E_{\{\nr\},\{j\}}^{(m+1)}|&(\cap_{h\in{\cal J}_j} {\bar E}_{\{h\},\{j\}}^{(h)}) \cap (\cap_{i=1}^m {\bar E}_{\{\nzero,i\},\{\nr\}}^{(i)})] \nonumber\\
 \leq& P[\bar{D}^{(m+1)}({\bf u}_j (w_{\nr_j}),{\bf y}_j)] + P[\cup_{\tilde{w}_{\nr_j}\in C^j_R(w_{j,\nzero})} D^{(m+1)}({\bf u}_j(\tilde{w}_{\nr_j}),{\bf y}_j,{\hat{\bf y}}_{{\cal J}_j} )]\\
 \leq& \epsilon + |C^j_R(w_{j,\nzero})|\cdot 2^{-n(\Delta_{m+1,n} I(U_j^{(m+1)};Y_j^{(m+1)},{\hat Y}_{{\cal J}_j}^{(m+1)}) - 3\epsilon)}
\label{eq:PMABC:DF:GC:10}
\end{align}
\begin{align}
P[ &E_{\{\nzero\},\{j\}}^{(m+1)}|(\cap_{h\in{\cal J}_j} {\bar E}_{\{h\},\{j\}}^{(h)}) \cap (\cap_{i=1}^m {\bar E}_{\{\nzero,i\},\{\nr\}}^{(i)}) \cap \bar{E}_{\{\nr\},\{j\}}^{(m+1)}] \nonumber\\
\leq& \sum_{i\neq j} P[\bar{D}^{(i)}({\bf v}_{\nzero j} (w_{\nzero,j:2}),{\bf y}_j)] +\nonumber\\
 & P\left[\cup_{\tilde{w}_{\nzero,j:2}\in B^j_R(w_{\nzero,i:1})} \left(\cap_{i\in {\cal J}_j} D^{(i)}({\bf v}_{\nzero j}(\tilde{w}_{\nzero,j:2}),{\bf v}_{i2}(w_{\{i\},{\cal B}:1}),{\bf y}_j) \cap_{i\not\in {\cal J}_j} D^{(i)}({\bf v}_{\nzero j}(\tilde{w}_{\nzero,j:2}),{\bf y}_j) \right)\right]\\
 \leq& (m-1)\epsilon + |B^i_R(w_{\nzero,i:1})|\cdot 2^{-n(\sum_{i\in {\cal J}_j} \Delta_{i,n} I(V_{\nzero j}^{(i)};Y_j^{(i)}|V_{i2}^{(i)}) +  \sum_{i\not \in {\cal J}_j} \Delta_{i,n} I(V_{\nzero j}^{(i)};Y_j^{(i)})- \epsilon'')}\label{eq:PMABC:DF:GC:11}
\end{align}
\begin{align}
 P[E_{\{\nr\},\{\nzero\}}^{(m+1)} | \cap_{j=1}^m {\bar E}_{\{\nzero,j\},\{\nr\}}^{(j)}] \leq &P[\bar{D}^{(m+1)}({\bf u}_{\cal B}(w_{\nr_{\cal B}}),{\bf y}_\nzero)] + \nonumber\\
 & \sum_{S\subseteq{\cal B}} P[\cup_{\tilde{w}_{S,\{\nzero\}} \not= w_{S,\{\nzero\}}} D^{(m+1)}({\bf u}_S(\tilde{w}_{\nr_S}),{\bf u}_{\bar S}(w_{\nr_{\bar S}}),{\bf y}_\nzero)]\\
  \leq& \epsilon + \sum_{S\subseteq {\cal B}} 2^{nR_{S,\{\nzero\}}} 2^{-n\cdot\Delta_{m+1,n}(I(U_S^{(m+1)};Y_\nzero^{(m+1)},U_{\bar S}^{(m+1)})-\epsilon')}\label{eq:PMABC:DF:GC:12}
\end{align}
The total cardinality of the case (${\tilde w}_{\nr_S} \neq w_{\nr_S})$ is bounded by $2^{nR_{S,\{\nzero\}}}$ since ${\tilde w}_{\nr_S}$ is uniquely specified with ${\tilde w}_{S,\{\nzero\}}$ if $w_{{\bar S},\{\nzero\}}$ is given.

Since $\epsilon > 0$ is arbitrary, with the conditions of Theorem \ref{theorem:PMABC-NRC} and the AEP property, we can make the right hand sides of \eqref{eq:PMABC:DF:GC:8}, \eqref{eq:PMABC:DF:GC:9}, \eqref{eq:PMABC:DF:GC:10}, \eqref{eq:PMABC:DF:GC:11} and \eqref{eq:PMABC:DF:GC:12} tend
to 0 as $n \rightarrow \infty$.  By
Carath\'{e}odory theorem in \cite{Hiriart:2001}, it is sufficient to
restrict $|{\cal Q}| \leq 2^{m+1}+ m^2 +m+2$ since $2^m$ inequality is from \eqref{eq:PMABC:DF:GC:1}, $m$ inequalities are from \eqref{eq:PMABC:DF:GC:2}, $2^m$ inequalities are from \eqref{eq:PMABC:DF:GC:4}, $m$ inequalities are from \eqref{eq:PMABC:DF:GC:6} and $m(m-1)$ inequalities are from \eqref{eq:PMABC:DF:GC:7}.
\end{proof}

%%%%%%%%%%%%%%%%%%%%%%%%%%%%%%%%%%%%%%%%%%%%%%%%%%%%%%

\section{Proof of Theorem \ref{theorem:FTDBC-NR}}

%%%%%%%%%%%%%%%%%%%%%%%%%%%%%%%%%%%%%%%%%%%%%%%%%%%%%%
\label{app:FTDBC-NR}

\begin{proof}
{\em Random code generation: } For all $i\in [1,m]$, first we generate a partition of ${\cal S}_{\nzero,i}$  randomly by independently assigning every index $w_{\nzero,i} \in {\cal S}_{\nzero,i}$ to a set ${\cal S}_{\nzero,i}(j)$, with a uniform distribution over the indices $j \in
\{0, \ldots, \lfloor 2^{nR_{\nzero,i:1}} \rfloor - 1\} \eqdef {\cal S}_{\nzero,i:1}$. We denote by $s_{\nzero,i}(w_{\nzero,i})$
the index $j$ of ${\cal S}_{\nzero,i}(j)$ to which $w_{\nzero,i}$  belongs. Similarly, we generate a partition of ${\cal S}_{i,\nzero}$ with the cardinality $2^{nR_{i,\nzero:1}}$.
\begin{enumerate}
\item Phase 1: Generate random $(n\cdot\Delta_{1,n})$-length sequences
\begin{itemize}
  \item ${\bf v}^\pa_{\nzero i}(w_{\nzero,i:2})$ with $p^\pa({\bf v}_{\nzero i})$, $w_{\nzero,i:2}  \in \{0,1,\cdots,\lfloor 2^{nR_{\nzero,i:2}} \rfloor-1\} \eqdef  {\cal S}_{\nzero,i:2}$, where
  \begin{align}
p^\pa({\bf v}_{\nzero i}) =\left\{
              \begin{array}{ll}
                \frac{1}{\|A^\pa(V_{\nzero i}) \|}, & {\bf v}_{\nzero i}\in A^\pa(V_{\nzero i}) \\
                0, & \hbox{otherwise.}
              \end{array}
            \right.
\end{align}
\end{itemize}
and $R_{\nzero,i:2} = R_{\nzero,i:1} + \Delta_1 I(V_{\nzero i}^\pa;Y_i^\pa) -4\epsilon$. Then we define bin  $B_j^{i} \eqdef \{w_{\nzero,i:2} | w_{\nzero,i:2} \in [j\cdot\lfloor2^{n(\Delta_1 I(V_{\nzero i}^\pa;Y_i^\pa)+R_{\nzero,i:1}-R_{\nzero,i}-4\epsilon)}\rfloor, (j+1)\cdot\lfloor2^{n(\Delta_1 I(V_{\nzero i}^\pa;Y_i^\pa)+R_{\nzero,i:1}-R_{\nzero,i}-4\epsilon)}\rfloor -1]\}$ for $j\in \{0,1,\cdots,\lfloor 2^{n R_{\nzero,i}}\rfloor -1\}$.

\item Phase $i+1$: Generate random $(n\cdot\Delta_{i+1,n})$-length sequences
\begin{itemize}
  \item ${\bf x}^{(i+1)}_i(w_{i,\nzero})$ i.i.d. with $p^{(i+1)}(x_i)$, $w_{i,\nzero} \in {\cal S}_{i,\nzero}$
\end{itemize}
\item Phase $m+2$: Generate random $(n\cdot\Delta_{m+2,n})$-length sequences
\begin{itemize}
  \item ${\bf u}^{(m+2)}_i(w_{\nr_i})$ with $p^{(m+2)}({\bf u}_i)$, $w_{\nr_i}  \in \{0,1,\cdots,\lfloor 2^{nR_{\nr_i}} \rfloor-1\} \eqdef  {\cal S}_{\nr_i}$, where
  \begin{align}
p^{(m+2)}({\bf u}_i) =\left\{
              \begin{array}{ll}
                \frac{1}{\|A^{(m+2)}(U_i) \|}, & {\bf u}_i\in A^{(m+2)}(U_i) \\
                0, & \hbox{otherwise.}
              \end{array}
            \right.
\end{align}
and $R_{\nr_i} = \max\{R_{\nzero,i:1},R_{i,\nzero:1}\} + (\Delta_{m+2} I(U_i^{(m+2)};Y_i^{(m+2)}) -4\epsilon - R_{\nzero,i:1})$.
\end{itemize}
and define bin  $C_j^i \eqdef \{w_{\nr_i} | w_{\nr_i} \in [j\cdot\lfloor2^{n(\Delta_{m+2} I(U_i^{(m+2)};Y_i^{(m+2)})-R_{\nzero,i:1}-4\epsilon)}\rfloor ,$ \\$(j+1)\cdot\lfloor2^{n(\Delta_{m+2} I(U_i^{(m+2)};Y_i^{(m+2)})-R_{\nzero,i:1}-4\epsilon)}\rfloor -1]\}$ for $j\in \{0,1,\cdots,\lfloor 2^{n\max\{R_{\nzero,i:1},R_{i,\nzero:1}\}}\rfloor -1\}$.
\end{enumerate}

{\em Encoding: }  For all $i\in [1,m]$,
\begin{enumerate}
\item During phase 1, to transmit $(w_{\nzero,1},\cdots,w_{\nzero,m})$, node $\nzero$ picks  $(w_{\nzero,1:2},\cdots,w_{\nzero,m:2})\in \bigotimes_{i=1}^m B^i_{w_{\nzero,i}}$ such that ${\bf v}_{\nzero S}^\pa(w_{\{\nzero\} ,S:2})\in A^\pa(V_{\nzero S})$, $\forall S\in {\cal B}~,~ |S|>1$. Such a $(w_{\nzero,1:2},\cdots,w_{\nzero,m:2})$ exists with high probability if
\begin{align}
\sum_{i\in S} \left(R_{\nzero,i} -R_{\nzero,i:1}\right) &< \sum_{i\in S} \Delta_1 I(V_{\nzero i}^\pa;Y_i^\pa) - \Delta_1 I(V_{\nzero i}^\pa;V_{\nzero S(i)}^\pa)  -|S|\epsilon -\delta(\epsilon) \label{eq:FTDBC:GC:12}
\end{align}
for $S\subseteq {\cal B}~,~ |S|>1$ from Lemma \ref{lemma:gbc}.  Then node $\nzero$ finds a ${\bf x}_\nzero^\pa$ jointly typical with \\$({\bf v}_{\nzero 1}^\pa(w_{\nzero,1:2}),\cdots,{\bf v}_{\nzero m}^\pa(w_{\nzero,m:2}))$ and designate it as the codeword corresponding to $(w_{\nzero,1},\cdots,w_{\nzero,m})$. Node $\nzero$ sends ${\bf x}^\pa_\nzero(w_{\nzero,1},\cdots,w_{\nzero,m})$ during phase 1.

\item
During phase $i+1$ encoder of terminal node $i$ sends the codeword ${\bf x}^{(i+1)}_i(w_{i,\nzero})$.

\item
Relay $\nr$ estimates $\hat{w}_{\nzero,i}$ and $\hat{w}_{i,\nzero}$ after phase $m+1$ using jointly typical decoding, then constructs $w_{i} =\hat{w}_{\nzero,i:1} \oplus \hat{w}_{i,\nzero:1}$. To transmit a pair of messages $(w_1,\cdots,w_m)$, pick a pair $(w_{\nr_1},\cdots,w_{\nr_m})\in \bigotimes_{i=1}^m C^i_{w_i}$ such that ${\bf u}_S^{(m+2)}(w_{\nr_S})\in A^{(m+2)}(U_S)$, $\forall S\in {\cal B},~ |S|>1$. Such a $(w_{\nr_1},\cdots,w_{\nr_m})$ exists with high probability if
\begin{align}
\sum_{i\in S} R_{\nzero,i:1} &< \sum_{i\in S} \Delta_{m+2} I(U_i^{(m+2)};Y_i^{(m+2)}) - \Delta_{m+2} I(U_i^{(m+2)};U_{S(i)}^{(m+2)})  -|S|\epsilon -\delta(\epsilon) \label{eq:FTDBC:GC:13}
\end{align}
for $S\subseteq {\cal B},~|S|>1$. from Lemma \ref{lemma:gbc}. Then the relay finds a ${\bf x}_\nr^{(m+2)}$ jointly typical with \\$({\bf u}_1^{(m+2)}(w_{\nr_1}),\cdots,{\bf u}_m^{(m+2)}(w_{\nr_m}))$ and designate it as the codeword corresponding to $(w_1,\cdots,w_m)$. Relay $\nr$ sends ${\bf x}^{(m+2)}_\nr(w_1,\cdots,w_m)$ during phase $m+2$.
\end{enumerate}

{\em Decoding: } For all $i\in [1,m]$,
\begin{enumerate}
\item Node $\nzero$ estimates $\tilde{w}_{i,\nzero:1}$ after phase $m+2$ using jointly typical decoding. Since $w_{i} = w_{\nzero,i:1} \oplus w_{i,\nzero:1}$ and $\nzero$ knows $w_{\nzero,i}$, node $\nzero$ can reduce the number of possible $w_{i}$. Then node $\nzero$ decodes $\tilde{w}_{i,\nzero}$ if there exists a unique $\tilde{w}_{i,\nzero} \in {\cal S}_{i,\nzero}({\tilde w}_{i,\nzero:1})$ such that $({\bf x}_i^{(i+1)}({\tilde w}_{i,\nzero}),{\bf y}_\nzero^{(i+1)})\in A^{(i+1)}(X_i Y_\nzero)$.

\item Node $i$ estimates ${\tilde w}_{\nr_i}$ after phase $m+2$ using jointly typical decoding. Similar to the case of node $\nzero$, node $i$ can reduce the cardinality of $w_{\nr_i}$ to $2^{n(\Delta_{m+2} I(U_i^{(m+2)};Y_i^{(m+2)})-4\epsilon)}$ and node $i$ decodes ${\tilde w}_{\nzero,i:1}$ from the bin index of ${\tilde w}_{\nr_i}$. Then, node $i$ estimates ${\tilde w}_{\nzero,i:2}$ using jointly typical decoding of the sequence $({\bf v}_{\nzero i}^\pa({\tilde w}_{\nzero,i:2}),{\bf y}_i^\pa)$. Since node $i$ knows the bin index $s_{{\tilde w}_{\nzero,i}}({\tilde w}_{\nzero,i})$ as ${\tilde w}_{\nzero,i:1}$, it can reduce the cardinality of $w_{\nzero,i:2}$ to $2^{n(\Delta_1 I(V_{\nzero i}^\pa;Y_i^\pa) - 4 \epsilon)}$. After decoding ${\tilde w}_{\nzero,i:2}$, node $i$ finally decodes ${\tilde w}_{\nzero,i}$ from the bin index of ${\tilde w}_{\nzero,i:2}$.
\end{enumerate}

{\em Error analysis: }
\begin{align}
P[E] &= P[E_{enc} \cup E_{dec}] \leq P[E_{enc}] + P[E_{dec}]\\
     &\leq \epsilon + \sum_{i=1}^m P[E_{\{\nzero\},\{i\}}] + P[E_{\{i\},\{\nzero\}}]
\end{align}
where $E$ is the entire error event, $E_{enc}$ is the set of encoding error events, and $E_{dec}$ is the set of decoding error events. $P[E_{enc}]$ is upper bounded by sufficiently small number $\epsilon$ from \eqref{eq:FTDBC:GC:12} and \eqref{eq:FTDBC:GC:13}. $E_{dec}$ can be separated by individual decoding error events in each link.
Then $\forall i\in [1,m]$,
\begin{align}
  P[E_{\{\nzero\},\{i\}}] & \leq P[E_{\{\nzero\},\{\nr\}}^\pa \cup E_{\{\nr\},\{i\}}^{(m+2)} \cup E_{\{\nzero\},\{i\}}^{(m+2)}]\\
  & \leq P[E_{\{\nzero\},\{\nr\}}^\pa] + P[E_{\{\nr\},\{i\}}^{(m+2)} | \bar{E}_{\{\nzero\},\{\nr\}}^\pa ] +  P[E_{\{\nzero\},\{i\}}^{(m+2)} | \bar{E}_{\{\nzero\},\{\nr\}}^\pa \cap \bar{E}_{\{\nr\},\{i\}}^{(m+2)} ]\\
  P[E_{\{i\},\{\nzero\}}] & \leq P[E_{\{i\},\{\nr\}}^{(i+1)} \cup E_{\{\nr\},\{\nzero\}}^{(m+2)} \cup E_{\{i\},\{\nzero\}}^{(m+2)}]\\
  & \leq P[E_{\{i\},\{\nr\}}^{(i+1)}] + P[ E_{\{\nr\},\{\nzero\}}^{(m+2)} | \bar{E}_{\{i\},\{\nr\}}^{(i+1)}] + P[ E_{\{i\},\{\nzero\}}^{(m+2)} | \bar{E}_{\{i\},\{\nr\}}^{(i+1)}\cap \bar{E}_{\{\nr\},\{\nzero\}}^{(m+2)}]
\end{align}
Also, for the convenience of analysis we define $B^i_R(w_{\nzero,i:1})= \cup_{w_{\nzero,i} \in {\cal S}_{\nzero,i}(w_{\nzero,i:1})} B^i_{w_{\nzero,i}}$ and $C^i_R(w_{i,\nzero:1}) = \cup_{w_{\nzero,i:1} \in {\cal S}_{\nzero,i:1}} C^i_{w_{\nzero,i:1}\oplus w_{i,\nzero:1}}$. Then,
\begin{align}
 P[E_{\{\nzero\},\{\nr\}}^\pa] \leq&  P[\bar{D}^\pa({\bf v}_{\nzero {\cal B}} (w_{\{\nzero\},{\cal B}}),{\bf y}_\nr)] + \sum_{S\subseteq {\cal B}} P[\cup_{\tilde{w}_{\{\nzero\},S}} D^\pa({\bf v}_{\nzero S}(\tilde{w}_{\{\nzero\},S:2}),{\bf v}_{\nzero {\bar S}}(\tilde{w}_{\{\nzero\},{\bar S}:2}),{\bf y}_\nr)]\\
 \leq& \epsilon + \sum_{S\subseteq {\cal B}} 2^{n(R_{\{\nzero\},S}- \Delta_{1,n}(I(V_{\nzero S}^\pa;Y_\nr^\pa V_{\nzero {\bar S}})+\epsilon'))} \label{eq:FTDBC:GC:6}
\end{align}
The total cardinality of the case (${\tilde w}_{\{\nzero\},S:2} \neq w_{\{\nzero\},S:2})$ is bounded by $2^{nR_{\{\nzero\},S}}$ since ${\tilde w}_{\{\nzero\},S:2}$ is uniquely specified with ${\tilde w}_{\{\nzero\},S}$ if $w_{\{\nzero\},{\bar S}:2}$ is given.
\begin{align}
 P[E_{\{\nr\},\{i\}}^{(m+2)} | \bar{E}_{\{\nzero\},\{\nr\}}^\pa ] \leq& P[\bar{D}^{(m+2)}({\bf u}_i (w_{\nr_i}),{\bf y}_i)] + P[\cup_{\tilde{w}_{\nr_i}\in C^i_R(w_{i,\nzero:1})} D^{(m+2)}({\bf u}_i(\tilde{w}_{\nr_i}),{\bf y}_i)]\\
 \leq& \epsilon + |C^i_R(w_{i,\nzero:1})|\cdot 2^{-n(\Delta_{m+2,n} I(U_i^{(m+2)};Y_i^{(m+2)}) - 3\epsilon)}\label{eq:FTDBC:GC:7}\\
 P[E_{\{\nzero\},\{i\}}^{(m+2)} | \bar{E}_{\{\nzero\},\{\nr\}}^\pa \cap \bar{E}_{\{\nr\},\{i\}}^{(m+2)} ] \leq& P[\bar{D}^\pa({\bf v}_{\nzero i} (w_{\nzero,i:2}),{\bf y}_i)] +\nonumber\\
 & P[\cup_{\tilde{w}_{\nzero,i:2}\in B^i_R(w_{\nzero,i:1})} D^\pa({\bf v}_{\nzero i}(\tilde{w}_{\nzero,i:2}),{\bf y}_i)]\\
 \leq& \epsilon + |B^i_R(w_{\nzero,i:1})|\cdot 2^{-n(\Delta_{1,n} I(V_{\nzero i}^\pa;Y_i^\pa) - 3\epsilon)}\label{eq:FTDBC:GC:8}\\
 P[E_{\{i\},\{\nr\}}^{(i+1)}] \leq&  P[\bar{D}^{(i+1)}({\bf x}_i (w_{i,\nzero}),{\bf y}_\nr)] + P[\cup_{\tilde{w}_{i,\nzero} \not= w_{i,\nzero}} D^{(i+1)}({\bf x}_i(\tilde{w}_{i,\nzero}),{\bf y}_\nr)]\\
 \leq& \epsilon + 2^{n(R_{i,\nzero}- \Delta_{i+1,n}(I(X_i^{(i+1)};Y_\nr^{(i+1)})+3\epsilon))}\label{eq:FTDBC:GC:9} \\
 P[ E_{\{\nr\},\{\nzero\}}^{(m+2)} | \bar{E}_{\{i\},\{\nr\}}^{(i+1)}] \leq &P[\bar{D}^{(m+2)}({\bf u}_{\cal B}(w_{\nr_{\cal B}}),{\bf y}_\nzero)] + \nonumber\\
 & \sum_{S\subseteq{\cal B}} P[\cup_{\tilde{w}_{S,\{\nzero\}} \not= w_{S,\{\nzero\}}} D^{(m+2)}({\bf u}_S(\tilde{w}_{\nr_S}),{\bf u}_{\bar S}(w_{\nr_{\bar S}}),{\bf y}_\nzero)]\\
  \leq& \epsilon + \sum_{S\subseteq {\cal B}} 2^{nR_{S,\{\nzero\}:1}} 2^{-n\cdot\Delta_{m+2,n}(I(U_S^{(m+2)};Y_\nzero^{(m+2)},U_{\bar S}^{(m+2)})-\epsilon')}\label{eq:FTDBC:GC:10}
\end{align}
The total cardinality of the case (${\tilde w}_{\nr_S} \neq w_{\nr_S})$ is bounded by $2^{nR_{S,\{\nzero\}}:1}$ since ${\tilde w}_{\nr_S}$ is uniquely specified with ${\tilde w}_{S,\{\nzero\}:1}$ if $w_{{\bar S},\{\nzero\}:1}$ is given.
\begin{align}
 P[E_{\{i\},\{\nzero\}}^{(m+2)} | \bar{E}_{\{i\},\{\nr\}}^{(i+1)}\cap \bar{E}_{\{\nr\},\{\nzero\}}^{(m+2)}] \leq& P[\bar{D}^{(i+1)}({\bf x}_i (w_{i,\nzero}),{\bf y}_\nzero)] + \nonumber\\
 & P[\cup_{\tilde{w}_{i,\nzero} \not= w_{i,\nzero}} D^{(i+1)}({\bf x}_i(\tilde{w}_{i,\nzero}),{\bf y}_\nzero) \cap ({\tilde w}_{i,\nzero} \in S_{i,\nzero}(w_{i,\nzero:1}))]\\
 \leq & \epsilon +  2^{n(R_{i,\nzero}- \Delta_{i+1,n}(I(X_i^{(i+1)};Y_\nzero^{(i+1)})- R_{i,\nzero:1}+3\epsilon))}\label{eq:FTDBC:GC:11}
\end{align}

Since $\epsilon > 0$ is arbitrary, with the conditions of Theorem \ref{theorem:FTDBC-NR}, proper choices of $\{R_{\nzero,i:1}\}$ and $\{R_{i,\nzero:1}\}$ and the AEP property, we can make the right hand sides of \eqref{eq:FTDBC:GC:6}, \eqref{eq:FTDBC:GC:7}, \eqref{eq:FTDBC:GC:8}, \eqref{eq:FTDBC:GC:9}, \eqref{eq:FTDBC:GC:10} and \eqref{eq:FTDBC:GC:11} tend
to 0 as $n \rightarrow \infty$.
\end{proof}

%%%%%%%%%%%%%%%%%%%%%%%%%%%%%%%%%%%%%%%%%%%%%%%%%%%%%%

\section{Apply extended Marton's bound to the Gaussian channel}

%%%%%%%%%%%%%%%%%%%%%%%%%%%%%%%%%%%%%%%%%%%%%%%%%%%%%%
\label{app:evaluation}

For convenience of analysis, we define a function as follow:
\begin{align}
f ({\bf a},{\bf b},{\bf c}) &\eqdef \sum_{i=1}^m a_i b_i c_i
\end{align}
where ${\bf a},{\bf b},{\bf c}$ are vectors length $m$. Similar to Costa's setup in \cite{Costa:1983}, we will apply the broadcast scheme in the previous section to the Gaussian channel. We use the following relationship between input signals and auxiliary vairiables of transmitter $\na$:
\begin{align}
{\bf U}_\na & = {\bf{\Lambda}}_\na {\bf V}_\na\\
X_\na &= \sum_{i=1}^m V_{\na i}
\end{align}
where ${\bf U}_\na$ and ${\bf V}_\na$ are vectors length $m$ and ${\bf \Lambda}_\na \in \reals^{m\times m}$. Also $V_{\na i}'s$ follow the distributions $V_{\na i} \sim {\cal C N}(0,\beta_{\na i} P)$, where $(0\leq \beta_{\na i}\leq 1)$, $\sum_{i=1}^m \beta_{\na i} = 1$ and $V_{\na i}'s$ are independent. We define ${\bar \lambda}_{\na i}$ as the $i^{th}$ row vector of ${\bf \Lambda}_\na $, i.e., ${\bar \lambda}_{\na i} = (\lambda_{\na i1},\lambda_{\na i2},\cdots, \lambda_{\na im})$ and ${\bar \beta}_{\na} = (\beta_{\na 1},\cdots,\beta_{\na m})$.

Now we define some useful functions to evaluate mutual information terms in the Gaussian broadcast channel. We consider two scenarios; the single transmitter case and the double transmitter case. The latter one is only for the multiple access period of the PMABC protocols. We assume that there are two senders $\na$, $\nb$ and two receivers $\nc$ and $\nd$. Node $\na$ has $m$ independent messages and corresponding $m$ auxiliary random variables $U_{\na i}'s$. Similarly, Node $\nb$ constructs $U_{\nb i}'s$.

\begin{enumerate}
\item Single transmitter (when node $\na$ is only broadcasting)
\begin{align}
I(&U_{\na i};Y_\nc) = C\left(\frac{|h_{\na,\nc}|^2 P_\na f^2({\bar \lambda}_{\na i},{\bar 1},{\bar \beta}_\na)}{|h_{\na,\nc}|^2P_\na (f({\bar \lambda}_{\na i},{\bar \lambda}_{\na i},{\bar \beta}_\na)-f^2({\bar \lambda}_{\na i},{\bar 1},{\bar \beta}_\na))  +f({\bar \lambda}_{\na i},{\bar \lambda}_{\na i},{\bar \beta}_\na)} \right)\nonumber \\
&\eqdef C_{B}(P_\na,h_{\na,\nc},{\bar \lambda}_{\na i},{\bar \beta}_\na)\\
I(&U_{\na i};U_{\na j}) = C\left(\frac{f^2({\bar \lambda}_{\na i},{\bar \lambda}_{\na j},{\bar \beta}_\na)}{f({\bar \lambda}_{\na i},{\bar \lambda}_{\na i},{\bar \beta}_\na)f({\bar \lambda}_{\na j},{\bar \lambda}_{\na j},{\bar \beta}_\na)-f^2({\bar \lambda}_{\na i},{\bar \lambda}_{\na j} ,{\bar \beta}_\na)} \right)\nonumber \\
&\eqdef C_{BE}({\bar \lambda}_{\na i},{\bar \lambda}_{\na j},{\bar \beta}_\na)\\
I(&U_{\na i};U_{\na j},U_{\na k}) = C\left(\frac{K_{BE2}({\bar \lambda}_{\na i},{\bar \lambda}_{\na j},{\bar \lambda}_{\na k},{\bar \beta}_\na)}{K_{BE1}({\bar \lambda}_{\na i},{\bar \lambda}_{\na j},{\bar \lambda}_{\na k},{\bar \beta}_\na)-K_{BE2}({\bar \lambda}_{\na i},{\bar \lambda}_{\na j},{\bar \lambda}_{\na k},{\bar \beta}_\na)} \right)\nonumber \\
&\eqdef C_{BE2}({\bar \lambda}_{\na i},{\bar \lambda}_{\na j},{\bar \lambda}_{\na k},{\bar \beta}_\na)\\
I(&U_{\na i};Y_\nc,U_{\na j}) = \nonumber\\&C\left(\frac{|h_{\na,\nc}|^2 K_{C2}({\bar \lambda}_{\na i},{\bar \lambda}_{\na j},{\bar \beta}_\na) P_\na + K_{C4}({\bar \lambda}_{\na i},{\bar \lambda}_{\na j},{\bar \beta}_\na)}{|h_{\na,\nc}|^2 (K_{C1}({\bar \lambda}_{\na i},{\bar \lambda}_{\na j},{\bar \beta}_\na) - K_{C2}({\bar \lambda}_{\na i},{\bar \lambda}_{\na j},{\bar \beta}_\na)) P_\na + (K_{C3}({\bar \lambda}_{\na i},{\bar \lambda}_{\na j},{\bar \beta}_\na) - K_{C4}({\bar \lambda}_{\na i},{\bar \lambda}_{\na j},{\bar \beta}_\na))}\right)\nonumber\\
&\eqdef C_C(P_\na,h_{\na,\nc}, {\bar \lambda}_{\na i},{\bar \lambda}_{\na j},{\bar \beta}_\na)\\
I(&U_{\na i};Y_\nc,{\hat Y}_\nd) = \nonumber\\
&C\left(\frac{f^2({\bar \lambda}_{\na i},{\bar 1},{\bar \beta}_\na)K_{BC2}(P_\na,h_{\na,\nc},h_{\na,\nd},P_{\hat \nd},\sigma_\nd)}{f({\bar \lambda}_{\na i},{\bar \lambda}_{\na i},{\bar \beta}_\na)K_{BC1}(P_\na,h_{\na,\nc},h_{\na,\nd},P_{\hat \nd},\sigma_\nd) -f^2({\bar \lambda}_{\na i},{\bar 1},{\bar \beta}_\na)K_{BC2}(P_\na,h_{\na,\nc},h_{\na,\nd},P_{\hat \nd},\sigma_\nd)} \right)\nonumber \\
&\eqdef C_{BC}(P_\na,h_{\na,\nc},h_{\na,\nd},{\bar \lambda}_{\na i},{\bar \beta}_\na,P_{\hat \nd},\sigma_\nd)
\end{align}
where $P_{\hat \nd} = E[{\hat Y}_\nd^2]$, $\sigma_\nd = E[Y_\nd {\hat Y}_\nd]$ and
\begin{align}
K_{BE1}({\bar \lambda}_{\na i},{\bar \lambda}_{\na j},{\bar \lambda}_{\na k},{\bar \beta}_\na) = &f({\bar \lambda}_{\na i},{\bar \lambda}_{\na i},{\bar \beta}_\na) f({\bar \lambda}_{\na j},{\bar \lambda}_{\na j},{\bar \beta}_\na) f({\bar \lambda}_{\na k},{\bar \lambda}_{\na k},{\bar \beta}_\na) - \nonumber\\
& f({\bar \lambda}_{\na i},{\bar \lambda}_{\na i},{\bar \beta}_\na) f^2({\bar \lambda}_{\na j},{\bar \lambda}_{\na k},{\bar \beta}_\na)\\
K_{BE2}({\bar \lambda}_{\na i},{\bar \lambda}_{\na j},{\bar \lambda}_{\na k},{\bar \beta}_\na) = &f({\bar \lambda}_{\na j},{\bar \lambda}_{\na j},{\bar \beta}_\na) f^2({\bar \lambda}_{\na i},{\bar \lambda}_{\na k},{\bar \beta}_\na) + f({\bar \lambda}_{\na k},{\bar \lambda}_{\na k},{\bar \beta}_\na) f^2({\bar \lambda}_{\na i},{\bar \lambda}_{\na j},{\bar \beta}_\na) - \nonumber\\
&2 f({\bar \lambda}_{\na i},{\bar \lambda}_{\na j},{\bar \beta}_\na) f({\bar \lambda}_{\na j},{\bar \lambda}_{\na k},{\bar \beta}_\na)f({\bar \lambda}_{\na i},{\bar \lambda}_{\na k},{\bar \beta}_\na)\\
K_{C1}({\bar \lambda}_{\na i},{\bar \lambda}_{\na j},{\bar \beta}_\na) = &f({\bar \lambda}_{\na i},{\bar \lambda}_{\na i},{\bar \beta}_\na) f({\bar \lambda}_{\na j},{\bar \lambda}_{\na j},{\bar \beta}_\na) - f({\bar \lambda}_{\na i},{\bar \lambda}_{\na i},{\bar \beta}_\na)f^2({\bar \lambda}_{\na j},{\bar 1},{\bar \beta}_\na)\\
K_{C2}({\bar \lambda}_{\na i},{\bar \lambda}_{\na j},{\bar \beta}_\na) = &f^2({\bar \lambda}_{\na i},{\bar \lambda}_{\na j},{\bar \beta}_\na)  + f({\bar \lambda}_{\na j},{\bar \lambda}_{\na j},{\bar \beta}_\na)f^2({\bar \lambda}_{\na i},{\bar 1},{\bar \beta}_\na) -\nonumber\\
&2 f({\bar \lambda}_{\na i},{\bar \lambda}_{\na j},{\bar \beta}_\na)f({\bar \lambda}_{\na i},{\bar 1},{\bar \beta}_\na) f({\bar \lambda}_{\na j},{\bar 1},{\bar \beta}_\na) \\
K_{C3}({\bar \lambda}_{\na i},{\bar \lambda}_{\na j},{\bar \beta}_\na) = &f({\bar \lambda}_{\na i},{\bar \lambda}_{\na i},{\bar \beta}_\na)f({\bar \lambda}_{\na j},{\bar \lambda}_{\na j},{\bar \beta}_\na)\\
K_{C4} ({\bar \lambda}_{\na i},{\bar \lambda}_{\na j},{\bar \beta}_\na)= & f^2({\bar \lambda}_{\na i},{\bar \lambda}_{\na j},{\bar \beta}_\na)\\
K_{BC1}(P_\na,h_{\na,\nc},h_{\na,\nd},P_{\hat \nd},\sigma_\nd) = & |h_{\na,\nc}|^2 P_\na \left(P_{\hat \nd} - \sigma_\nd^2 \frac{|h_{\na,\nd}|^2 P_\na }{|h_{\na,\nd}|^2 P_\na +1}\right)+ P_{\hat \nd}\\
K_{BC2}(P_\na,h_{\na,\nc},h_{\na,\nd},P_{\hat \nd},\sigma_\nd) = & |h_{\na,\nc}|^2 P_\na \left(P_{\hat \nd} - \sigma_\nd^2 \frac{|h_{\na,\nd}|^2 P_\na }{|h_{\na,\nd}|^2 P_\na +1}\right)+ \sigma_\nd^2 \frac{|h_{\na,\nd}|^2 P_\na }{|h_{\na,\nd}|^2 P_\na +1}
\end{align}
\item Double transmitter (when both $\na$ and $\nb$ are broadcasting)
\begin{align}
I(&X_\na; Y_\nc,U_{\nb i}) = C\left(\frac{|h_{\na,\nc}|^2 P_\na f({\bar \lambda}_{\nb i},{\bar \lambda}_{\nb i},{\bar \beta}_\nb)}{|h_{\nb,\nc}|^2P_\nb (f({\bar \lambda}_{\nb i},{\bar \lambda}_{\nb i},{\bar \beta}_\nb)-f^2({\bar \lambda}_{\nb i},{\bar 1},{\bar \beta}_\nb))  +f({\bar \lambda}_{\nb i},{\bar \lambda}_{\nb i},{\bar \beta}_\nb)} \right)\nonumber \\
&\eqdef C_{M}(P_\na,P_\nb,h_{\na,\nc},h_{\nb,\nc},{\bar \lambda}_{\nb i},{\bar \beta}_\nb)\\
I(&U_{\na i}; Y_\nc) = C\left(\frac{f^2({\bar \lambda}_{\na i},{\bar 1},{\bar \beta}_\na) |h_{\na,\nc}|^2 P_\na }{f({\bar \lambda}_{\na i},{\bar \lambda}_{\na i},{\bar \beta}_\na) (|h_{\na,\nc}|^2 P_\na + |h_{\nb,\nc}|^2 P_\nb +1)-f^2({\bar \lambda}_{\na i},{\bar 1},{\bar \beta}_\na) |h_{\na,\nc}|^2 P_\na} \right)\nonumber \\
&\eqdef C_{BI}(P_\na,P_\nb,h_{\na,\nc},h_{\nb,\nc},{\bar \lambda}_{\na i},{\bar \beta}_\na)\\
I(&U_{\na i}; Y_\nc | U_{\nb j}) = \nonumber\\
& C\left(\frac{|h_{\na,\nc}|^2 P_\na K_{BM1}({\bar \lambda}_{\na i},{\bar \lambda}_{\nb j},{\bar \beta}_\na,{\bar \beta}_\nb) }{|h_{\na,\nc}|^2 P_\na K_{BM2}({\bar \lambda}_{\na i},{\bar \lambda}_{\nb j},{\bar \beta}_\na,{\bar \beta}_\nb) + |h_{\nb,\nc}|^2 P_\nb K_{BM3}({\bar \lambda}_{\na i},{\bar \lambda}_{\nb j},{\bar \beta}_\na,{\bar \beta}_\nb)+K_{BM4}({\bar \lambda}_{\na i},{\bar \lambda}_{\nb j},{\bar \beta}_\na,{\bar \beta}_\nb)} \right)\nonumber \\
&\eqdef C_{BM}(P_\na,P_\nb,h_{\na,\nc},h_{\nb,\nc},{\bar \lambda}_{\na i},{\bar \lambda}_{\nb j},{\bar \beta}_\na,{\bar \beta}_\nb)\\
I(&U_{\na i}; Y_\nc ,U_{\na j}| U_{\nb k}) =  C\left(\frac{K_{CI,N}(P_\na,P_\nb,h_{\na,\nc},h_{\nb,\nc},{\bar \lambda}_{\na i},{\bar \lambda}_{\na j},{\bar \lambda}_{\nb k},{\bar \beta}_\na,{\bar \beta}_\nb) }{K_{CI,D}(P_\na,P_\nb,h_{\na,\nc},h_{\nb,\nc},{\bar \lambda}_{\na i},{\bar \lambda}_{\na j},{\bar \lambda}_{\nb k},{\bar \beta}_\na,{\bar \beta}_\nb)} \right)\nonumber \\
&\eqdef C_{CI}(P_\na,P_\nb,h_{\na,\nc},h_{\nb,\nc},{\bar \lambda}_{\na i},{\bar \lambda}_{\na j},{\bar \lambda}_{\nb k},{\bar \beta}_\na,{\bar \beta}_\nb)
\end{align}
where
\begin{align}
&K_{BM1}({\bar \lambda}_{\na i},{\bar \lambda}_{\nb j},{\bar \beta}_\na,{\bar \beta}_\nb) =  f({\bar \lambda}_{\nb j},{\bar \lambda}_{\nb j},{\bar \beta}_\nb) f^2({\bar \lambda}_{\na i},{\bar 1},{\bar \beta}_\na)\\
&K_{BM2}({\bar \lambda}_{\na i},{\bar \lambda}_{\nb j},{\bar \beta}_\na,{\bar \beta}_\nb) = f({\bar \lambda}_{\nb j},{\bar \lambda}_{\nb j},{\bar \beta}_\nb)(f({\bar \lambda}_{\na i},{\bar \lambda}_{\na i},{\bar \beta}_\na) -f^2({\bar \lambda}_{\na i},{\bar 1},{\bar \beta}_\na))\\
&K_{BM3}({\bar \lambda}_{\na i},{\bar \lambda}_{\nb j},{\bar \beta}_\na,{\bar \beta}_\nb)= f({\bar \lambda}_{\na i},{\bar \lambda}_{\na i},{\bar \beta}_\na)(f({\bar \lambda}_{\nb j},{\bar \lambda}_{\nb j},{\bar \beta}_\nb) -f^2({\bar \lambda}_{\nb j},{\bar 1},{\bar \beta}_\nb))\\
&K_{BM4}({\bar \lambda}_{\na i},{\bar \lambda}_{\nb j},{\bar \beta}_\na,{\bar \beta}_\nb)= f({\bar \lambda}_{\na i},{\bar \lambda}_{\na i},{\bar \beta}_\na)f({\bar \lambda}_{\nb j},{\bar \lambda}_{\nb j},{\bar \beta}_\nb)\\
&K_{CI,N}(P_\na,P_\nb,h_{\na,\nc},h_{\nb,\nc},{\bar \lambda}_{\na i},{\bar \lambda}_{\na j},{\bar \lambda}_{\nb k},{\bar \beta}_\na,{\bar \beta}_\nb) =|h_{\na,\nc}|^2 P_\na K_{CI1}({\bar \lambda}_{\na i},{\bar \lambda}_{\na j},{\bar \lambda}_{\nb k},{\bar \beta}_\na,{\bar \beta}_\nb) + \nonumber\\
&~~~~~~~~~~~~~~~~~~~~~ |h_{\nb,\nc}|^2 P_\nb K_{CI2}({\bar \lambda}_{\na i},{\bar \lambda}_{\na j},{\bar \lambda}_{\nb k},{\bar \beta}_\na,{\bar \beta}_\nb) + K_{CI3}({\bar \lambda}_{\na i},{\bar \lambda}_{\na j},{\bar \lambda}_{\nb k},{\bar \beta}_\na,{\bar \beta}_\nb)\\
&K_{CI,D}(P_\na,P_\nb,h_{\na,\nc},h_{\nb,\nc},{\bar \lambda}_{\na i},{\bar \lambda}_{\na j},{\bar \lambda}_{\nb k},{\bar \beta}_\na,{\bar \beta}_\nb) = |h_{\na,\nc}|^2 P_\na K_{CI4}({\bar \lambda}_{\na i},{\bar \lambda}_{\na j},{\bar \lambda}_{\nb k},{\bar \beta}_\na,{\bar \beta}_\nb) + \nonumber\\
&~~~~~~~~~~~~~~~~~~~~~ |h_{\nb,\nc}|^2 P_\nb K_{CI5}({\bar \lambda}_{\na i},{\bar \lambda}_{\na j},{\bar \lambda}_{\nb k},{\bar \beta}_\na,{\bar \beta}_\nb) + K_{CI6}({\bar \lambda}_{\na i},{\bar \lambda}_{\na j},{\bar \lambda}_{\nb k},{\bar \beta}_\na,{\bar \beta}_\nb)\\
&K_{CI1}({\bar \lambda}_{\na i},{\bar \lambda}_{\na j},{\bar \lambda}_{\nb k},{\bar \beta}_\na,{\bar \beta}_\nb) = f({\bar \lambda}_{\nb k},{\bar \lambda}_{\nb k},{\bar \beta}_\nb) f^2({\bar \lambda}_{\na i},{\bar 1},{\bar \beta}_\na)f({\bar \lambda}_{\na j},{\bar \lambda}_{\na j},{\bar \beta}_\na) + \nonumber\\
&f({\bar \lambda}_{\nb k},{\bar \lambda}_{\nb k},{\bar \beta}_\nb) f^2({\bar \lambda}_{\na i},{\bar \lambda}_{\na j},{\bar \beta}_\na) - 2f({\bar \lambda}_{\nb k},{\bar \lambda}_{\nb k},{\bar \beta}_\nb) f({\bar \lambda}_{\na i},{\bar \lambda}_{\na j},{\bar \beta}_\na) f({\bar \lambda}_{\na i},{\bar 1},{\bar \beta}_\na) f({\bar \lambda}_{\na j},{\bar 1},{\bar \beta}_\na)\\
&K_{CI2}({\bar \lambda}_{\na i},{\bar \lambda}_{\na j},{\bar \lambda}_{\nb k},{\bar \beta}_\na,{\bar \beta}_\nb) =  f^2({\bar \lambda}_{\na i},{\bar \lambda}_{\na j},{\bar \beta}_\na)(f({\bar \lambda}_{\nb k},{\bar \lambda}_{\nb k},{\bar \beta}_\nb) -f^2({\bar \lambda}_{\nb k},{\bar 1},{\bar \beta}_\nb) )\\
&K_{CI3}({\bar \lambda}_{\na i},{\bar \lambda}_{\na j},{\bar \lambda}_{\nb k},{\bar \beta}_\na,{\bar \beta}_\nb) =  f^2({\bar \lambda}_{\na i},{\bar \lambda}_{\na j},{\bar \beta}_\na)f({\bar \lambda}_{\nb k},{\bar \lambda}_{\nb k},{\bar \beta}_\nb) \\
&K_{CI4}({\bar \lambda}_{\na i},{\bar \lambda}_{\na j},{\bar \lambda}_{\nb k},{\bar \beta}_\na,{\bar \beta}_\nb) = f({\bar \lambda}_{\na i},{\bar \lambda}_{\na i},{\bar \beta}_\na)f({\bar \lambda}_{\nb k},{\bar \lambda}_{\nb k},{\bar \beta}_\nb)(f({\bar \lambda}_{\na j},{\bar \lambda}_{\na j},{\bar \beta}_\na)-f^2({\bar \lambda}_{\na j},{\bar 1},{\bar \beta}_\na)) - \nonumber\\
&~~~~~~~~~~~~~~~~~~~~~~~~~~~~~~~~~~K_{CI1}({\bar \lambda}_{\na i},{\bar \lambda}_{\na j},{\bar \lambda}_{\nb k},{\bar \beta}_\na,{\bar \beta}_\nb)\\
&K_{CI5}({\bar \lambda}_{\na i},{\bar \lambda}_{\na j},{\bar \lambda}_{\nb k},{\bar \beta}_\na,{\bar \beta}_\nb) =  f({\bar \lambda}_{\na i},{\bar \lambda}_{\na i},{\bar \beta}_\na)f({\bar \lambda}_{\na j},{\bar \lambda}_{\na j},{\bar \beta}_\na)(f({\bar \lambda}_{\nb k},{\bar \lambda}_{\nb k},{\bar \beta}_\nb)-f^2({\bar \lambda}_{\nb k},{\bar 1},{\bar \beta}_\nb)) - \nonumber\\
&~~~~~~~~~~~~~~~~~~~~~~~~~~~~~~~~~~K_{CI2}({\bar \lambda}_{\na i},{\bar \lambda}_{\na j},{\bar \lambda}_{\nb k},{\bar \beta}_\na,{\bar \beta}_\nb)\\
&K_{CI6}({\bar \lambda}_{\na i},{\bar \lambda}_{\na j},{\bar \lambda}_{\nb k},{\bar \beta}_\na,{\bar \beta}_\nb) = f({\bar \lambda}_{\na i},{\bar \lambda}_{\na i},{\bar \beta}_\na)f({\bar \lambda}_{\na j},{\bar \lambda}_{\na j},{\bar \beta}_\na)f({\bar \lambda}_{\nb k},{\bar \lambda}_{\nb k},{\bar \beta}_\nb) - \nonumber\\
&~~~~~~~~~~~~~~~~~~~~~~~~~~~~~~~~~~K_{CI3}({\bar \lambda}_{\na i},{\bar \lambda}_{\na j},{\bar \lambda}_{\nb k},{\bar \beta}_\na,{\bar \beta}_\nb)
\end{align}
\end{enumerate}

For example, if we take
\begin{align}
{\bf \Lambda}_\nr = \left[
                     \begin{array}{cc}
                       1 & \frac{\beta_{\nr1} |h_{\nr,\none}|^2 P_\nr}{\beta_{\nr1} |h_{\nr\none}|^2 P_\nr +1} \\
                       0 & 1 \\
                     \end{array}
                   \right]
\end{align}
and assume the degraded channel
$X_\nr \rightarrow Y_\none \rightarrow Y_\ntwo$ in the second phase of the FMABC-N protocol, then we can achieve the following rate regions in the second phase:
\begin{align}
R_{\nzero,\none} &< \Delta_2 I(U_\none^\pb;Y_\none^\pb) - \Delta_2 I(U_\none^\pb;U_\ntwo^\pb) = \Delta_2 C(\beta_{\nr1}|h_{\nr,\none}|^2P_\nr)\\
R_{\nzero,\ntwo} &< \Delta_2 I(U_\ntwo^\pb;Y_\ntwo^\pb) = \Delta_2 C\left(\frac{\beta_{\nr2}|h_{\nr,\ntwo}|^2 P_\nr}{\beta_{\nr1} |h_{\nr,\ntwo}|^2 P_\nr +1 }\right)
\end{align}
This is the well-known capacity region for the degraded channel
$X_\nr \rightarrow Y_\none \rightarrow Y_\ntwo$. However, this may
not be optimum due to the opposite data rates $R_{\none,\nzero}$ and
$R_{\ntwo,\nzero}$. Thus, generally it is necessary to optimize
${\bf\Lambda}_{\nr}$ and ${\bar \beta}_\nr$ to
maximize the entire achievable region.

%=====================BIBLIOGRAPHY===================
\bibliographystyle{IEEEtranS}
\bibliography{sang}

% Generated by IEEEtranS.bst, version: 1.13 (2008/09/30)
\begin{thebibliography}{10}
\providecommand{\url}[1]{#1}
\csname url@samestyle\endcsname
\providecommand{\newblock}{\relax}
\providecommand{\bibinfo}[2]{#2}
\providecommand{\BIBentrySTDinterwordspacing}{\spaceskip=0pt\relax}
\providecommand{\BIBentryALTinterwordstretchfactor}{4}
\providecommand{\BIBentryALTinterwordspacing}{\spaceskip=\fontdimen2\font plus
\BIBentryALTinterwordstretchfactor\fontdimen3\font minus
  \fontdimen4\font\relax}
\providecommand{\BIBforeignlanguage}[2]{{%
\expandafter\ifx\csname l@#1\endcsname\relax
\typeout{** WARNING: IEEEtranS.bst: No hyphenation pattern has been}%
\typeout{** loaded for the language `#1'. Using the pattern for}%
\typeout{** the default language instead.}%
\else
\language=\csname l@#1\endcsname
\fi
#2}}
\providecommand{\BIBdecl}{\relax}
\BIBdecl

\bibitem{Avestimehr:2009:ITW}
A.~Avestimehr, A.~Sezgin, and D.~Tse, ``Capacity region of the deterministic
  multi-pair bi-directional relay network,'' in \emph{Proc. IEEE Inf. Theory
  Workshop}, Volos, Jun. 2009.

\bibitem{berger:1977}
T.~Berger, \emph{Multiterminal source coding}.\hskip 1em plus 0.5em minus
  0.4em\relax New York: Springer Verlag, 1977.

\bibitem{Bletsas:2006}
A.~Bletsas, A.~Khisti, D.~Reed, and A.~Lippman, ``A simple cooperative
  diversity method based on network path selection,'' \emph{IEEE J. Select.
  Areas Commun.}, vol.~24, no.~3, pp. 659--672, Mar. 2006.

\bibitem{Chen:2008:CISS}
M.~Chen and A.~Yener, ``Interference management for multiuser two-way
  relaying,'' in \emph{Proc. Conf. on Inf. Sci. and Sys.}, Princeton, Mar.
  2008, pp. 246--251.

\bibitem{Costa:1983}
M.~Costa, ``Writing on dirty paper,'' \emph{IEEE Trans. Inform. Theory}, vol.
  IT-29, no.~3, pp. 439--441, 1983.

\bibitem{Cover:2006}
T.~Cover and J.~Thomas, \emph{Elements of Information Theory}, 2nd~ed.\hskip
  1em plus 0.5em minus 0.4em\relax New York:Wiley, 2006.

\bibitem{Cover:1979}
T.~Cover and A.~E. Gamal, ``Capacity theorems for the relay channel,''
  \emph{IEEE Trans. Inform. Theory}, vol. IT-25, pp. 572--584, Sep. 1979.

\bibitem{Schnurr:2007}
C.Schnurr, T.~Oechtering, and S.~Stanczak, ``Achievable rates for the
  restricted half-duplex two-way relay channel,'' in \emph{Proc. Asilomar Conf.
  Signals, Systems and Computers}, Nov. 2007.

\bibitem{Gamal:1981}
A.~E. Gamal and E.~van~der Meulen, ``A proof of marton's coding theorem for the
  discrete memoryless broadcast channel,'' \emph{IEEE Trans. Inform. Theory},
  vol. IT-27, pp. 120--122, Jan. 1981.

\bibitem{Ghozlan_MIMO_switch}
\BIBentryALTinterwordspacing
H.~Ghozlan, Y.~Mohasseb, H.~E. Gamal, and G.~Kramer, ``The mimo wireless
  switch: Relaying can increase the multiplexing gain,'' 2009. [Online].
  Available: \url{http://arxiv.org/abs/0901.2588}
\BIBentrySTDinterwordspacing

\bibitem{Gunduz:2009:ISIT}
D.~Gunduz, A.~Yener, A.~Goldsmith, and H.~Poor, ``The multi-way relay
  channel,'' in \emph{Proc. IEEE Int. Symp. Inform. Theory}, Seoul, Jul. 2009,
  pp. 339--343.

\bibitem{Hiriart:2001}
J.~B. Hiriart-Urruty and C.~Lemar\'{e}chal, \emph{Fundamentals of Convex
  Analysis}.\hskip 1em plus 0.5em minus 0.4em\relax New York; Springer, 2001.

\bibitem{Ho:2008}
C.~Ho, R.~Zhang, and Y.-C. Liang, ``Two-way relaying over {OFDM:} optimized
  tone permutation and power allocation,'' in \emph{Proc. IEEE Int. Conf.
  Commun.}, Beijing, May 2008.

\bibitem{SKim:2008a}
S.~J. Kim, N.~Devroye, P.~Mitran, and V.~Tarokh, ``Achievable rate regions for
  bi-directional relaying,'' \emph{{\rm Submitted to} IEEE Trans. Inform.
  Theory}, 2008, http://arxiv.org/abs/0808.0954.

\bibitem{SKim:2008b}
S.~J. Kim, N.~Devroye, and V.~Tarokh, ``Bi-directional half-duplex protocols
  with multiple relays,'' \emph{{\rm Submitted to} IEEE Trans. Inform. Theory},
  2007, http://arxiv.org/abs/0810.1268.

\bibitem{SKim:2007}
S.~J. Kim, P.~Mitran, and V.~Tarokh, ``Performance bounds for bi-directional
  coded cooperation protocols,'' \emph{IEEE Trans. Inform. Theory}, vol.~54,
  no.~11, pp. 5235--5241, Nov. 2008.

\bibitem{Kim:sarnoff}
S.~Kim, N.~Devroye, P.~Mitran, and V.~Tarokh, ``Comparison of bi-directional
  relaying protocols,'' in \emph{Proc. IEEE Sarnoff Symposium}, Princeton, NJ,
  Apr. 2008.

\bibitem{Kim:ISIT2009}
S.~Kim, N.~Devroye, and V.~Tarokh, ``A class of bi-directional multi-relay
  protocols,'' in \emph{Proc. IEEE Int. Symp. Inform. Theory}, Seoul, Jun.
  2009, pp. 349--353.

\bibitem{Kim:ICDCSW07}
S.~Kim, P.~Mitran, and V.~Tarokh, ``Performance bounds for bi-directional coded
  cooperation protocols,'' in \emph{Proc. 27th International Conference on
  Distributed Computing Systems Workshops (ICDCSW'07)}, Toronto, Jun. 2007.

\bibitem{kim:inprep}
S.~Kim, B.~Smida, and N.~Devroye, ``Bi-directional half-duplex protocols with
  multiple terminals,'' 2009, in preparation.

\bibitem{Kramer:Gastpar:Gupta}
G.~Kramer, M.~Gastpar, and P.~Gupta, ``Cooperative strategies and capacity
  theorems for relay networks,'' \emph{IEEE Trans. Inform. Theory}, vol.~51,
  no.~9, pp. 3037--3063, Sep. 2005.

\bibitem{Larsson:2005}
P.~Larsson, N.~Johansson, and K.-E. Sunell, ``Coded bi-directional relaying,''
  in \emph{the 5th Scandanavian Workshop on Wireless ad-hoc Networks},
  Stockholm, May 2005.

\bibitem{Larsson:2006}
------, ``Coded bi-directional relaying,'' in \emph{Proc. IEEE Veh. Technol.
  Conf. - Spring}, 2006, pp. 851--855.

\bibitem{Luo:VTC}
J.~Luo, R.~Blum, L.~Cimini, L.~Greenstein, and A.~Haimovich, ``Link-failure
  probabilities for practical cooperative relay networks,'' in \emph{Proc. IEEE
  Veh. Technol. Conf. - Spring}, Stockholm, Jun. 2005, pp. 1489--1493.

\bibitem{Lutz:2009}
\BIBentryALTinterwordspacing
T.~Lutz, C.~Hausl, and R.~Kotter, ``Bits through relay cascades with
  half-duplex constraint,'' 2009. [Online]. Available:
  \url{http://arxiv.org/abs/0906.1599}
\BIBentrySTDinterwordspacing

\bibitem{Madan:2008}
R.~Madan, N.~Mehta, A.~Molisch, and J.~Zhang, ``Energy-efficient cooperative
  relaying over fading channels with simple relay selection,'' \emph{IEEE
  Trans. Wireless Comm.}, vol.~7, no.~8, pp. 3013--3025, Aug. 2008.

\bibitem{marton}
K.~Marton, ``A coding theorem for the discrete memoryless broadcast channel,''
  \emph{IEEE Trans. Inform. Theory}, vol.~25, no.~5, pp. 306--311, May 1979.

\bibitem{Oechtering:2007}
T.~J. Oechtering, C.~Schnurr, I.~Bjelakovic, and H.~Boche, ``Achievable rate
  region of a two phase bidirectional relay channel,'' in \emph{Proc. Conf. on
  Inf. Sci. and Sys.}, Baltimore, MD, Mar. 2007.

\bibitem{OB08brru}
T.~J. Oechtering and H.~Boche, ``{Bidirectional Regenerative Half-duplex
  Relaying using Relay Selection},'' \emph{IEEE Transactions on Wireless
  Communications}, vol.~7, no.~5, pp. 1879--1888, May 2008.

\bibitem{OB08pacm}
------, ``{Piggyback a Common Message on Half-Duplex Bidirectional Relaying},''
  \emph{IEEE Transactions on Wireless Communications}, vol.~7, no.~9, pp.
  3397--3406, Sep. 2008.

\bibitem{Oechtering:2009}
T.~J. Oechtering, R.~F. Wyrembelski, and H.~Boche, ``{Multiantenna
  Bidirectional Broadcast Channels - Optimal Transmit Strategies},'' \emph{IEEE
  Transactions on Signal Processing}, vol.~57, no.~5, pp. 1948--1958, May 2009.

\bibitem{Pooniah:2008}
J.~Pooniah and L.-L. Xie, ``An achievable rate region for the two-way two-relay
  channel,'' in \emph{Proc. IEEE Int. Symp. Inform. Theory}, Jul. 2008, pp.
  489--493.

\bibitem{Popovski:2006b}
P.~Popovski and H.~Yomo, ``The anti-packets can increase the achievable
  throughput of a wireless multi-hop network,'' in \emph{Proc. IEEE Int. Conf.
  Commun.}, 2006, pp. 3885--3890.

\bibitem{Popovski:2006a}
------, ``Bi-directional amplification of throughput in a wireless multi-hop
  network,'' in \emph{Proc. IEEE Veh. Technol. Conf. - Spring}, 2006, pp.
  588--593.

\bibitem{Popovski:ICC}
------, ``Physical network coding in two-way wireless relay channels,'' in
  \emph{Proc. IEEE Int. Conference on Comm.}, Glasgow, Ireland, Jun. 2007.

\bibitem{Rankov:2006}
B.~Rankov and A.~Wittneben, ``Achievable rate regions for the two-way relay
  channel,'' in \emph{Proc. IEEE Int. Symp. Inform. Theory}, Seattle, Jul.
  2006, pp. 1668--1672.

\bibitem{Rankov:2007}
------, ``Spectral efÞcient protocols for half-duplex fading relay channels,''
  \emph{IEEE J. Select. Areas Commun.}, vol.~25, no.~2, pp. 379--389, Feb.
  2007.

\bibitem{Avestimehr:2009:ISIT}
A.~Sezgin, A.~Khajehnejad, A.~Avestimehr, and B.~Hassibi, ``Approximate
  capacity region of the two-pair bidirectional gaussian relay network,'' in
  \emph{Proc. IEEE Int. Symp. Inform. Theory}, Seoul, Jul. 2009, pp.
  2018--2022.

\bibitem{Shannon:1961}
C.~E. Shannon, ``Two-way communications channels,'' in \emph{4th Berkeley Symp.
  Math. Stat. Prob.}, Chicago, IL, Jun. 1961, pp. 611--644.

\bibitem{Oechtering:2008}
C.~S. T.~Oechtering and H.~Boche, ``Broadcast capacity region of two-phase
  bidirectional relaying,'' \emph{IEEE Trans. Inform. Theory}, vol.~54, no.~1,
  pp. 545--548, Jan. 2008.

\bibitem{Meulen:1971}
E.~C. van~der Meulen, ``Three-terminal communication channels,'' \emph{Adv.
  Appl. Prob.}, vol.~3, pp. 120--154, 1971.

\bibitem{Vaze:2009}
R.~Vaze and R.~Heath, ``Optimal amplify and forward strategy for two-way relay
  channel with multiple relays,'' in \emph{Proc. IEEE Inf. Theory Workshop},
  Volos, Jun. 2009.

\bibitem{Vaze:2008}
\BIBentryALTinterwordspacing
R.~Vaze and R.~W. Heath, ``On the capacity and diversity-multiplexing tradeoff
  of the two-way relay channel,'' 2008. [Online]. Available:
  \url{http://arxiv.org/abs/0810.3900?context=cs}
\BIBentrySTDinterwordspacing

\bibitem{Vaze:2008codeornot}
\BIBentryALTinterwordspacing
------, ``To code or not to code in multi-hop relay channels,'' 2008. [Online].
  Available: \url{http://arxiv.org/abs/0805.3164}
\BIBentrySTDinterwordspacing

\bibitem{nam:2009bit}
\BIBentryALTinterwordspacing
S.-Y.~C. W.~Nam and Y.~Lee, ``Capacity of the gaussian two-way relay channel to
  within 1/2 bit,'' 2009. [Online]. Available:
  \url{http://arxiv.org/abs/0902.2438}
\BIBentrySTDinterwordspacing

\bibitem{Wu:2005}
Y.~Wu, P.~A. Chou, and S.-Y. Kung, ``Information exchange in wireless networks
  with network coding and physical-layer broadcast,'' in \emph{Proc. Conf. on
  Inf. Sci. and Sys.}, Baltimore, MD, Mar. 2005.

\bibitem{Wyrembelski:2008}
R.~Wyrembelski, T.~Oechtering, I.~Bjelakovic, C.~Schnurr, and H.~Boche,
  ``Capacity of gaussian {MIMO} bidirectional broadcast channels,'' in
  \emph{Proc. IEEE Int. Symp. Inform. Theory}, Toronto, Jul. 2008, pp.
  584--588.

\bibitem{Yuen:VTC}
C.~Yuen, W.~Chin, Y.L.Guan, W.~Chen, and T.~Tee, ``Bi-directional multi-antenna
  relay communications with wireless network coding,'' in \emph{Proc. IEEE Veh.
  Technol. Conf. - Spring}, Dublin, May 2007, pp. 1385--1388.

\end{thebibliography}
\end{document}